\documentclass[10pt, letterpaper]{article} 

\usepackage[letterpaper,includehead,left=1.9cm,right = 1.9cm ,top=0cm,bottom=2.5cm]{geometry}

\usepackage{fancyhdr}

\pagestyle{fancy}
\fancyhf{} %

\fancyfoot[R]{\thepage} %

\fancypagestyle{plain}{
	\fancyhf{}

}

\renewcommand{\footnoterule}{%
  \kern -3pt
  \hrule width \textwidth height 0.5pt
  \kern 2pt
}

\usepackage[font=footnotesize,format=plain, bf]{caption} %

\usepackage{microtype}	%

\usepackage{imakeidx}

\makeatletter
\def\@idxitem{\par\hangindent 2ex}
\makeatother

\makeindex[intoc,options= -s index_style.ist]

\usepackage[nottoc]{tocbibind}

\usepackage[dvipsnames]{xcolor}

\usepackage{tcolorbox}
\tcbuselibrary{skins,breakable}

\newtcolorbox{definition_frame}{enhanced, colback = gray!50!blue!15, borderline = {0.5pt}{0mm}{gray!90!blue}, leftrule = 4pt, toprule = 0pt, bottomrule = 0pt, rightrule = 0 pt, beforeafter skip= 1em, breakable, after={\noindent}}
\newtcolorbox{theorem_frame}{enhanced, colback = gray!50!blue!25, borderline = {0.5pt}{0mm}{gray!90!blue},leftrule = 5pt, toprule = 0pt, bottomrule = 0pt, rightrule = 0 pt, beforeafter skip= 1em, breakable, after={\noindent}}
\newtcolorbox{lemma_frame}{enhanced, colback = gray!50!blue!10, borderline = {0.5pt}{0mm}{gray!90!blue}, leftrule = 2pt, toprule = 0pt, bottomrule = 0pt, rightrule = 0 pt ,beforeafter skip= 1em, breakable, after={\noindent}}
\newtcolorbox{corollary_frame}{enhanced, colback = gray!50!blue!6, borderline = {0.5pt}{0mm}{gray!90!blue}, leftrule = 0pt, toprule = 0pt, bottomrule = 0pt, rightrule = 0 pt, beforeafter skip= 1em, breakable, after={\noindent}}
\newtcolorbox{example_frame}{enhanced, colback = gray!50!blue!0, borderline = {0.5pt}{0mm}{gray!90!blue},leftrule = 0pt, toprule = 0pt, bottomrule = 0pt, rightrule = 0 pt , beforeafter skip= 1em, breakable, after={\noindent}}
\newtcolorbox{remark_frame}{enhanced, colback = gray!50!blue!0, borderline = {0.5pt}{0mm}{gray!90!blue},leftrule = 0pt, toprule = 0pt, bottomrule = 0pt, rightrule = 0 pt , beforeafter skip= 1em, breakable, after={\noindent}}

\usepackage[strict]{changepage}
\newenvironment{proof_frame}{\begin{adjustwidth}{12pt}{12pt}}{\end{adjustwidth}}

\usepackage[utf8]{inputenc}	
\usepackage[T1]{fontenc}

\usepackage{graphicx} %
\usepackage[font=footnotesize]{subcaption} %
\usepackage{wrapfig} %
\usepackage{rotating} %
\usepackage{tikz} %
\usetikzlibrary{positioning, calc, intersections, cd, babel} %

\usepackage{colortbl} %
\usepackage{longtable} %
\usepackage{multirow} %
\usepackage{paralist} %

\usepackage{amsmath,amsthm,amsfonts,amssymb,mathtools} %
\usepackage{bbm} %
\usepackage{units} %
\usepackage{cancel} %
\usepackage{resizegather} %
\usepackage{tensor} %
\usepackage{ellipsis} %
\usepackage{empheq} %

\usepackage{needspace} %
\usepackage{placeins} %
\usepackage[inline]{enumitem}

\usepackage[style=alphabetic]{biblatex}
\addbibresource{Literature.bib}
\setcounter{biburlnumpenalty}{9999}
\setcounter{biburllcpenalty}{9999}
\setcounter{biburlucpenalty}{9999}

\usepackage{authblk}
\usepackage{fontawesome}
\usepackage{scrextend}
\usepackage{setspace}

\usepackage{lmodern}

\usepackage[english]{babel}

\usepackage[colorlinks=true, linkcolor=Cerulean!75!black, citecolor=green!60!black, urlcolor=Cerulean!75!black, linktocpage=true]{hyperref}

\allowdisplaybreaks 

\makeatletter
\g@addto@macro\bfseries{\boldmath}
\makeatother

\newtheoremstyle{theorem}{\topsep}{\topsep}{\itshape}{}{\bf}{.\ \\}{.5ex}{}
\newtheoremstyle{definition}{\topsep}{\topsep}{}{}{\bf}{.\ \\}{.5ex}{}
\newtheoremstyle{info}{\topsep}{\topsep}{}{}{\bfseries \normalsize }{.\ \\}{.5ex}{}

\theoremstyle{theorem}
\newtheorem{theoremEnvironment}{Theorem}[section]
\newtheorem{corollaryEnvironment}[theoremEnvironment]{Corollary}
\newtheorem{lemmaEnvironment}[theoremEnvironment]{Lemma}

\newcommand{\varName}{varTheorem} 
\newtheorem{varTheorem}[theoremEnvironment]{\varName}
\newtheorem{varLemma}[theoremEnvironment]{\varName}
\newtheorem{varCorollary}[theoremEnvironment]{\varName}

\theoremstyle{definition}
\newtheorem{definitionEnvironment}[theoremEnvironment]{Definition}
\newtheorem{proofEnvironment}[theoremEnvironment]{Proof}
\newtheorem{varDefinition}[theoremEnvironment]{\varName}

\theoremstyle{info}
\newtheorem{remarkEnvironment}[theoremEnvironment]{Remark}
\newtheorem{exampleEnvironment}[theoremEnvironment]{Example}

\newenvironment{theorem}[1][]{\begin{theorem_frame}\begin{theoremEnvironment}[#1]}{\end{theoremEnvironment}\end{theorem_frame}}

\newenvironment{corollary}[1][]{\begin{corollary_frame}\begin{corollaryEnvironment}[#1]}{\end{corollaryEnvironment}\end{corollary_frame}}
\newenvironment{definition}[1][]{\begin{definition_frame}\begin{definitionEnvironment}[#1]}{\end{definitionEnvironment}\end{definition_frame}}
\newenvironment{remark}[1][]{\begin{remark_frame}\begin{remarkEnvironment}[#1]}{\end{remarkEnvironment}\end{remark_frame}}

\renewenvironment{proof}[1][]{\begin{proof_frame}\begin{proofEnvironment}[#1]}{\end{proofEnvironment}\end{proof_frame}}

\usepackage{tikz-3dplot}

    \setcounter{topnumber}{2}
    \setcounter{bottomnumber}{2}
    \setcounter{totalnumber}{4}
    \setcounter{dbltopnumber}{2}

\begin{document}
	
\newpage
\setlength{\columnsep}{0.6cm}
\twocolumn

\title{\bfseries Protein-environment-sensitive computational epitope accessibility analysis from antibody dose-response data}

\renewcommand{\Affilfont}{\footnotesize \normalfont}
\renewcommand{\Authfont}{\small \bfseries \raggedright}

\author[\faEnvelopeO,1]{Dominik Tschimmel}
\author[\faFlask,1]{Momina Saeed}
\author[\faFlask,1]{Maria Milani}
\author[\faUser,2]{Steffen Waldherr}
\author[\faUser, \faEnvelope,1]{Tim Hucho}

\affil[1]{Translational Pain Research, Department of Anesthesiology and Intensive Care Medicine, University Hospital Cologne, University of Cologne, 50931 Cologne, Germany}
\affil[2]{Molecular Systems Biology, Department of Functional and Evolutionary Ecology, University of Vienna, 1030 Vienna, Austria}
\affil[\faEnvelopeO]{orcid: 0000-0002-5238-9930}
\affil[\faEnvelope]{tim.hucho@uk-koeln.de, orcid: 0000-0002-4147-9308}
\affil[\faFlask]{Data acquisition during master's thesis}
\affil[\faUser]{Supervision of the research project and of the manuscript writing}

\maketitle

\begin{refsection}

\begingroup
\bfseries
\noindent
\section*{Abstract}

Antibodies are widely used in life-sciences and medical therapy. Yet, broadly applicable methods are missing to determine, in the biological system of choice, antibody specificity and its quantitative contribution to e.g. immunofluorescence stainings. Thereby, antibody-based data often needs to be seen with caution. Here, we present a simple-to-use approach to characterize and quantify antibody binding properties directly in the system of choice. We determine an epitope accessibility distribution in the system of interest based on a computational analysis of antibody-dilution immunofluorescence stainings. This allows the selection of specific antibodies, the choice of a dilution to maximize signal-specificity, and an improvement of signal quantification. It further expands the scope of antibody-based imaging to detect changes of the subcellular nano-environment and allows for antibody multiplexing.

\endgroup

\section*{Introduction}

Antibody-based applications range from basic research and diagnostics \cite{DeMatos_2010} \cite{Wen_2013} to therapeutic intervention \cite{Wirth_2023} \cite{Axelsen_2024} \cite{Guan_2024}. Accordingly, the purpose varies from e.g. target identification and quantification to the delivery of pharmaceuticals. Central for each of these applications are parameters defining the binding specificity. Nevertheless, such parameters are rarely accessible and in even fewer cases such information has been obtained in the exact system of choice.

Just like any other protein-protein interaction, antibodies bind to target proteins by biochemical, non-covalent interactions. Thereby, interaction strength as well as accessibility are dependent on the primary, secondary, tertiary, and quaternary structure of the proteins involved and on factors such as pH, ionic strength, temperature, and exposure time. This creates in part unique subcellular binding environments leading to heterogeneity of binding properties even for a single epitope in a fully purified system \cite{Svitel_2003}. 

Approaches such as immunocytochemical stainings are mostly negligent of these defining subcellular binding aspects and are based on wrong assumptions such as a fully established binding equilibrium. Ignoring these in part environment-specific antibody properties results in a misguided choice of antibody concentrations, undefined ratio of specific versus unspecific binding, misleading interpretation, inaccurate quantification, and\slash or unspecific targeting of therapeutic targets.

Broadly applicable technical approaches for characterization of biological system-specific binding properties of antibodies are currently missing. Gold standard approaches are e.g. the generation of knock-out\slash knock-down organisms, binding studies on expression libraries, and\slash or interaction studies by e.g. surface plasmon resonance approaches (REFS). But knock-out approaches alter the expression of many hundreds to thousands of transcripts involved in even more biological processes, thereby potentially altering the environment of the protein target of choice. Expression libraries are prone to similar artifacts, generated by strongly altered molecular environments, as well as to problems due to e.g. altered glycosylation patterns and\slash or underrepresentation of e.g. transmembrane protein expressions. Surface plasmone resonance systems are best suited for the analysis of at least in part purified protein systems, again therefore working in strongly altered binding environments. And problematic for any such approaches, a detailed analysis for avoiding false equilibrium-state assumptions is missing.

To identify the epitope accessibility properties in the system of choice, we established a computational approach that combines the concentration dependent signal intensity with binding properties, obtained from a binding model. The most often assumed Langmuir isotherm \cite{Alberti_2012} \cite{Latour_2014} cannot be applied as central assumptions such as a fully established binding equilibrium are violated by e.g. multiple finalizing washing steps. Hence, we define a model consisting of a superposition of pseudo-first-order reactions for each potential epitope class and\slash or nano-environment, with a fixed incubation time $\tau$ :
\[r_i = g_i \cdot (1-e^{-k_i \tau a})\ . \]
Here, $a$ denotes the antibody concentration\slash dilution and $g_i$ denotes the number of epitopes that have the binding rate constant $k_i$ (model derivation and validation see supplement \ref{sup-sec: accumulation model}). 

Further, we define the accessibility constant, $K_\tau \coloneqq \frac{1}{k \tau}$, which measures the accumulation rate of antibodies, in resemblance to the often used dissociation constant $K_d$. Since a priori knowledge about the number of epitope classes is impossible, the superposition is approximated as Fredholm integral equation:
\begin{equation}
	r_{\text{total}} = \int_{0}^\infty g(K_\tau) (1-e^{-\nicefrac{a}{K_\tau}})\  d K_\tau \ ,
	\label{main-eq: Fredholm equation}
\end{equation}
where $g(K_\tau)$ describes the distribution of epitope classes with respect to their binding rate\slash accessibility. Hence, we call plots\slash discrete approximations of $g(K_\tau)$ accessibility histogram.

Using Fredholm integral equations to describe heterogeneous binding is well established for binding isotherms \cite{Sips_1948} \cite{House_1978} \cite{Sposito_1980} and kinetic models \cite{Svitel_2003}\linebreak \cite{Forssen_2018} \cite{Malakhova_2020}. Because the inference of a density $g(K_\tau)$ form experimental data is often an ill-posed (curve-fitting) problem \cite{Provencher_1982}, we used the Bayesian framework to include prior information, as done by \cite{Svitel_2008}, and implemented\slash adapted a finite grid approximation akin to \cite{Zhang_2017} \cite{Zhang_2019}.

In this paper, we validate that immunocytochemical staining occurs in a non-equilibrium state. We obtained epitope accessibility distributions from dilution series immunostaining in HeLa cells with exemplary antibodies directed against NF200 and the ribosomal protein RPS11. This established optimal antibody dilutions which minimized the unspecific background. By analysis of the conformation-sensitive epitope of the regulatory subunit RII beta of PKA, we show that the accessibility distribution allows distinct quantification of the varying protein complex states \cite{Isensee_2018}. Finally, we outline and demonstrate how the accessibility histogram allows to perform a double-staining protocol which results in a binding-property-guided antibody multiplexing. 

\section*{Results}

The main focus of this paper are the applications of the accessibility analysis. For this reason, and because a thorough introduction to the involved model requires further explanation, we have deferred the details to the supplement (\ref{sup-sec: accumulation model} - \ref{sup-sec: accessibility histogram and units}). Yet, the accumulation hypothesis that arises from the lack of an equilibrium is not only essential for the equations of the model but also for the interpretation of the results that will be presented. Hence, we outline the principle and the validation of the accumulation hypothesis already here, and repeat the arguments in greater detail in the supplement.

\subsection*{Antibody accumulation}

As already mentioned in the introduction, multiple washing steps prior to the actual measurement disrupt any equilibrium that might have settled during the antibody incubation. The accumulation hypothesis assumes that antibodies which are measurable must have bound permanently. As the name suggests, these permanently bound antibodies accumulate during the incubation period. The rate of this accumulation is among others proportional to the antibody concentration. Yet, since the incubation time is the same for all antibody concentration conditions (in a dilution series experiment), higher antibody concentrations lead to more antibodies that will have bound permanently in the end. Figure \ref{fig: illustration dose-response behavoir} illustrates this principle.

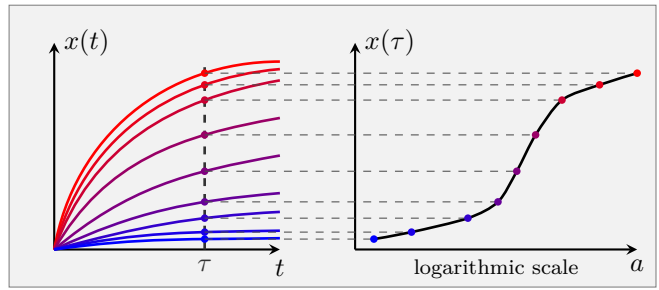
\begin{figure}[h]
	\centering
	\begin{tikzpicture}[xscale = 0.5,yscale = 0.5]
		\filldraw[fill = gray!10, draw = black!50] (-1.2,-1) rectangle (16,6.5);

		\draw[black,thick,-stealth] (0,0) -- (6,0)node[below]{$t$};
		\draw[black,thick,-stealth] (0,0) -- (0,5.5)node[right]{$x(t)$};

		\draw[black!80,dashed,line width = 1pt, name path = t] (4,0)node[below]{$\tau$} --(4,5);

		\draw[red,line width = 1pt, name path = p1] (0,0) to[out=80, in = 180] (6,5);
		\draw[red!90!blue,line width = 1pt, name path = p2] (0,0) to[out=75, in = 185] (6,4.8);
		\draw[red!80!blue, line width = 1pt, name path = p3] (0,0) to[out=70, in = 190] (6,4.5);
		\draw[red!60!blue, line width = 1pt, name path = p4] (0,0) to[out=60, in = 190] (6,3.5);
		\draw[red!50!blue, line width = 1pt, name path = p5] (0,0) to[out=40, in = 190] (6,2.5);
		\draw[red!40!blue, line width = 1pt, name path = p6] (0,0) to[out=25, in = 185] (6,1.5);
		\draw[red!20!blue, line width = 1pt, name path = p7] (0,0) to[out=15, in = 183] (6,1);
		\draw[red!10!blue, line width = 1pt, name path = p8] (0,0) to[out=12, in = 181] (6,0.5);
		\draw[blue, line width = 1pt, name path = p9] (0,0) to[out=8, in = 181] (6,0.3);

		\path [name intersections={of=p1 and t,by=t1}];
		\path [name intersections={of=p2 and t,by=t2}];
		\path [name intersections={of=p3 and t,by=t3}];
		\path [name intersections={of=p4 and t,by=t4}];
		\path [name intersections={of=p5 and t,by=t5}];
		\path [name intersections={of=p6 and t,by=t6}];
		\path [name intersections={of=p7 and t,by=t7}];
		\path [name intersections={of=p8 and t,by=t8}];
		\path [name intersections={of=p9 and t,by=t9}];

		\fill[red] (t1) circle (0.1);
		\fill[red!90!blue] (t2) circle (0.1);
		\fill[red!80!blue] (t3) circle (0.1);
		\fill[red!60!blue] (t4) circle (0.1);
		\fill[red!50!blue] (t5) circle (0.1);
		\fill[red!40!blue] (t6) circle (0.1);
		\fill[red!20!blue] (t7) circle (0.1);
		\fill[red!10!blue] (t8) circle (0.1);
		\fill[blue] (t9) circle (0.1);

		\draw[black,thick,-stealth] (8,0) -- (15.5,0)node[below]{$a$} node[below,pos = 0.5]{{\footnotesize logarithmic scale}};
		\draw[black,thick,-stealth] (8,0) -- (8,5.5)node[right]{$x(\tau)$};

		\coordinate (T1) at ($(t1) + (11.5,0)$);
		\coordinate (T2) at ($(t2) + (10.5,0)$);
		\coordinate (T3) at ($(t3) + (9.5,0)$);
		\coordinate (T4) at ($(t4) + (8.8,0)$);
		\coordinate (T5) at ($(t5) + (8.3,0)$);
		\coordinate (T6) at ($(t6) + (7.8,0)$);
		\coordinate (T7) at ($(t7) + (7,0)$);
		\coordinate (T8) at ($(t8) + (5.5,0)$);
		\coordinate (T9) at ($(t9) + (4.5,0)$);

		\draw[line width = 1pt,smooth] plot coordinates {(T1) (T2) (T3) (T4) (T5) (T6) (T7) (T8) (T9)};

		\draw[dashed, black!70] (t1) -- (T1);
		\fill[red] (T1) circle (0.1);

		\draw[dashed, black!70] (t2) -- (T2);
		\fill[red!90!blue] (T2) circle (0.1);

		\draw[dashed, black!70] (t3) -- (T3);
		\fill[red!80!blue] (T3) circle (0.1);

		\draw[dashed, black!70] (t4) -- (T4);
		\fill[red!60!blue] (T4) circle (0.1);

		\draw[dashed, black!70] (t5) -- (T5);
		\fill[red!50!blue] (T5) circle (0.1);

		\draw[dashed, black!70] (t6) -- (T6);
		\fill[red!40!blue] (T6) circle (0.1);

		\draw[dashed, black!70] (t7) -- (T7);
		\fill[red!20!blue] (T7) circle (0.1);

		\draw[dashed, black!70] (t8) -- (T8);
		\fill[red!10!blue] (T8) circle (0.1);

		\draw[dashed, black!70] (t9) -- (T9);
		\fill[blue] (T9) circle (0.1);

	\end{tikzpicture}
	\caption{The left-hand plot illustrates the antibody accumulation over time with different binding rates caused by different antibody concentrations (red = high concentration, blue = low concentration). The right-hand side plot illustrates how incubating all antibody concentration conditions for the same duration $\tau$ creates the dose-response behavior.}
	\label{fig: illustration dose-response behavoir}
\end{figure}

There is a simple validation experiment to check if the accumulation hypothesis applies: varying the incubation time. Decreasing the incubation time should decrease the response value for all concentrations. Predicting the expected behavior in detail requires further calculations (see supplement \ref{sup-sec: Validation of the accumulation hypothesis}), however. Yet, it is clear that the response decrease should not be uniform. Instead, the decrease should propagate from high concentrations (right) to low concentrations (left). 

This follows from the observation, that a certain threshold concentration is needed to cover all epitopes with antibodies during the incubation time. Any concentration that is higher will lead to the same response, as there cannot bind more antibodies then there are epitopes, which produces a saturation effect. Decreasing the incubation time then should lead to a decrease of the response for concentrations barely higher than the threshold concentration. In contrast, responses for concentrations much larger than the threshold concentration should still reach the saturation. In other word, the saturation point of the dose-response curve is reached at higher concentrations when the incubation time is decreased.

\begin{figure}[h]
    \centering
    \includegraphics[width = \columnwidth]{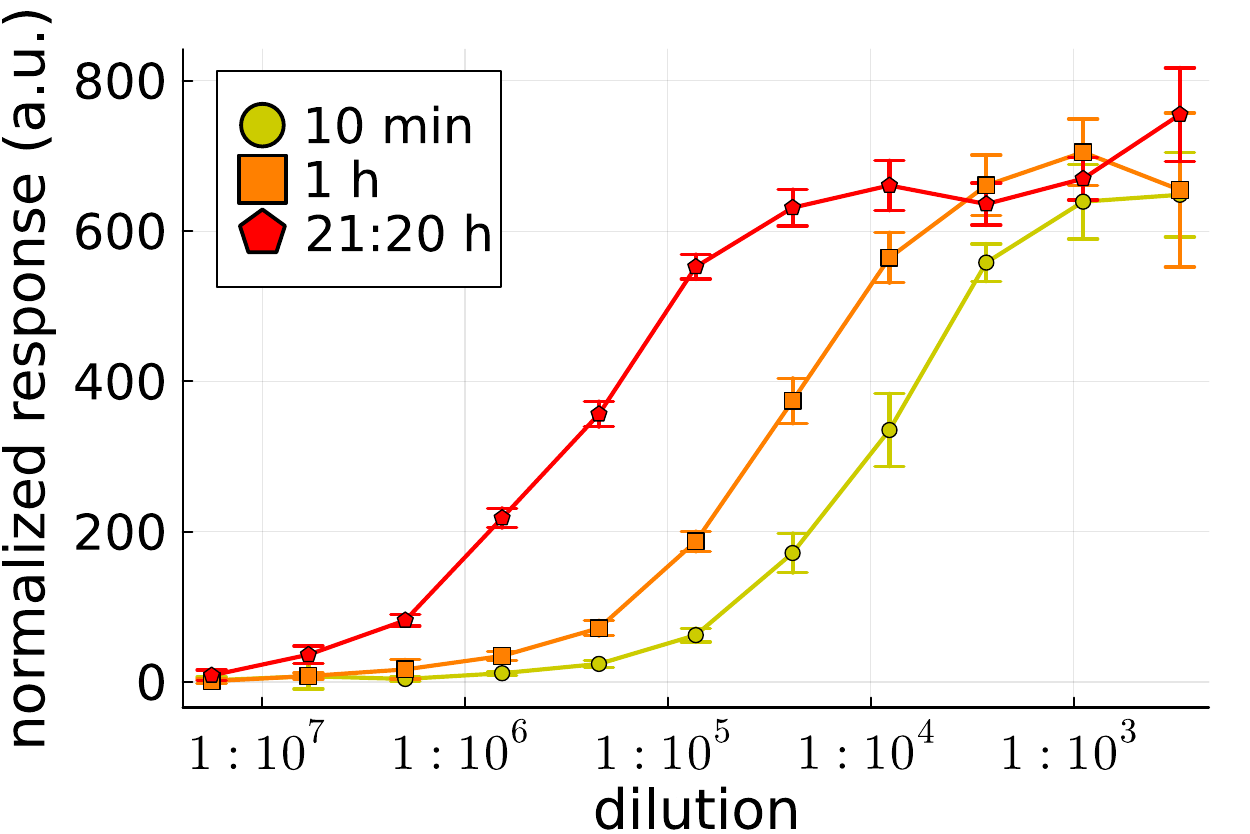}
    \caption{Dose-response curves (anti-NF200 antibody, HeLa cells) with different incubation times of the primary antibodies. Each data point is the mean of 16 replicates with the standard deviation of the replicates as error bars.}
    \label{fig: accumulation hypothesis validation}
\end{figure}

The outlined behavior is exactly what can observed in a simple experiment. Figure \ref{fig: accumulation hypothesis validation} shows 3 dose-response curves which were obtained with different incubation times: \unit[21:20]{h}, \unit[1]{h} and \unit[10]{mins}. In the margins of noise, the \unit[21:20]{h}-curve is always above the \unit[1]{h}-curve and the \unit[1]{h}-curve is always above the \unit[10]{min}-curve. Furthermore, the shift of the saturation point to higher concentrations, when shorter incubation times are used, appears as predicted. 

All in all, this validates the accumulation hypothesis. Considering the model equations in detail allows to check that further predictions are met. Furthermore, once it is validated that the accessibility histogram describes the actual binding properties of antibody binding process, one can apply the accessibility analysis to the dose-response curves of figure \ref{fig: accumulation hypothesis validation}. This will allow to speculate about the underlying antibody binding dynamics. For further information see supplement \ref{sup-sec: Validation of the accumulation hypothesis}.

\subsection*{The validation system}

\begin{figure*}[!ht]
	\centering
	\includegraphics[width = \textwidth]{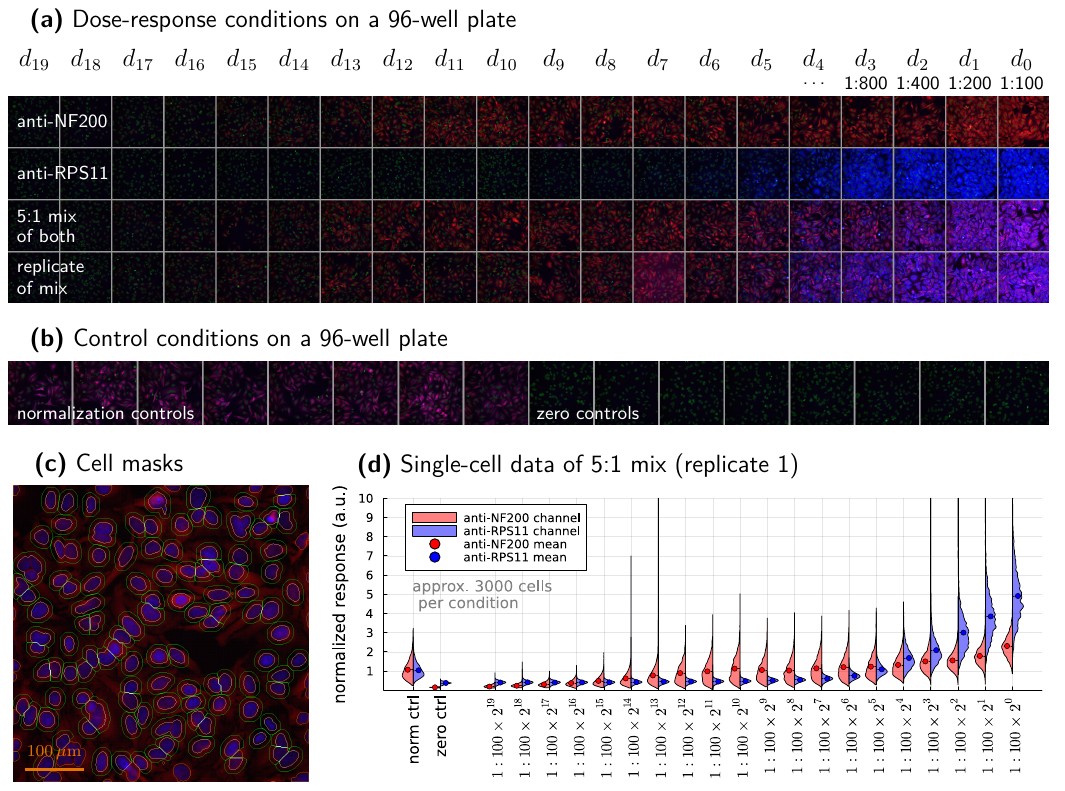}
	\caption{\textbf{Data acquisition from a 96-well plate for the validation system with HCS microscopy}\\[0.5em]
	\textbf{(In general)} The color channel signal intensities were normalized with respect to the normalization control. The total brightness of the images was uniformly increased for better visibility (does not apply to the data quantification). The colors do not represent the fluorescence emission frequencies but were chosen for better visibility in case of red-green blindness. Subfigure (c) was edited further for better visibility.\\[0.5em]
	\textbf{(a)} Example images of the 1:2 dilution series for the primary-antibody conditions: Only anti-NF200 (monoclonal mouse antibody) in the first row, only anti-RPS11 (monoclonal rabbit antibody) in the second row, 5:1 mix of anti-NF200 and anti-RPS11 in the last 2 rows (replicates). For the pre-dilutions, the anti-RPS11 antibody was diluted 1:100 and the anti-NF200 antibody was diluted 1:20 (leading to the 5:1 mixing-ratio). All pre-dilutions were labeled as 1:100 in all images\slash plots for consistent comparisons of the conditions (this essentially emulates a higher concentrated vendor dilution for the anti-NF200 antibody). In all cases both secondary antibodies (blue: anti-rabbit Alexa Fluor 555 and red: anti-mouse Alexa Fluor Plus 647) as well as Hoechst were used. \textbf{(b)} Normalization controls (first 8 wells): Only anti-NF200 diluted 1:1000 (from the actual vendor stock) but two anti-mouse secondary antibodies (blue: Alexa Fluor 555 and red: Alexa Fluor Plus 647). Zero controls (remaining 8 wells): no primary antibodies, only secondary antibodies and stains (anti-rabbit Alexa Fluor 555, anti-mouse Alexa Fluor Plus 647 and Hoechst). \textbf{(c)} Example image of the automatic quantification from the High Content Screening microscope software. Cell nuclei were identified with the Hoechst staining (here blue instead of green for better visibility in case of red-green blindness). The quantification regions (green circles) were defined by increasing the nuclei regions (white circles). Nuclei regions too close to the edges, regions that are deformed etc., were excluded (orange circles). \textbf{(d)} Distribution of average cell-responses for a single antibody-mix replicate. All responses were normalized to the average response of the normalization controls.}
	\label{main-fig: high content}
\end{figure*}

\begin{figure*}[!h]
	\centering
	\includegraphics[width = \textwidth]{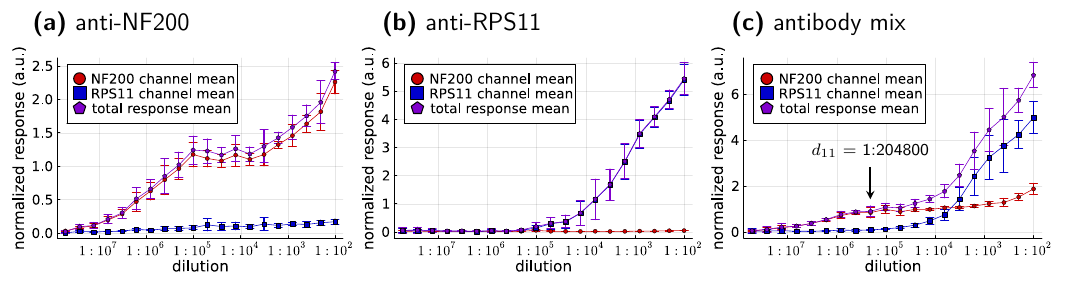}
	\caption{\textbf{Dose-response curves of the validation system (with individual scales)}\\[0.5em]
	Dose-response curves for the three conditions: \textbf{(a)} only anti-NF200 \textbf, \textbf{(b)} only anti-RPS11 and \textbf{(c)} the antibody mix. Both color channels, here red for anti-NF200 and blue for anti-RPS11, were measured in each case. The total intensity is the sum of both channels. Each individual dilution series on a plate defines a replicate, hence there are 4 replicates for the individual antibody conditions and 8 for the antibody mix condition. Two replicates were excluded completely, one for anti-NF200 and one for the antibody mix (see supplement \ref{sup-sec: removed replicates} for details). The individual does-response curves can be found in supplement \ref{sup-subsec: Individual replicates}. The optimal dilution quotient ($d_{11}$) that maximizes the anti-NF200 response while minimizing the anti-RPS11 response (splitting of the NF200 channel curve and the total-signal curve), is marked in the antibody-mix plot. Since each color channel was quantified, even if the respective antibody was not present, it is also possible to estimate the channel bleed-through: around \unit[10]{\%} from the NF200 channel to the RPS11 channel and almost negligible the other way around.}
	\label{main-fig: dose-response}

\end{figure*}

\begin{figure*}[!h]
	\centering
	\includegraphics[width = \textwidth]{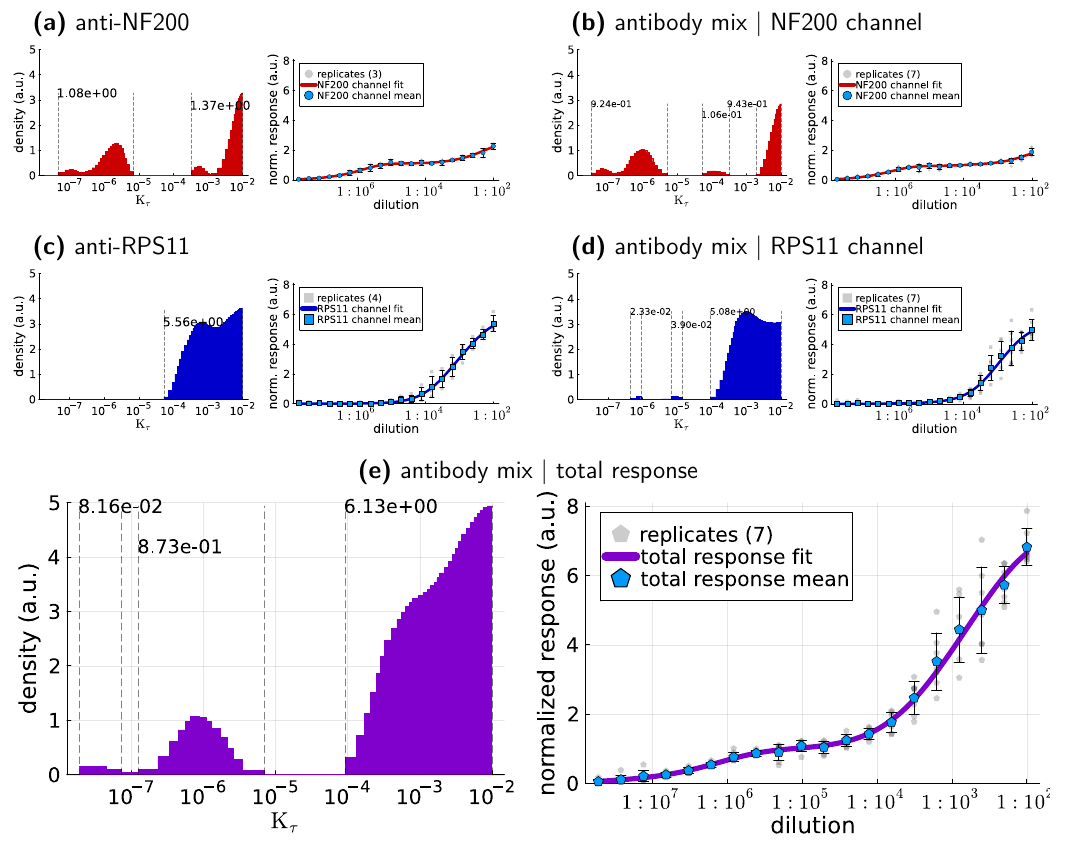}
	\caption{\textbf{Fitted curves and accessibility histograms}\\[0.5em]
	Accessibility analysis of the dose-response data (right-hand plots) from figure \ref{main-fig: dose-response} to obtain the corresponding accessibility histograms (left-hand plots). The data points are the mean values and the error bars are the standard deviations of the replicates. The black labels in the histogram plots show the amount of epitopes (in units of normalized response) of the individual peaks enclosed by the dashed lines. The differently colored plots belong to the following conditions and channels: \textbf{(a)} the NF200 channel (red) for the anti-NF200 antibody, \textbf{(b)} the NF200 channel (red) for the antibody mix, \textbf{(c)} the RPS11 channel (blue) for the anti-RPS11 antibody, \textbf{(d)} the RPS11 channel (blue) for the antibody mix and \textbf{(e)} the total response (sum of channels in violet) for the antibody mix.
	}
	\label{main-fig: histograms}
\end{figure*}

The general principle of the accessibility analysis should be applicable to all immunostaining experiments that allow for a direct or indirect numerical quantification of the amount of bound antibodies. In the same way, the accessibility analyses should work with most antibody-cell combinations. The only exception is the determination of optimal antibody concentrations\slash dilutions, which requires monoclonal antibodies as premise to draw the required conclusions from the accessibility histogram.

In contrast to the application of the accessibility analysis, the validation requires specific properties, such that the obtained results can be compared to already established methods. Thus, we chose to mix two monoclonal antibodies from different host species. Using distinct fluorescence labels then allows to measure the individual antibody contributions simultaneously by common multi-color microscopy of stained cells. The antibodies were mixed at a 5:1 ratio such that the fluorescence signal of one of the antibodies is dominant at lower concentrations. In this way, we created a system in which one of the antibodies behaves like a ``specific'' signal that appears at low concentrations, while the other antibody mimics the behavior of ``unspecific'' staining that only appears at higher concentrations. The accessibility analysis, which uses only the total signal, is validated by considering the individual contributions of the antibodies, which are obtained from the color channel measurements.

In detail, we used fixed HeLa cells and incubated them with  a monoclonal mouse anti-NF200 antibody and a monoclonal rabbit anti-RPS11 antibody. Note however, that the particular target epitopes are irrelevant for the validation. The antibodies were merely chosen because of their dose-response behavior and because they were in stock at our laboratory. To achieve the 5:1 mixing ratio, we pre-diluted the anti-NF200 antibody 1:20 and the anti-RPS11 antibody 1:100. These pre-dilutions basically function as stock solutions, where the 1:20 dilution emulates that the antibody vendor would provide a higher concentrated antibody solution. Consequently, we labeled both pre-dilutions as 1:100 and use dilutions instead of concentrations in this paper. We define the term ``dilution quotient`` to disambiguate what lower dilution means: $1:200$ is a lower ``dilution quotient'' than $1:100$, etc.. 

Working with dilution quotients instead of concentrations does not affect the accessibility analysis. First, dilution quotients are proportional to the actual concentration because of the order ($1:100 < 1:200$) that we chose. Thus, dilution quotients constitute just a change of units for the concentration. Second, the shape of the accessibility histogram is invariant under unit changes (see supplement \ref{sup-sec: accessibility histogram and units} for a proof and further details).

Since the accessibility analysis is applied to dose-response data, 4 dose-response curves were prepared per 96-well plate (4 replicate plates in total): only anti-NF200, only anti-RPS11 and the antibody mix (twice). The antibody-mix condition is used for the validation, while the other conditions are mainly used for additional controls and comparisons. For each condition, a 1:2 dilution series was performed with 20 dilution steps from the pre-dilutions (Fig. \ref{main-fig: high content}a). In addition, each plate contained 8 zero controls and 8 normalization controls (Fig. \ref{main-fig: high content}b). The zero controls did not contain the primary antibodies and were used to remove the signal offset created by e.g. autofluorescence. The normalization controls contained only anti-NF200 with a 1:1 mixture of anti-mouse Alexa Fluor 555 and anti-mouse Alexa Fluor Plus 647. The resulting signal intensities were used as reference values to normalize all other measured signal intensities to (see methods for details).

All antibody signal intensities were obtained with a High Content Screening (HCS) microscope, identifying and quantifying around 3000 cells per well (Fig. \ref{main-fig: high content}c). Figure \ref{main-fig: high content}d shows exemplarily the single-cell data (average responses of the cells) for a single dilution-series replicate of the antibody mix. Using an HCS microscope allowed for the quantification of thousands of cells. This large number ensures a good estimate of the population mean, as the standard error of the mean scales with $\frac{1}{\sqrt{n}}$. Consequently, the population mean was used for the dose-response curves. 

As can be seen in figure \ref{main-fig: high content}a, the anti-NF200 signal (red) begins to appear at lower dilution quotients, compared to the anti-RPS11 signal (blue). This is true for both the individual antibody conditions and the antibody mix condition, where the color turns from red to violet (i.e. red + blue). Inspecting the dose-response curves (Fig. \ref{main-fig: dose-response}) confirms this visual inspection. Thus, as intended we have constructed a validation system where the anti-RPS11 signal (blue) mimics the unwanted background signal. Hence, it is possible to determine the optimal antibody dilution quotient for our artificial system from the color channel responses, by picking the largest dilution quotient at which the anti-RPS11 signal (blue) does not contribute. Inspecting the dose-response curves of the antibody mix in figure \ref{main-fig: dose-response}c, this optimal dilution quotient is given by the splitting point of the anti-NF200 and the total response curve, which is around $d_{11} = 1:205800$.

\subsection*{Accessibility histograms describe binding properties}

\begin{figure*}[!h]
	\centering
	\includegraphics[width = \textwidth]{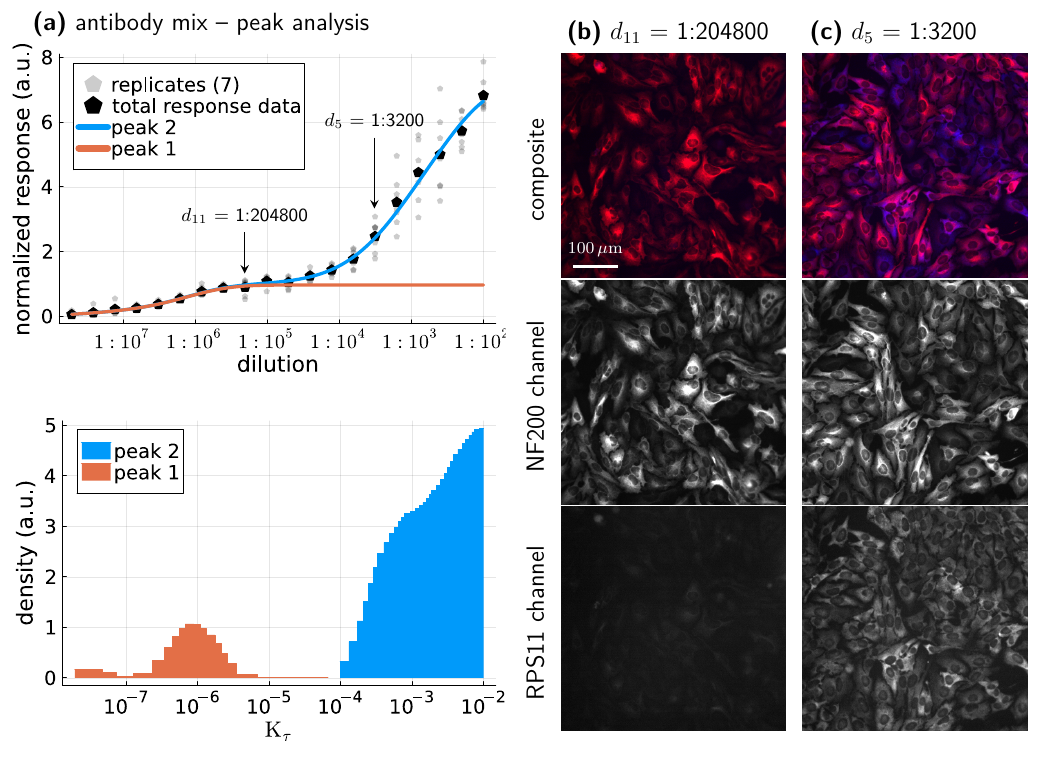}
	\caption{\textbf{Identification of the optimal dilution quotient with the accessibility analysis}\\[0.5em]
	\textbf{(a)} Individual contributions to the dose-response curve of the low-$K_\tau$ peaks (orange, $K_\tau \leq 10^{-5}$) and the high-$K_\tau$ peaks (azure, $K_\tau \geq 10^{-4}$). The dose-response curve of the high-$K_\tau$ peaks is added on top of the dose-response curve of the low-$K_\tau$ peaks. The splitting point between the individual contributions (colored curves) defines the optimal antibody dilution quotient. Both the optimal dilution quotient ($d_{11}$) and the dilutionq quotient at which both peaks contribute equally to the dose-response signal ($d_5$) are marked in the dose-response plot. \textbf{(b)} Example images for the $d_{11}$ dilution quotient. The brightness of all images was increased for better visibility. On top is the composite, in the middle is the monochrome image for the NF200 channel and at the bottom is the monochrome image for the RPS11 channel. The RPS11 channel image appears to be a faint copy of the NF200 channel image because of the 10\% channel bleed-through.  \textbf{(c)} Example images for the $d_5$ dilution quotient with the same brightness increase and the same order of channel-images.
	}
	\label{main-fig: optimal dilution}
\end{figure*}

\begin{figure*}[!h]
	\centering
	\includegraphics[width = \textwidth]{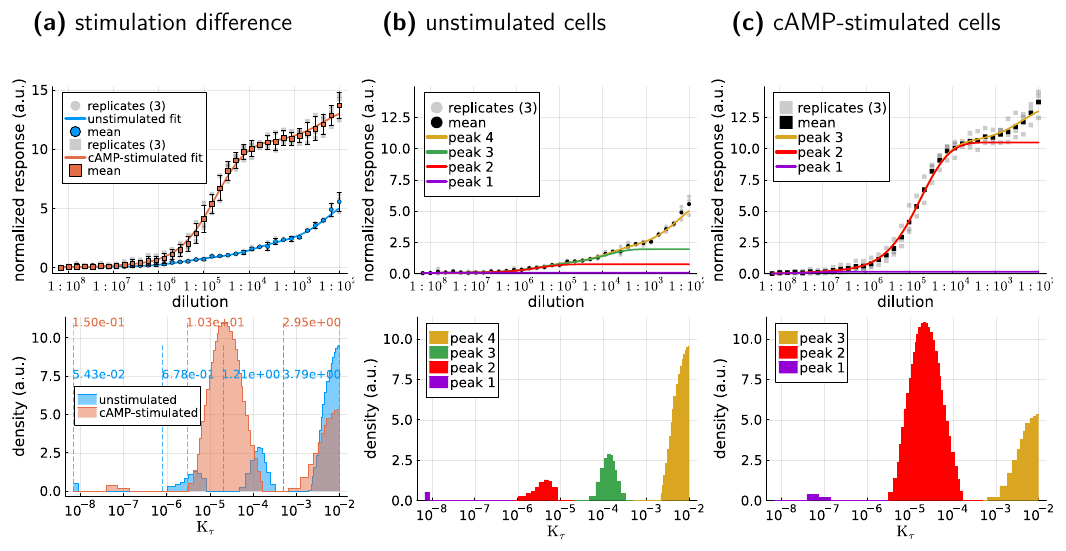}
	\caption{\textbf{Accessibility analysis of the pRII-antibody in  unstimulated and cAMP-stimulated HeLa cells}\\[0.5em]
	\textbf{(a)} Dose-response data and fitted curves for unstimulated HeLa cells (azure) and cAMP-stimulated HeLa cells (orange) as well as the corresponding accessibility histograms below. The color matched labels show the amount of eptiopes in units of normalized response for the individual peaks enclosed by the color matched dashed lines. \textbf{(b)} Peak analysis for the unstimulated cells, showing the contribution of the individual peaks to the dose-response curve. The color matched contributions to the dose-response curve are cumulative, i.e. added to the responses of peaks from left (low $K_\tau$) to right (high $K_\tau$). \textbf{(c)} Same as (b) for the cAMP-stimulated cells.
	}
	\label{main-fig: PKA}
\end{figure*}

While only the antibody mix condition is necessary for a validation, the other conditions provide good consistency checks for the accessibility analysis and the experimental design. Consequently, figure \ref{main-fig: histograms} shows the fitted dose-response curves and the corresponding accessibility histograms for all conditions. As already mentioned in the introduction, the histograms represent the accessibility distributions, i.e. the amount of epitopes (measured in units of the normalized response) for a given $K_\tau$. Thus, isolated\slash separate peaks indicate the presence of separated epitope classes.

Since the accessibility histograms have to cover several orders of magnitude, a logarithmic scale is used. The bin heights are divided by the visual widths of the bins such that the area of a peak in the plot corresponds to its effect on the dose-response curve (see methods and supplement \ref{sup-subsec: histogram visualization} for further details).

For the model fitting, we used a large regularization parameter to obtain smooth histograms (feature parsimony) whose corresponding model curve still describes the data well. That is, in all cases the fitted curves describe the data within the standard deviations of the replicates. The details about the fitting objective function and the chosen regularization\slash prior probability can be found in the methods section and the results for a smaller regularization parameter can be found in supplement \ref{sup-sec: histograms with weaker regularization}. Since the regularization parameter affects the peak widths, \cite{Forssen_2018} argues (in the context of kinetic models) that only the number of peaks and their positions should be considered. Nevertheless, it can be observed that similar dose-response curves produce similar histograms, which corroborates the consistency of the accessibility histogram. 

Comparing only the dose-response data in figures \ref{main-fig: histograms}(a-d) shows that the antibody mix produces indeed a superposition of two independent dose-response signals, as demanded for the validation system. 
By construction of the experiment, the total signal intensity (Fig. \ref{main-fig: histograms}e) is the sum of both color channels, i.e. the sum of both antibody signal intensities. What does not follow from the experiment design is that the same can be observed for the accessibility histograms: The histogram for the total signal intensity is approximately the sum of the histograms for the individual color channels. In other words, although the total signal intensity does not contain any information about the contributions of the individual antibodies, the histogram obtained from total-signal data still retrieves the individual features of both antibodies. This indicates that the accessibility histogram describes actual binding properties of antibodies.

In detail it can be observed that the peaks at $K_\tau \leq 10^{-5}$ of the anti-NF200 antibody are in the total-signal histogram as are the peak(s) at $K_\tau \geq 10^{-4}$ of the anti-RPS11 antibody (added together with the peaks of the anti-NF200 antibody at $K_\tau \geq 10^{-4}$). That the RPS11 histogram probably consists of 2 peaks can be seen in supplement \ref{sup-sec: histograms with weaker regularization}, where a weaker regularization is used. 

To show how the optimal antibody dilution quotient can be inferred from the accessibility histogram, we need to define what ``optimal'' means.  The optimal dilution quotient should minimize the unspecific background while maximizing the signal from the actual target epitopes. Since a slow binding rate (large $K_\tau$) corresponds to higher dilution quotients, where unspecific background is most prominent, it is reasonable to define the optimal dilution quotient as follows: the optimal dilution quotient maximizes the contributions from fast-binding peaks (small $K_\tau$) while minimizing the contributions from slow-binding peaks (large $K_\tau$). 

Strictly speaking, our definition of the optimal antibody dilution quotient requires monoclonal antibodies, as we assign high-$K_\tau$ peaks to unspecific binding (which may also include a cross-reaction to a different epitope). This assumption breaks down for polyclonal antibodies, which consist of a mixture of antibodies against different epitopes. High-$K_\tau$ peaks can belong to a highly specific subgroup of antibodies in the polyclonal mix that just happen to be highly dilute\slash rare. In fact, we have used this property to create our artificial validation system to behave as needed for the validation.

Because the accessibility model calculates the dose-res\-ponse curve from the histogram, i.e. the histogram is the parameter of the model, it is easy to obtain the relative contributions of different peaks. Figure \ref{main-fig: optimal dilution}a shows the individual dose-response curve contributions of the fast-binding peaks (orange) and the slow-binding peaks (azure) as cumulative plot (slow-binding curve added on top of the fast-binding curve). The splitting point of both curves thus marks the dilution quotient at which the slow-binding peak begins to contribute to the overall dose-response curve. Hence, the splitting point marks the optimal dilution quotient according to our definition. In this case the optimal dilution quotient is around $d_{11} = 1:204800$, coinciding with the optimal dilution quotient obtained from the color channel analysis in figure \ref{main-fig: dose-response}. Note, that only the total-signal data, devoid of any color channel information, was used to obtain the optimal dilution quotient with the accessibility analysis. This corroborates the use of the accessibility analysis for the identification of optimal antibody concentrations\slash dilution quotients in case of monoclonal antibodies.

We observe, that the image in figure \ref{main-fig: high content}a that corresponds to the optimal dilution quotient $d_{11}$ shows only a weak staining. This is inevitable, as we had to pick microscopy exposure times such that there is no signal clipping (for the images of the dose-response curve with the highest dilution quotients). In general, however, after having obtained the optimal dilution quotient from a dose-response experiment, one would adjust the exposure time for the experiment of interest. This signal amplification, be it due to longer exposure times or digital gain, is then justified, as the optimal dilution quotient minimizes the contribution from the unspecific background. To visually compare the optimal dilution quotient $d_{11}$ with the dilution quotient $d_5$ at which the anti-NF200 signal and the anti-RPS11 signal contribute equally (as determined by the accessibility analysis), we have increase the brightness of the images (Fig. \ref{main-fig: optimal dilution}b,c).

\subsection*{Conformational changes are visible in accessibility histograms}

 Both the fact that the sum of dose-response curves leads to the sum of the individual accessibility histograms and the consistency of histograms for similar dose-response curves indicate that the accessibility histogram corresponds to the binding properties of antibodies. If that is the case, a change of the target epitope accessibility should be observable in the accessibility histogram. For this purpose, we investigated dose-response curves of the pRII-antibody (see methods) against a phosphorylated epitope (pRII-epitope) of the regulatory subunit of PKA type II (PKA-II).

 At this point we only care about the gross structure of PKA-II, the location of the epitope and the effect of cAMP-stimulation: Inactive PKA-II consists of two catalytic and two regulatory (RII) subunits bound together. Upon cAMP stimulation, the catalytic subunits separate from the regulatory subunits. The results of \cite{Isensee_2018} suggest that
 \begin{enumerate*}[label=(\roman*)]
	\item the pRII-epitope is already phosphorylated in inactive PKA-II
	\item the pRII-epitope is blocked by the catalytic subunits in inactive PKA-II and becomes accessible upon cAMP-stimulation
	\item there is some conformational flexibility, making some of the pRII-epitope accessible even without a cAMP-stimulation.
 \end{enumerate*}

In summary, it can be expected that there is a pRII-antibody signal even for unstimulated HeLa cells. Be it because of the conformational flexibility or because of a basal activity of PKA during the cell fixation. For cAMP-stimulated cells, the signal intensity should then increase. Based on the results of \cite{Isensee_2018}, the accessibility histogram for unstimulated cells should have peaks that belong to accessible epitopes and the signal increase in cAMP-stimulated cells should correspond only to an increase of these accessible epitopes.

Figure \ref{main-fig: PKA}a shows the dose-response curves and the corresponding accessibility histograms for both unstimulated (azure) and cAMP-simulated (orange) HeLa cells. Figures \ref{main-fig: PKA}b,c show the corresponding peak analysis plots that were discussed in the previous section. It can be observed that the two middle peaks for unstimulated cells have become a single, larger and broader peak for cAMP-stimulated cells. In fact, the increase of epitopes is more than fivefold ($0.678 + 1.21 \rightarrow 10.3$ in units of normalized response). On the other hand, there is a minor decrease of epitopes ($3.79 \rightarrow 2.95$) and a slight broadening of the peak at the right-hand edge of the histogram. But overall, the effect of the rightmost peak on the dose-response curve is almost identical in both stimulation conditions, as can be seen in figures \ref{main-fig: PKA}b,c . Thus, the response increase for the cAMP-stimulated cells can be attributed completely to the increased middle peak.

 Based on the three major peaks one can define 3 accessibility classes of epitopes for the unstimulated cells. Since the least accessible epitopes (rightmost peak) do not seem to be affected by cAMP-stimulation and since they only account for the behavior of the data points with the largest antibody dilution quotients, it stands to reason that this is an unspecific background signal. This suggests $1:10^4$ as optimal dilution quotient for immunofluorescence staining, which is 20-times more dilute than the vendor suggestion ($1:500$). In any case, the other 2 peaks belong to more accessible epitopes, most likely the accessible pRII-epitopes. The massive increase of the amount of epitopes (middle peak) then meets the expectation that cAMP-stimulation exposes many pRII-epitopes. In this sense, the accessibility analysis agrees with the findings of \cite{Isensee_2018}. 

\subsection*{Computational multiplexing with accessibility histograms and double-staining}
\begin{figure*}[!h]
	\centering
	\includegraphics[width = \textwidth]{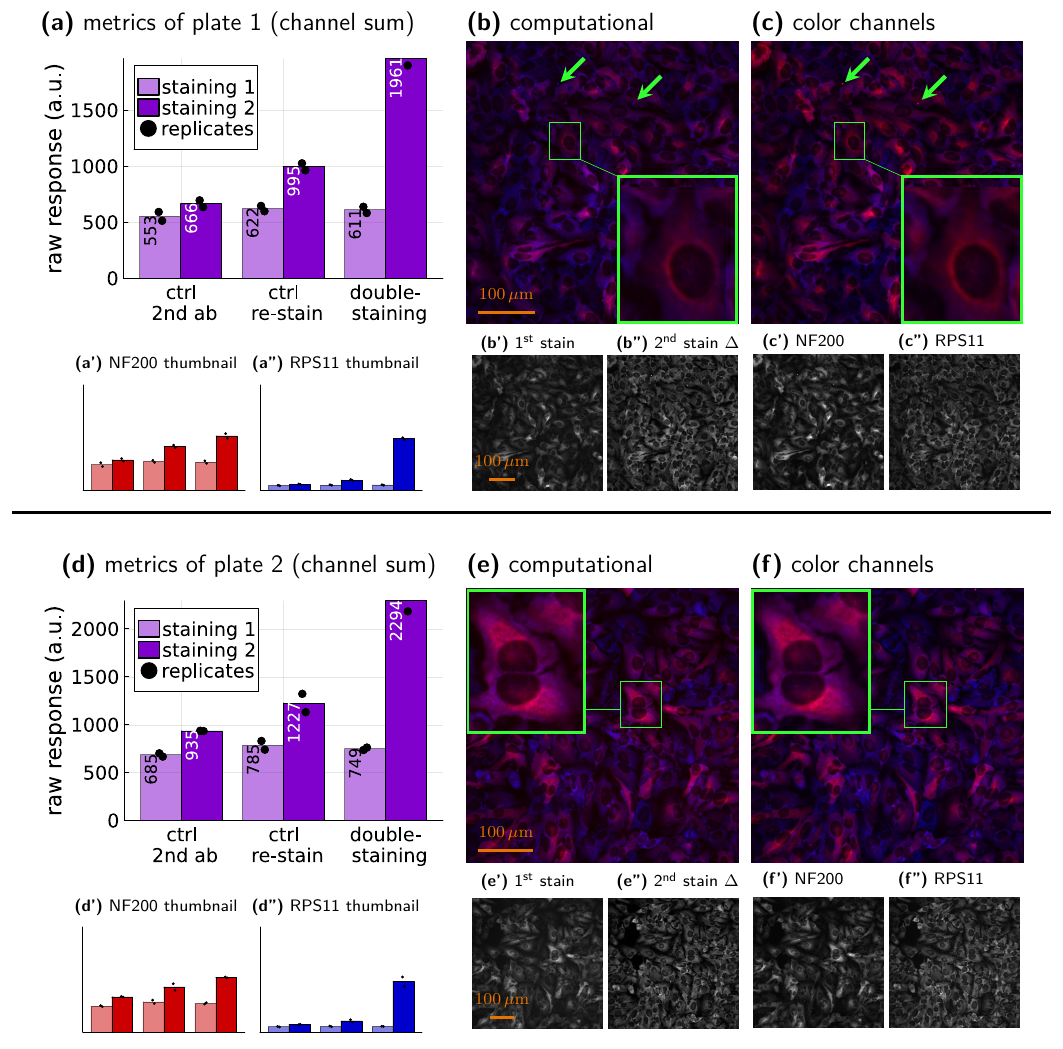}
	\caption{\textbf{Double-staining metrics and images}\\[0.5em]
	\textbf{(a)} Metrics of the different conditions, quantified with the HCS microscope: 2 wells for each condition with around 3000 cells quantified per well. The scatter points are the mean values over all quantified cells in the same well and the bars show the mean values of both wells. Light shades (left bars) are for the 1st staining and dark shades (right bars) are for the 2nd staining. The numbers in the bars are the rounded numeric values of the bars. \textbf{(a', a'')} Thumbnails for the bar plots of the individual channels (full plots in supplement figure \ref{sup-fig: optimal metrics}). \textbf{(b)} Computational composite of the double-staining protocol. The green arrows indicate dust\slash debris that got assigned to the wrong color. \textbf{(b')} 1st staining image (total signal) with correction for the double-application of secondary antibodies (brightness increase according to the fold change of ctrl 2nd ab in (a)). \textbf{(b'')} Difference image of the 2nd staining image (see supplement figure \ref{sup-fig: graphical illustration}) and the brightness-increased 1st staining image (b'). \textbf{(c)} Color composite from the channels (anti-NF200 in red, anti-RPS11 in blue) after the 2nd staining. The green arrows show again the dust\slash debris spots. \textbf{(c')} Monochrome NF200 channel image after the 2nd staining. \textbf{(c'')} Monochrome RPS11 channel image after the 2nd staining.
	\textbf{(d-f)} Independent repetition of the experiment.
	}
	\label{main-fig: double-staining}
\end{figure*}

For the optimal dilution quotient, as explained above, it is important to separate the response of low-$K_\tau$ peaks (left) from the response of high-$K_\tau$ peaks (right). But in figure \ref{main-fig: PKA}a, there are three peaks for unstimulated HeLa cells and one might want to isolate the contribution of only the middle peak. Unfortunately, there is no single dilution quotient that achieves this. As the cumulative peak analysis plots (Fig. \ref{main-fig: optimal dilution}a and \ref{main-fig: PKA}b,c) already suggested, the contributions of fast-binding\slash low-$K_\tau$ peaks appear first and are always present for dilution quotients at which slower-binding\slash high-$K_\tau$ peaks begin to contribute. But staining the same sample twice should allow to isolate the contribution of a high-$K_\tau$ peak.

In detail this means that the cells are stained first with the antibody dilution quotient that maximizes the contribution of the the low-$K_\tau$ peaks. Then, a microscopy image of the resulting staining is captured. After this, the cells are stained again, this time with a higher antibody dilution quotient (i.e. higher concentration). Then, a second microscopy image is created. By choice of the antibody dilution quotients, the 1st staining image would consist only of the the signal from the low-$K_\tau$ peaks. The 2nd staining image would then additionally contain the contributions from the high-$K_\tau$ peaks of interest. Thus, aligning the images and subtracting the pixel intensities of the 1st staining image from the 2nd staining image should isolate the contributions of the peaks of interest (see supplement Fig. \ref{sup-fig: graphical illustration} for the images and a graphical illustration of the idea).

Since we must be able to validate the results of the double-staining approach, we decided to use the validation system once again. The idea is that the low-$K_\tau$ peaks (orange in figure \ref{main-fig: optimal dilution}a) belong solely to the anti-NF200 antibody (red channel). And the high-$K_\tau$ peaks (azure in figure \ref{main-fig: optimal dilution}a) belong partly to the anti-RPS11 antibody (blue channel). Thus, the goal is to separate the individual antibody responses from only the gray-scale total intensity images by using the double-staining approach. Comparing the results with the individual antibody responses, using the color channels, then allows to assess the double-staining approach. 

However, the high-$K_\tau$ peaks consist of peaks from both antibodies, anti-RPS11 and anti-NF200 (see Fig. \ref{main-fig: optimal dilution}) -- a property we also observed for other antibody mixtures that we tested as potential validation systems. In other words, even if the attribution of the signals to the peaks works flawlessly, the signal attributed to the high-$K_\tau$ peaks will comprise both antibody signals. This complicates a direct comparison to the color channels and constitutes a deficit of the validation system for the validation of the double-staining approach. Fortunately, though, the high-$K_\tau$ peaks of the anti-NF200 antibody has a higher $K_\tau$ than parts of the anti-RPS11 peaks. Thus, avoiding too high 2nd staining concentrations alleviates the deficits of the validation system. Note, however, that our general recommendation (for 2 peaks) is to use high 2nd staining concentrations to obtain a meaningful signal increase after the 2nd staining.

Having separated the contributions of the individual peaks allows to create a composite image by using different colors for the individual contributions. For brevity, we refer to this image as computational composite in the following. This computational composite can easily be compared to the color channel composite of the 2nd staining, for a validation of the double-staining approach. 

Before we can continue with the results of the double-staining, there are some caveats with the double-staining approach that need to be addressed first. Any dust or debris that moves or is added in-between the stainings will be wrongly attributed (green arrows in Fig. \ref{main-fig: double-staining}b,c). Furthermore, the model for the accessibility analysis treats both the dynamics of the secondary antibodies and any depletion effects as parts of the binding properties for a given protocol (details in supplement \ref{sup-sec: accumulation model}). Hence, the results from the accessibility analysis are only valid for the very experimentation protocol and the very system they were obtained from. This is because using a different dilution of secondary antibodies, or as in this case applying the secondary antibodies twice, changes the obtained response for a given dilution quotient of the primary antibody. 

For the double-staining experiment (Fig. \ref{main-fig: double-staining}) we used the dilution quotient $d_{10} = 1:102400$ as 1st staining antibody-mix dilution. Although we have identified $d_{11} = 1:204800$ as optimal dilution quotient in the previous sections, the next higher dilution quotient was picked to a ensure saturation of signal contributions from the low-$K_\tau$ peaks (orange in Fig. \ref{main-fig: optimal dilution}). This is possible here, as the contributions of the high-$K_\tau$ peaks (azure) are still negligible for $d_{10}$. For the 2nd staining we chose the dilution quotient $d_6 = 1:6400$ to get a usable anti-RPS11 signal while avoiding to much signal increase from the anti-NF200 antibody. If not for the validation system deficits, we would have chosen $d_0 = 1:100$ here. To account for the changed protocol (secondary antibodies applied twice), we added a secondary antibody control (ctrl 2nd ab), where only the secondary antibodies were re-applied for the 2nd staining. Finally, to estimate the deficiency of the validation system, we added a re-stain control (ctrl re-stain), where the 1st staining was repeated for the 2nd staining. In all cases, secondary antibodies were applied with a 1:1000 dilution.

Figure \ref{main-fig: double-staining}a shows the metrics of the controls and the double-staining condition, quantified with the HCS microscope. As expected, even adding secondary antibodies twice leads to a signal increase. Re-staining with the same dilution quotient $d_{10}$ produces an even stronger increase, but not as strong as using the higher dilution quotient $d_6$. The fact that the anti-NF200 signal almost doubled in the double-staining condition shows again the limitations of the validation system. Note that the obtained metrics were not evaluated statistically, as their sole purpose is to provide rough estimates. For example, we use the secondary antibody control as conservative estimate for changes due to applying primary and secondary antibodies twice. That is, we increase the brightness of the 1st staining images by the fold change of the secondary antibody control.

Figures \ref{main-fig: double-staining}b',b'' show the images for the individual peak contributions and figure \ref{main-fig: double-staining}b shows the computational composite. Comparing these results to the color channels (Fig.\ref{main-fig: double-staining}c-c'') corroborates the double-staining approach, even under suboptimal conditions that are caused by the validation system deficits. The double-staining successfully retrieved the contributions of the individual antibodies, at least qualitatively, using only the total-signal-intensity images. Figures \ref{main-fig: double-staining}d-f'' show an independent repetition of the double-staining experiment to check replicability. In fact, we analyzed two different samples for each repetition, capturing multiple distinct images per sample, which corroborates the replicability further (see supplement figure \ref{sup-fig: double-staining view fields}).

At this point one might wonder if the accessibility analysis is really necessary. After all, staining twice and taking images in between is possible without the accessibility analysis. However, picking the staining dilutions without the information from the accessibility histograms, i.e. using arbitrary dilutions, leads to arbitrary signal separations that are meaningless. The results for using wrong dilutions can be found in supplement \ref{sup-sec: double-staining additional figures}.

\section*{Discussion}
In this paper, we have demonstrated applications of the computational accessibility analysis that we have implemented in open source Julia packages \cite{Tschimmel_2024_AntibodyPackages} to provide easy access. 

First, we have demonstrated that the accessibility analysis can determine optimal antibody concentrations\slash\hspace{0pt}dilution quotients for monoclonal antibodies that minimize the contribution of potentially unspecific background signals. The primary application may be the selection of antibody concentrations\slash dilution quotients for immunocytochemical and immunohistochemical staining experiments, which is often guided only by subjective visual interpretation. But the accessibility analysis could also be used to optimize immmunostainings seperately with simple dose-response experiments, before conducting further experiments like e.g. flow cytometry.

Second, we have provided evidence that the accessibility histogram describes binding properties of antibodies. Furthermore, we have shown that the accessibility analysis can be used to investigate accessibility changes of epitopes that my be caused by conformational changes or changes of the local epitope environment. Thus, the accessibility analysis could offer a cheap and easy first analysis to investigate e.g. conformational changes from simple dose-response experiments. In fact, the accessibility analysis should be applicable to any antibody-based experiment that allows a quantification of the amount of bound antibodies. The ability to vary the antibody concentration\slash dilution quotient is just a soft requirement, as varying the antibody concentration\slash dilution quotient could always be achieved by repeating the experiment from scratch. Thus, repeating e.g. a Western Blot experiment with different antibody concentrations\slash dilution quotients could allow to correlate the signal intensity to both the protein weight\slash size and the epitope binding rate\slash accessibility. 

Finally, we have illustrated how the binding information of the accessibility histograms can be used to investigate the epitope landscape further with a simple double-staining approach. Staining the same cell sample twice with concentrations\slash dilution quotients determined from the accessibility histogram allows to locate the contributions of accessibility histogram peaks within microscopy images. The basic idea of double-staining is akin to the label-coding described in \cite{Chen_2015} \cite{Eng_2019} \cite{Goltsev_2018}, where the label response code obtained from sequential imaging allows for a computational multiplexing. Yet, the working principle is fundamentally different. In case of the accessibility analysis, the multiplexing arises from the binding-property-heterogeneity of epitopes. Because of this, our binding-property-guided multiplexing could be well fit for the investigation and application of bi-specific antibodies. So far, bi-specific antibodies seem to be of interest only for therapeutics \cite{Chames_2009} \cite{Runcie_2018} \cite{Segaliny_2023}, as simultaneous staining of multiple targets is not desirable for immunostaining-antibodies. Hence, modeling approaches for bi-specific binders often have a therapeutic focus or investigate general binding properties \cite{Doldan-Martelli_2013} \cite{Harms_2014} \cite{Rhoden_2016} \cite{Steeg_2016}. The accessibility analysis could help in the research of bi-specific antibodies. Going one step beyond, our binding-property-guided multiplexing could even render bi-specific antibodies viable for immunostaining, allowing for a single-label multiplexing of different targets.

Despite providing replicable and consistent results for the applications that we have shown, the underlying pseudo-first-order model is only an empirical model. For example, neither antibody depletion effects nor the dynamics of the secondary antibodies are modeled. A major issue for extended models is, however, that any additional parameters to account for these effects are hardly measurable. And although depletion can be modeled, the superposition of epitope classes then leads to a coupled dynamical system without analytical solution. Furthermore, unit conversion factors do not cancel out anymore. Thus, precise measurements of the antibody concentration and the density of bound antibody-epitope complexes would become necessary. There is, however, a simple worst-case correction for depletion effects similar to the concentration correction described in \cite{Edwards_1998} that we discuss in supplement \ref{sup-sec: depletion correction}. 

Furthermore, it should be noted that information provided by the accessibility histogram depends on the dose-response curve. On the one hand, this means that the accessibility histogram can only be taken seriously if one can assume that the dose-response curve is characteristic for the system under investigation. Using only a single replicate that may be affected by strong noise\slash biological variability or data with strong outliers thus can produce wrong results. On the other hand, the information of the accessibility histogram is incomplete\slash inconclusive when the dose-response curve does not capture the whole binding behavior of the antibody. If the dose-response curve does not start from the baseline for the low concentrations\slash dilution quotients, low-$K_{\tau}$ peaks may be missed. However, this problem can easily be fixed by additional dilution steps. The opposite case, a dose-response curve that is flat and close to the baseline except for a few data points, occurs if the highest concentration\slash dilution quotient is not high enough. Often, this cannot be fixed easily, either because the antibody stock is not concentrated enough, or higher antibody concentrations\slash dilution quotients are too expensive. In this case, the histogram will show only a single peak, clipping to the right of the histogram, which does not provide any meaningful information. However, this scenario may be rephrased as warning sign, telling that among the tested concentrations\slash dilution quotients there is not enough information to distinguish epitope classes.

\begingroup
\raggedright
\printbibliography
\endgroup

\end{refsection}

\section*{Methods}

\subsection*{Cell culture}

HeLa cells were cultured in growth medium (DMEM with \unit[1]{\%} penicillin-streptomycin $\unit[10^4]{\nicefrac{U}{mL}}$ solution and \unit[10]{\%} fetal calf serum) at $\unit[37]{^\circ C}$ and \unit[5]{\%} CO\textsubscript{2} in the incubator. For passaging and seeding, the cells were washed with PBS and incubated with trypsin for $\unit[2]{mins}$ before adding fresh medium and subsequent centrifugation. After removing the supernatant, the cells were re-suspended with fresh medium. For the seeding, the cells were counted with a Neubauer-ruled hemocytometer and diluted with medium to obtain a cell density of approximately $\unit[10^8]{L^{-1}}$. Then $\unit[100]{\mu L}$ of the cell solution were added to each well of a $\mu$Clear 96-well plate, leading to approximately $10^4$ cells per well.

\subsection*{Stimulation and fixation}

After seeding the HeLa cells, the cell plates were incubated for \unit[24]{h} before cAMP-stimulation (only applicable for pRII-antibody experiments) and fixation. 

For the cAMP-stimulation $\unit[10]{\mu M}$ Sp-8-Br-cAMPS-AM were used per well. For this, Sp-8-Br-cAMPS-AM was diluted in DMSO and mixed with medium, taken from the cell plate. The mixing was done in a separate 96-well V-bottom plate on a $\unit[37]{^\circ C}$-pre-warmed thermal pad. Then the mixture was re-applied from the V-bottom plate back to the cell plate. Afterwards, the cells were incubated at $\unit[37]{^\circ C}$ and \unit[5]{\%} CO\textsubscript{2} for $\unit[15]{mins}$. Immediately after the incubation, the cells were fixed.

For the fixation, $\unit[100]{\mu L}$ of an \unit[8]{\%} Paraformaldehyde (PFA) solution were added to the $\unit[100]{\mu L}$ of medium (also containing stimulation compounds if applicable) that were already present in the well, leading to a \unit[4]{\%} PFA solution. After \unit[5]{mins} half of the well-content was removed and after another \unit[5]{mins} the cells were washed with PBS.

When the plates were not used immediately, they were sealed and stored in a fidge at $\unit[4]{^\circ C}$.

\subsection*{Immunocytochemistry}

The cells were permeabilized and blocked with \unit[2]{\%} normal goat serum, \unit[1]{\%} BSA, \unit[0.1]{\%} Triton X-100 and \unit[0.05]{\%} Tween 20, all dissolved in PBS, for \unit[1]{h} at room temperature. The primary antibodies were diluted in PBS with \unit[1]{\%} BSA according to the experiment plan in a separate 96-well V-bottom plate. After removing the blocking\slash permeabilization solution and applying the primary antibody solution, the cell plates were stored in the fridge at $\unit[4]{^\circ C}$ for \unit[24]{h}.

At the next day, the cells were washed 3 times with PBS. The plates were kept at room temperature for \unit[10]{mins} between the washings. Then the secondary antibodies and Hoechst, diluted 1:1000 in PBS, were applied. After this, the plates were kept in a dark environment for \unit[1]{h} at room temperature. Finally, the cells were washed 3 times with PBS, again with \unit[10]{mins} between the washings.\vspace{1em}

\noindent
\textbf{\small Validation system}\\
For the validation dose-response experiments and double-staining experiments the following stains, primary antibodies and secondary antibodies were used:

\begin{compactitem}
	\item \textbf{anti-NF200}: mouse monoclonal, anti-Neurofilament 200, Sigma-Aldrich, N0142-.2ML
	\item \textbf{anti-RPS11}: rabbit monoclonal, anti-Ribosomal protein S11\slash RPS11, Abcam, ab175213
	\item \textbf{Hoechst} 34580: Sigma-Aldrich, 63493
	\item \textbf{Alexa Fluor Plus 647}: goat polyclonal anti-mouse,  Invitrogen, \#A32728
	\item \textbf{Alexa Fluor 555}: goat polyclonal anti-rabbit,  Invitrogen, \#A-21429
	\item \textbf{Alexa Fluor 555}: goat polyclonal anti-mouse, Invitrogen, \#A-21424 (used for normalization controls)
\end{compactitem}\vspace{1em}

\noindent
\textbf{\small PKA experiments}\\
For the PKA experiments the following stains, primary antibodies and secondary antibodies were used:

\begin{compactitem}
	\item \textbf{anti-pRII}: rabbit monoclonal, anti-PKA R2\slash PKR2 (phospho S99), Abcam, ab32390
	\item \textbf{Hoechst} 34580: Sigma-Aldrich, 63493
	\item \textbf{Alexa Fluor Plus 647}: goat polyclonal anti-rabbit, Invitrogen, \#A32733
\end{compactitem}

\subsection*{Data quantification}
All plates were scanned with a laser-illuminated High Content Screening (HCS) microscope, (Cellinsight CX7 LZR, Thermo Scientific). The accompanying HCS Studio (v6.6.2) software was used to simultaneously capture the images and quantify the fluorescence signal intensities. For the quantification, cells were identified with the Hoechst staining and cell masks were defined by increasing the identified nucleus regions (example shown in figure \ref{main-fig: high content}c). Accordingly, the HCS Studio software calculated the measured, average fluorescence intensity for the individual cells (within the cell mask region) for each channel.  

For the validation dose-response experiments and double-staining experiments, images were captured until 3000 distinct cells were quantified per well with a hard limit of 50 images per well (preventing hour-long scanning of an accidentally empty well). Most wells reached 3000 distinct cells, with all but two wells above 2000 quantified cells (two exceptions with 1851 and 1993 cells respectively). For the PKA experiments, the cell target was set to 5000 (always reached). For all validation dose-response experiments, the exposure times were almost identical (for one plate \unit[0.022]{s} instead of \unit[0.025]{s} for the Alexa Fluor Plus 647 channel, to avoid signal clipping). For the PKA experiments, the exposure times were identical for each plate. The double-staining experiments used the same exposure times for both stainings. But the two plates were independent experiments using different exposure times.

The raw monochrome images (unedited images) for each channel were exported as PNG files for the dose-response experiments and as 16-Bit Tiff files for the double staining experiments. The single-cell data was exported as CSV file for each plate. The subsequent data analysis and image editing were done with Julia (v1.9.1).

\subsection*{Data normalization and dose-response\\ curves}

Since different fluorescence labels were used for the validation dose-response experiments (Alex Fluor 555 and Alexa Fluor Plus 647), and since one plate used a slightly different exposure time, a normalization condition was prepared to normalize the responses between the different plates and channels. This normalization condition consisted only of mouse anti-NF200 as primary antibody and both anti-mouse Alexa Fluor 555 and anti-mouse Alexa Fluor Plus 647 as secondary antibodies. Since the exposure times were identical for all PKA plates and only a single channel was used, the zero control was used for the normalization between different plates.

To obtain the normalization value, the mean population response (mean value of individual cell responses) was calculated for each normalization well individually, for each channel. The mean value of the mean population responses on a single plate was set as normalization value for the respective plate and channel. Then the response value of each cell was divided by the respective normalization value. (Instead of normalizing the single-cell responses it would be equally possible to just normalize the dose-response values at the end. Since the normalization is a linear transformation, the result would be identical up to floating point errors.)

After normalizing the single-cell responses, the mean population response was calculated for each well and each channel. Next, for each condition, the mean of the mean population responses was calculated as data point for the dose-response curve and the standard deviation of the mean population responses was used as measurement uncertainty.
That is, each well of a given condition was treated as a separate replicate. Finally, for each dose-response curve, the non-zero baseline signal was removed. For this, we used the baseline obtained from the zero controls. To prevent negative values, however, we used the lowest response of the dose-response curve, if subtracting the baseline would result in negative response values.

\subsection*{Image processing for visualization}

The raw monochrome PNG images for the individual channels were imported into Julia, essentially represented as matrices of \verb|Float64| values. To normalize the image brightness, the quotient of the normalization values for the respective plate was calculated (cf. ``Data normalization''):
\[ r = \frac{\text{norm. value}_{\text{Alexa Plus 647}}}{\text{norm. value}_{\text{Alexa 555}}} \ . \]
This quotient describes the factor by which the Alexa Fluor Plus 647 signal is stronger than the Alexa Fluor 555 signal, all else being equal (by assumption of the normalization control setup). Thus, each pixel intensity of the the Alexa Fluor 555 image matrices was multiplied with $r$. Finally, all pixel intensities were multiplied by a factor of 20 to increase the overall brightness for better visibility. The Hoechst channel images were not normalized but all pixel intensities were multiplied by a factor of 4 as to be visible but not distracting in the composite images.

For the composite images (Fig. \ref{main-fig: high content}a,b), the images were added pixel-wise, assigning the following colors (i.e. multiplying the pixel intensity with the respective color vector):
\begin{center}
	\definecolor{channel_red}{RGB}{255, 0, 0}
	\definecolor{channel_blue}{RGB}{0, 0, 255}
	\definecolor{channel_green}{RGB}{0,255, 0}
	\renewcommand{\arraystretch}{1.4}
\begin{tabular}{|l|c|}\hline
	\textbf{channel} & \textbf{color}\\ \hline
	Hoechst & $\text{RGB}(0,1,0)$ \textcolor{channel_green}{$\blacksquare$}\\ \hline
	Alexa Fluor Plus 647 & $\text{RGB}(1,0,0)$ \textcolor{channel_red}{$\blacksquare$}\\ \hline
	Alexa Fluor 555 & $\text{RGB}\left(0,0,1\right)$ \textcolor{channel_blue}{$\blacksquare$}\\ \hline
\end{tabular}	
\end{center}
Finally the image matrices were exported as PNG files with a $2\times 2$ pixel-binning and clipping of too high intensities.

For the images in figure \ref{main-fig: optimal dilution}b,c, the brightness of the individual-channel images was increased by an additional factor of 1.5 before creating the composite. The individual-channel images were exported as monochrome images and the composite was obtained as before, but this time without the Hoechst-channel images.

\subsection*{Model fitting}

We used the Julia packages \cite{Tschimmel_2024_AntibodyPackages} for all accessibility analyses. Here, we mostly give a high-level description not subject to the actual implementation in the packages. For instructions how to implement the methods described in this section and for a complete list of the available options, please refer to the extensive documentation of the packages.

\vspace{1em}

\noindent
\textbf{\small Fitting objective}\\
Let $\{(d_i,r_i, \Delta r_i)\}_{i=1}^n$ denote the dilution quotients, the response values (mean of replicates) and the measurement uncertainties (standard deviation of replicates) of a dose-response curve, then the objective function that was used for the model fitting reads:
\[\text{obj}(\lambda) = -\ell_0(\lambda) -\sum_{i_1}^n \frac{(r_i - f(d_i,\lambda))^2}{(2 \Delta r_i)^2}\ .\]
This is a logarithmic posterior distribution, assuming the measured responses to be independent and normally distributed. Here, $\lambda$ denotes the parameters for the model function $f$ and $\ell_0$ denotes the logarithmic prior.

Both the model function and the logarithmic prior depend on the grid discretization of the $K_\tau$-domain that is used for the numerical approximation of the integral in equation \eqref{main-eq: Fredholm equation}. Thus, let $\{p_1,\ldots, p_m\}$ denote the discretization points of the $K_\tau$-domain, which define the intervals:
\[{[p_1,p_2], [p_2,p_3], \ldots, [p_{m-1},p_m]}\ .\]
The parameters in $\lambda = (\lambda_1, \ldots, \lambda_{m-1})$ are the amounts of epitopes with $K_\tau$ in the respective interval.

For a given grid, we used the following model function to approximate the integral model \eqref{main-eq: Fredholm equation}:
\[f(d,\lambda) = \sum_{j=1}^{m-1} \lambda_j \left(1-\exp\left(-\frac{d}{(p_{j+1}-p_j)/2}\right)\right)\ .\]
Note that $g(p)$ in \eqref{main-eq: Fredholm equation} is in fact the epitope density, not the amount of epitopes. Thus, its discretization should be $\frac{\lambda_i}{p_{i+1}-p_i}$, but the factor $\frac{1}{p_{i+1}-p_i}$ cancels out with the factor $p_{i+1}-p_i$ from the Riemann sum approximation.

For the logarithmic prior, we essentially used a Tikhonov regularization for the differences $\lambda_{i+1}-\lambda_i$ of the amounts of epitopes with $K_\tau$ in neighboring intervals. However, the differences were rescaled with respect to the visual log-scale of the histogram plots:
\begin{gather*}
\begin{aligned}
	\ell_0(\lambda) &= -\frac{\alpha}{m^2} \sum_{j=1}^{m-2} \left( \frac{\lambda_{j+1}}{\log(p_{j+2}) - \log(p_{j+1})} - \frac{\lambda_{j}}{\log(p_{j+1}) - \log(p_j)}  \right)^2 ,
\end{aligned}
\end{gather*}
where $\log$ is the logarithm with base 10 and $\alpha$ is the regularization scale. For all analyses in the main part of the paper we used $\alpha = 500$.

\vspace{1em}

\noindent
\textbf{\small Adaptive grids and objective function minimization}\\
We used adaptively refining grids for the minimization of the objective function. First, a coarse logarithmically spaced 2-interval grid (3 discretization points) was defined:
\[\begin{aligned}
	&p_1 = \min\{d_i\}, \quad p_2 = \log\left(\frac{\max\{d_i\} - \min\{d_i\}}{2}\right),\\
	&p_3 = \max\{d_i\} \ .
\end{aligned}\]
Then, the corresponding objective function was constructed, as explained above, and minimized with the gradient-free Nelder-Mead algorithm (using the Optim.jl package) using 2000 iterations. After obtaining the parameters $\lambda$ from the minimization, the contribution value of each interval was defined (cf. supplement \ref{sup-subsec: histogram visualization}) by 
\[\frac{\lambda_j}{\log(p_{j+1})-\log(p_j)}\ .\]
To refine the grid adaptively, the interval with the largest contribution value difference, compared to its neighbors, was split in two. Each of the intervals got $\frac{\lambda_j}{2}$ as new parameter (equal distribution of the amount of epitopes). After splitting the interval, the objective function was constructed for the now refined grid. Using the correspondingly refined parameters as new starting point, the objective was minimized again.

This optimization-refinement cycle was repeated 50 times to obtain a sufficiently fine grid. After the last refinement, the refined parameters were optimized with the LBFGS algorithm (2000 iterations), using numerically calculated gradients with an absolute tolerance of 1e-12.

\subsection*{Plotting histograms}

To plot the fitted parameters $\{\lambda_1,\ldots,\lambda{m-1}\}$ that correspond to a grid $\{p_1,\ldots,p_{m}\}$ as histogram, the bars are defined according to the intervals $[p_1,p_2],[p_2,p_3]\ \ldots$. Then, for each bar, its width $w_i$ in a logarithmic scale is calculated by 
\[w_i = \log(p_{i+1})-\log(p_i)\ .\]
The height of a bar in the histogram is then the parameter divided by this logarithmic width $\nicefrac{\lambda_i}{w_i}$. Finally, the bars are plotted in a logarithmic scale.

\subsection*{Double-staining}

The 1st staining of the double-staining experiments followed the protocol described in the ``Immunocytochemistry'' section. Both the actual double-staining condition and the controls were stained with the primary antibody mix (5:1 mix of anti-NF200 and anti-RPS11) using a dilution quotient of 1:102400 in total. Here, ``in total'' means, that the 5:1 mix, which is given the initial dilution quotient 1:100, was diluted by an additional factor 1:1024. In all cases, a 1:1000 dilution of the secondary antibodies\slash stains (anti-mouse Alexa Fluor Plus 647, anti-rabbid Alexa Fluor 555 and Hoechst) was used. After the immunocytochemistry, the prepared wells were scanned with the HCS microscope.

Immediately after scanning the plate, the 2nd staining was commenced, using the same protocol as before, but not repeating the blocking. For the double-staining condition, we used a dilution quotient of 1:6400 (in total) for the primary antibody mix. For the re-stain control, we used the same dilution quotient as used for the 1st staining and for the secondary antibody control, the wells were left untouched. After the \unit[24]{h} primary antibody incubation, the secondary antibodies but not the Hoechst stain were applied at the usual 1:1000 dilution to all conditions. After this, the wells were scanned again, using exactly the same settings as for the first scanning.

The raw images were exported as 16-bit Tiff images to retain all data. To align the images of the two different scans, we compared the respective Hoechst channel images. With the SubpixelRegistration.jl package we determined the pixel shifts that are necessary to align the Hoechst channel images. Then, we applied the same pixel shifts to the respective channel images and cropped all images to the overlapping region. For the rest of the image processing, the Hoechst images were not used. Furthermore, the brightness (pixel intensities) of all images was increase by a factor of 6 just before exporting the images (after the edits described below), to increase visibility. Note that no channel normalization was applied here, since all wells were prepared on the same plate (for one independent repetition of the experiment), and since the relative brightness of the fluorescence labels does not affect the result (same effect on computational multiplexing and the color validation).

We exported the color channel images of the 2nd staining without any additional editing as monochrome images (Fig. \ref{main-fig: double-staining} c',c'',f',f''). The corresponding composite images (Fig. \ref{main-fig: double-staining}c,f) were calculated as explained in the ``Image processing for visualization'' section, without the Hoechst channel, though.

For the computational multiplexing, we removed the color channel information by adding together the pixel intensities of the monochrome color channel images. This resulted in a single, total-signal-intensity, monochrome image per staining and removed the color information that would not be present in generic double-staining experiments. From the HCS microscope data, we determined the fold change of the secondary antibody control and increased the brightness of the 1st staining image by this fold change. Next, we calculated the pixel-wise difference between the 2nd staining image and the brightness-increased 1st staining image. All images were exported as monochrome images (figure \ref{main-fig: double-staining}b',b'',e',e'' and supplement figure \ref{sup-fig: graphical illustration}). Finally, we created a composite image (Fig. \ref{main-fig: double-staining}b,e) as done before, using the colors $\text{RGB}(1,0,0)$ for the brightness-increased 1st staining image and $\text{RGB}(0,0,1)$ for the difference image.

\section*{Acknowledgements}
We would like to thank  Maike Siobal and Lars Schmidt for providing technical assistance in the conduction of the experiments.

\clearpage

\begin{refsection}

\appendix
\onecolumn
\KOMAoptions{fontsize=12pt}
\newgeometry{left=2.8cm, right = 2.8cm, top=2cm, bottom = 2.5cm}
\fancyfootoffset{0pt}

\renewcommand\thefigure{\thesection.\arabic{figure}}
\setcounter{figure}{0}

\begin{center}
	\Huge \bfseries Supplement
\end{center}
\vspace{0.5cm}

\setlength{\baselineskip}{15pt}

\tableofcontents

\noindent
\rule{\columnwidth}{1pt}

\section{Antibody accumulation model for the anaylsis of immunostaining processes}
\label{sup-sec: accumulation model}

The Langmuir isotherm \cite{Langmuir_1918} is a popular model to describe the commonly observed dose-response behavior of antibody binding \cite{Latour_2014}. However, the Langmuir model is inappropriate for most immunostaining protocols. Thus, a new model needs to be provided to explain dose-response curves for these protocols. Using an empirical model for this, allows to implicitly incorporate effects that are practically immeasurable.

\subsection{The Langmuir model is inappropriate}

In \cite{Latour_2014} the author provides 3 major issues why the Langmuir model does not apply to antibody binding: 
\begin{itemize}
\item the binding sites are not identical,
\item multilayer adsorption is possible,
\item the response is not a dynamical equilibrium.
\end{itemize}
The first problem, non-identical binding sites, can be addressed by considering a superposition model, which we will address in subsection \ref{sup-subsec: Heterogeneous binding landscapes}. The second problem, multilayer adsorption, can be circumvented by interpreting the Langmuir model as empirical model. ``Empirical model'' means that one assumes a top-down approach, describing only the average amount of the involved quantities over time (compartment model) but not the involved physical principles or the microscopic behavior: Let $x(t)$ denote the average surface density of bound antibody-epitope complexes, $a(t)$ denote the average free antibody concentration and $g(t)$ denote the average surface density of free epitopes. Then we may describe the change of bound antibody-epitope complexes with the Langmuir kinetics
\begin{equation}
	\label{sup-eq: empirical Langmuir equation}
	\frac{d}{dt}x(t) = k_a a(t) g(t) - k_d x(t)\ ,
\end{equation}
where $k_a$ and $k_d$ are rate constants (including unit conversions). This approach does not require require monolayer adsorption.

\begin{remark}[\bfseries Interpretation of empirical models]
	Since the empirical model \eqref{sup-eq: empirical Langmuir equation} only describes the average quantities, the rate constants describe only the raw observed rates of binding\slash dissociation. They no longer describe any chemical\slash physical process, but encompass multiple effects compressed into a single average rate. 
\end{remark}
\begin{remark}[\bfseries Benefits of empirical models]
	\label{sup-rem: advantage of empirical model}
	The lack of chemical\slash physical interpretation of parameters in an empiricla model may reduce what the model can explain. On the other hand, one gains more leeway in the numerical description of different processes, as complex effects can be absorbed into average parameters.
	
	Since the antibody accumulation model is intended for a broad application, not a precise account of reality, we will use the absorption of complex effects into average parameters quite extensively. Among others, this allows to account for practically immeasurable quantities and allows for convenient simplifications.
\end{remark}

While non-identical binding sites can be described by superposition models and multilayer adsorption can be absorbed into average parameters for an empirical model, the problem of irreversible binding is more fundamental than even addressed by \cite{Latour_2014}. Often, the Langmuir isotherm, i.e. the equilibrium state of the Langmuir rate equation, is used to interpret binding data obtained from protocols that use washing steps. But ``washing'' means that after the antibody incubation phase any equilibrium that might have settled will be washed away, often multiple times before the actual scanning. Yet, there is still a response signal even for protocols with extensive washing. Thus, we must conclude that the measured antibodies have bound permanently (for the time frame of the experiment). But then, we need a new explanation for the commonly observed dose-response behavior of immunostaining.

\begin{remark}[\bfseries Reversible binding]
	That the measurable antibodies must have bound permanently does not mean that no antibodies can bind reversibly. But this temporary antibody binding is not measured. However, since we intend to use an empirical model, these effects can be understood as binding obstruction that are part of the average binding rate.
\end{remark}

\subsection{From the Langmuir model to the accumulation model}

As mentioned in remark \ref{sup-rem: advantage of empirical model}, the accumulation model should be an empirical model with broad applicability. Thus, we may incorporate the permanent binding of antibodies starting form equation \eqref{sup-eq: empirical Langmuir equation}:
\[\frac{d}{dt}x(t) = k_a a(t) g(t) - k_d x(t)\ .\]
For this, we set $k_d = 0$, meaning that there is no dissociation. Furthermore, for simplicity, we assume $a(t)$ to be constant. Note, that this assumption is violated, especially for low antibody concentrations, because of antibody depletion. But using the correct term $a-\beta x(t)$ with an appropriate unit conversion factor $\beta$ will prove impractical for most applications (see remark \ref{sup-rem: antibody depletion and superposition} and subsection \ref{sup-subsec: unit invariance fails}). In addition, a simple depletion correction method to gauge the worst-case deviation of the simplified model will be provided in section \ref{sup-sec: depletion correction}.

With $k_d = 0$ and $a(t) = a$, the rate equation simplifies to 
\begin{equation}
	\label{sup-eq: simplified rate equation}
	\frac{d}{dt} x(t) = k_a a (g - x(t))\ ,
\end{equation}
where $g$ denotes the total surface density of epitopes. This differential equation has a simple analytical solution:
\[x(t) = g - (g-x_0) e^{-k_a a (t)}\ .\]
In immunostaining experiments there are usually no antibodies in the system prior to the antibody incubation $x(0) = x_0 = 0$ and the antibody is incubated for a fixed amount of time $\tau$. Thus, the surface density of bound antibody after the antibody incubation is
\[x = g(1-e^{-k_a a \tau})\ . \]
Next, we can define the constant
\begin{equation}
	\label{sup-eq: accumulation model single}
	K_\tau = \frac{1}{k_a \tau} \quad  \Rightarrow  \quad x = g(1-e^{-\nicefrac{a}{K_\tau}})\ ,
\end{equation}
s.t. the system-describing constant $K_\tau$ has the same dimension as the antibody concentration $a$. In that way, the system constant $K_\tau$ behaves similar to the well known $K_d$ of the Langmuir isotherm. It marks the antibody concentration $a = K_\tau$ at which approximately $63.2$\% of the epitopes are occupied by antibodies. As reverse of the binding rate $K_\tau \sim \frac{1}{k_a}$, we may also interpret $K_\tau$ as binding obstruction constant, i.e. as accessibility measure.

Observe, as we have already explained verbally in the main part of the paper, that the binding rate $\frac{d}{dt}x(t)$ is not only dependent on the accessibility constant $K_\tau$ but also to the antibody concentration $a$. Since the incubation time usually is the same for all conditions (i.e. all concentrations), the dose-response behavior can be explained as follows: Lower concentrations take longer to reach the same epitope coverage levels than higher concentrations. Figure \ref{fig: illustration dose-response behavoir} in the main part of the paper illustrates this principle, which we refer to as accumulation hypothesis.

\subsection{Heterogeneous binding landscapes}
\label{sup-subsec: Heterogeneous binding landscapes}

So far, the mathematical description covers only a single interaction, where all antibodies and all epitopes are identical. Assuming epitopes not to interfere with each other, the presence of multiple epitope classes in a system can be described by a superposition:
\begin{equation}
	\label{sup-eq: accumulation model sum}
	x = \sum_{i} g_i \left(1-\exp\left( -\frac{a}{K_{\tau,i}}\right)\right)\ .
\end{equation}
Here, $K_{\tau,i}$ is the accessibility constant of the $i$-th epitope class and $g_i$ the amount of epitopes in the $i$-th epitope class. Since there is only a finite number of epitopes in a real system, the superposition is in fact a finite sum. Yet, for the application to inverse problems, i.e. determining the parameters $\{(g_i,K_{\tau,i})\}$ from dose-response data, this model is not well suited, as the number of epitope classes is unknown a priori. However, we can approximate the sum by an integral, promoting the discrete numbers $g_i$ to an accessibility density $g(K_\tau)$: 
\begin{equation}
	\label{sup-eq: accumulation model integral}
	x \approx \int_0^\infty g(K) (1-e^{-\frac{a}{K}})\ dK\ .
\end{equation}
In this way, we obtain a Fredholm integral equation which is commonly used in the literature to describe heterogeneous binding \cite{Sips_1948} \cite{House_1978} \cite{Sposito_1980} \cite{Svitel_2003} \cite{Forssen_2018} \cite{Malakhova_2020}. Observe that fitting the Fredholm model no longer requires a priori knowledge about the number of epitope-classes, fixing the issues of the discrete model \eqref{sup-eq: accumulation model sum}. An approach already applied to binding curves (kinetic data) by \cite{Svitel_2003}.

\begin{definition}
	In the following we refer to equation \eqref{sup-eq: accumulation model single} and equation \eqref{sup-eq: accumulation model sum} as \textbf{discrete accumulation model} and refer to equation \eqref{sup-eq: accumulation model integral} as \textbf{Fredholm accumulation model}.
\end{definition}

At this point, one might be tempted to model additional aspects like the secondary antibody dynamics. But the binding behavior of the secondary antibodies depends on the density of bound primary antibodies, which is hard to measure. Secondary antibodies are used to make primary antibodies measurable in the first place. Thus, the argument about empirical models is employed here again. The dynamics of the secondary antibodies (and additional real-world effects) are included in the density $g(K_\tau)$. In other words, the density $g(K_\tau)$ and the resulting histogram become a top-level characterization of the whole immunostaining protocol in a given cell system. In that regard the sometimes invalid assumption $a(t) = a$ is less of a problem for the application of the model to investigate antibody behavior in a cell system. In fact, the consistency of the applications presented in the paper corroborates the merit of this top-level description approach.

\begin{remark}[\bfseries Antibody depletion and superposition]
	\label{sup-rem: antibody depletion and superposition}
	Since the antibody concentration was assumed to be constant, and each epitope can belong only to one class, the binding processes of the different epitope classes are independent. Thus, the overall response curve is just the sum of the individual response curves. If the antibody can deplete, however, the epitope classes compete for the available antibodies. Thus, the heterogenous binding must be modelled at the level of the differential equation (with $a$ denoting the initial antibody concentration and $\beta$ converting the units):
	\[\frac{d}{dt}x_i(t) = k_{a_i} \left(a - \beta \sum_{j}x_j(t)\right)(g_i-x_i(t))\ .\]
	Observe that this is a coupled system of differential equations, which in this case does not appear to have a simple analytical solution.
\end{remark}

\section{The accessibility histogram and its visualization}
\label{sup-sec: accessibility histogram}

Approximating the sum of the discrete accumulation model \eqref{sup-eq: accumulation model sum} by an integral removes the need to determine the unknown number of epitope classes. Yet, a generic continuous function is difficult to represent in a computer and is thus hard to use for inverse problems (parameter estimation form data). Hence, an approximation of the density by a finite set of parameters is required.

\subsection{Histograms as approximation of the accessibility density}
Recall that the accessibility density $g(K_\tau)$ is in fact just an approximation of a finite set of epitope classes $\{(g_i,K_{\tau,i})\}$. Since physically meaningful binding rates should be finite, there are finite values $p_{\min}$ and $p_{\max}$ s.t.
\[0 < p_{\min} \leq K_{\tau,i}\leq p_{\max} < \infty \qquad \forall\ i\ .\]
In other words, for any density $g(K_\tau)$ that describes a real system, there is a smallest positive accessibility constant $0<p_{\min}$ and a largest finite accessibility constant $p_{\max} < \infty$ for which $g(K_\tau)\neq 0$.

As outlined in the methods section, the accessibility density $g(K_\tau)$ is approximated by constant functions over a finite grid. For this, $m-2$ additional discretization points between $p_1 = p_{\min}$ and $p_m = p_{\max}$ are defined $\{p_1,\ldots,p_m\}$, s.t.
\[p_i \leq p_j \qquad \forall\  i < j\ .\]
These discretization points define the following intervals:
\[I_1 = [p_1,p_2], \quad I_2 = [p_2,p_3], \ldots \ ,\quad  I_{m-1} = [p_{m-1},p_m]\ . \]
An approximation of $g(K_\tau)$ by constant functions over the intervals $\{I_j\}_{j_1}^{m-1}$ then reads
\[g(K_\tau)\approx \sum_{j=1}^{m-1} g(\tfrac{p_{j+1}-p_j}{2}) \chi_{I_j}(K_\tau) \qquad \text{where}\qquad \chi_{I_j}(K_\tau) = \left\{ \begin{array}{ll}
1 & \quad ,\ K_\tau \in I_j \\  
0 & \quad ,\  K_\tau \not\in I_j
\end{array} \right. \ .\]
In case of an unknown accessibility density that is to be inferred form data, the approximation consists of $m-1$ parameters $\lambda_1,\ldots, \lambda_{m-1}$ that are to be estimated form the data:
\[g(K_\tau)\coloneqq \sum_{j=1}^{m-1}\tfrac{\lambda_j}{p_{j+1}-p_j} \chi_{I_j}(K_\tau)\ . \]
We divided the parameters by the respective interval lengths in this definition s.t. the parameters $\lambda_j$ correspond to the number of epitopes in the intervals $I_j$. This can be seen as follows.

When we apply the approximation of the density in the Fredholm accumulation model, we obtain
\[\begin{aligned}
	x &\approx \int_0^\infty g(K)(1-e^{-\frac{a}{K}})\ dK \approx \int_0^\infty \sum_{j=1}^{m-1}\tfrac{\lambda_j}{p_{j+1}-p_j} \chi_{I_j}(K)(1-e^{-\frac{a}{K}})\ dK\\ 
	&\quad = \sum_{j=1}^{m-1} \tfrac{\lambda_j}{p_{j+1}-p_j} \int_{I_j} 1-e^{-\frac{a}{K}} \ dK\ .
\end{aligned}\]
Next, assuming the $I_j$ to be sufficiently small, we can approximate the integrals as products. Let $\langle p_j \rangle\coloneqq \frac{p_{j+1}-p_j}{2}$ denote the middle point of the interval $I_j = [p_j,p_{j+1}]$, then
\[\int_{I_j} 1-e^{-\frac{a}{K}}\ dK \approx \left(1-e^{-\frac{a}{\langle p_j \rangle}}\right) (p_{j+1}-p_j)\ . \]
Thus, the Fredholm accumulation model becomes
\[\begin{aligned}
	x &\approx \sum_{j=1}^{m-1} \tfrac{\lambda_j}{p_{j+1}-p_j} \int_{I_j} 1-e^{-\frac{a}{K}} \ dK \approx \sum_{j=1}^{m-1} \tfrac{\lambda_j}{p_{j+1}-p_j} \left(1-e^{-\frac{a}{\langle p_j\rangle}}\right)(p_{j+1}-p_j) \\
	&\quad = \sum_{j=1}^{m-1} \tfrac{\lambda_j}{p_{j+1}-p_j} (p_{j+1}-p_j) \left(1-e^{-\frac{a}{\langle p_j\rangle}}\right) = \sum_{j=1}^{m-1} \lambda_j \left(1-e^{-\frac{a}{\langle p_j\rangle}}\right)\ .
\end{aligned}\]
Comparing the last term with the discrete accumulation model, we can observe that the $(\lambda_j, \langle p_j \rangle)$ define the epitope classes for the discrete accumulation model. In other words, $\lambda_j$ corresponds to the number of epitopes with $K_\tau = \langle p_j \rangle$. That is, by construction, the number of epitopes with $K_\tau \in I_j$.

\subsection{Visualization and dose-response contribution}
\label{sup-subsec: histogram visualization}

As stated in the main part of the paper and in the methods section, the bin heights are rescaled w.r.t. the bin widths as they appear in the logarithmically scaled plot. This is a mere visualization aid which ensures that the visual area of peaks in the logarithmically scaled histograms corresponds 1:1 to the effect on the dose-response curve. This can easily be illustrated by specifying a density with two peaks, as the corresponding dose-response curves and peak contributions can readily be obtained from the accumulation model.

\begin{wrapfigure}{r}{0.5\textwidth}
		\centering
		\includegraphics[width = 0.45\textwidth]{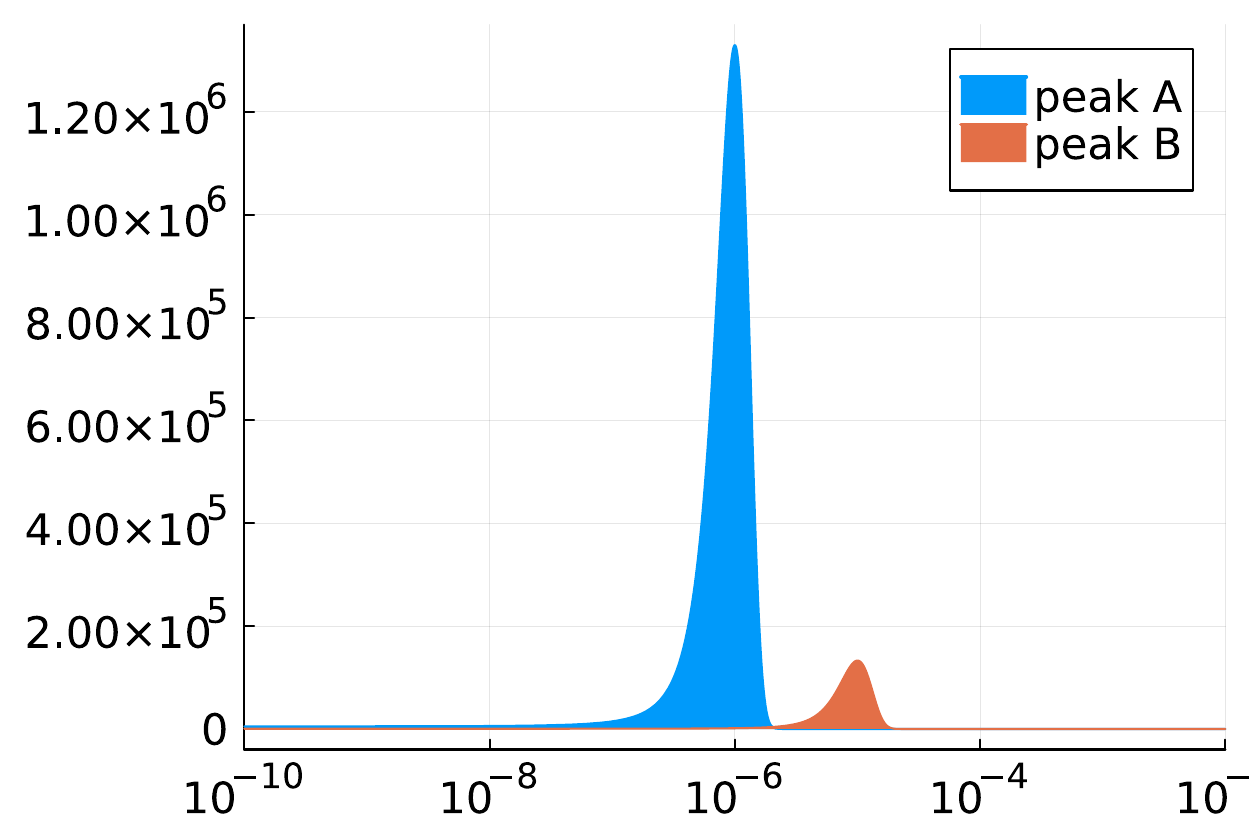}
		\caption{Plot of $g(K)$.}
		\label{sup-fig: density plot}
		\vspace{-1cm}
\end{wrapfigure}
For the illustration, consider the sum of two normal distributions $\mathcal{N}(10^{-6},3\cdot 10^{-7})$ and $\mathcal{N}(10^{-5},3\cdot 10^{-6})$:
\[ \begin{aligned}
	 g(K) &= \tfrac{1}{3\cdot 10^{-7}\cdot\sqrt{2\pi }}\ e^{-\frac{1}{2}\left(\frac{K-10^{-6}}{3\cdot 10^{-7}}\right)^2}\\[1em] 
	 &\ +\  \tfrac{1}{3\cdot 10^{-6}\cdot\sqrt{2\pi }}\  e^{-\frac{1}{2}\left(\frac{K-10^{-5}}{3\cdot 10^{-6}}\right)^2}\ .
\end{aligned}
	 \]
The corresponding density is plotted in a logarithmic scale in figure \ref{sup-fig: density plot}. Now, this density can approximated with a grid. Since the grids are refined adaptively during the fitting process, leading to unequal interval sizes, we should do the same for this illustration. Here, we use 25 logarithmically spaced discretization points from $10^{-10}$ to $10^{-6}$, 50 logarithmically spaced discretization points from $10^{-6}$ to $10^{-5}$ and finally 100 logarithmically spaced discretization points from $10^{-5}$ to $10^{-2}$.

\begin{figure}[ht]
	\centering
	\begin{subfigure}[c]{0.49\textwidth}
		\caption{Gird without normalization}
		\includegraphics[width = \textwidth]{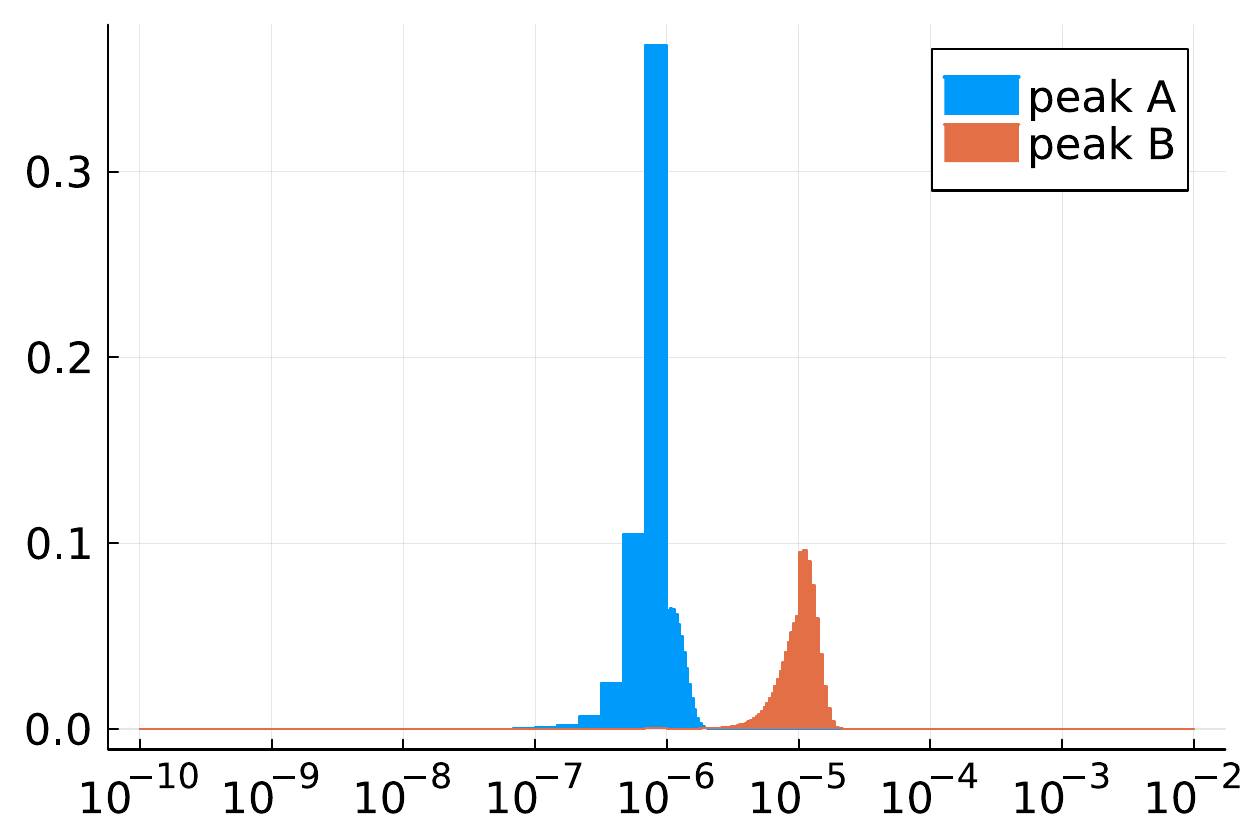}
	\end{subfigure}
	\begin{subfigure}[c]{0.49\textwidth}
		\caption{Grid with volume normalization}
		\includegraphics[width = \textwidth]{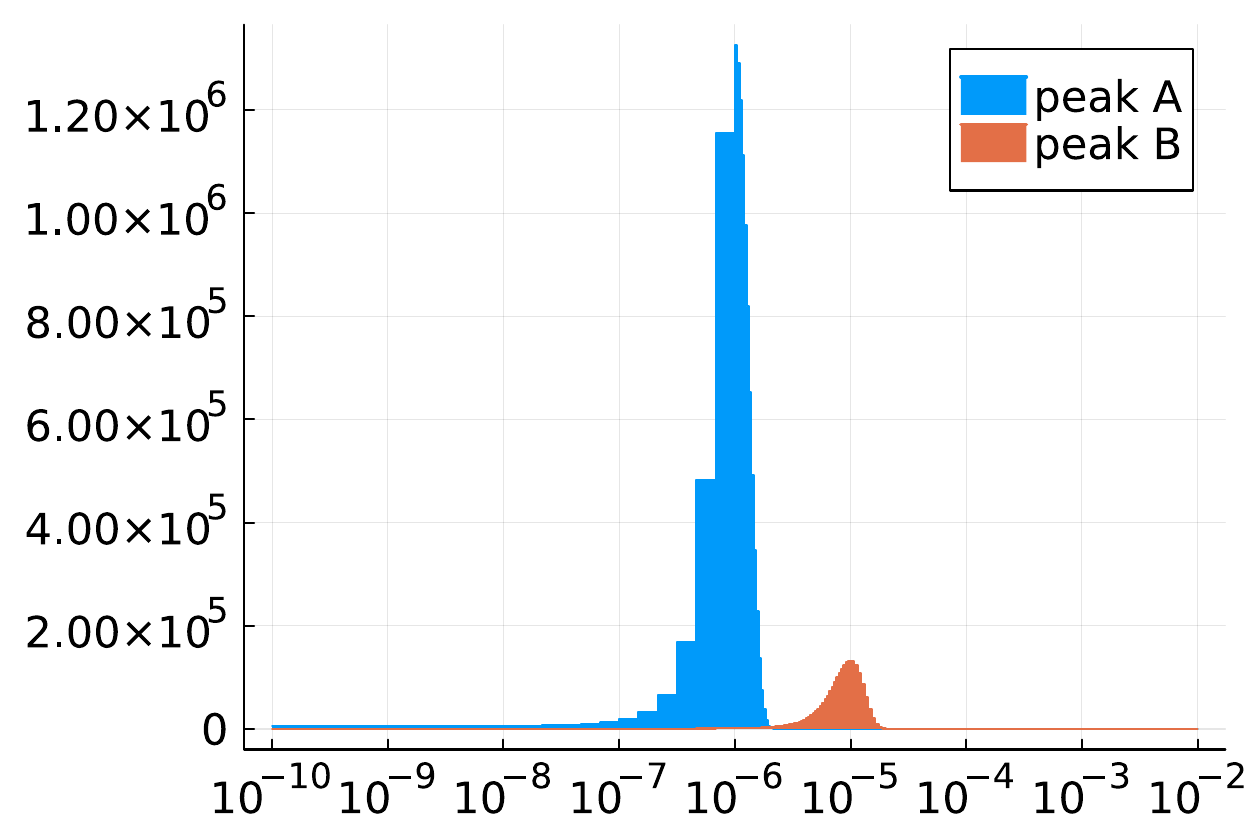}
	\end{subfigure} \vspace{1em}

	\begin{subfigure}[c]{0.49\textwidth}
		\caption{Gird with visual log-volume normalization}
		\includegraphics[width = \textwidth]{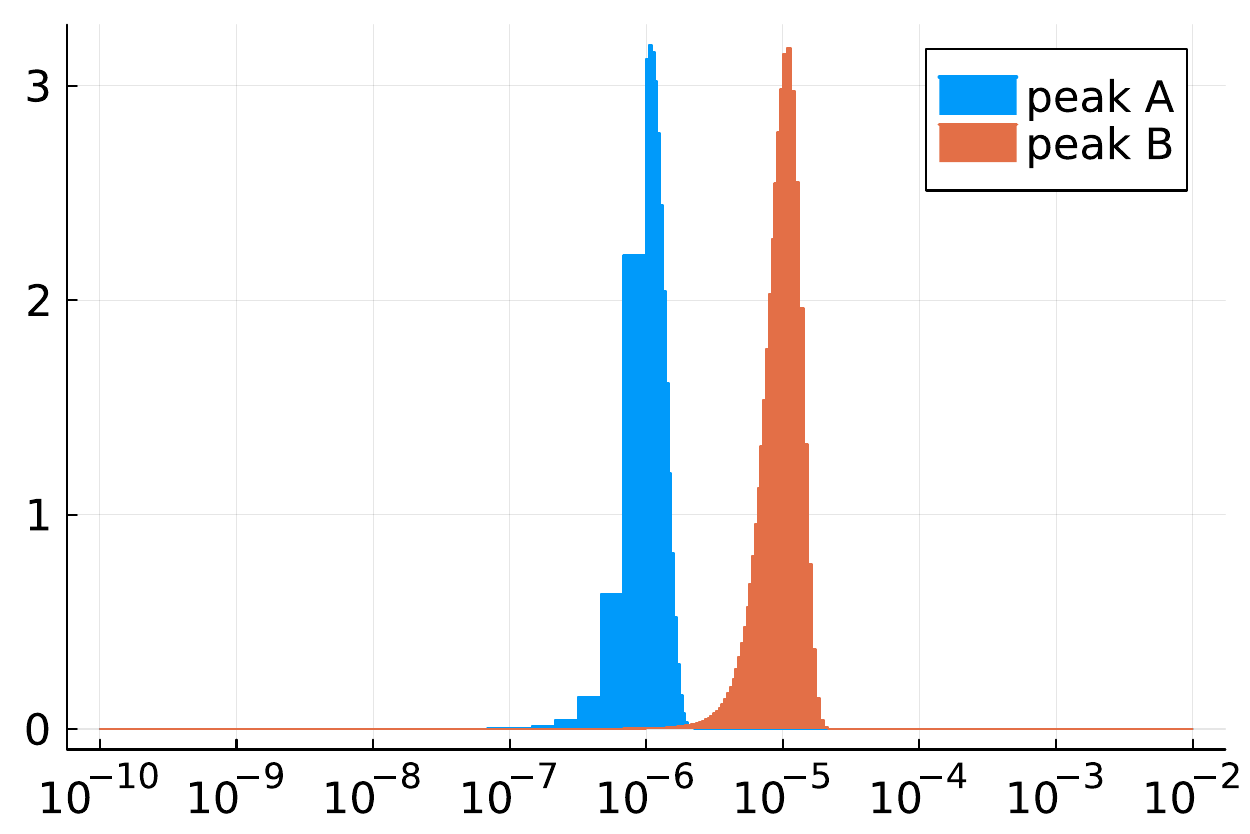}
	\end{subfigure} 
	\begin{subfigure}[c]{0.49\textwidth}
		\caption{Corresponding dose-response plot}
		\includegraphics[width = \textwidth]{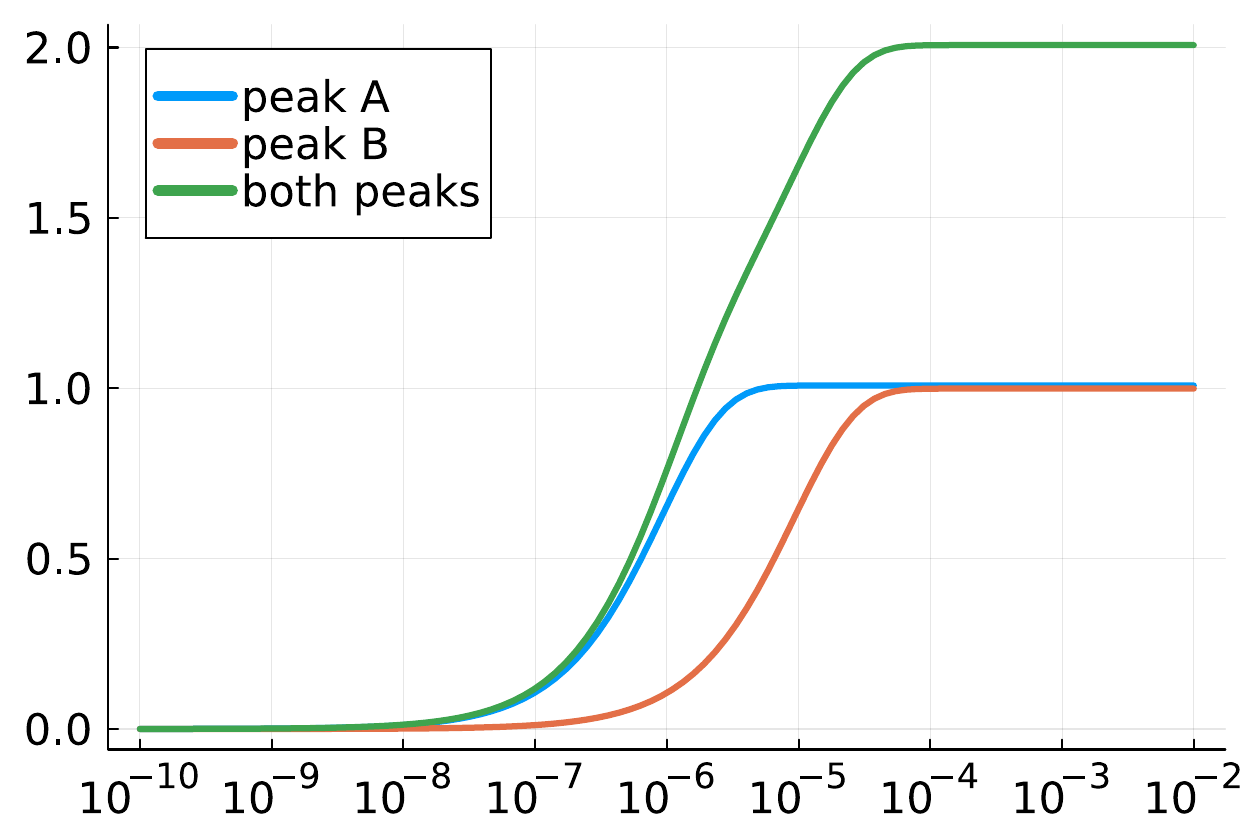}
	\end{subfigure}
	\caption{Comparison of bar height normalizations for the accessibility histogram.}
	\label{sup-fig:Grid visualization}
\end{figure}

Figure \ref{sup-fig:Grid visualization} shows 3 normalizations for the bin heights: the raw parameters without normalization $\lambda_i$ in fig. \ref{sup-fig:Grid visualization}a, the parameters divided by the true interval lengths $\nicefrac{\lambda_i}{V_i}$ in fig. \ref{sup-fig:Grid visualization}b and the parameters divided by the visual interval lengths in the logarithmically scaled plot $\nicefrac{\lambda_i}{w_i}$ in fig. \ref{sup-fig:Grid visualization}c. Furthermore, the dose-response curve contributions of the two peaks are plotted in figure \ref{sup-fig:Grid visualization}d.

For the raw parameters, recall from the methods section that the parameters correspond to the number of epitopes with $K_\tau$ in the respective intervals. Thus, without normalization, the bin heights would correspond to the contribution of each bin to the dose-response curve. However, since the bins have different widths, it is difficult to see how many bins comprise a peak. Instead, one is drawn to look at the total surface area of a peak. But this does not correspond to the effect of the peak on the dose-response curve (for the unnormalized plot). Hence, not normalizing the bin heights produces a misleading histogram, prone to misinterpretation.

Observe that when the bin heights are divided by the true lengths of the intervals, the true shape of the density $g(K)$ is recovered. Yet, as will be argued in subsection \ref{sup-subsec: histogram and units}, the density $g(K)$ is of little interest. The density is meaningless (unit dependent) outside of the integral. Only the histogram, arising from bin-wise evaluation of the integral, corresponds to the dose-response contribution. In this case, observe that peak B is much smaller than peak A. But the dose-response plot (fig. \ref{sup-fig:Grid visualization}d) shows that both peaks contribute equally strong to the overall dose-response curve. Thus, using the volume normalization for the histogram is also misleading\slash unintuitive.

As mentioned earlier, one is inclined to interpret the visual surface area of a peak as measure for the contribution of a peak to the dose-response curve. Dividing the parameters by the visual bin widths leads to histogram bars whose area is equal to the respective parameters. This is, because the visual surface of these rescaled bars is given by $w_i \cdot \nicefrac{\lambda_i}{w_i} = \lambda_i$. That the visual area of peaks for this normalization corresponds 1:1 to the effect on the dose-response curve can be observed in figure \ref{sup-fig:Grid visualization}c. Both peaks have the same surface area, in accordance with the fact that both peaks contribute equally strong to the overall dose-response curve (fig. \ref{sup-fig:Grid visualization}d).

\FloatBarrier

\section{The accessibility histogram does not depend on the choice of units}
\label{sup-sec: accessibility histogram and units}

A common problem in biological research is that most quantities cannot be measured directly. In these cases, proportional quantities need to be used for the measurement, where the corresponding proportionality constants remain unknown for the most part. This requires additional caution when binding models are used, as a lack of knowledge of these proportionality factors could lead to misinterpretations. Fortunately, the accumulation model and the shape of the accessibility histogram do not depend on the choice of units under certain circumstances.

\subsection{Measurement limitations and unknown proportionality factors}
\label{sup-subsec: measurement limitations}

Both the antibody concentration $a$ and the density of bound antibody-epitope complexes $x$ cannot be set up\slash measured directly in most biological laboratories. Instead of the antibody concentration the antibody dilution quotient $\mathfrak{a}$ is used and instead of the bound antibody-epitope complex density the fluorescence response $\mathfrak{x}$ is measured. These quantities are related by proportionality factors:
\[\mathfrak{a} = \eta a \qquad \text{and}\qquad \mathfrak{x} = \xi x \ .\]

\begin{remark}[\bfseries Proportionality and Units]
	Measuring a quantity indirectly by measuring a proportional quantity is equivalent to choosing a different measurement unit. For that reason, we will use proportionality factor and unit conversion factor synonymously in the following.
\end{remark}

The conversion between dilution quotient and concentration is quite simple if the concentration of the base dilution is known. And often antibody vendors provide the concentration, albeit in terms of protein weight per volume. However, the ratio of antibodies that have denatured while storing the antibody (freeze\slash thaw damage) is not known without proper measurements. But these measurements are either tedious or require special instruments. Hence, they are not performed regularly for simple immunostaining experiments such that $\eta$ is usually not known exactly.

While the factor $\eta$ can be obtained, it is nearly impossible to determine $\xi$ precisely with fluorescence-microscope setups. This is, because the proportionality factor $\xi$ comprises multiple unknown proportionalities. 
\begin{itemize}
	\setlength{\itemsep}{-0.2em}
	\item The digital signal is proportional to the analogue signal of the image sensor.
	\item The amount of photons detected by the image sensor is proportional to the amount of emitted photons.
	\item The amount of emitted photons (fluorescence response) is proportional to the density of fluorescence labels.
	\item The density of fluorescence labels is proportional to the density of bound secondary antibodies.
	\item The density of bound secondary antibodies is proportional to the density of bound primary antibodies.
\end{itemize}

\subsection{Unit conversion factors cancel out in the discrete accumulation model}
\label{sup-subsec: unit factors discrete model}

Note that we have chosen some properties of the units already in the derivation of the accumulation model (section \ref{sup-sec: accumulation model}). Observe that both the density of bound antibody-epitope complexes $x$ and the 
the density of epitopes $g$ describe the same thing: epitope densities, where $x$ describes the subset of epitopes that has antibodies bound to them. Since we added no conversion factor to the term $g-x(t)$ in \eqref{sup-eq: simplified rate equation}, we chose that $g$ and $x(t)$ must have the same unit. Furthermore, we used $k_a$ to cancel out the unit of the antibody concentration $a$ and to add the missing unit of $\frac{1}{\text{time}}$ in equation \eqref{sup-eq: simplified rate equation}. Since we defined $K_\tau = \frac{1}{k_a \tau}$ and since $\tau$ is a time point, it follows that the accessibility constant $K_\tau$ has the same unit as the antibody concentration $a$.

By restricting the choice of units in this way, we ensured that unit conversion factors cancel out in the accumulation model.

\begin{corollary}
Let $\mathfrak{a} = \eta a$, $\mathfrak{K}_{\tau,i}= \eta K_{\tau,i}$, $\mathfrak{x} = \xi x$ and $\mathfrak{g}_i = \xi g_i$ be proportional quantities. Furthermore, let 
\[x = \sum_{i=1}^n g_i\left(1-\exp\left(-\frac{a}{K_{\tau,i}}\right)\right)\ .\]
Then it holds for all proportionality constants $\eta,\xi > 0$ and all numbers of epitope classes $n\in \mathbb{N}$ that
\[\mathfrak{x} = \sum_{i=1}^n \mathfrak{g}_i\left(1-\exp\left(-\frac{\mathfrak{a}}{\mathfrak{K}_{\tau,i}}\right)\right)\ .\]
\end{corollary}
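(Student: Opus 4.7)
The plan is to proceed by direct substitution, exploiting the fact that the two rescalings enter the model in very different ways: $\eta$ appears only inside the exponential through the ratio $a/K_{\tau,i}$, while $\xi$ multiplies the prefactors $g_i$ and the left-hand side $x$.

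First I would observe the central cancellation: since $\mathfrak{a} = \eta a$ and $\mathfrak{K}_{\tau,i} = \eta K_{\tau,i}$ share the same proportionality constant $\eta$, the ratio obeys
\[
\frac{\mathfrak{a}}{\mathfrak{K}_{\tau,i}} = \frac{\eta a}{\eta K_{\tau,i}} = \frac{a}{K_{\tau,i}}\ .
\]
Hence the arguments of the exponentials are invariant under the change of concentration units. This is the only nontrivial step, and it is the reason why the definition $K_\tau = 1/(k_a \tau)$ was chosen in section \ref{sup-sec: accumulation model} so that $K_\tau$ carries the same unit as the antibody concentration.

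Next I would multiply the given identity by $\xi > 0$ on both sides, yielding
\[
\xi x = \sum_{i=1}^n \xi g_i \left(1 - \exp\left(-\frac{a}{K_{\tau,i}}\right)\right)\ .
\]
Using the definitions $\mathfrak{x} = \xi x$ and $\mathfrak{g}_i = \xi g_i$ on the left-hand side and inside the sum, respectively, and then substituting the invariant-ratio identity established above for the exponent, the right-hand side becomes exactly $\sum_{i=1}^n \mathfrak{g}_i (1 - \exp(-\mathfrak{a}/\mathfrak{K}_{\tau,i}))$, which gives the claim.

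There is no real obstacle here; the statement is essentially a bookkeeping observation about which quantities must rescale together. The main conceptual point worth emphasizing in the write-up is that the result requires $a$ and $K_{\tau,i}$ to share the same unit conversion factor $\eta$, and $x$ and $g_i$ to share the same unit conversion factor $\xi$; if either pair were rescaled independently, cancellation would fail. The corollary also holds for arbitrary finite $n$ (with the $n = 1$ case recovering the single-class model), and the identical argument will extend to the Fredholm formulation in the following subsection after replacing the sum by an integral and promoting $g$ to a density.
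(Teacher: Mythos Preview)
Your proposal is correct and follows essentially the same direct-substitution argument as the paper: the paper's proof is a single chain of equalities that multiplies by $\xi$, inserts $\eta/\eta$ into the exponent, and then recognizes the fraktur quantities, which is exactly your two steps carried out in one line. Your additional commentary on why the shared conversion factors are essential and on the extension to the Fredholm case is accurate and anticipates the surrounding discussion in the paper.
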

\needspace{5cm}
\begin{proof}
\[\begin{aligned}
\mathfrak{x} &= \xi x = \xi \sum_i g_i\left(1-\exp\left(-\frac{a}{K_{\tau,i}}\right)\right) = \sum_{i=1}^n \xi g_i\left(1-\exp\left(-\frac{\eta a}{\eta K_{\tau,i}}\right)\right)\\[1em]
&= \sum_{i=1}^n \mathfrak{g}_i\left(1-\exp\left(-\frac{\mathfrak{a}}{\mathfrak{K}_{\tau,i}}\right)\right)
\end{aligned} \]	
\qed
\end{proof}

\noindent
In particular, this corollary implies that the unit conversion factors $\eta$ and $\xi$ need not be known for the discrete accumulation model. At least if it suffices to know\slash use $\mathfrak{K}_{\tau,i}$ and $\mathfrak{g}_i$ in the same units as $\mathfrak{a}$ and $\mathfrak{x}$ respectively. Note that this is the case for the applications of the accessibility histogram that we presented in the paper.

\subsection{Conversion factors do not cancel out for the antibody depletion model}
\label{sup-subsec: unit invariance fails}

As alluded to before, using $x-\beta a$ to model antibody depletion prevents the cancellation of proportionality factors. This can already be seen in the case where only a single epitope class is considered. But although the antibody depletion model can be solved analytically for a single epitope class (cf. remark \ref{sup-rem: antibody depletion and superposition}), it is easier to consider the differential equation instead.

Let us first consider again the depletion-free accumulation model:
\[\frac{d}{dt}x(t) = k_a a (g-x(t)) = \frac{\tau}{K_\tau}\  a (g-x(t))\ .\]
Using the proportional quantities, we see that the unit conversion factors cancel out as before:

\begin{corollary}
Let $\mathfrak{a} = \eta a$, $\mathfrak{K}_{\tau} = \eta K_\tau$, $\mathfrak{x}(t) = \xi x(t)$ and $\mathfrak{g} = \xi g$ be proportional quantities. Furthermore, let 
\[\frac{d}{dt} x(t) = \frac{\tau}{K_\tau} a(g-x(t))\ .\]
Then it holds for all proportionality constants $\eta,\xi > 0$  that
\[\frac{d}{dt} \mathfrak{x}(t) = \frac{\tau}{\mathfrak{K}_\tau} \mathfrak{a}(\mathfrak{g}-\mathfrak{x}(t))\ .\]
\end{corollary}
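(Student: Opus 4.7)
The plan is to prove the statement by a direct substitution computation, mirroring exactly the structure used in the preceding corollary for the discrete accumulation model. The only difference here is that we replace a finite sum by a derivative, which commutes with multiplication by the constant $\xi$, so the argument is essentially algebraic.

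First I would rewrite the left-hand side using the definition $\mathfrak{x}(t) = \xi x(t)$ and the linearity of the derivative in the constant $\xi$, obtaining
\[
\frac{d}{dt}\mathfrak{x}(t) = \xi\,\frac{d}{dt}x(t) = \xi\cdot\frac{\tau}{K_\tau}\,a\bigl(g - x(t)\bigr),
\]
where the second equality uses the assumed ODE. Next I would substitute the inverse relations $a = \mathfrak{a}/\eta$, $K_\tau = \mathfrak{K}_\tau/\eta$, $g = \mathfrak{g}/\xi$, and $x(t) = \mathfrak{x}(t)/\xi$ into the right-hand side and track the cancellations: the factor $\eta$ in the numerator from $1/K_\tau$ cancels the $1/\eta$ coming from $a$, while the outer $\xi$ cancels the $1/\xi$ pulled out of the difference $g - x(t) = (\mathfrak{g} - \mathfrak{x}(t))/\xi$. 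Collecting terms then yields $\frac{\tau}{\mathfrak{K}_\tau}\,\mathfrak{a}\,(\mathfrak{g} - \mathfrak{x}(t))$, which is the desired right-hand side.

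There is no real obstacle here beyond being careful with the placement of the factors $\eta$ and $\xi$. The reason the calculation works is exactly what was emphasized when the units were chosen in Section \ref{sup-sec: accumulation model}: $K_\tau$ was defined so as to share units with $a$, and $g$ was forced to share units with $x$. As a consequence, $\eta$ appears symmetrically in $a$ and $K_\tau$ (hence cancels in the ratio $a/K_\tau$), and $\xi$ appears symmetrically in $g$ and $x$ (hence cancels against the outer $\xi$ from differentiating $\mathfrak{x}$). The proof is therefore a three-line computation; the only thing to remark on at the end is that this cancellation structure is precisely what breaks once a depletion term $a - \beta x(t)$ is introduced, as treated in subsection \ref{sup-subsec: unit invariance fails}, because then $a$ and $x(t)$ (which carry incompatible proportionality factors) appear inside the same sum.
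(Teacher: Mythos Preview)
Your proposal is correct and essentially identical to the paper's own proof: both differentiate $\mathfrak{x}(t)=\xi x(t)$, insert the assumed ODE, and then let the $\eta$'s cancel in $a/K_\tau$ while the $\xi$ distributes over $g-x(t)$. The only cosmetic difference is that you phrase the substitution via the inverse relations $a=\mathfrak{a}/\eta$, etc., whereas the paper multiplies and divides by $\eta$ in place; the computation and its length are the same.
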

\needspace{5cm}
\begin{proof}
\[
	\begin{aligned}
		\frac{d}{dt} \mathfrak{x}(t) &= \frac{d}{dt} \xi x(t) = \xi \frac{\tau}{K_{\tau}}a (g-x(t)) =  \frac{\tau a}{K_{\tau}} \xi(g-x(t)) = \frac{\tau  \eta a}{\eta K_\tau} \left(\xi g-\xi x(t)\right) \\
		&= \frac{\tau}{\mathfrak{K}_\tau} \mathfrak{a}(\mathfrak{g}-\mathfrak{x}(t))
	\end{aligned}
\]
\qed
\end{proof}

\noindent
For the antibody depletion model, assume that $x$ is measured as particle surface density (number of bound antibodies per unit area) and $a$ is measured as particle volume density. Then the required conversion factor is given by
\[\beta = \frac{A}{V}\ ,\]
where $V$ is the well volume and $A$ is the surface area of the bottom of the well ($A$ converts $x$ to the number of particles, which is converted  to a volume density with $\frac{1}{V}$). Then the differential equation is given by
\[\frac{d}{dt}x(t) = k_a \left(a- \frac{A}{V} x(t)\right)(g-x(t)) = \frac{\tau}{K_\tau} \left(a- \frac{A}{V} x(t)\right)(g-x(t))\ .\]
If we now consider the proportional quantities, it follows that

\begin{corollary}
\label{sup-cor: depletion model no cancellation}
Let $\mathfrak{a} = \eta a$, $\mathfrak{K}_{\tau} = \eta K_\tau$, $\mathfrak{x}(t) = \xi x(t)$ and $\mathfrak{g} = \xi g$ be proportional quantities. Furthermore, let 
\[\frac{d}{dt} x(t) = \frac{\tau}{K_\tau} \left(a-\frac{A}{V}x(t)\right)(g-x(t))\ .\]
Then it holds for all proportionality constants $\eta,\xi \in >0$  that
\[\frac{d}{dt} \mathfrak{x}(t) = \frac{\tau}{\mathfrak{K}_\tau} \left(\mathfrak{a} - \frac{A}{V} \frac{\eta}{\xi}\right)(\mathfrak{g}-\mathfrak{x}(t))\ .\]
\end{corollary}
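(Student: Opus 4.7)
The plan is a direct substitution calculation, very much in the spirit of the previous corollary, but where one needs to keep track of the fact that $a$ and $x(t)$ are rescaled by \emph{different} factors ($\eta$ and $\xi$ respectively). The computation naturally splits into three steps, one for each multiplicative block on the right-hand side.

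First, I would multiply both sides of the given ODE by $\xi$ so that the left-hand side becomes $\frac{d}{dt}\mathfrak{x}(t)$, and push the $\xi$ into the final factor $(g - x(t))$ using $\mathfrak{g} = \xi g$ and $\mathfrak{x}(t) = \xi x(t)$. This yields $(\mathfrak{g} - \mathfrak{x}(t))$ cleanly, exactly as in the depletion-free case, so that factor causes no trouble. Second, I would handle the prefactor $\tau/K_\tau$ by inserting $\eta/\eta = 1$ and using $\mathfrak{K}_\tau = \eta K_\tau$ to rewrite it as $\eta\tau/\mathfrak{K}_\tau$. The surplus $\eta$ is then pushed into the bracket $(a - \tfrac{A}{V} x(t))$ so it can be distributed.

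Third, and this is the step that breaks unit invariance, I distribute the $\eta$ over $(a - \tfrac{A}{V}x(t))$. On the $a$-term this gives $\eta a = \mathfrak{a}$, as one would hope. On the $x(t)$-term, however, $x$ is scaled by $\xi$ rather than $\eta$, so I must write $\eta x(t) = \tfrac{\eta}{\xi}\xi x(t) = \tfrac{\eta}{\xi}\mathfrak{x}(t)$. Putting the three pieces together gives
\[
\frac{d}{dt}\mathfrak{x}(t) \;=\; \frac{\tau}{\mathfrak{K}_\tau}\left(\mathfrak{a} - \frac{A}{V}\frac{\eta}{\xi}\mathfrak{x}(t)\right)(\mathfrak{g}-\mathfrak{x}(t))\ ,
\]
which is (modulo what looks like a missing $\mathfrak{x}(t)$ inside the middle factor of the stated formula) the claimed equation.

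The ``hard part'' is conceptual rather than technical: in the depletion-free corollary one could harmlessly expand $\eta/\eta$ inside the exponent because the only quantity being rescaled in the bracket was $a$, and the $\eta$ exactly cancelled against $\mathfrak{K}_\tau$. Here the depletion correction $-\tfrac{A}{V}x(t)$ sits in the same bracket as $a$, but carries the \emph{other} proportionality constant $\xi$, so a single multiplicative rescaling cannot clear both. The residual factor $\eta/\xi$ that survives in the final ODE is precisely the witness of this fact, and it is also why the depletion model in corollary \ref{sup-cor: depletion model no cancellation} requires one to actually know the unit conversion constants, in contrast to the unit-invariant situation of the previous subsections.
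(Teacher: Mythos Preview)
Your proposal is correct and follows essentially the same substitution argument as the paper: multiply by $\xi$ to convert $(g-x(t))$ into $(\mathfrak{g}-\mathfrak{x}(t))$, insert $\eta/\eta$ to turn $\tau/K_\tau$ into $\eta\tau/\mathfrak{K}_\tau$, then distribute the surplus $\eta$ over the depletion bracket and absorb $\xi/\xi$ on the $x(t)$-term. You also correctly spotted that the stated conclusion is missing an $\mathfrak{x}(t)$ in the middle factor; the paper's own proof carries the same omission in its final line, so your derived formula $\frac{\tau}{\mathfrak{K}_\tau}\bigl(\mathfrak{a}-\tfrac{A}{V}\tfrac{\eta}{\xi}\mathfrak{x}(t)\bigr)(\mathfrak{g}-\mathfrak{x}(t))$ is the intended result.
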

\begin{proof}
\[\begin{aligned}
	\frac{d}{dt} \mathfrak{x}(t) &= \frac{d}{dt}\xi x(t) = \xi \frac{\tau}{K_\tau} \left(a-\frac{A}{V} x(t)\right)(g-x(t)) = \frac{\tau (a-\frac{A}{V}x(t))}{K_\tau} \xi (g-x(t))\\[1em]
	&=  \frac{\tau \eta (a-\frac{A}{V} \frac{\xi}{\xi}x(t))}{\eta K_\tau}(\xi g-\xi x(t)) = \frac{\tau  (\eta a-\frac{A}{V} \eta \frac{\xi}{\xi}x(t))}{\eta K_\tau}(\xi g-\xi x(t))\\[1em]
	&  = \frac{\tau}{\mathfrak{K}_\tau} \left(\mathfrak{a} - \frac{A}{V} \frac{\eta}{\xi}\right)(\mathfrak{g}-\mathfrak{x}(t))
\end{aligned}\]
\qed
\end{proof}

\noindent
As corollary \ref{sup-cor: depletion model no cancellation} shows, the unit conversion factors appear explicitly in the equation when the units are changed.  For the antibody depletion model this necessitates precise knowledge of the unit conversion factors $\eta$ and $\xi$.

\subsection{Choice of units and the accessibility histogram}
\label{sup-subsec: histogram and units}

So far, we have only considered the discrete accumulation model. Since the Fredholm accumulation model is obtained as approximation of the discrete accumulation model, one might expect that the unit conversion factors should still cancel out. The intuition is that until the approximation as integral model is applied the discrete accumulation model does not depend on the chosen units. Thus, the approximation as integral should also not depend on the units. Unfortunately, integration by substitution leads to the following result:

\begin{corollary}
	\label{sup-cor: density transformation}
Let $\mathfrak{a} = \eta a$, $\mathfrak{K} = \eta K$ and $\mathfrak{x} = \xi x$  be  proportional quantities. Furthermore, let 
\[x = \int_0^\infty g(K)\left(1-\exp\left(-\frac{a}{K}\right)\right)\  dK\ .\]
Then it holds for all proportionality constants $\eta,\xi > 0$  that
\[\mathfrak{r} = \int_0^\infty \mathfrak{g}(\mathfrak{K})\left(1-\exp\left(-\frac{\mathfrak{a}}{\mathfrak{K}}\right)\right)d\mathfrak{K} \qquad \text{where}\qquad \mathfrak{g}(\bullet) =  \tfrac{1}{\eta} \xi g(\tfrac{1}{\eta} \bullet)\ .\]
\end{corollary}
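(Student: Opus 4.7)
The plan is to carry out a direct change of variables $\mathfrak{K} = \eta K$ in the given integral, and then multiply both sides by the response-side conversion factor $\xi$ to turn the left-hand side into $\mathfrak{x}$; reading off the prefactor of $g(\mathfrak{K}/\eta)$ will then identify $\mathfrak{g}$.

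First I would substitute $K = \mathfrak{K}/\eta$, which gives $dK = d\mathfrak{K}/\eta$. Since $\eta > 0$, the limits $0$ and $\infty$ are preserved. Using $\mathfrak{a} = \eta a$, the exponent transforms as
\[
-\frac{a}{K} \;=\; -\frac{a\eta}{\mathfrak{K}} \;=\; -\frac{\mathfrak{a}}{\mathfrak{K}},
\]
so the exponential factor becomes $\exp(-\mathfrak{a}/\mathfrak{K})$ exactly as required.

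Next, I would multiply both sides of the substituted integral equation by $\xi$ to obtain $\mathfrak{x}=\xi x$ on the left. On the right this collects a prefactor $\xi/\eta$ in front of $g(\mathfrak{K}/\eta)$, coming from $\xi$ (response-unit conversion) and $1/\eta$ (the Jacobian of the substitution). The result is
\[
\mathfrak{x} \;=\; \int_0^\infty \frac{\xi}{\eta}\, g\!\left(\tfrac{\mathfrak{K}}{\eta}\right)\left(1-\exp\!\left(-\tfrac{\mathfrak{a}}{\mathfrak{K}}\right)\right) d\mathfrak{K},
\]
and setting $\mathfrak{g}(\bullet)=\tfrac{1}{\eta}\xi\, g(\tfrac{1}{\eta}\bullet)$ yields the claimed identity.

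There is no real technical obstacle here; the argument is a one-line substitution. The conceptual point worth highlighting after the computation is that, unlike in the discrete case of the preceding corollaries where the unit factors cancel cleanly, the continuous density picks up both $\xi$ and the Jacobian $1/\eta$. This explains why the density $g(K)$ on its own is not a unit-invariant object; invariance will be recovered only at the level of the histogram, once bin-widths absorb the $1/\eta$ factor, which fits naturally with the visualization discussion in the preceding subsection.
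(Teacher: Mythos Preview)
Your proof is correct and matches the paper's own argument essentially line for line: both proceed by the substitution $\mathfrak{K}=\eta K$ (integration by substitution) together with multiplication by $\xi$, arriving at $\mathfrak{g}(\bullet)=\tfrac{1}{\eta}\xi\,g(\tfrac{1}{\eta}\bullet)$. The only cosmetic difference is that the paper multiplies by $\xi$ before substituting whereas you substitute first; your closing remark on why the density itself is not unit-invariant also mirrors the paper's subsequent discussion.
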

\begin{proof}
	\label{sup-prf: density transformation}
\[
	\begin{aligned}
		\mathfrak{x} &= \xi x = \xi \int_0^\infty g(K)\left(1-\exp\left(-\frac{a}{K}\right)\right)dK = \int_0^\infty  \xi g(K)\left(1-\exp\left(-\frac{a}{K}\right)\right)dK\\[1em] 
		&= \int_{0 \eta}^{\infty \eta} \xi g(\tfrac{1}{\eta} \mathfrak{K})\left(1-\exp\left(-\frac{a}{\tfrac{1}{\eta} \mathfrak{K}}\right)\right)d(\tfrac{1}{\eta} \mathfrak{K}) = \int_0^\infty \xi \tfrac{1}{\eta} g(\tfrac{1}{\eta} \mathfrak{K})\left(1-\exp\left(-\frac{\mathfrak{a}}{\mathfrak{K}}\right)\right)d\mathfrak{K}\\[1em]
		&\eqqcolon \int_0^\infty \mathfrak{g}(\mathfrak{K})\left(1-\exp\left(-\frac{\mathfrak{a}}{\mathfrak{K}}\right)\right)d\mathfrak{K}
	\end{aligned}
\]
\qed
\end{proof}

\noindent
As was the case for the discrete accumulation model, a coefficient $\xi$ converts the values of $g(K)$ to the same unit that $\mathfrak{x}$ is measured in. Yet, the transformation $\mathfrak{g}(\bullet) = \tfrac{1}{\eta}\xi g(\tfrac{1}{\eta} \bullet)$ contains the unit conversion factor $\eta$ as well. Thus, the density function $g(\bullet)$ depends on the chosen unit for the antibody concentration.  This might seem unsettling at first, as it could mean that the accessibility histogram, which is derived from the density $g(\bullet)$, depends on the chosen antibody concentration unit. Furthermore, the additional coefficient $\tfrac{1}{\eta}$ seems unnatural as it converts the values of the density function $g(\bullet)$ w.r.t. the antibody concentration conversion factor (which should have no effect on the measurement of the epitopes).

Before we address the implications for the accessibility histogram, let us recall how the integral equation came about. It was a mere approximation of a discrete sum, to ease the applicability of the model. In fact, to retrieve the discrete model, the density function $g(\bullet)$ must be a sum of delta-functions, which only makes sense when integrated. In that regard, the density $g(\bullet)$ may be understood in the same way as a probability density. It is just a density function whose meaning arises only after integration. For probability densities, integration over a subset yields the probability that the parameter is in the subset. Here, integration over a subset leads to the response contribution of epitopes with $K_\tau$ in the subset. Thus, any odd behavior of the density function $g(\bullet)$ upon unit changes can be neglected, as long as the integral remains unchanged. Fortunately, this is the case, as can be seen in proof \ref{sup-prf: density transformation}. 

The fact that the integral does not depend on the choice of units also addresses the concerns about the accessibility histogram. After all, the accessibility histogram is obtained by evaluating the integral piece-wiese over the bins (i.e. intervals) of the histogram. For this, the density $g(\bullet)$ is assumed to be piece-wise constant over the bins (cf. methods: ``adaptive grids'' and section \ref{sup-sec: accessibility histogram}):
\[g(\bullet) = \sum_{i=j}^m \tfrac{f_j}{\text{Vol}(I_j)} \chi_{I_j}(\bullet) \qquad \text{where}\qquad \chi_{I_j}(\bullet) = \left\{ \begin{array}{ll}
		1 & \quad , \bullet\in I_j\\
		0  & \quad , \text{else}
	\end{array} \right.\]
\[\Rightarrow \quad x = \int_0^\infty g(K)(1-\exp(-\tfrac{a}{K})) \ dK = \sum_{j=1}^m \tfrac{f_j}{\text{Vol}(I_j)} \int_{I_j} (1-\exp(-\tfrac{a}{K}))\ dK\ .\]
Thus, to investigate the effect of the choice of units, we should focus on decompositions of the density function $g(\bullet)$:

\begin{theorem}
	\label{sup-thm: histogram and unit change}
	Let $I\subseteq \mathbb{R}_{\geq 0}$ be an interval and let $I_j = [p_j,q_j]$ be intervals with $q_j > p_j$ and $p_{j+1} = q_j$ s.t. $\bigcup_{j=1}^m I_j = I$. Furthermore, consider the following decomposition of the density function:
	\[g(K) = \sum_{j=1}^m f_j \tfrac{1}{q_j-p_j} \chi_{[p_j,q_j]}(K)\ .\]
	In addition, define the following integral and its approximation
	\[H_j \coloneqq \int_{p_j}^{q_j} \tfrac{1}{q_j - p_j}(1-\exp(- \tfrac{a}{K}))\ dK\qquad \text{and}\qquad h_j \coloneqq 1-\exp\left(- \frac{a}{q_j-p_j}\right) .\]
	Then the Fredholm accumulation model (over $I$) and its approximation read:
	\[x = \sum_{j=1}^m f_j H_j \qquad \text{and} \qquad x_{\text{approx.}} \coloneqq \sum_{j=1}^{m}f_j h_j\ .\]
	Let $\mathfrak{x} = \xi x$, $\mathfrak{f}_j = \xi f_j$ and $\mathfrak{a}=\eta a$,$\mathfrak{K} = \eta K$, $\mathfrak{p}_j = \eta p_j$, $\mathfrak{q}_j = \eta q_j$ be proportional quantities, then it holds for all $\xi,\eta > 0$ that
	\[\mathfrak{x} = \sum_{j=1}^m \mathfrak{f}_j H_j\ , \qquad \mathfrak{x}_{\text{approx.}} = \sum_{j=1}^m \mathfrak{f}_j h_j\]
	\[H_j = \int_{\mathfrak{p}_j}^{\mathfrak{q}_j} \frac{1}{\mathfrak{q}_j - \mathfrak{p}_j} \left(1-\exp\left(-\frac{\mathfrak{a}}{\mathfrak{K}}\right)\right) \ d\mathfrak{K}\ , \qquad h_j = 1-\exp\left(- \frac{\mathfrak{a}}{\mathfrak{q}_j - \mathfrak{p}_j}\right)\ .\]
\end{theorem}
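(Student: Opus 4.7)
The plan is to verify each of the four claimed identities by direct substitution, reducing everything to the change of variables $\mathfrak{K}=\eta K$ in the integral for $H_j$ and the cancellation of $\eta$ inside the exponent $a/K$. The key observation, already implicit in corollary \ref{sup-cor: density transformation}, is that $a/K$, $a/(q_j-p_j)$ and similar ratios are scale-invariant whenever the antibody concentration and accessibility are rescaled by the same factor $\eta$. This is the only nontrivial algebraic fact at play; the rest is bookkeeping.

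First I would handle the bin integral $H_j$. Substituting $\mathfrak{K} = \eta K$ yields $d\mathfrak{K} = \eta\, dK$ and transforms the limits from $p_j,q_j$ to $\mathfrak{p}_j,\mathfrak{q}_j$. The exponent obeys $a/K = (\eta a)/(\eta K) = \mathfrak{a}/\mathfrak{K}$, and the prefactor rewrites as $1/(q_j-p_j) = \eta/(\mathfrak{q}_j-\mathfrak{p}_j)$. Together with the Jacobian factor $1/\eta$ from the substitution, the two powers of $\eta$ cancel, producing exactly the integral $\int_{\mathfrak{p}_j}^{\mathfrak{q}_j} (\mathfrak{q}_j-\mathfrak{p}_j)^{-1}(1-\exp(-\mathfrak{a}/\mathfrak{K}))\, d\mathfrak{K}$ as claimed. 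For $h_j$ the argument is purely algebraic: $a/(q_j-p_j) = \mathfrak{a}/(\mathfrak{q}_j-\mathfrak{p}_j)$, so the identity is literal.

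Second, I would close the loop on $\mathfrak{x}$ and $\mathfrak{x}_{\text{approx}}$ by linearity. Since $\mathfrak{x} = \xi x$ and $\mathfrak{f}_j = \xi f_j$ by hypothesis, I obtain $\mathfrak{x} = \xi\sum_j f_j H_j = \sum_j \mathfrak{f}_j H_j$, and analogously $\mathfrak{x}_{\text{approx}} = \sum_j \mathfrak{f}_j h_j$. No additional transformation of $H_j$ or $h_j$ is needed at this step, because the first part of the proof already showed that both quantities admit equivalent representations in the fraktur variables.

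The main (very mild) obstacle is bookkeeping: one must track how $\eta$ enters simultaneously into the measure $dK$, the integration limits, the exponent, and the prefactor $1/(q_j-p_j)$, and verify that these contributions conspire to exact cancellation. Because the epitope amounts $f_j$ are rescaled only by $\xi$ (there is no $\eta$-dependence there, in contrast to the pointwise density rescaling $\mathfrak{g}(\bullet) = \tfrac{1}{\eta}\xi g(\tfrac{1}{\eta}\bullet)$ from corollary \ref{sup-cor: density transformation}), the histogram values are genuinely unit-invariant up to the overall response scale $\xi$, which is the content that distinguishes this theorem from its continuous-density counterpart and justifies interpreting the accessibility histogram as a scale-independent feature of the binding system.
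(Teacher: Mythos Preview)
Your proposal is correct and follows essentially the same route as the paper: verify $h_j$ by direct cancellation of $\eta$ in the ratio, verify $H_j$ by the substitution $\mathfrak{K}=\eta K$ so that the Jacobian factor cancels against the rescaled prefactor $1/(q_j-p_j)$, and then obtain the $\mathfrak{x}$- and $\mathfrak{x}_{\text{approx.}}$-identities by linearity in $\xi$. The only step the paper spells out that you leave implicit is the preliminary verification that plugging the piecewise-constant $g$ into the Fredholm integral indeed yields $x=\sum_j f_j H_j$; this is a one-line computation and does not affect the substance of your argument.
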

\begin{proof}
	The term $H_j$ is defined such that the expression $x = \sum_{j=1}^m f_j H_j$ is true:
	\[\begin{aligned}
		x &= \int_0^\infty g(K)(1-\exp(-\tfrac{a}{K}))\ dk = \int_0^\infty \sum_{j=1}^m f_j \tfrac{1}{q_j-p_j}\chi_{[p_j,q_j]}(K)(1-\exp(-\tfrac{a}{K}))\ dK\\[1em]
		&= \sum_{j=1}^m f_j \int_0^\infty \chi_{[p_j,q_j]}(K)\tfrac{1}{q_j-p_j}(1-\exp(-\tfrac{a}{K}))\ dK\\[1em] 
		&= \sum_{j=1}^m f_j \int_{p_j}^{q_j} \tfrac{1}{q_j-p_j}(1-\exp(-\tfrac{a}{K}))\ dK  = \sum_{j=1}^m f_j H_j\ .
	\end{aligned} \]	
	Furthermore $x_{\text{approx.}} = \sum_{j=1}^m f_j h_j$ is just a definition. Thus it remains to show the properties for the proportional quantities.\\
	
	\noindent
	First, observe that the proportionality factor simply cancels out in $h_j$:
	\[1-\exp\left(-\frac{\mathfrak{a}}{\mathfrak{q}_j -\mathfrak{p}_j}\right) = 1-\exp\left(-\frac{\eta a}{\eta q_j - \eta p_j}\right) = 1 - \exp\left(-\frac{a}{q_j-p_j}\right) = h_j\ .\]
	For $H_j$ we use integration by substitution:
	\[\begin{aligned}
		H_j &= \int_{p_j}^{q_j} \frac{1}{q_j -p_j}\left(1-\exp\left(-\frac{a}{K}\right)\right) \ dK = \int_{\eta p_j}^{\eta q_j} \frac{1}{q_j -p_j}\left(1-\exp\left(-\frac{a}{\frac{1}{\eta}\mathfrak{K}}\right)\right)\ d(\tfrac{1}{\eta}\mathfrak{K}) \\[1em]
		&= \int_{\eta p_j}^{\eta q_j} \frac{1}{\eta}\frac{1}{q_j-p_j}\left(1-\exp\left(-\frac{\eta a}{\mathfrak{K}}\right)\right)\ d\mathfrak{K} = \int_{\eta p_j}^{\eta q_j} \frac{1}{\eta q_j-\eta p_j}\left(1-\exp\left(-\frac{\eta a}{\mathfrak{K}}\right)\right)\ d\mathfrak{K}\\[1em]
		&= \int_{\mathfrak{p}_j}^{\mathfrak{q}_j} \frac{1}{\mathfrak{q}_j - \mathfrak{p}_j} \left(1-\exp\left(-\frac{\mathfrak{a}}{\mathfrak{K}}\right)\right) \ d\mathfrak{K}\ .
	\end{aligned}\]
	Thus, $H_j$ and $h_j$ do not depend on the proportionality factor $\eta$. Furthermore, they do not depend on $\xi$, as no quantity that is used to calculate them depends on $\xi$. Thus, calculating $\mathfrak{x}$ and $\mathfrak{x}_{\text{approx.}}$ becomes a trivial insertion of the definitions $\mathfrak{x} = \xi x$ and $\mathfrak{f}_j = \xi f_j$:
	\[\mathfrak{x} = \xi x = \xi \sum_{j=1}^m f_j H_j = \sum_{j=1}^m \xi f_j H_j = \sum_{j=1}^m \mathfrak{f}_j H_j\ ,\]
	\[\mathfrak{x}_{\text{approx.}} = \xi x_{\text{approx.}} = \xi \sum_{j=1}^m f_j h_j = \sum_{j=1}^m \xi f_j h_j = \sum_{j=1}^m \mathfrak{f}_j h_j\ .\]
	\qed
\end{proof}

\begin{remark}[\bfseries Interpretation of theorem \ref{sup-thm: histogram and unit change}]
Essentially, theorem \ref{sup-thm: histogram and unit change} has the following meaning. When the antibody concentration and the epitope density are measured in other units than number of particles per volume\slash surface area, only the numbers written to the axes of the accessibility histogram change. The shape of the histogram remains unaffected. That is, the relation between the $\{\mathfrak{f}_j\}$ (measured in the new units) is the same as the relation between the $\{f_j\}$. Furthermore, the intervals $[p_j,q_j]$ to which the coefficients $f_j$ belong are transformed according to the antibody concentration unit change into $[\mathfrak{p}_j,\mathfrak{q}_j]$, s.t. the relation between the bin-widths of the histogram is preserved.
\end{remark}

\section{Validation of the accumulation hypothesis}
\label{sup-sec: Validation of the accumulation hypothesis}

The antibody accumulation hypothesis originated from the discrepancy between the Langmuir-model assumptions and the experimental protocols. Yet, despite being well motivated, a validation of the accumulation hypothesis is still due. In the results section we already briefly discussed validation data. Having derived the accumulation model and the accessibility histogram in great detail, we can derive further predictions and compare them to the experimental data. Furthermore, since we have validated in the main part of the paper that the accessibility histogram describes the binding properties of antibodies, we can apply the accessibility analysis. This will suggest properties of the antibody binding dynamics.

\subsection{Validation}

The idea is quite simple. Recall the accumulation hypothesis which states that the dose-response behavior originates from a premature interruption of the antibody accumulation (and the different binding speeds that depend on the antibody concentration $\frac{d}{dt} x(t)\sim a$). If that is the case, increasing the incubation time should increase the amount of bound antibodies, and thus the resulting fluorescence response for a given dose. Furthermore, this effect should apply to all antibody concentrations. For a dilution series (dose-response curve), this means that increasing the incubation time should appear as if the stock solution of primary antibody was higher concentrated. Thus, dose-response curves with longer incubation times appear to be shifted to the left in a logarithmically scaled plot.

\begin{remark}[{\bfseries Left-shift of dose-response curves -- explanation}]
	\label{sup-rem: incubation time left-shift}
	To understand, why dose-response curves with longer incubation times appear left-shifted in a logarithmically scaled plot, we assume for the moment that doubling the incubation time has the same effect as doubling the concentration. I.e. the response function for the experiments with twice as long incubation times $r_{2\tau}(a)$ is given by
	\[r_{2\tau}(a) = r_\tau(2a)\ , \]
	where $r_\tau(a)$ denotes the response function of experiments with the original incubation time. This means that a given response value $r_{2\tau}(\frac{a}{2}) = r_\tau(a)$ of the $r_{2\tau}$-curve appears already at the concentration $\frac{a}{2}$, while the same response value for the $r_\tau$-curve appears at a higher concentration $a$. In a logarithmic scale, this transformation $a \mapsto \frac{a}{2}$ becomes $\log(a) \mapsto \log(\frac{a}{2}) = \log(a)-\log(2)$. This transformation describes the aforementioned left-shift.
	\vspace{1em}

	This proportionality assumption, doubling the incubation time is the same as doubling the antibody concentration, is not unmotivated. Observe that the antibody concentration and the incubation time always appear as product  $a\cdot \tau$ in the accumulation model:
	\[1-\exp\left(-\frac{a}{K_\tau}\right) = 1- \exp\left(-\frac{a}{\frac{1}{k_a \cdot \tau}}\right) =1-\exp(-k_a \cdot a \cdot \tau) \ .\]
	However, the accumulation model is just an empirical model, neglecting effects like antibody depletion. Thus, a perfect proportionality is not to be expected. The effect of increasing the incubation time should diminish for longer incubation times.
\end{remark}

\begin{figure}[h]
	\centering

	\begin{subfigure}[c]{0.49\textwidth}
		\centering
		{\large \textbf{(a)} raw data}
		
		\includegraphics[width = \textwidth]{images/supplement_figure_accumulation/common_data.pdf}
	\end{subfigure}
	\begin{subfigure}[c]{0.49\textwidth}
		\centering
		{\large \textbf{(b)} shifted curves}
		
		\includegraphics[width = \textwidth]{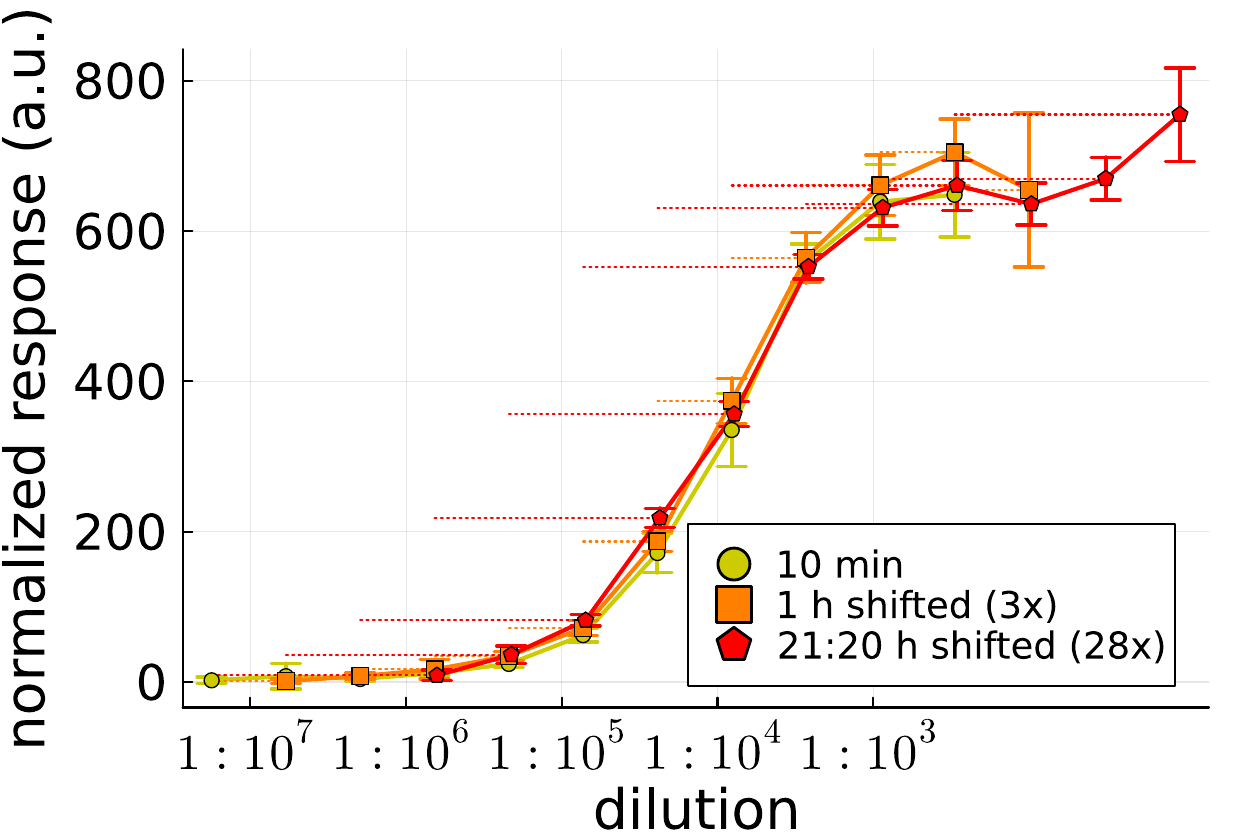}
	\end{subfigure}
	\caption{Dose-response curves (anti-NF200 antibody, HeLa cells) with different incubation times of the primary antibodies. Each data point is the mean of 16 replicates with the standard deviation of the replicates as error bars. \textbf{(a)} shows the raw data plots. \textbf{(b)} shows the same response curves but plotted with increased concentrations (3 times higher concentrations for the \unit[1]{h}-incubation curve and 28 times higher concentrations for the \unit[21:20]{h}-incubation curve). The dashed lines indicate the shift of the curves.}
	\label{sup-fig: incubation time data}
\end{figure}

Figure \ref{sup-fig: incubation time data}a shows three dose-response curves, each created with a different primary antibody incubation time (\unit[10]{mins}, \unit[1]{h} and \unit[21:20]{h}). As predicted by the accumulation hypothesis, longer incubation times for the primary antibody lead to stronger responses. Furthermore, also as predicted, the dose-response curves that were created with longer incubation times are shifted to the left. Figure \ref{sup-fig: incubation time data}b demonstrates that the dose-response curves have the same shape, i.e. are indeed just shifted. This plot was created by uniformly increasing the dilution quotients by a factor of 3 for the \unit[1]{h}-incubation curve and by a factor of 28 for the \unit[21:20]{h}-incubation curve. All in all, the results shown in figure \ref{sup-fig: incubation time data} validate the accumulation hypothesis.

Note, that as explained in remark \ref{sup-rem: incubation time left-shift}, the shift is not proportional to the increase of the incubation time. The \unit[1]{h}-incubation is 6-times longer than the \unit[10]{min}-incubation but the dose-response curves are just shifted apart by a factor of 3. The \unit[21:20]{h}-incubation is $\frac{64}{3}$-times longer than the \unit[1]{h}-incubation but the dose-response curves are just shifted apart by a factor of $\frac{28}{3}$. As speculated in remark \ref{sup-rem: incubation time left-shift}, the effect of increased incubation times decreases the longer the reference incubation time already is in the first place.

\begin{figure}[t]
	\centering

	\begin{subfigure}[c]{0.9\textwidth}
		\centering 
		{\large \textbf{(a)} 10-minute primary antibody incubation}
		\vspace{0.3cm}

		\includegraphics[width = 0.49\textwidth]{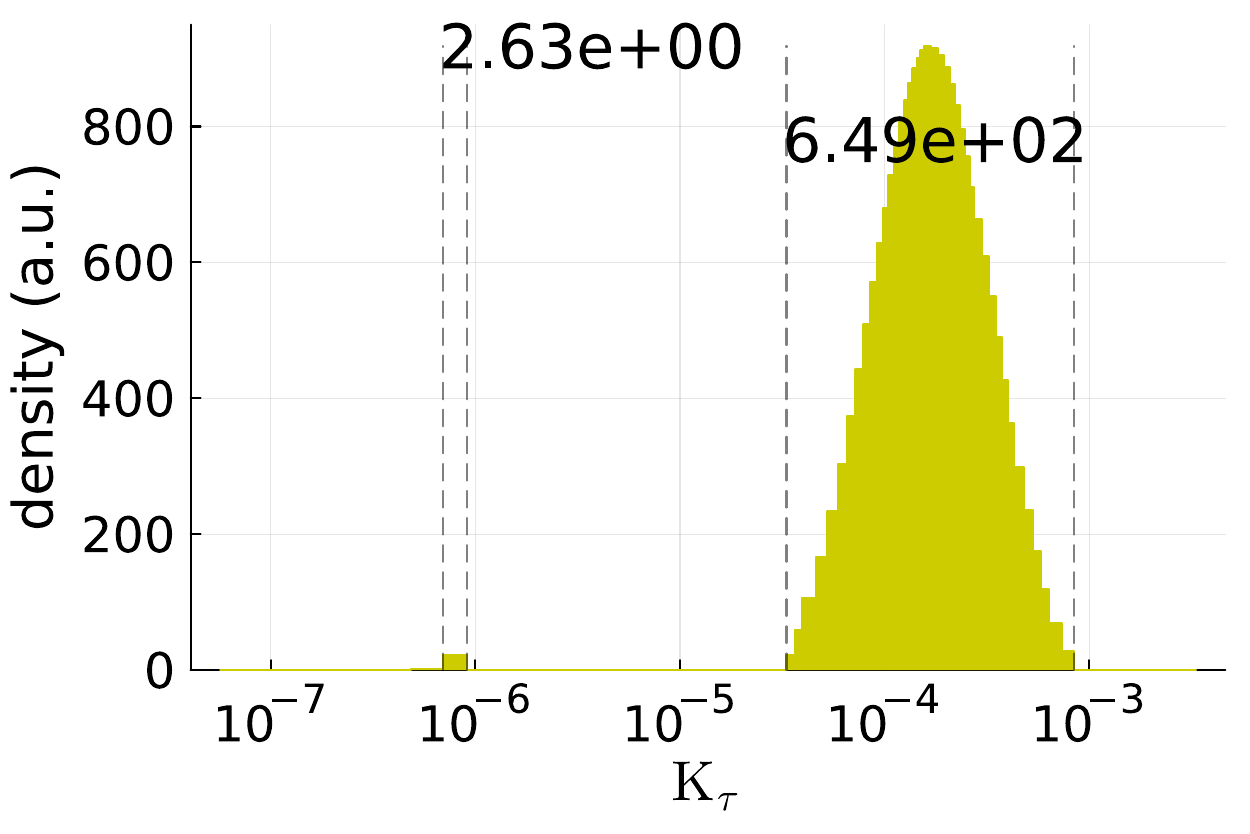}
		\includegraphics[width = 0.49\textwidth]{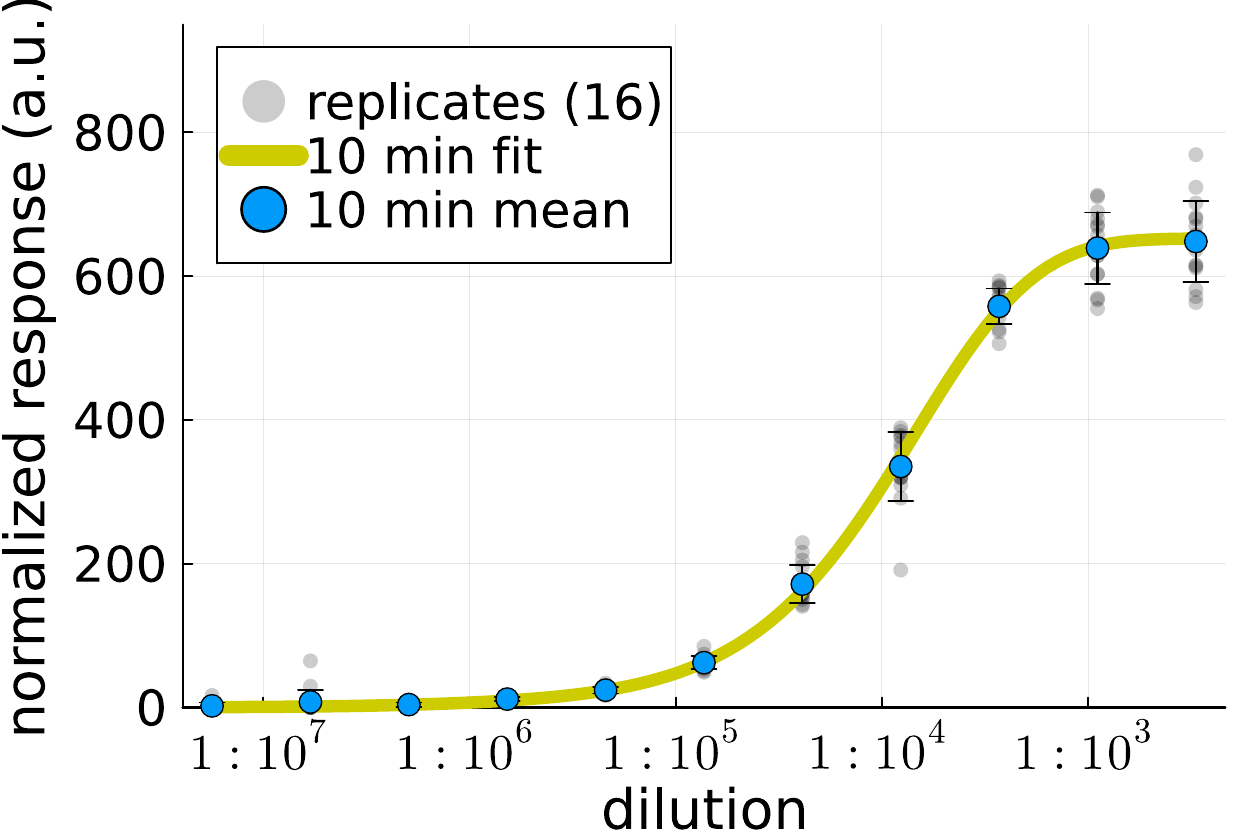}
	\end{subfigure}
	\vspace{0.6cm}

	\begin{subfigure}[c]{0.9\textwidth}
		\centering 
		{\large \textbf{(b)} 1-hour primary antibody incubation}
		\vspace{0.3cm}

		\includegraphics[width = 0.49\textwidth]{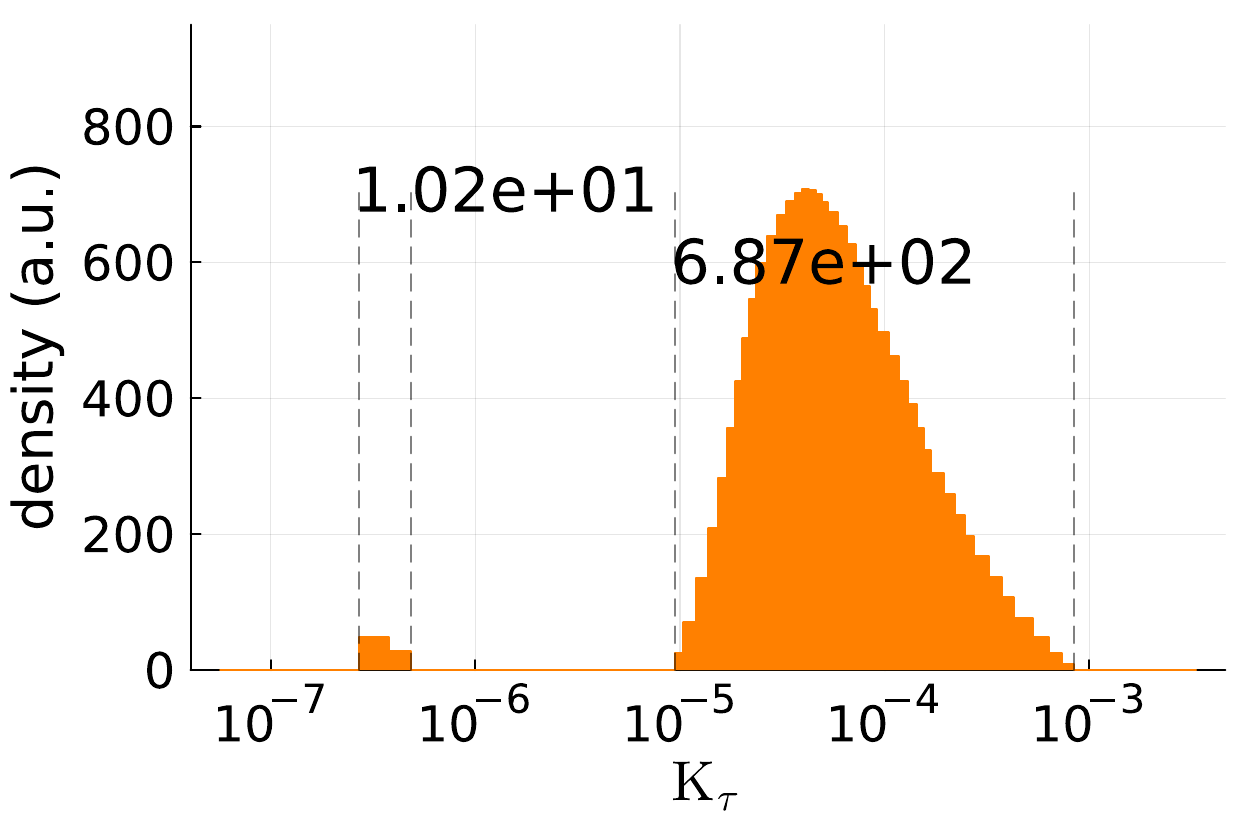}
		\includegraphics[width = 0.49\textwidth]{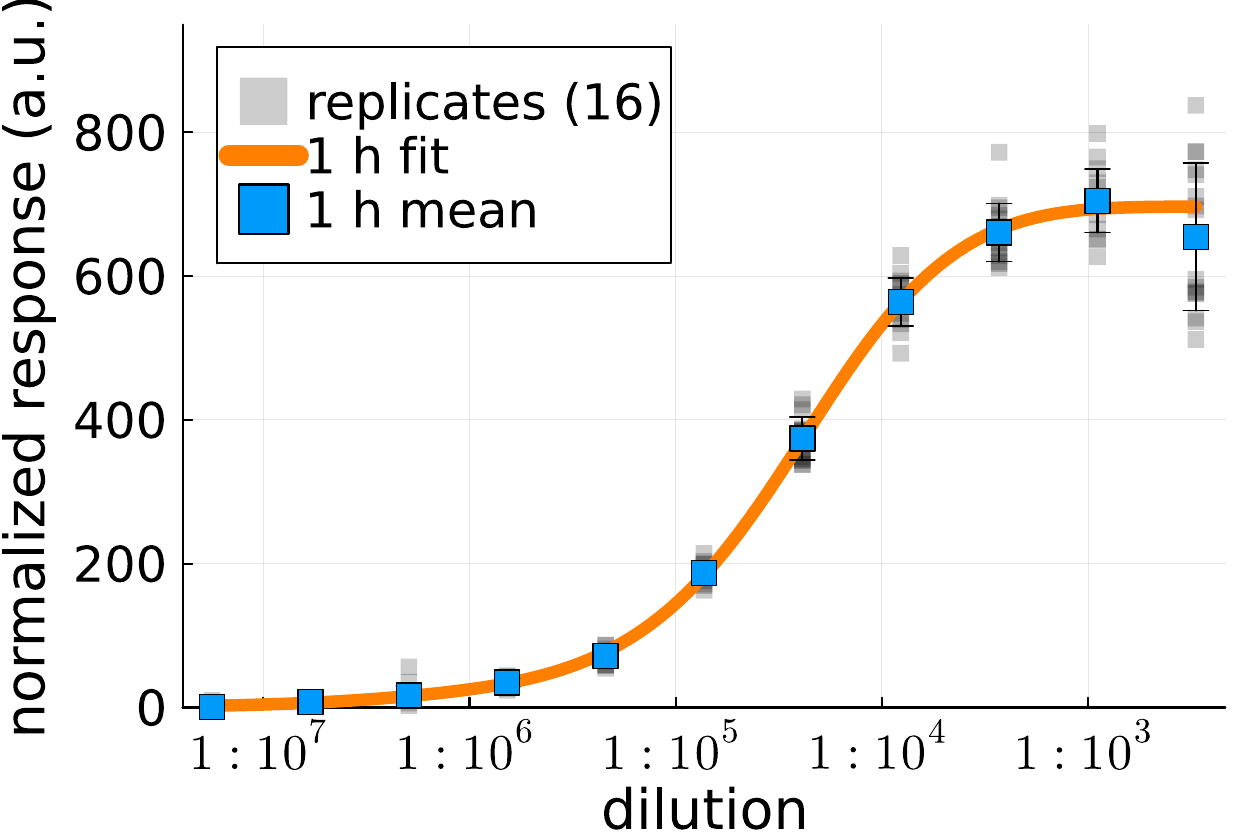}
	\end{subfigure}
	\vspace{0.6cm}

	\begin{subfigure}[c]{0.9\textwidth}
		\centering 
		{\large \textbf{(c)} 21:20-hour primary antibody incubation}
		\vspace{0.3cm}

		\includegraphics[width = 0.49\textwidth]{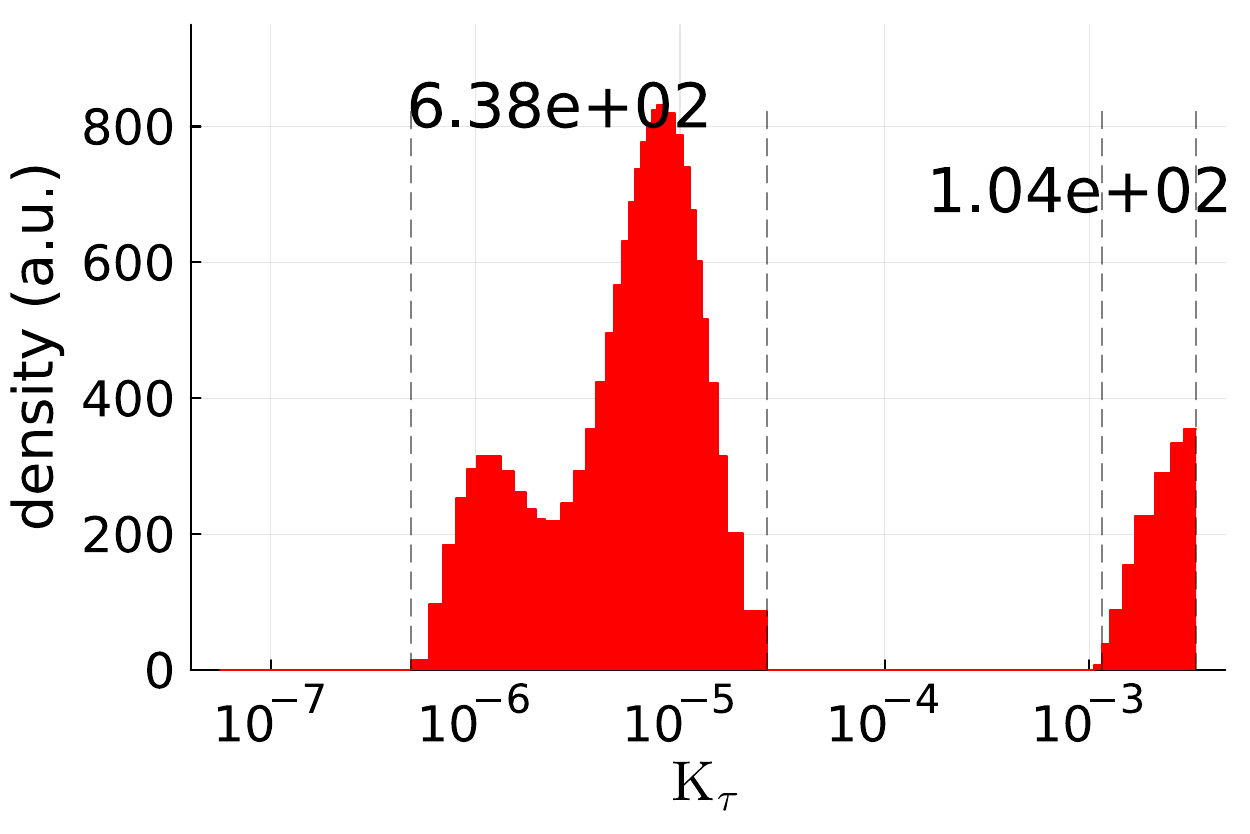}
		\includegraphics[width = 0.49\textwidth]{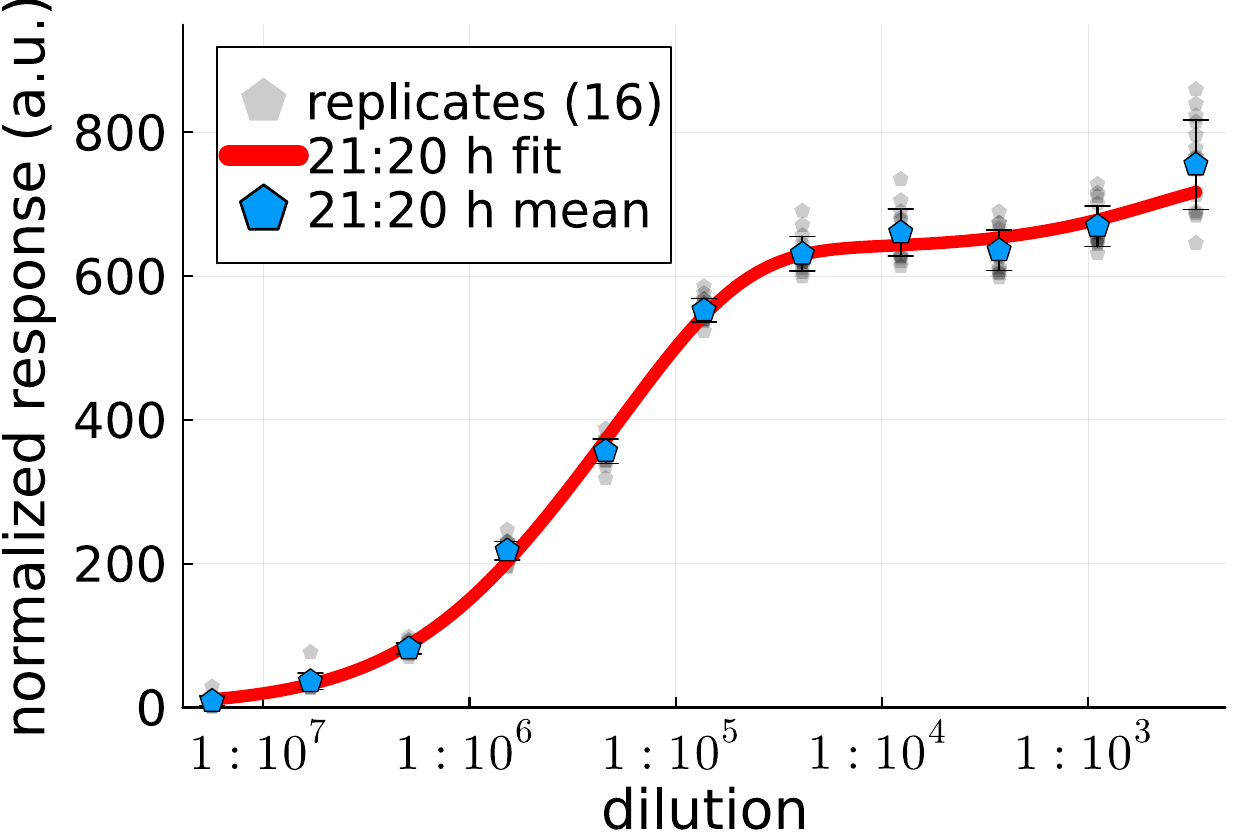}
	\end{subfigure}

	\caption{Accessibility analysis of the dose-response curves for the different primary antibody incubation times form figure \ref{sup-fig: incubation time data}(a).}
	\label{sup-fig: incubation time accessibility analysis}
\end{figure}

\subsection{Accessibility analysis}

Although the raw data plots are enough to validate the accumulation hypothesis, applying the accessibility analysis is worthwhile at this point. Especially since we have already validated the accessibility analysis in the main part of the paper. Accordingly, figure \ref{sup-fig: incubation time accessibility analysis} shows the results of the accessibility analysis for for the 3 dose-response curves from figure \ref{sup-fig: incubation time data}a.  

First, we observe that the longer the incubation of the primary antibodies, the further left are the peaks in the accessibility histogram. This is hardly surprising, as the dose-response curves are shifted to the left, as explained above. Since the $K_\tau$ values of peaks in the histogram correspond to the dilution quotients of features in the dose-resposne curve, this left-shift must appear in the histogram. But recall the validation system from the main part. We increased the concentration of the anti-NF200 stock solution (by labelling the 1:20 dilution quotient as 1:100) to shift the corresponding histogram peaks to the left. In this regard, the left-shift upon longer incubation times corroborates the previous statement: Increased incubation times appear as if the concentration of the primary antibody stock solution was increased.

A more interesting result is that the single peak in figure \ref{sup-fig: incubation time accessibility analysis}a disperses as it shifts to the left in figures \ref{sup-fig: incubation time accessibility analysis}b and \ref{sup-fig: incubation time accessibility analysis}c. At first sight, this seems surprising, as one would expect just a shift of the peak. Yet, the dispersion makes sense in light of the binding process and the accessibility interpretation.

\begin{remark}[{\bfseries Reasonable incubation times}]
	The illustrated antibody binding process assume, of course, that the incubation times are reasonable. That is, the incubation times are long enough s.t. antibodies could at least reach epitopes (in a straight line) with their diffusion velocity. All explanations break down if infinitely short or infinitely long incubation times are considered.
\end{remark}

\noindent
For antibodies to bind to an epitope, they have to be in close vicinity to the epitope. However, antibody movement is in general not a directed process, but governed by Brownian motion. That is, the path of an antibody is essentially a random walk. Accessible and less accessible epitopes can differ in two aspects. First, less accessible epitopes can be harder to reach, requiring longer\slash less probable random walks. Second, less accessible epitopes can be partially blocked or have a lower affinity such that the antibody requires multiple attempts to bind (the antibody must approach the epitope from the right angle). This again requires to a less probable random walk. 

Because of the stochastic nature of antibody binding, there are two ways to achieve a certain amount of bound antibodies, i.e. a certain response signal. One can either increase the incubation time or one can use more antibodies (higher concentration), or a combination of both.

In case of long incubation times, fewer antibodies (lower concentration) are required to obtain a certain response signal, as the long incubation time gives each individual antibody a higher chance to bind eventually. However, the accessible epitopes act as obstruction for less accessible epitopes in this case. That is, because of the location of accessible epitopes, a random walk is likely to pass by at such an epitope. Furthermore, accessible epitopes are often easier to bind to. Hence, accessible epitopes essentially catch antibodies before they have the chance to reach inaccessible epitopes. Since low amounts of antibodies already suffice to obtain a certain response signal in case of long incubation times, there are not enough antibodies left to bind to less accessible epitopes. In other words, large enough amounts of antibodies to cover most accessible epitopes are required before a substantial amount of antibodies can bind to less accessible epitopes. Thus, the difference between accessible epitopes and less accessible epitopes should be most pronounced in case of long incubation times. 

In case of short incubation times, most antibody random walks do not lead to binding events, such that more antibodies (higher concentrations) are required to obtain a certain response signal. Because of the excess of antibodies, a large proportion of accessible epitopes can be bound to by antibodies, which blocks their capability to obstruct antibodies from reaching less accessible epitopes. In addition, because of the excess of antibodies, a sufficiently large amount of free antibodies remains to have a chance to reach and bind to less accessible epitopes. In other words, at concentrations where the accessible epitopes get bound to, there are enough antibodies left that have a chance, albeit a lower one, to bind to less accessible epitopes. Thus, the distinction of accessible and less accessible epitopes is less pronounced in case of short incubation times.

\begin{remark}[{\bfseries Antibody depletion does not cause peak dispersion}]
	One might interject at this point, that longer incubation times could lead to stronger depletion effects which could equally well explain the odd behavior of dispersing peaks. The idea is, that longer incubation times allow more antibodies to bind to epitopes, exacerbating the depletion.
	\vspace{14pt}

	But the depletion effect should have the opposite effect, it should compress accessible peaks back into inaccessible peaks (shift to the right). The reason is, as discussed in section \ref{sup-sec: depletion correction}, that the concentration becomes lower because of antibody depletion. This slows down the antibody binding, s.t. the binding rate $k_a$ is underestimated, i.e. $K_\tau = \frac{1}{k_a \tau}$ is overestimated (larger than it should be). By the way, the same reasoning explains why antibody depletion leads to an underestimation of the dissociation constant $K_d$ in the Langmuir model. See e.g. \cite{Edwards_1998} and \cite{Jarmoskaite_2020} for further details.
	\vspace{14pt}

	In summary, depletion effects can be ruled out as reason for the peak dispersion observed when longer incubation times are used.
\end{remark}

\FloatBarrier

\section{Depletion correction}
\label{sup-sec: depletion correction}

In subsection \ref{sup-subsec: unit invariance fails} we have seen that the conversion factor $\beta$ for the antibody depletion term $a-\beta x(t)$ explicitly depends on the choice of units (see corollary \ref{sup-cor: depletion model no cancellation}). Since the unit conversion between response signal (used for the measurement) and bound antibody-epitope complex density is unknown in most cases (as argued in subsection \ref{sup-subsec: measurement limitations}), $\beta$ often remains unknown in practice. For that reason, we had to neglect antibody depletion in the accumulation model. Nevertheless, the maximal depletion can be estimated from the data, allowing to estimate the worst-case error of neglecting antibody depletion.

Let $\{(\mathfrak{a}_i, \mathfrak{x}_i)\}_{i=1}^n$ be dose-response data in experimental units, e.g. dilution quotient and fluorescence response and recall the derivation of the accumulation model: The antibody binds permanently to the epitopes, accumulating over time. Thus, the corresponding response $\mathfrak{x}(t)$ increases over time (assuming one would stop the incubation and measure $\mathfrak{x}(t)$ at the time $t$ after washing). The antibody incubation is eventually stopped after a period of time $\tau$, s.t. the response that is measured is given by $\mathfrak{x} \coloneqq \mathfrak{x}(\tau)$. During the antibody incubation, the initial antibody dilution quotient $\mathfrak{a}$ has decreased by the amount of antibodies that have bound to epitopes. Hence, the dilution quotient of unbound antibodies $\mathfrak{b}$, after the incubation phase, is given by (using the still unknown conversion factor $\boldsymbol{\beta}$ in experimental units):
\[\mathfrak{b} = \mathfrak{a}-\boldsymbol{\beta} \mathfrak{x}\ .\]
Note that $\boldsymbol{\beta}$ is a fixed constant for a given choice of units. Hence, it holds for all data points that the remaining free antibody dilution quotient is
\[\mathfrak{b}_i = \mathfrak{a}_i-\boldsymbol{\beta} \mathfrak{x}_i\ .\]

At this point, we still do not know the conversion factor $\boldsymbol{\beta}$. But since the remaining antibody dilution quotient must not be negative, we can estimate the largest possible conversion factor 
\[\widehat{\boldsymbol{\beta}}\coloneqq \max\{\gamma \geq 0 \mid \mathfrak{a}_i -\gamma \mathfrak{x}_i \geq 0\quad \forall\ i\}\ .\]
Then, following the idea of \cite{Edwards_1998}, we may correct the initial antibody dilution quotients using the largest possible conversion factor. 
\[\widetilde{\mathfrak{b}}_i \coloneqq \mathfrak{a}_i -\widehat{\boldsymbol{\beta}}\mathfrak{x}_i \qquad \leadsto \qquad \text{corrected data:}\quad \{(\widetilde{\mathfrak{b}}_i,\mathfrak{x}_i)\}_{i=1}^n\ .\]
This constitutes a worst-case scenario for two reasons. First, by definition $\widehat{\boldsymbol{\beta}}$ is larger than the true conversion factor $\boldsymbol{\beta}$, which leads to smaller values for the $\widetilde{\mathfrak{b}}_i$. Second, using the values $\widetilde{\mathfrak{b}}_i$ assumes that the antibodies were depleted for the whole antibody incubation phase, when in reality the antibody dilution quotient started from the initial dilution quotient in the beginning, gradually decreasing over time.

\begin{figure}[h!]
	\centering 
	\centerline{\includegraphics[width = 18cm]{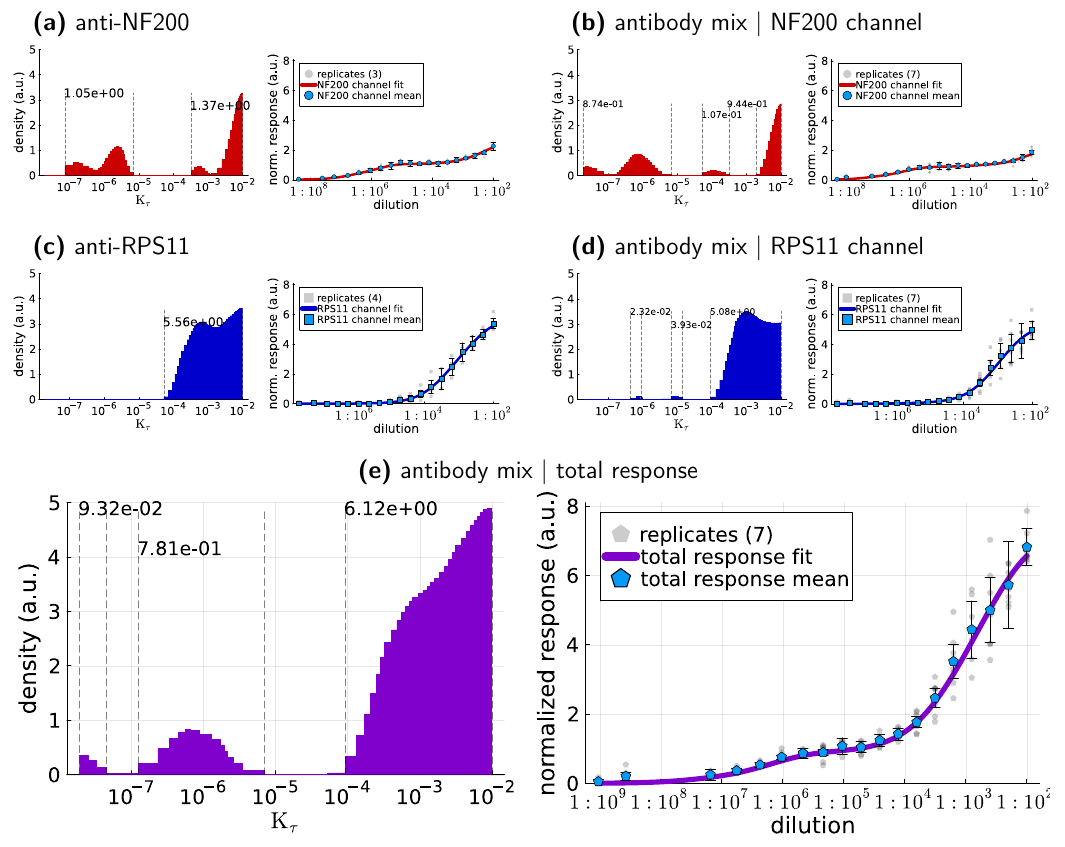}}
	\caption{Depletion-corrected dose-response curves of the validation system (cf. figure \ref{main-fig: histograms}) and the corresponding accessibility histograms.}
	\label{sup-fig: depletion}
\end{figure}

\begin{figure}[h!]
	\centering 
	\begin{subfigure}[c]{0.49\textwidth}
		\includegraphics[width = \textwidth]{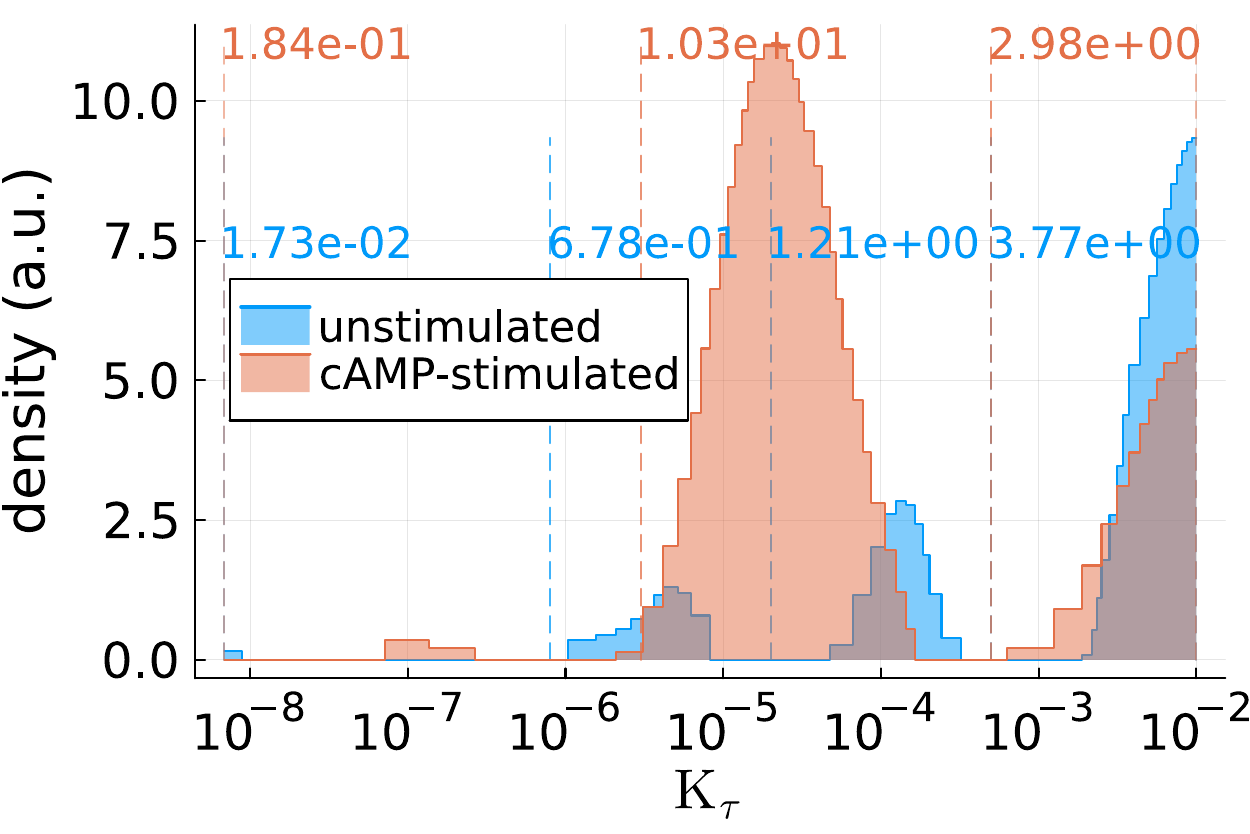}
	\end{subfigure}
	\begin{subfigure}[c]{0.49\textwidth}
		\includegraphics[width = \textwidth]{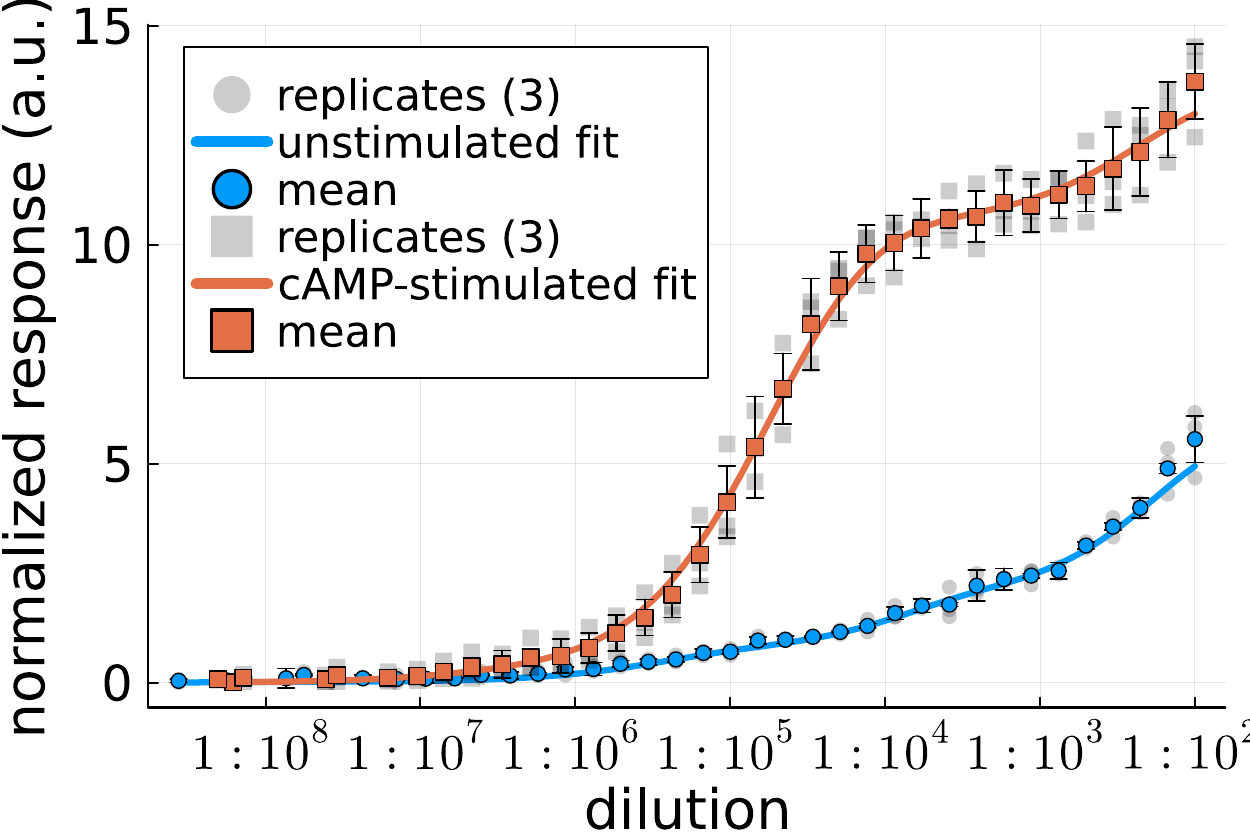}
	\end{subfigure}
	\begin{subfigure}[c]{0.49\textwidth}
	\end{subfigure}
	\caption{Depletion-corrected dose-response curves of the pRII-antibody (cf. figure \ref{main-fig: PKA}) and the corresponding accessibility histograms.}
	\label{sup-fig: PKA depletion}
\end{figure}
 
Figure \ref{sup-fig: depletion} shows depletion-corrected dose-response curves (for the data from figure \ref{main-fig: histograms}) and the accessibility histograms obtained from them. The depletion correction was applied before removing the baseline of the dose-response curves. For the plots, but not for the curve fitting, any data points with dilution quotient equal to zero were removed, as they would conflict with the logarithmic scale.

Observe that the depletion correction only affects the lower dilution-quotient data points, as expected. Nevertheless, the obtained accessibility histograms remain almost the same as for the uncorrected dose-response curves in figure \ref{main-fig: histograms}. Only a slight shift to the left for the leftmost peaks can be observed in fig. \ref{sup-fig: depletion}b and \ref{sup-fig: depletion}e. This indicates that depletion did not affect the results in the main part. The same is true for the PKA-data (see figure \ref{sup-fig: PKA depletion}).

\FloatBarrier

\section{Histograms with weaker regularization}
\label{sup-sec: histograms with weaker regularization}

In the main part of the paper, we used a large regularization parameter ($\alpha = 500$) to reduce peak features that could be caused by noise. However, a strong regularization may also mask true features that correspond to the actual binding behavior. For this reason, we investigate the validation system histograms (cf figure \ref{main-fig: histograms})  for a weaker regularization ($\alpha = 40$).

\subsection{Overview}

All in all, it can be observed that reducing the regularization strength does not lead to many additional peaks. The only exception is that the broad peak of the anti-RPS11 antibody becomes two distinct peaks. The presence of two peaks could already be assumed from figure \ref{main-fig: histograms}, but becomes more apparent when weaker regularizations are used.

\begin{center}
	\centerline{\includegraphics[width = 18cm]{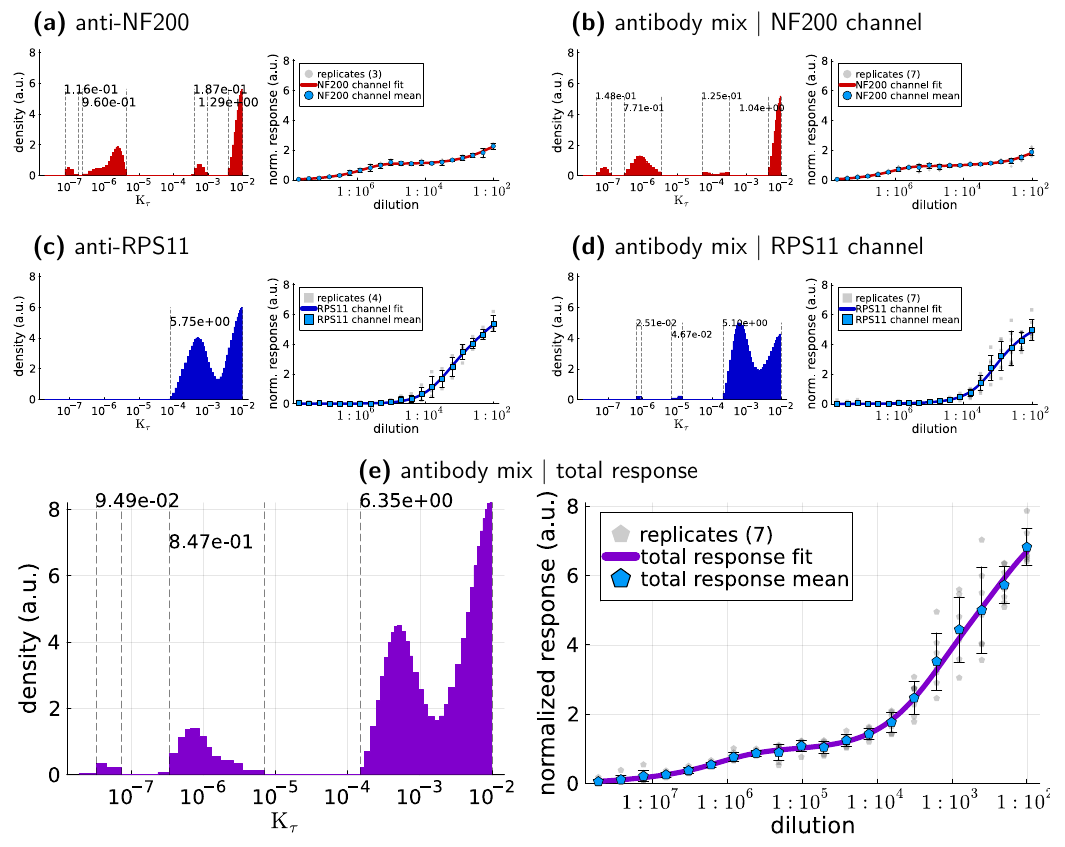}}
	\captionof{figure}{Re-analysis of the validation system, using a weaker regularization: $\alpha = 40$ (cf. fig. \ref{main-fig: histograms}, where $\alpha = 500$).}
	\label{sup-fig: weaker regularization}
\end{center}

\FloatBarrier

\subsection{Individual replicates}
\label{sup-subsec: Individual replicates}

When a weak regularization parameter is used to extract more features, it is a good idea to check if the obtained features are not noise artifacts. For this purpose, one may compare the accessibility histograms of the individual replicates. Figures \ref{sup-fig: rep 1}-\ref{sup-fig: rep 7} show the dose-response curves and the corresponding accessibility histograms of the individual replicates for the antibody-mix condition. 

In general, the major features of the histograms for the mean value dose-response curves (fig. \ref{sup-fig: weaker regularization}b,d,e) are present in all replicate histograms, confirming the results in figure \ref{sup-fig: weaker regularization}. In particular, this corroborates that there are 2 separate peaks in the histogram of the anti-RPS11 antibody. However, the histograms are not identical between the different replicates, which is due to the usual replicate to replicate variation. This is especially noticeable for replicate 1 (fig. \ref{sup-fig: rep 1}) and replicate 7 (fig. \ref{sup-fig: rep 7}), which may be regarded as outliers.

Note that because of the general replicate to replicate variability one should (if possible) use the mean value of replicates for the dose-response curve. Individual replicates can then be used to corroborate findings, but one should avoid drawing conclusions from a single replicate, as replicates 1 and 7 demonstrate.

\vspace{0.5cm}
\noindent
\begin{minipage}{\textwidth}
	\centering

	{\large \bfseries Replicate 1}\vspace{0.5em}

		\includegraphics[width = 0.32\textwidth]{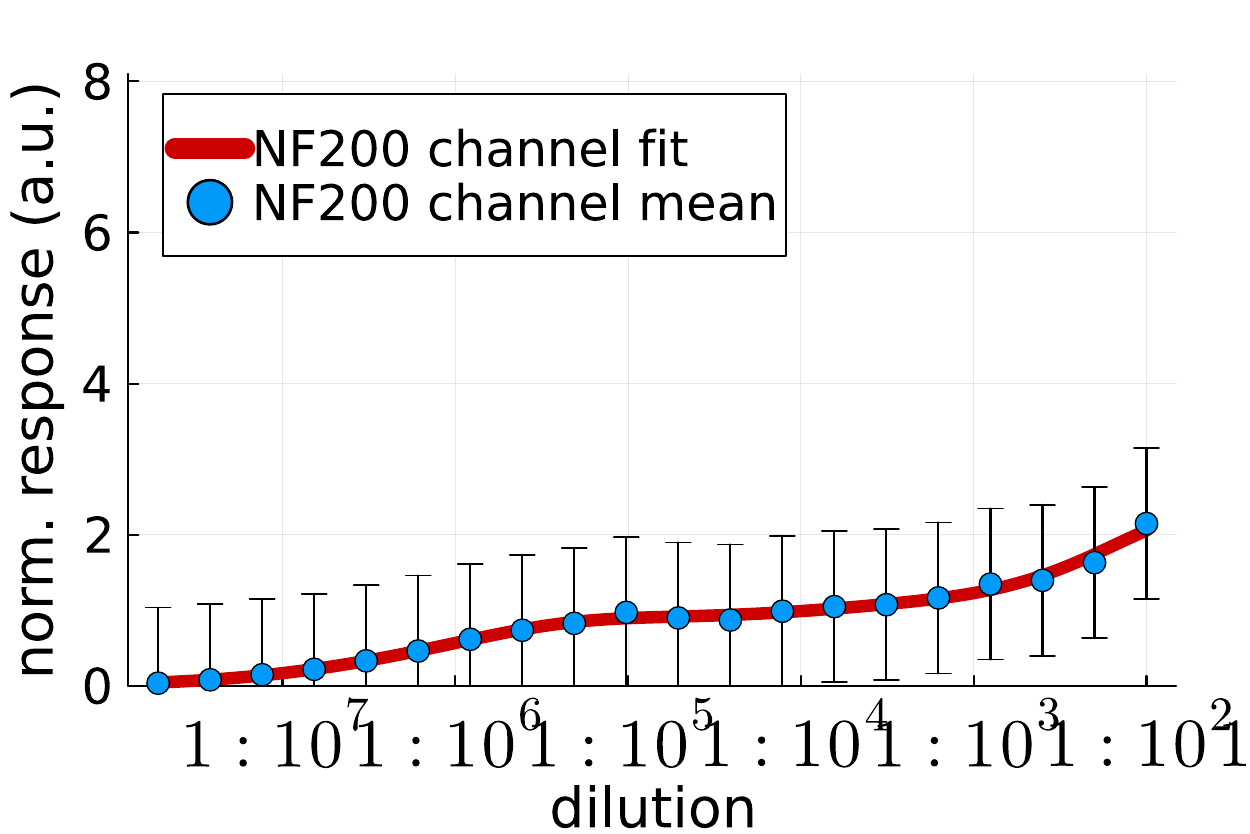}
		\includegraphics[width = 0.32\textwidth]{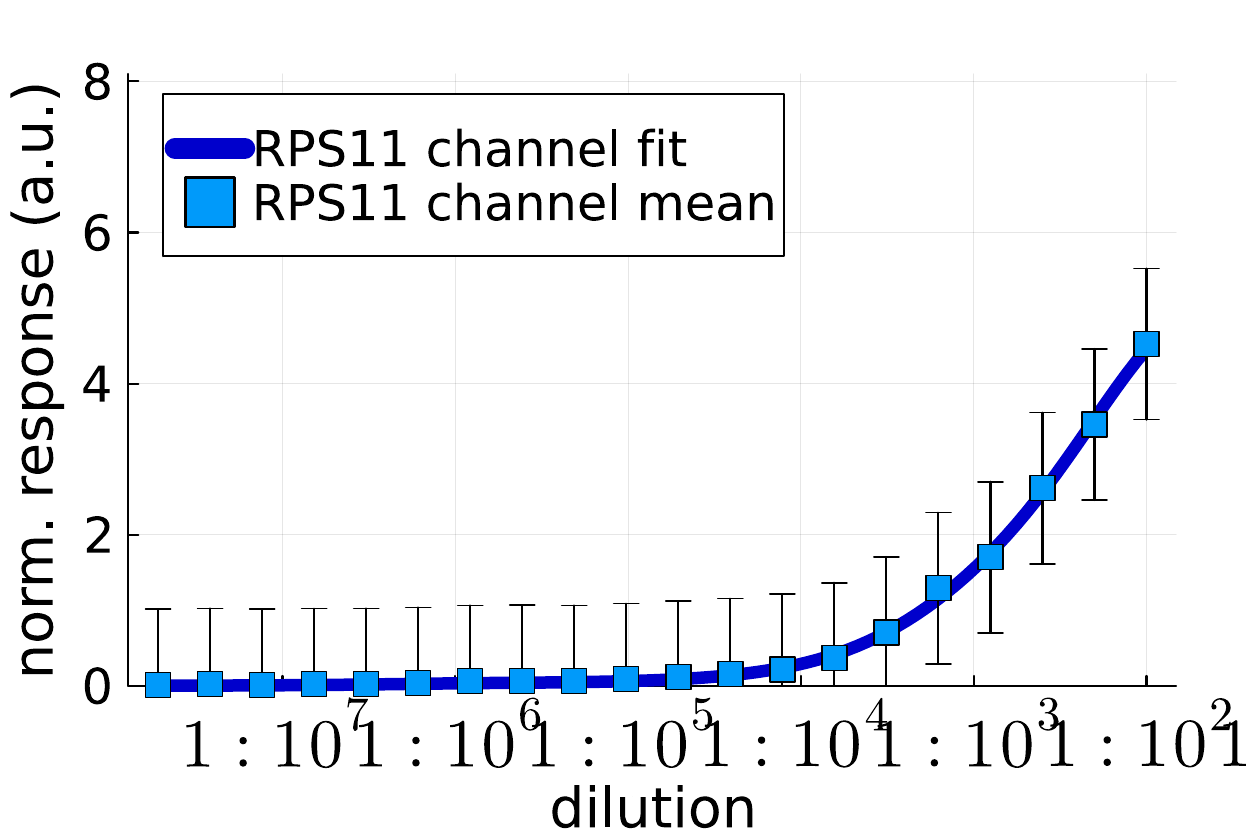}
		\includegraphics[width = 0.32\textwidth]{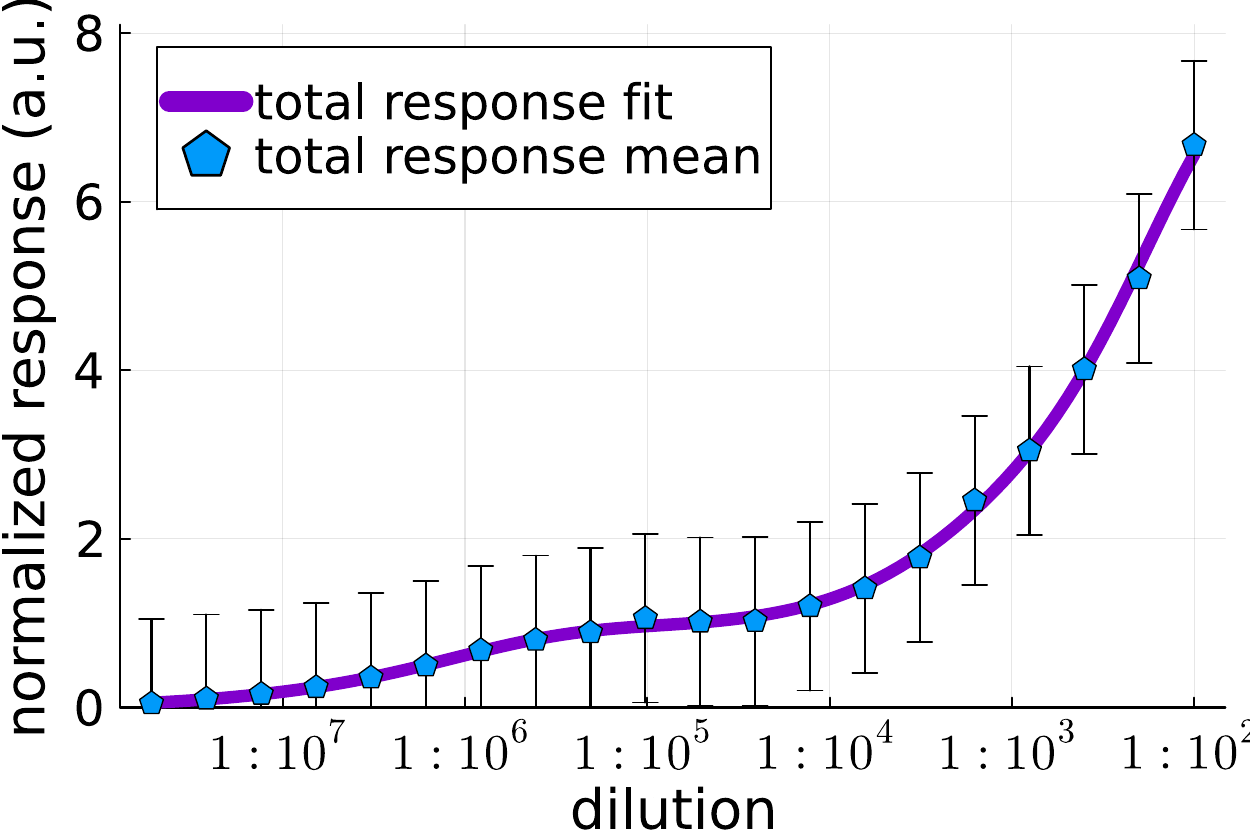}

		\includegraphics[width = 0.32\textwidth]{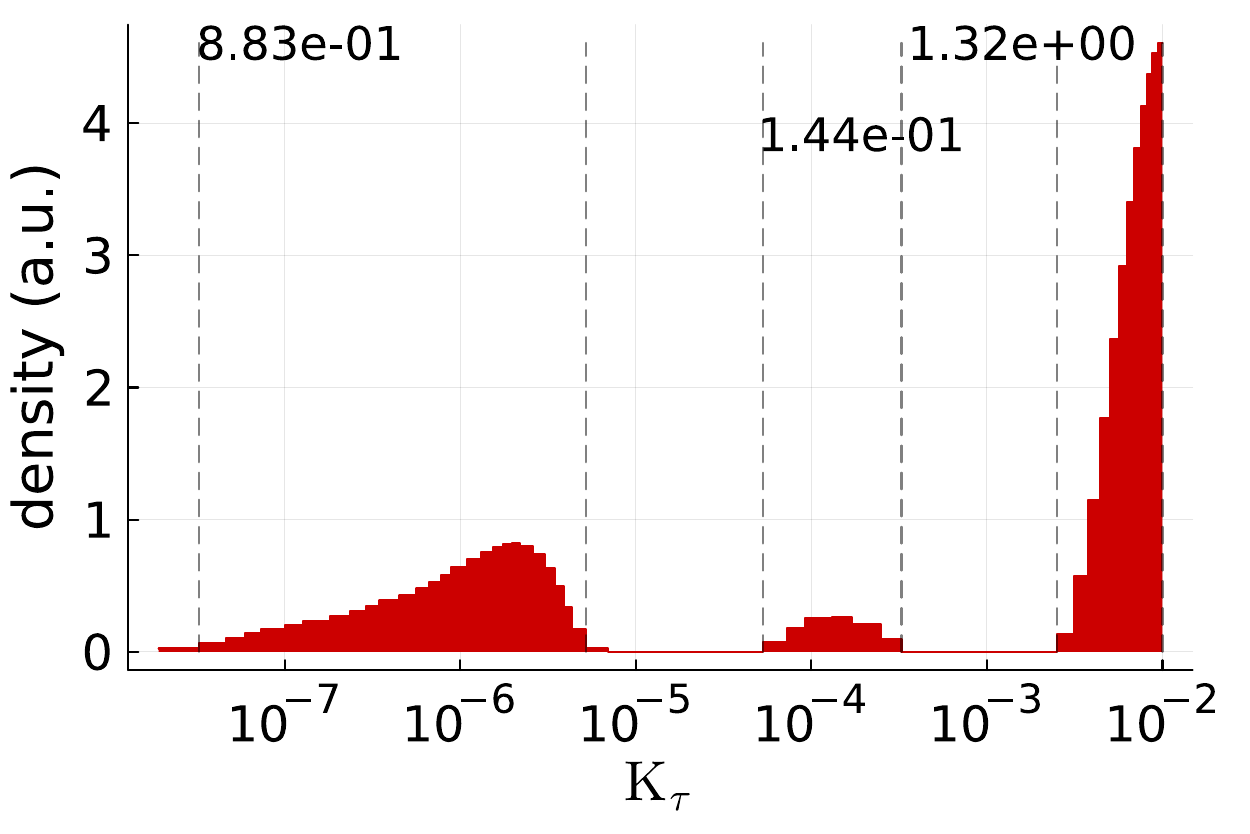}
		\includegraphics[width = 0.32\textwidth]{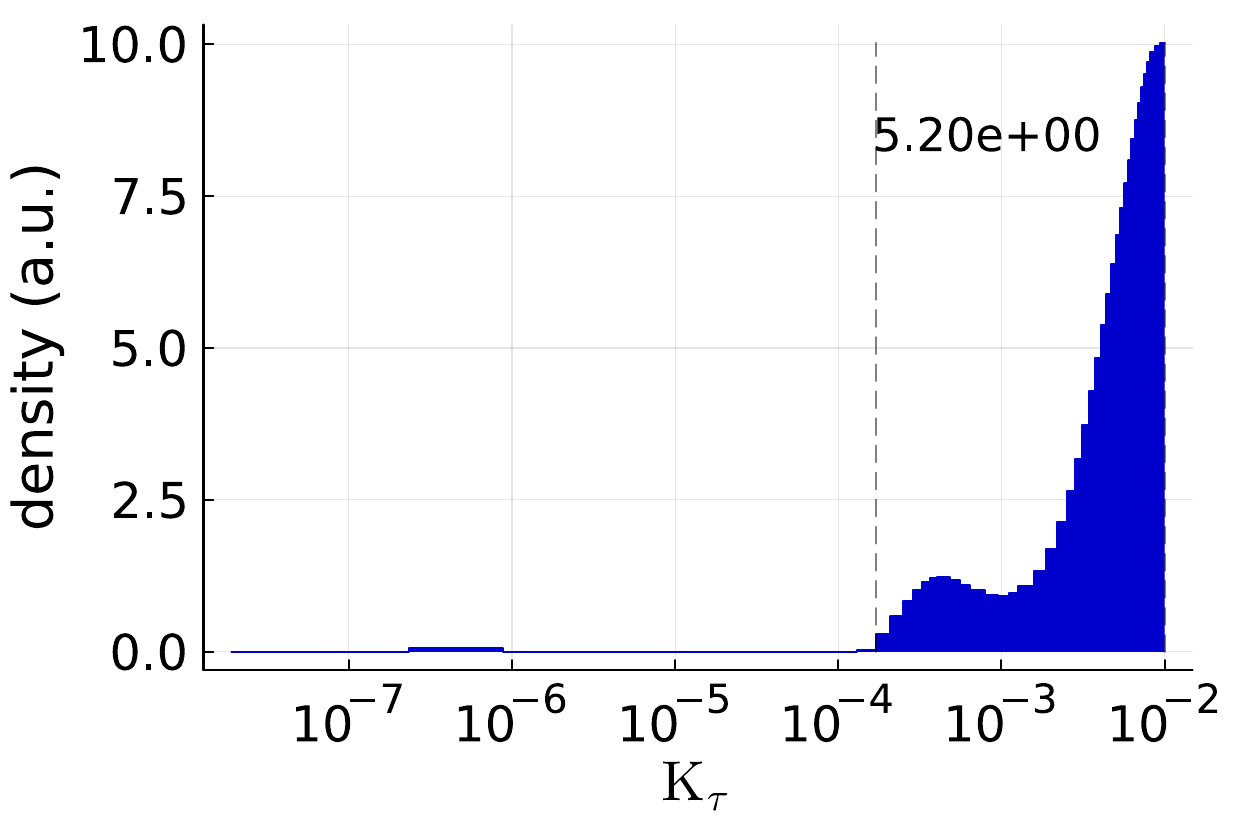}
		\includegraphics[width = 0.32\textwidth]{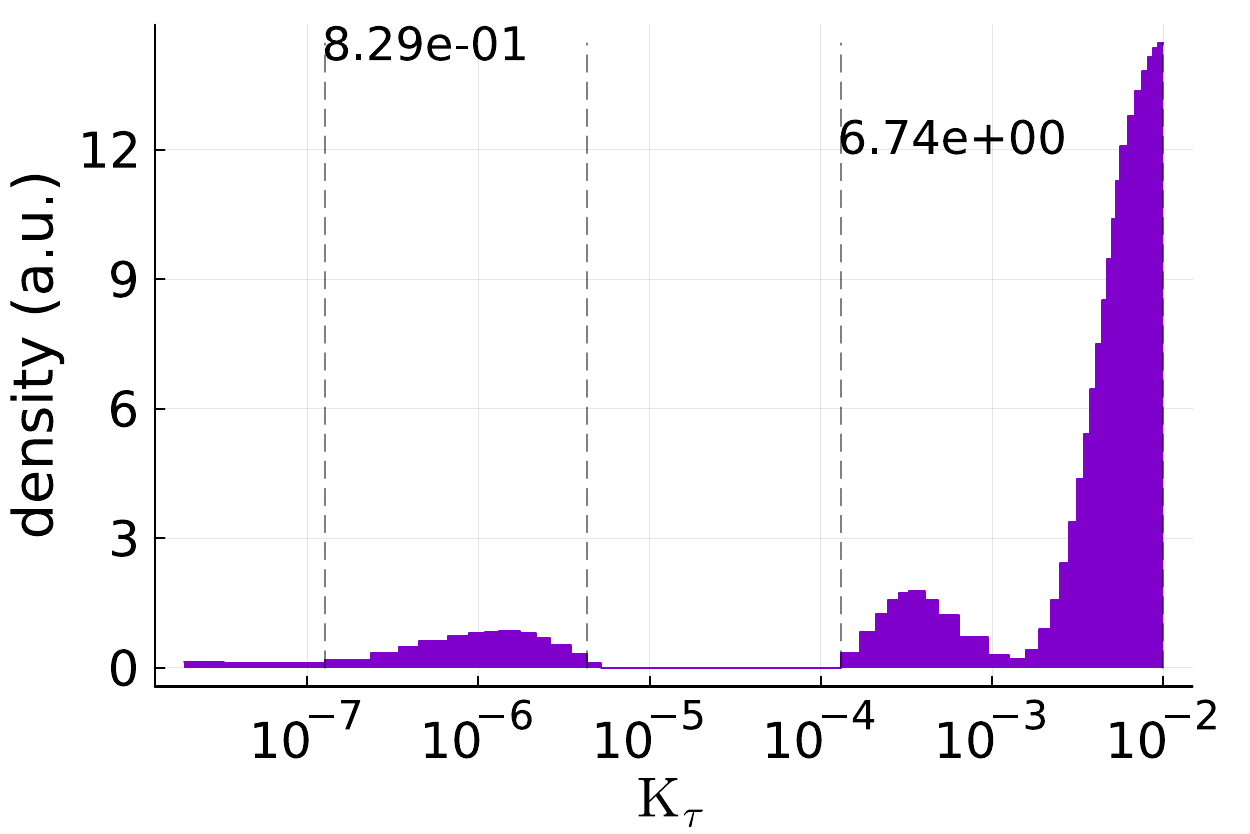}

	\captionof{figure}{Dose-response curves and corresponding accessibility histograms of replicate 1 (for the antibody mix condition).}
	\label{sup-fig: rep 1}
\end{minipage}

\vspace{0.5cm}
\noindent
\begin{minipage}{\textwidth}
	\centering
	{\large \bfseries Replicate 2}\vspace{0.5em}

		\includegraphics[width = 0.32\textwidth]{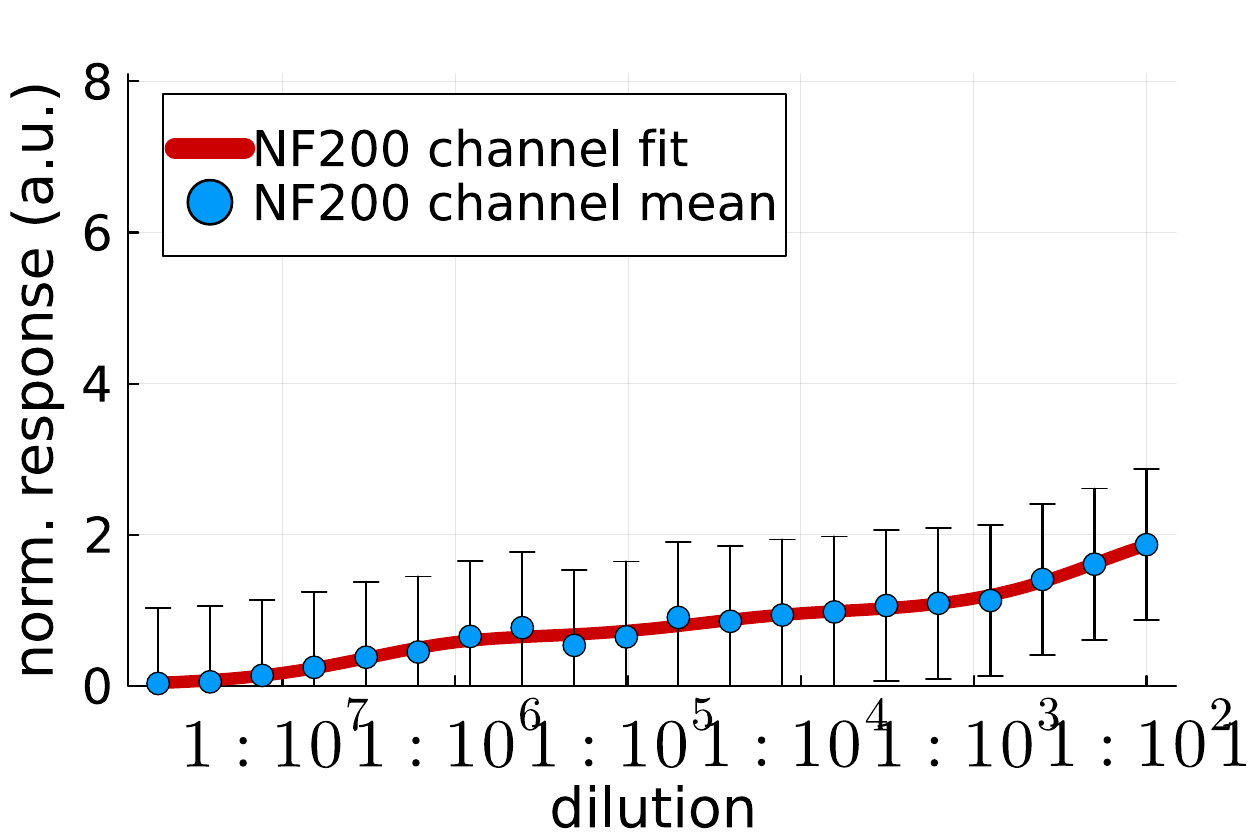}
		\includegraphics[width = 0.32\textwidth]{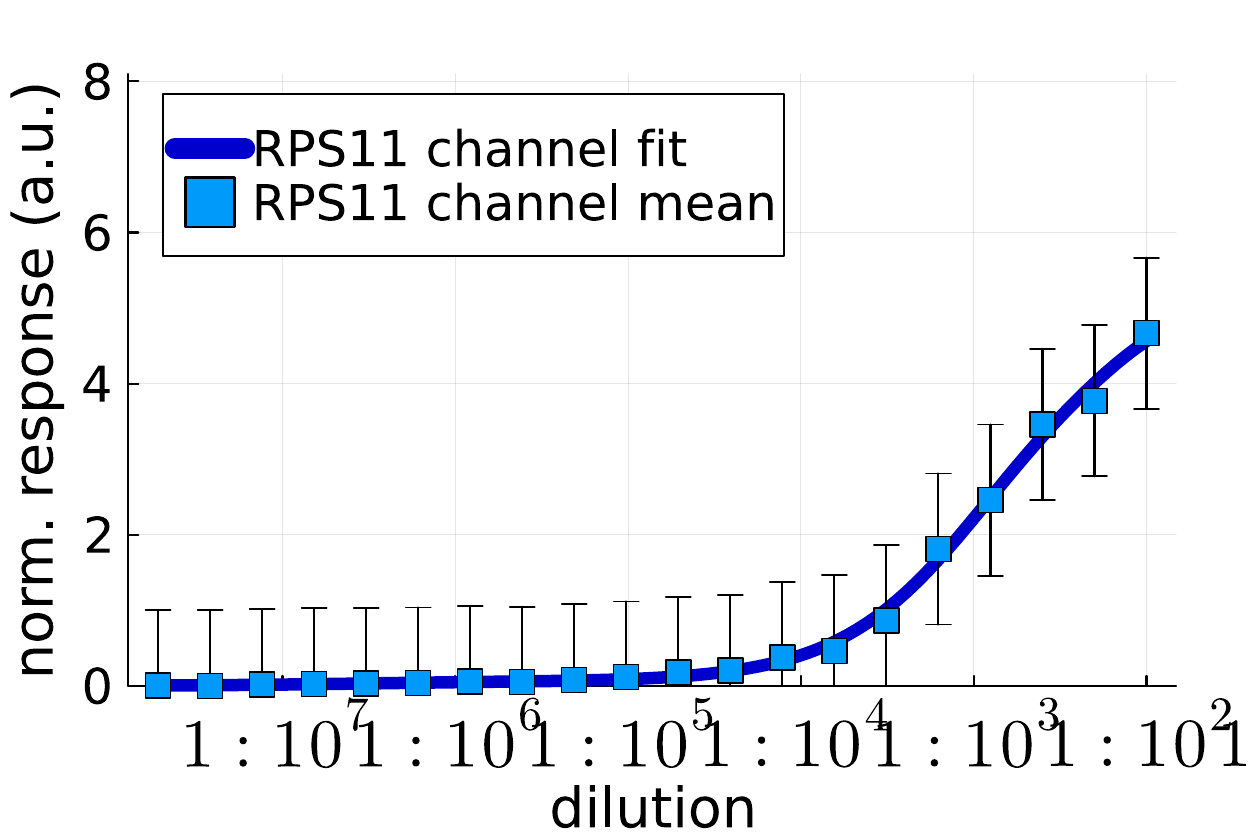}
		\includegraphics[width = 0.32\textwidth]{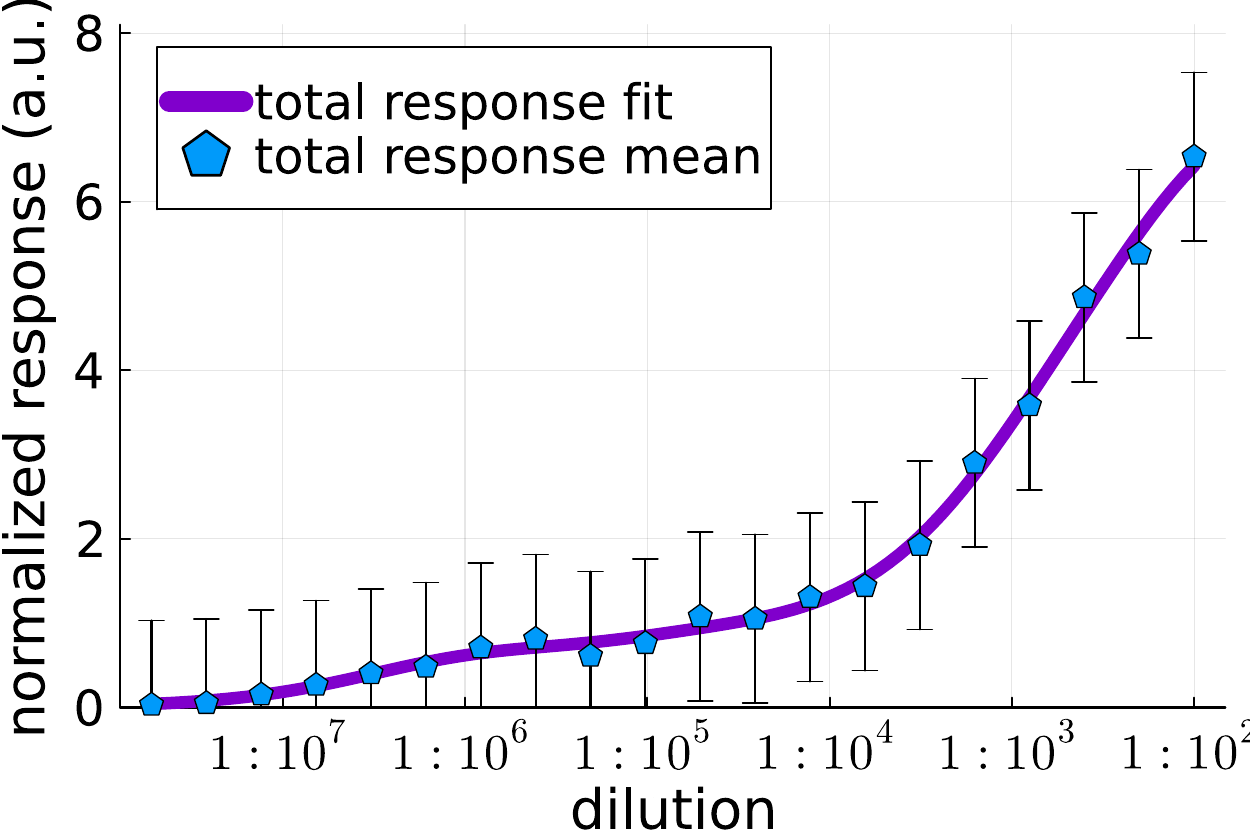}

		\includegraphics[width = 0.32\textwidth]{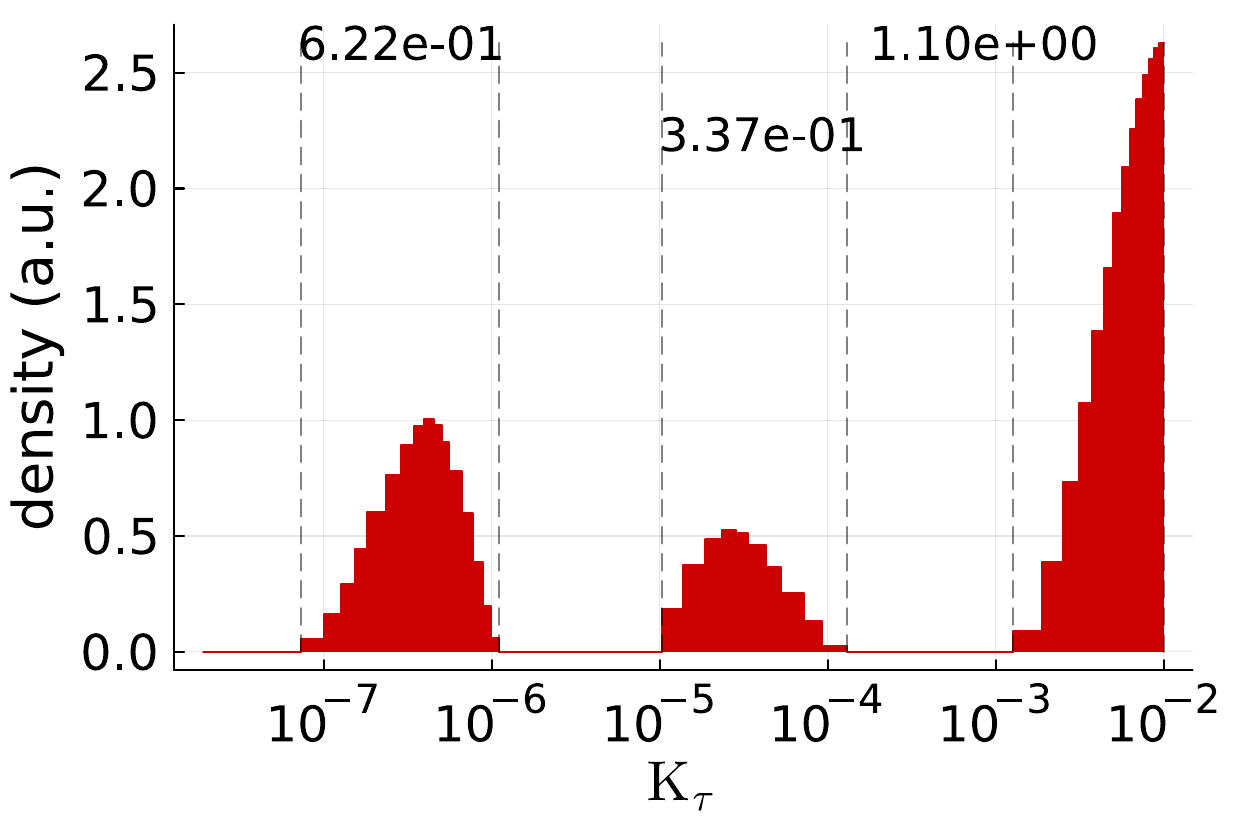}
		\includegraphics[width = 0.32\textwidth]{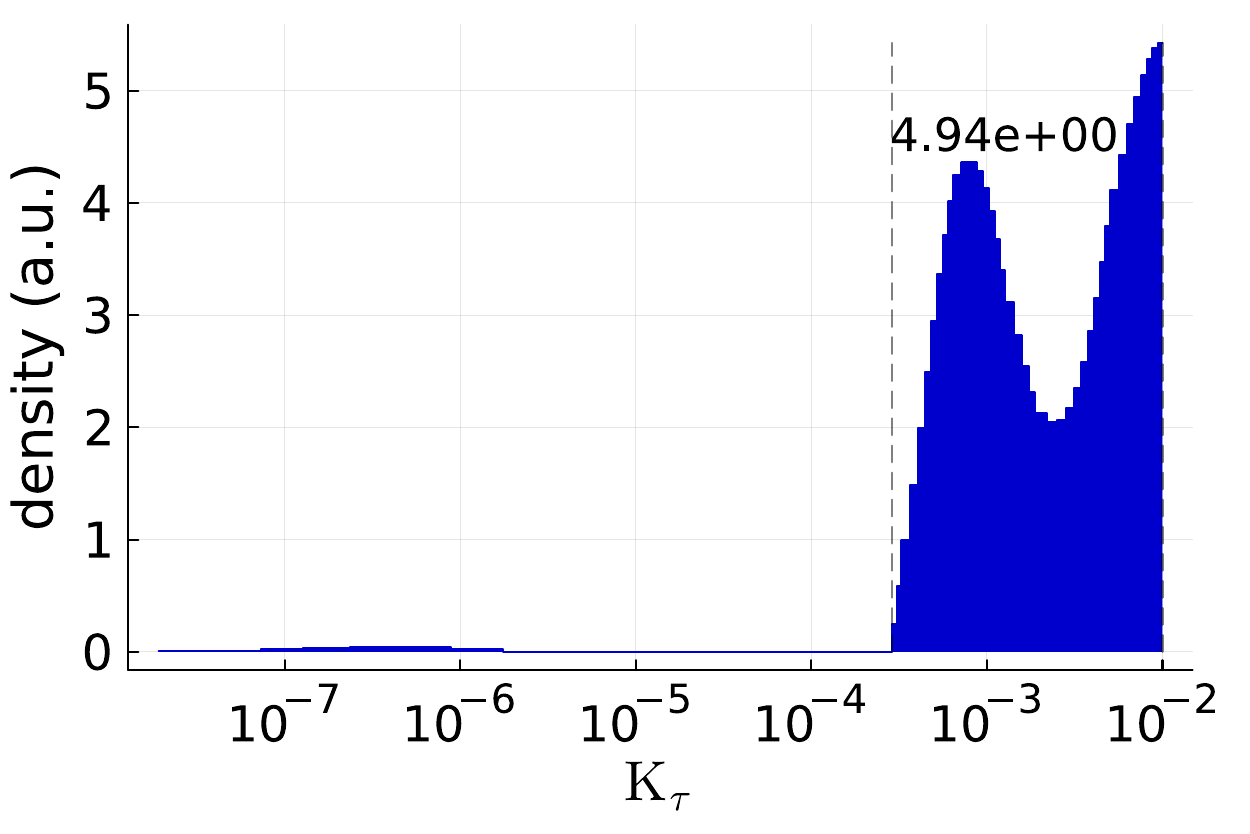}
		\includegraphics[width = 0.32\textwidth]{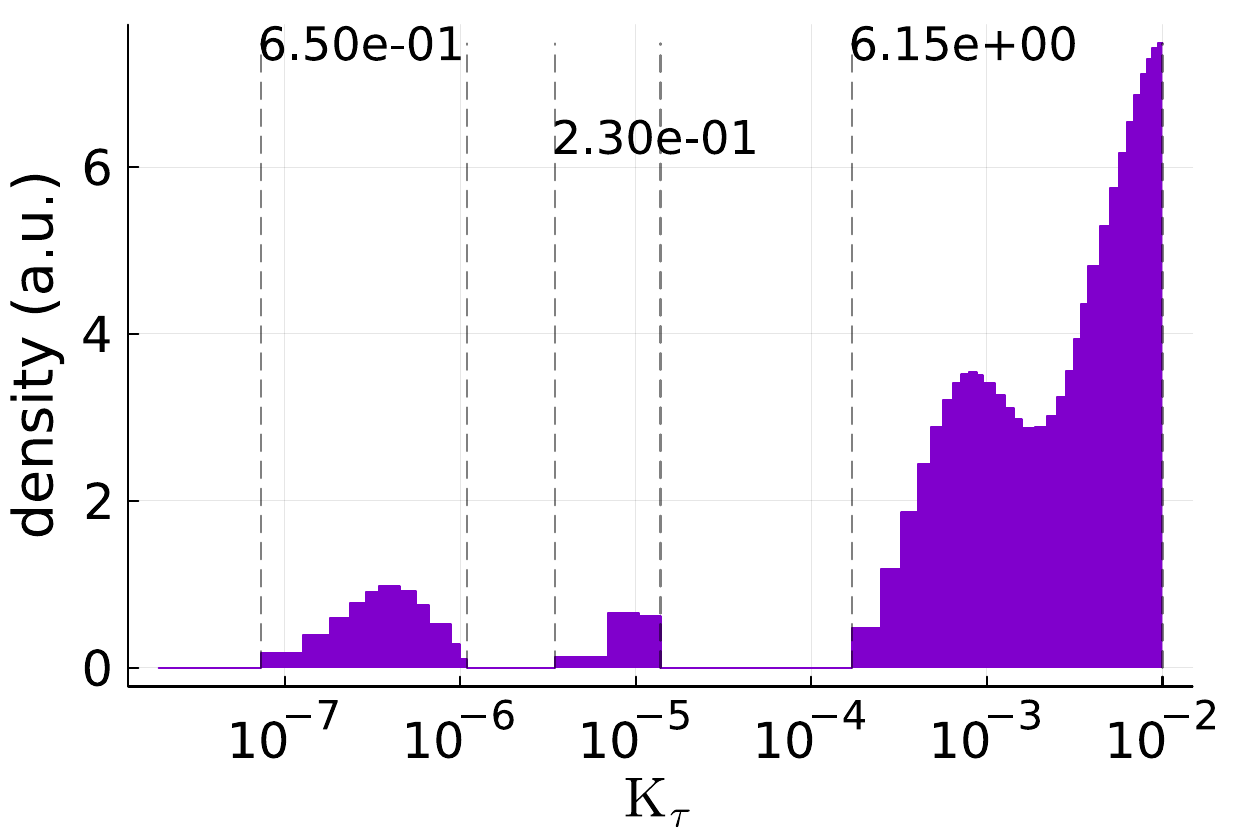}

	\captionof{figure}{Dose-response curves and corresponding accessibility histograms of replicate 2 (for the antibody mix condition).}
	\label{sup-fig: rep 2}
\end{minipage}

\vspace{0.5cm}
\noindent
\begin{minipage}{\textwidth}
	\centering
	{\large \bfseries Replicate 3}\vspace{0.5em}

		\includegraphics[width = 0.32\textwidth]{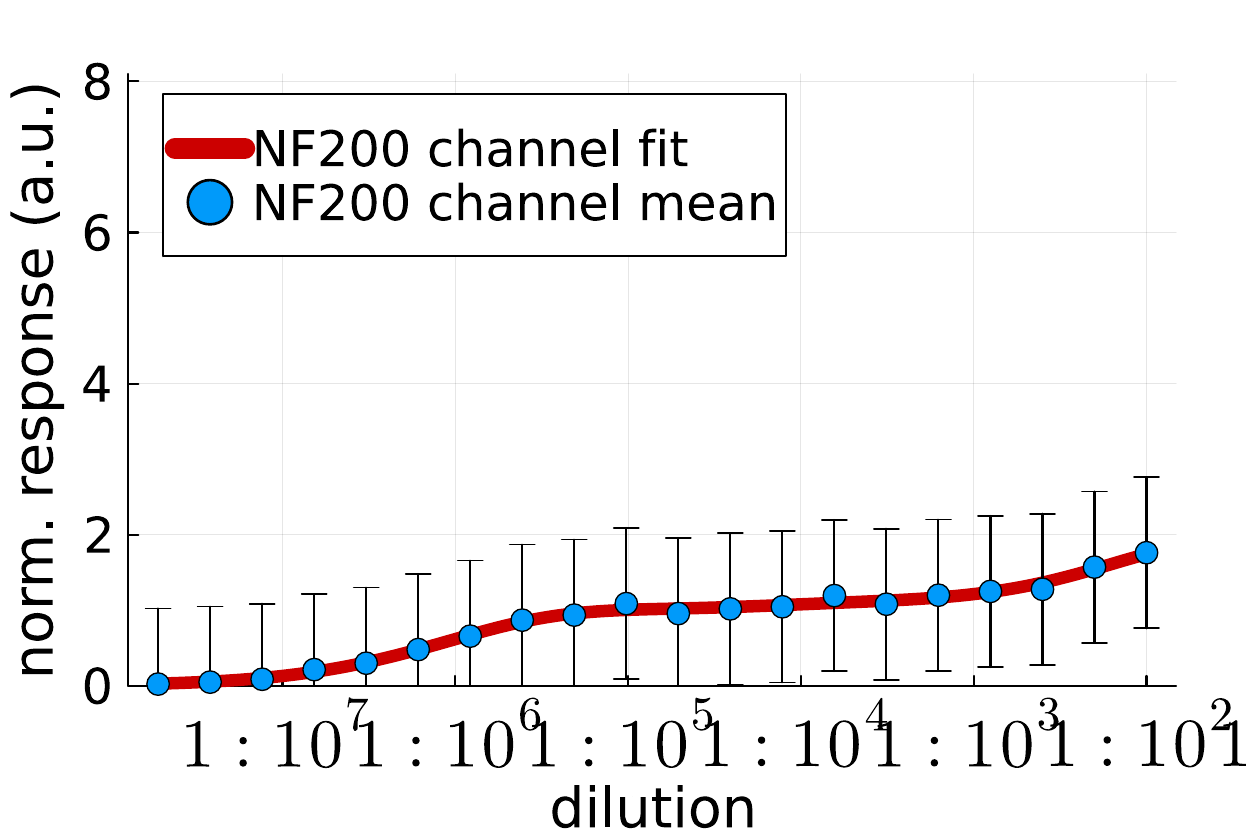}
		\includegraphics[width = 0.32\textwidth]{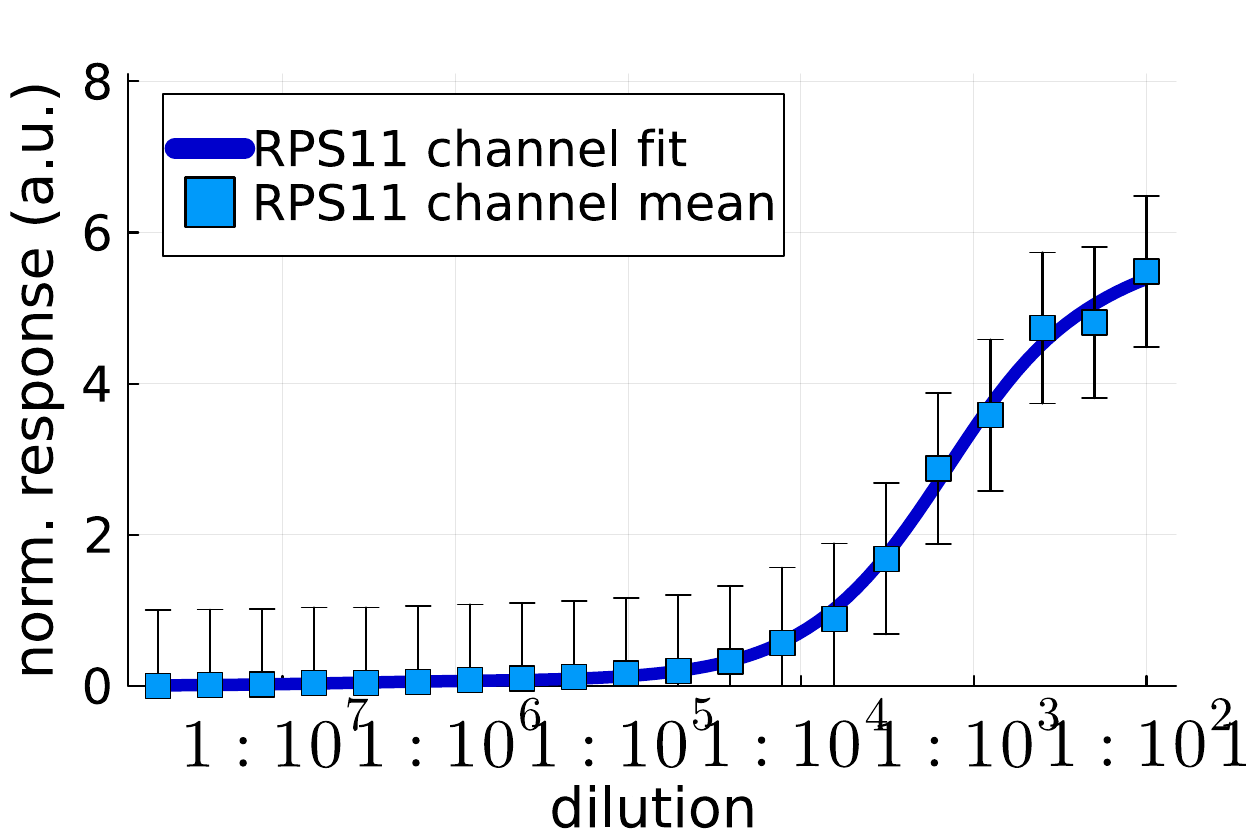}
		\includegraphics[width = 0.32\textwidth]{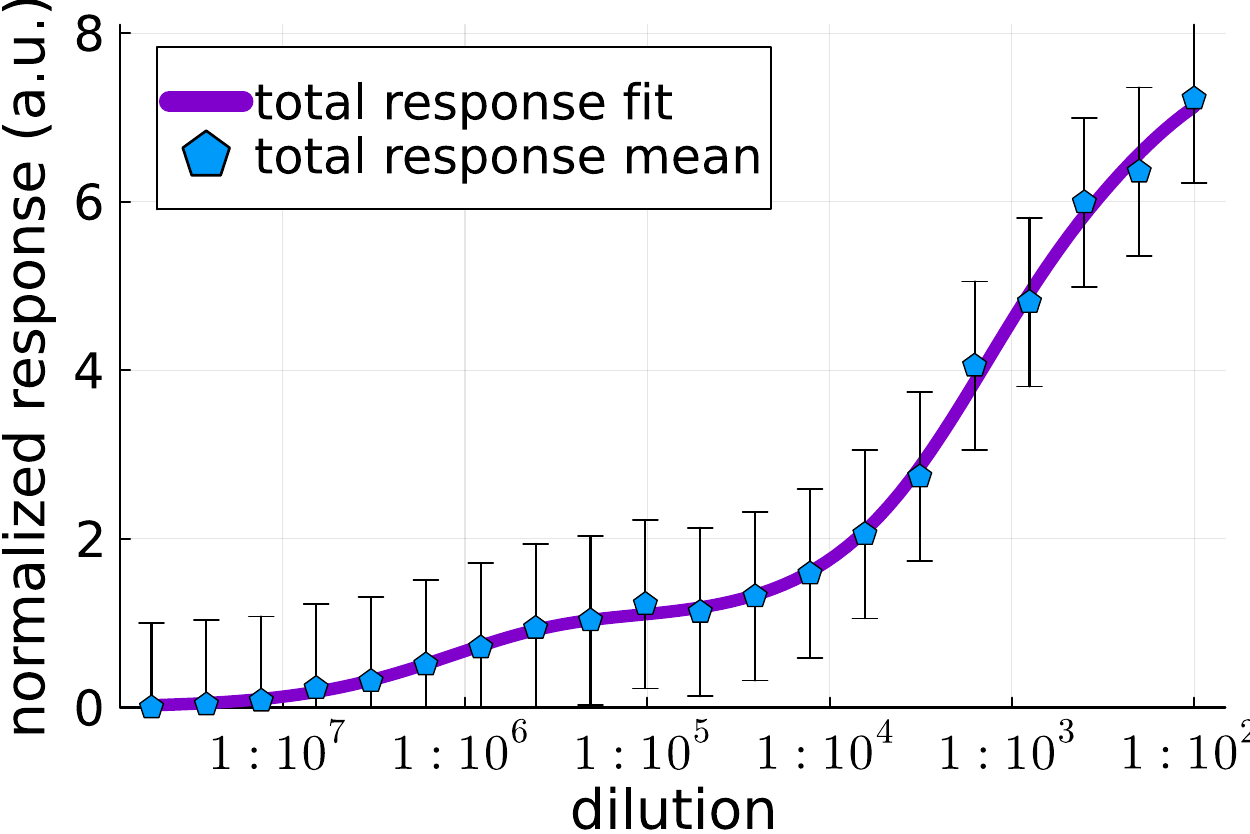}

		\includegraphics[width = 0.32\textwidth]{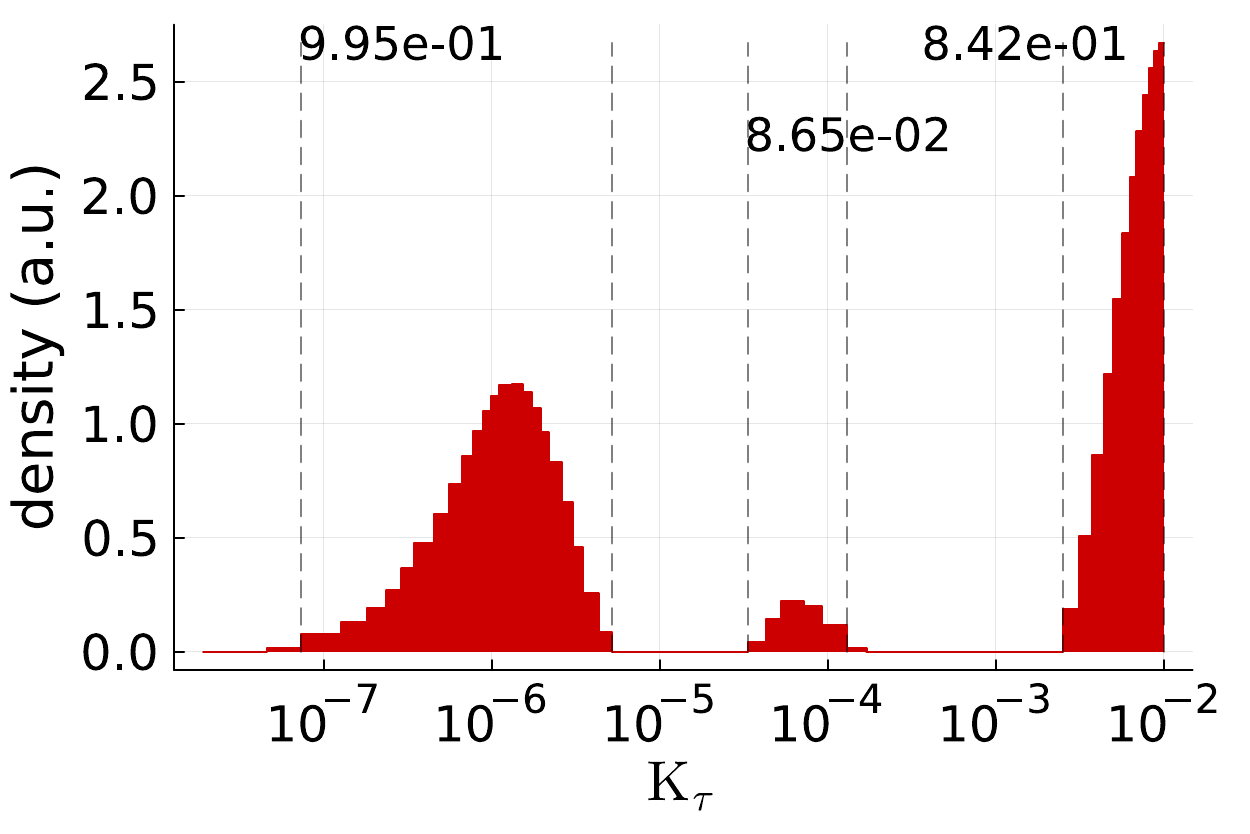}
		\includegraphics[width = 0.32\textwidth]{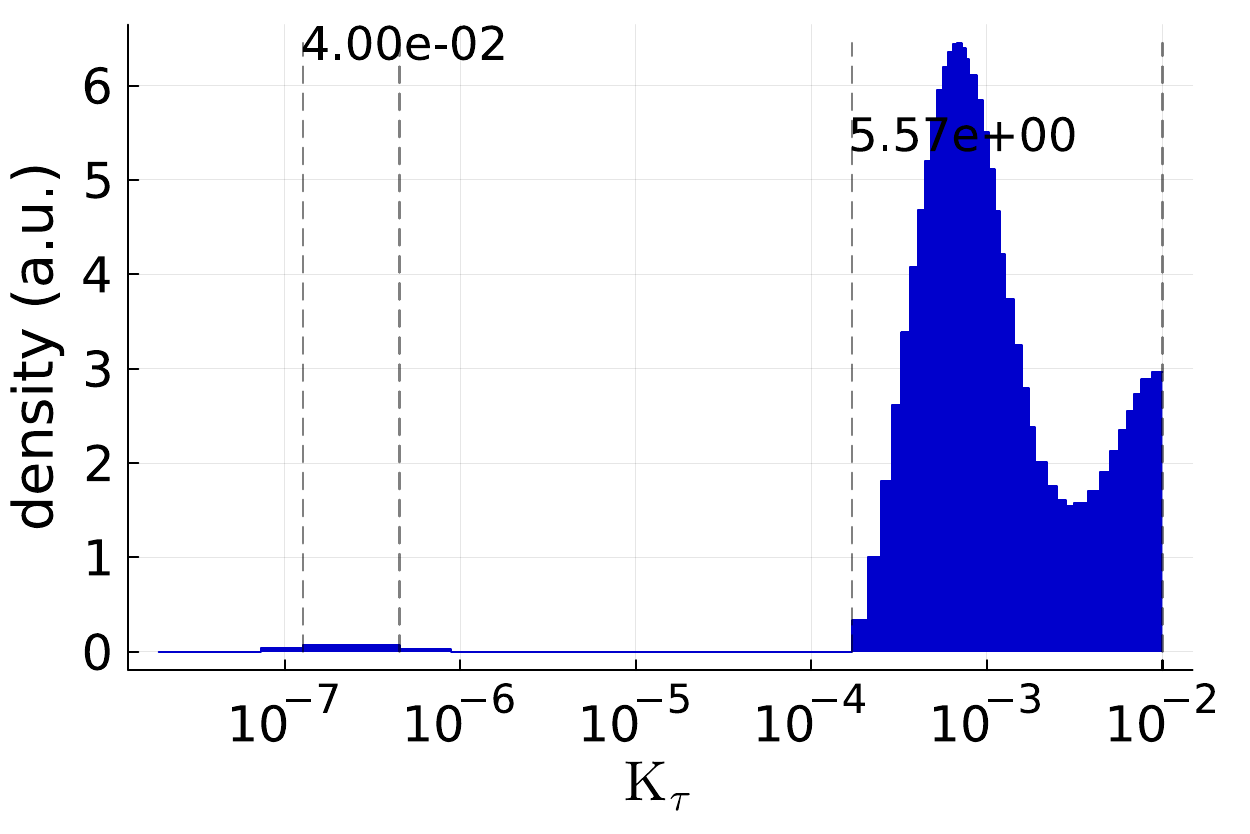}
		\includegraphics[width = 0.32\textwidth]{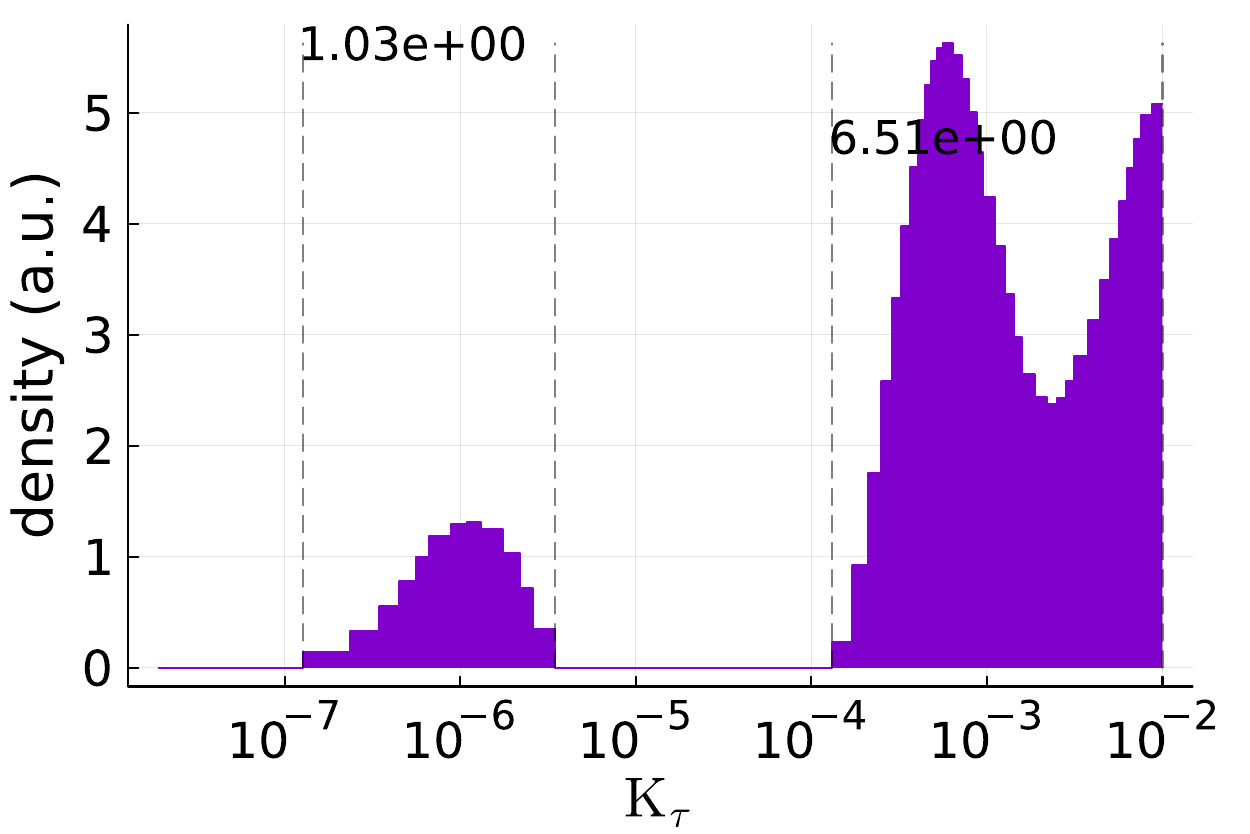}

	\captionof{figure}{Dose-response curves and corresponding accessibility histograms of replicate 3 (for the antibody mix condition).}

	\label{sup-fig: rep 3}
\end{minipage}

\vspace{0.5cm}
\noindent
\begin{minipage}{\textwidth}
	\centering
	{\large \bfseries Replicate 4}\vspace{0.5em}

		\includegraphics[width = 0.32\textwidth]{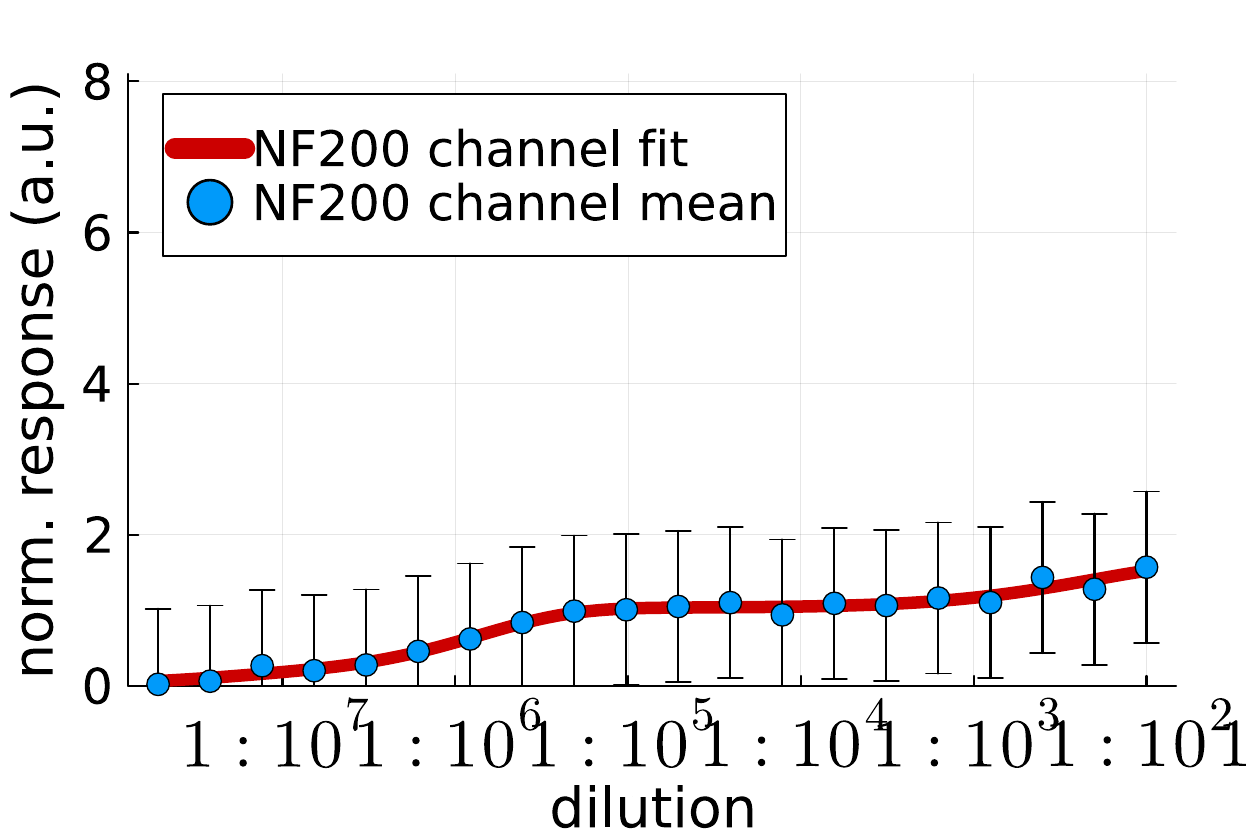}
		\includegraphics[width = 0.32\textwidth]{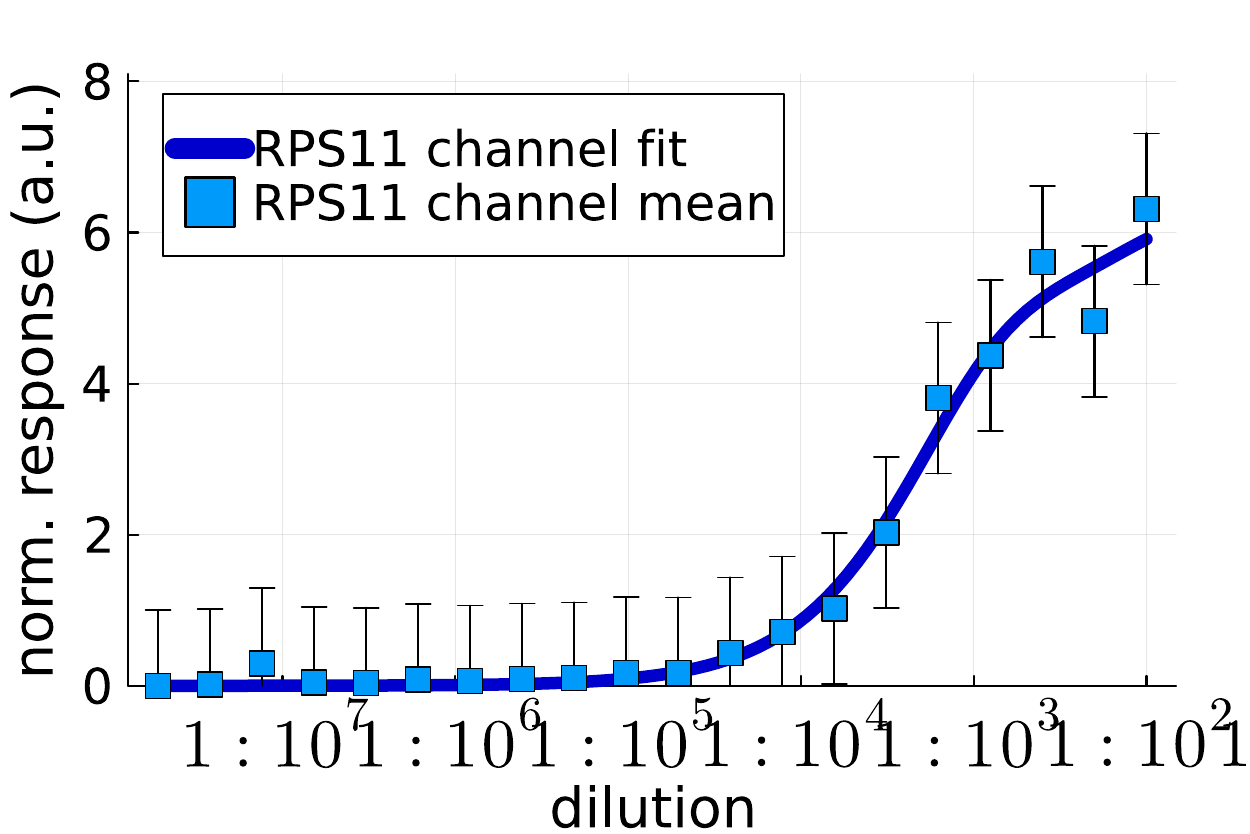}
		\includegraphics[width = 0.32\textwidth]{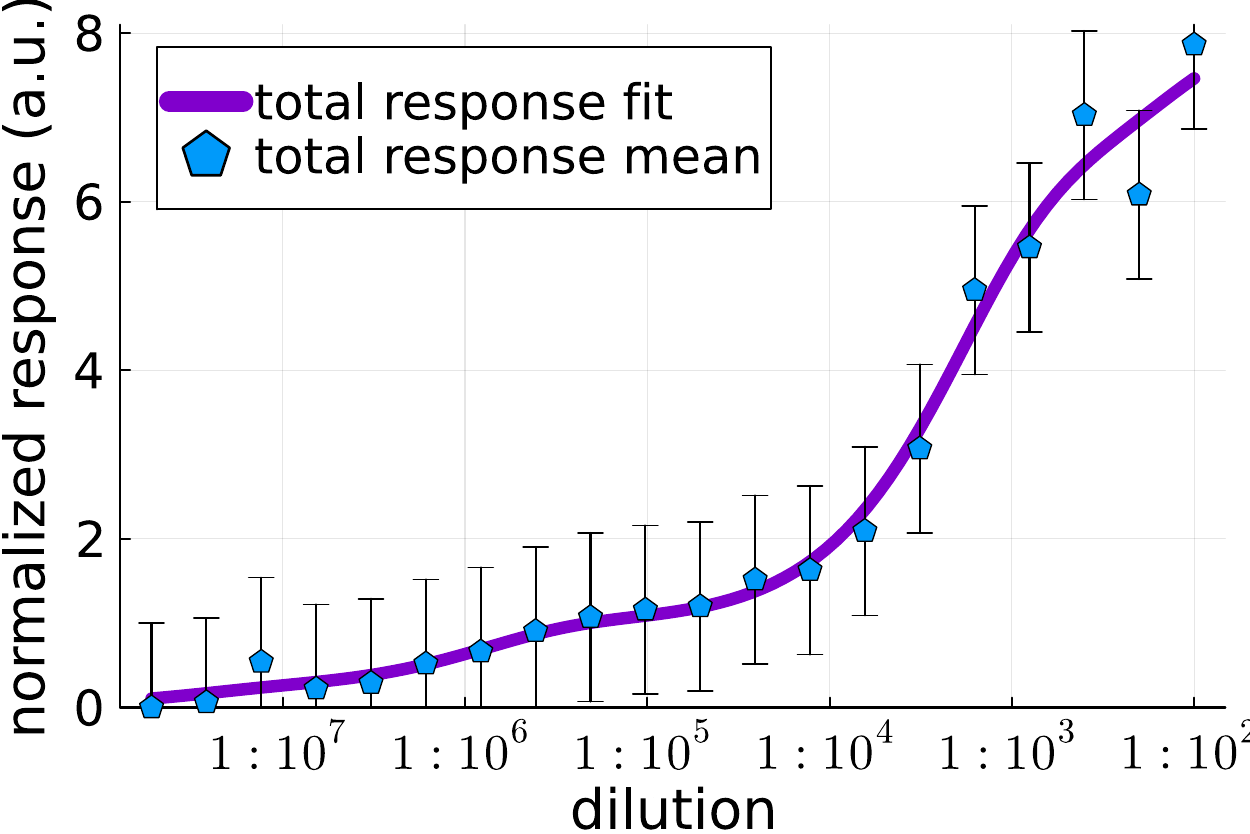}

		\includegraphics[width = 0.32\textwidth]{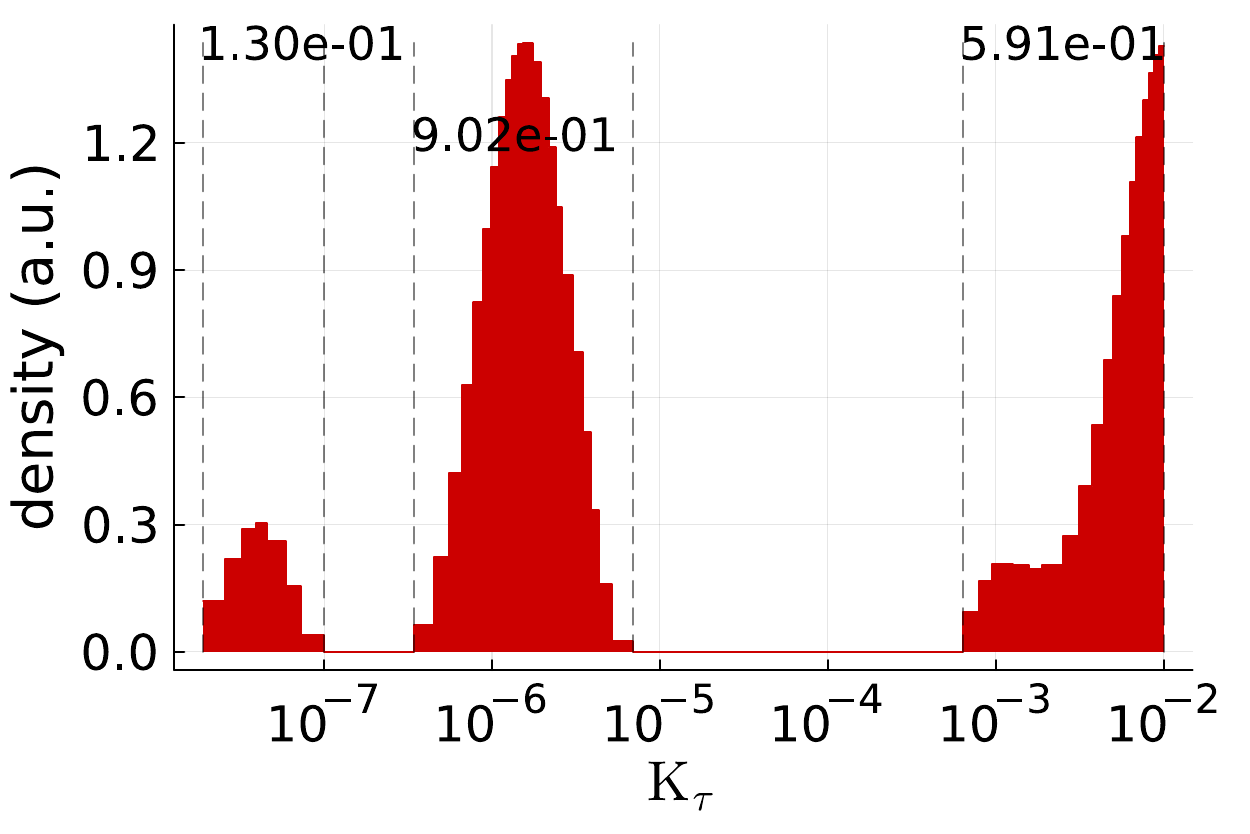}
		\includegraphics[width = 0.32\textwidth]{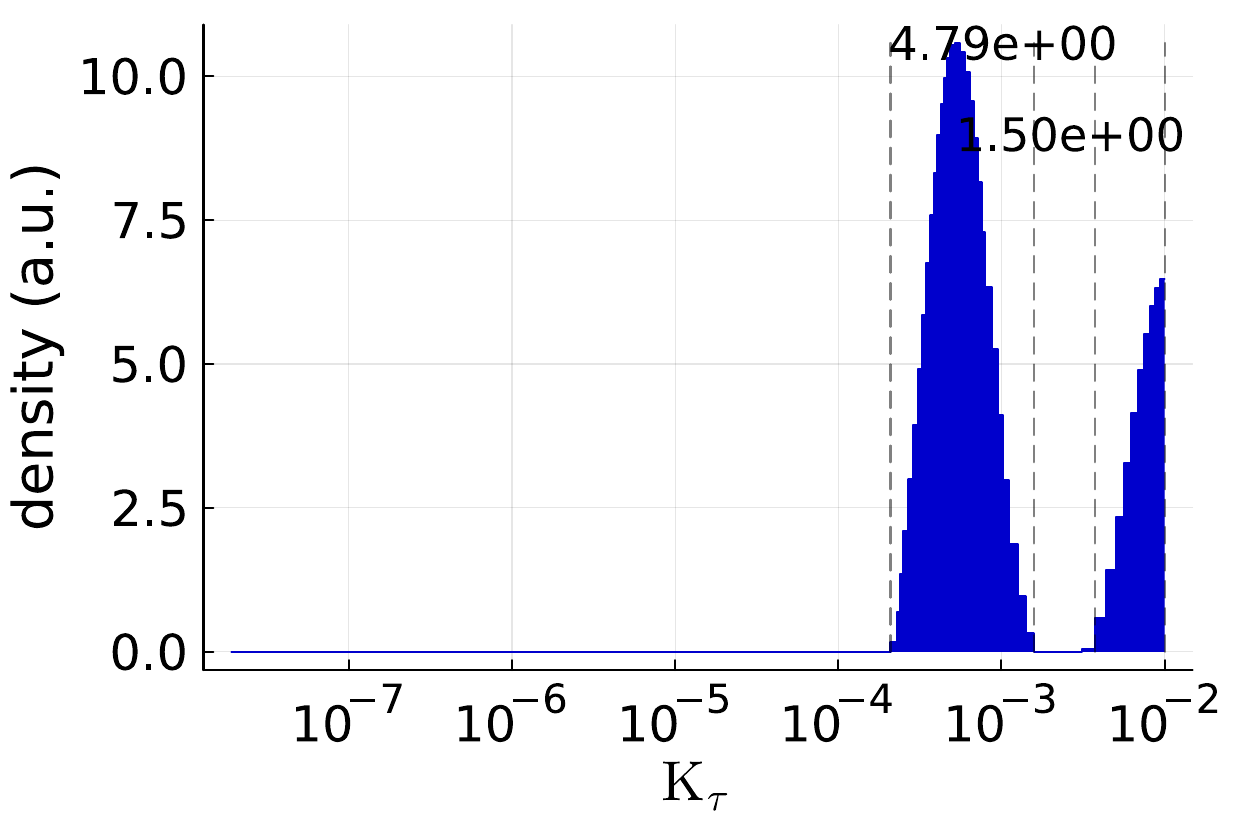}
		\includegraphics[width = 0.32\textwidth]{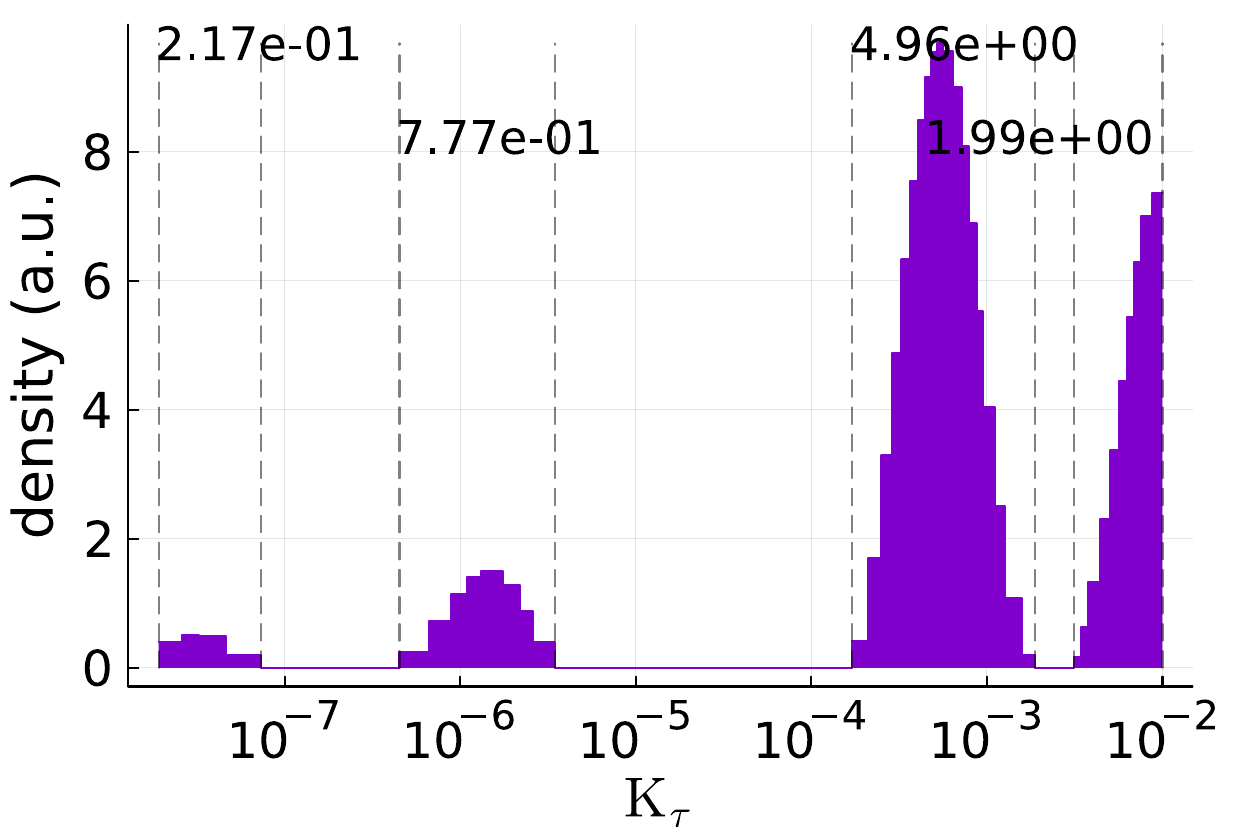}

	\captionof{figure}{Dose-response curves and corresponding accessibility histograms of replicate 4 (for the antibody mix condition).}
	\label{sup-fig: rep 4}
\end{minipage}

\vspace{0.5cm}
\noindent
\begin{minipage}{\textwidth}
	\centering
	{\large \bfseries Replicate 5}\vspace{0.5em}

		\includegraphics[width = 0.32\textwidth]{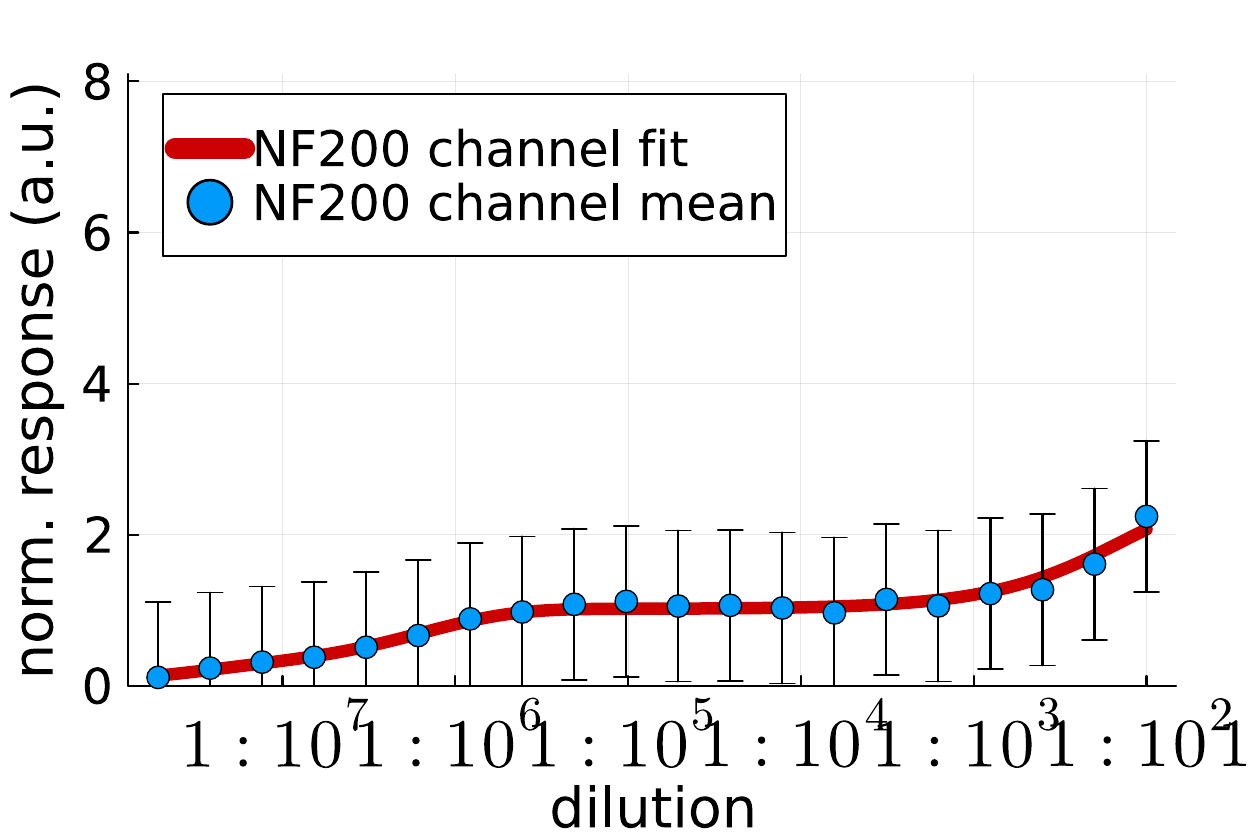}
		\includegraphics[width = 0.32\textwidth]{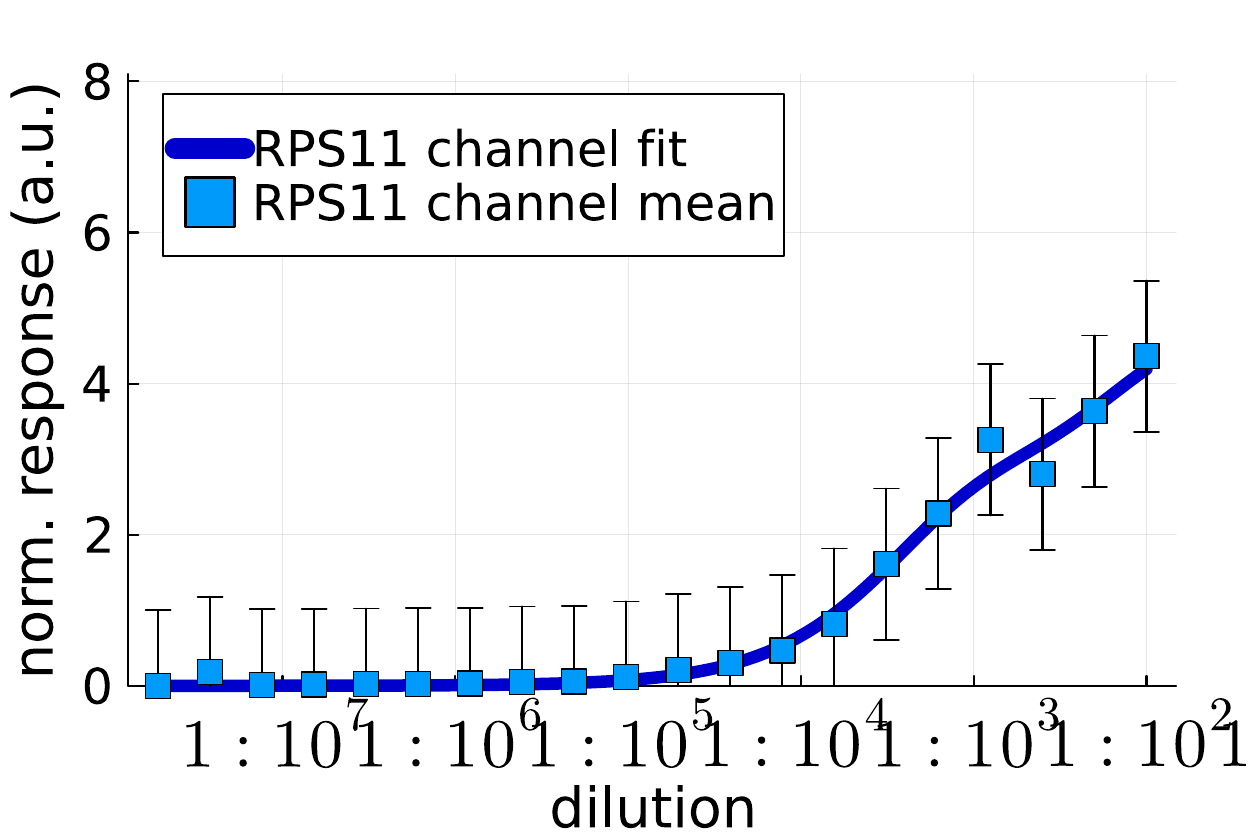}
		\includegraphics[width = 0.32\textwidth]{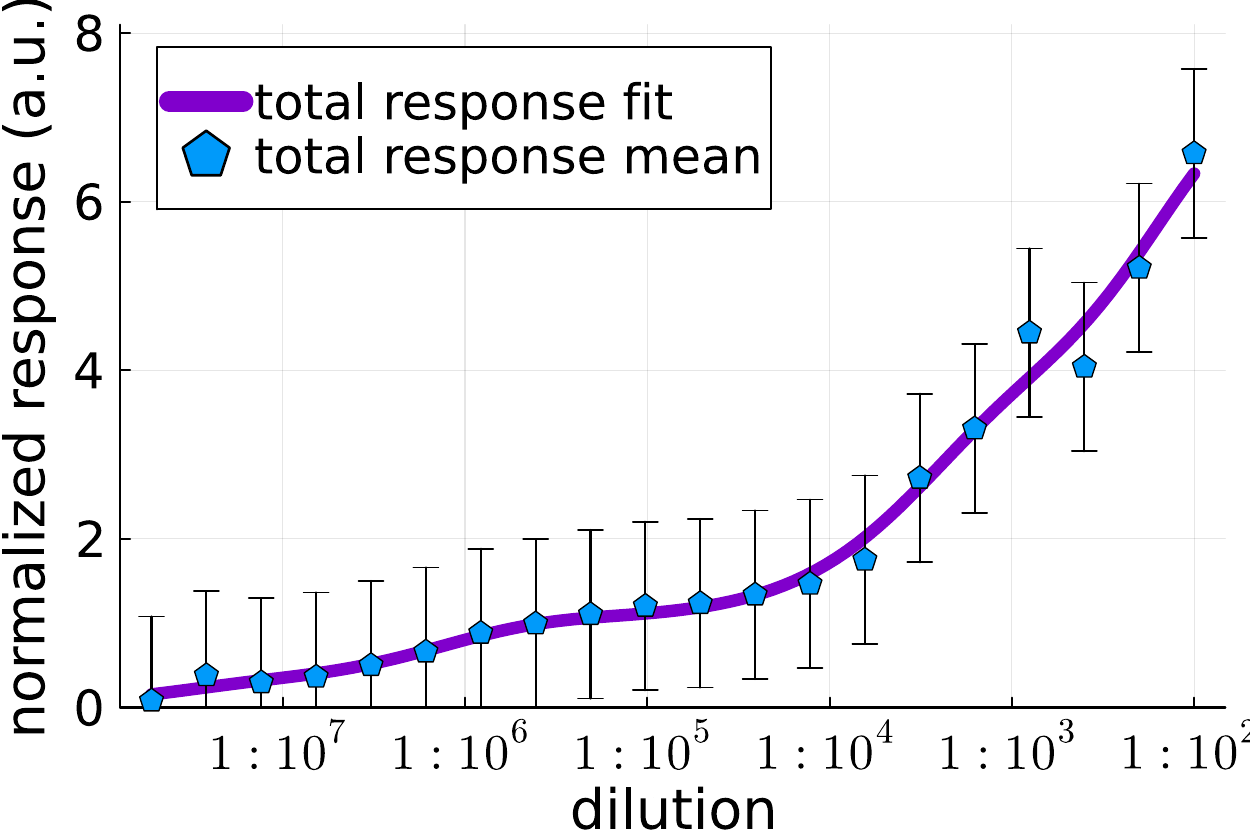}

		\includegraphics[width = 0.32\textwidth]{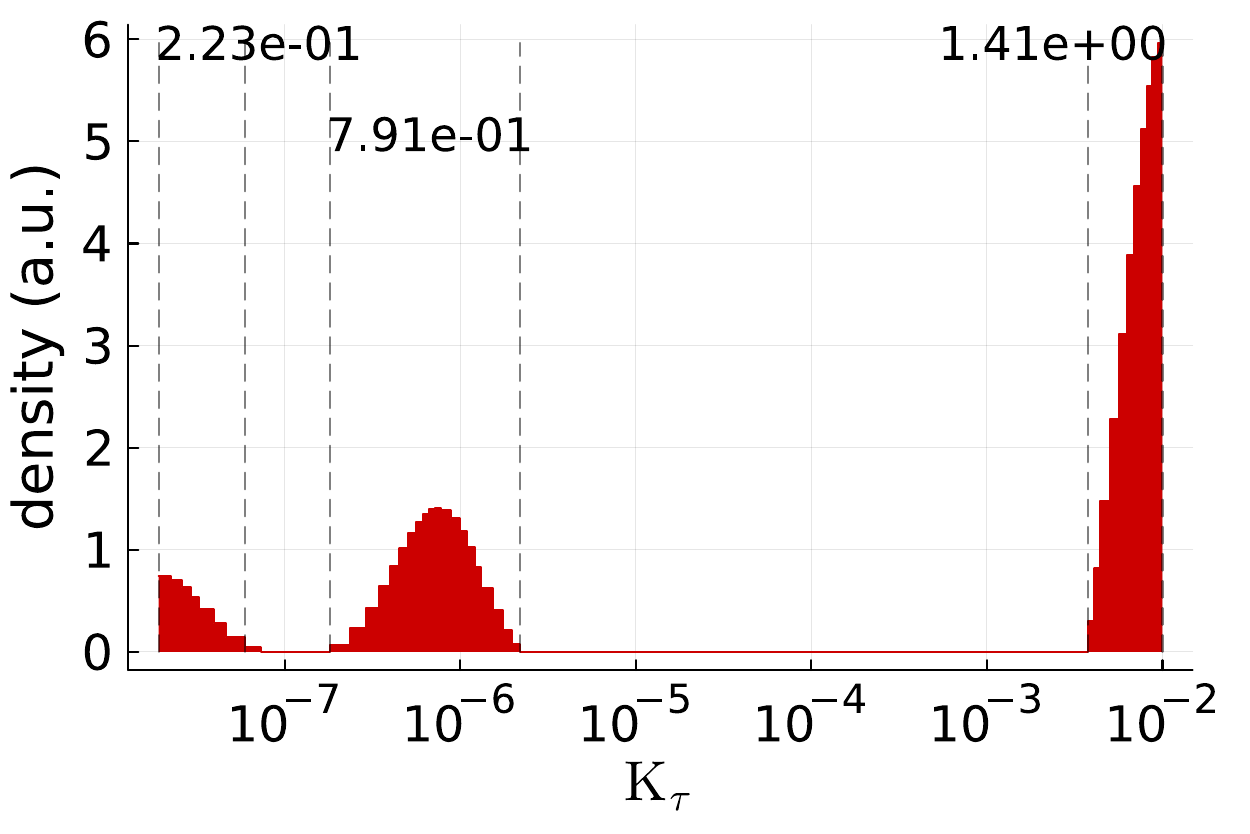}
		\includegraphics[width = 0.32\textwidth]{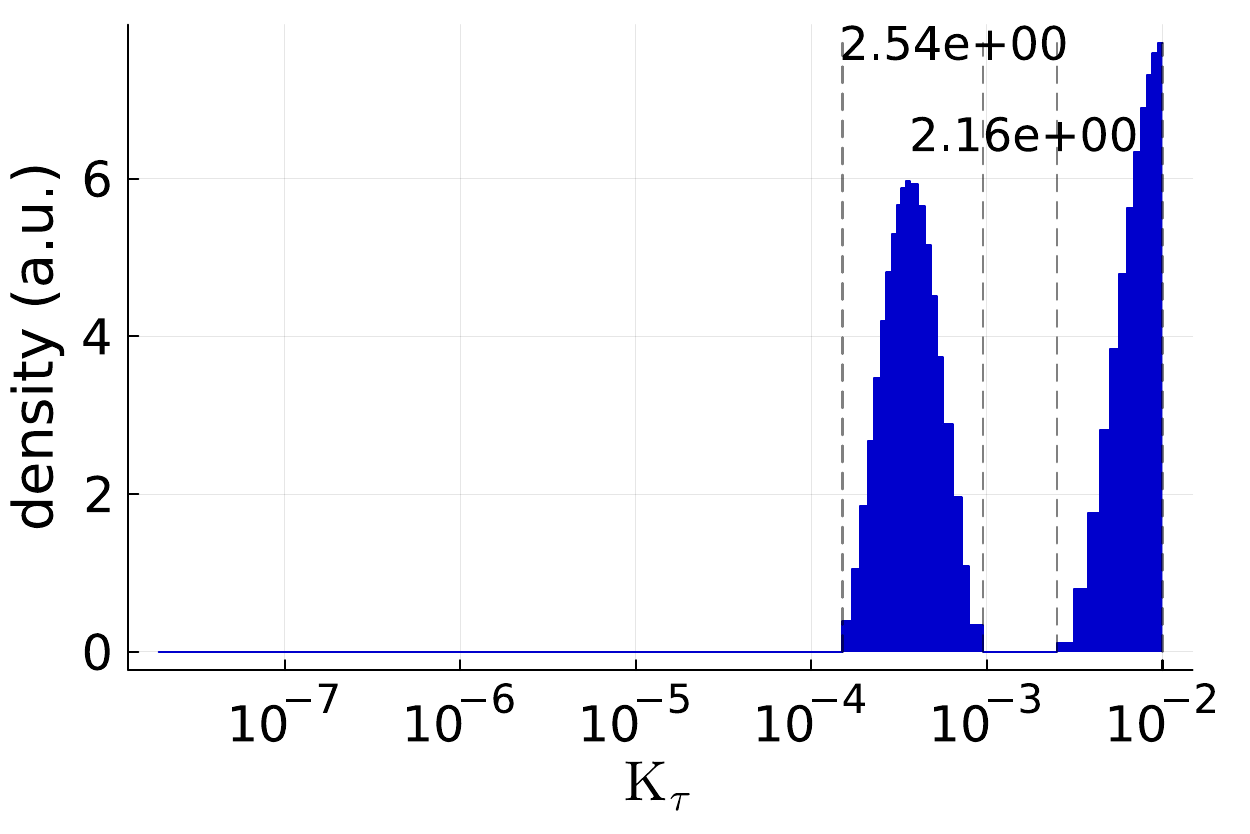}
		\includegraphics[width = 0.32\textwidth]{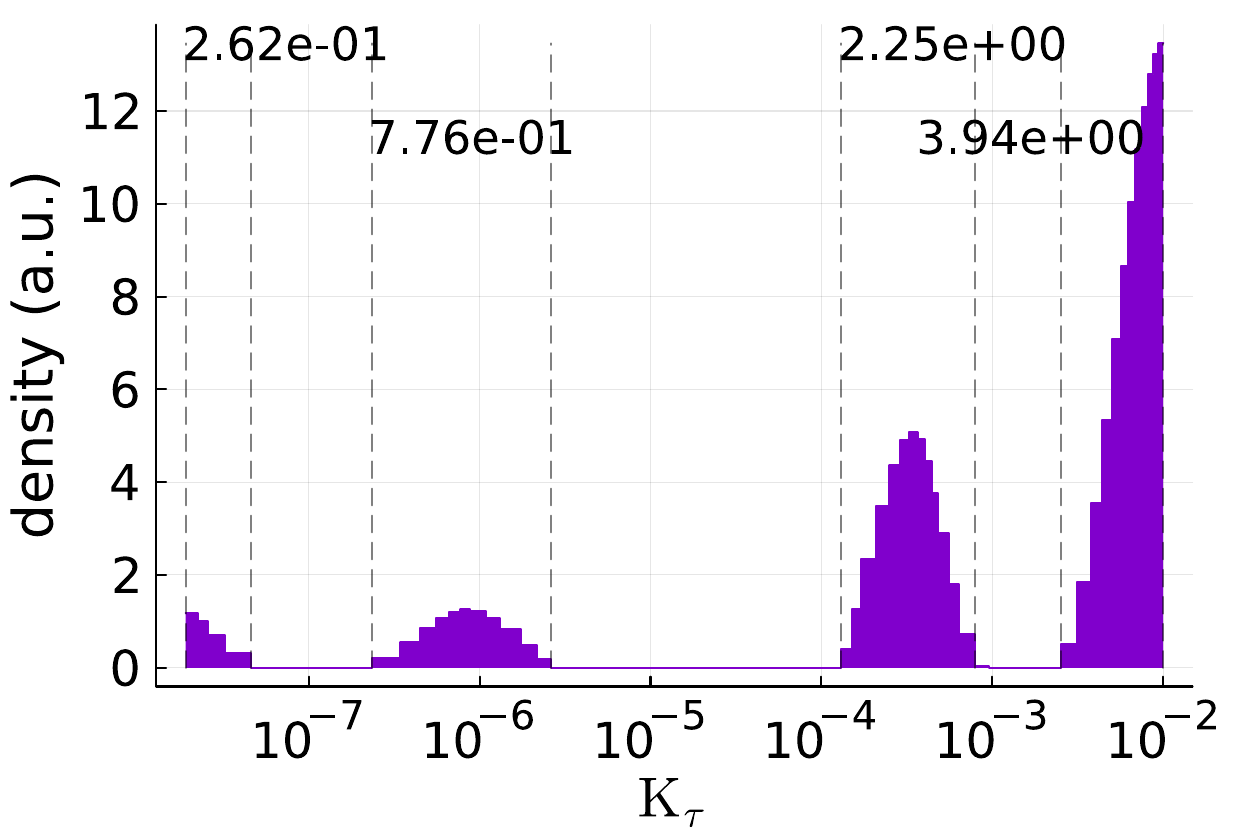}

	\captionof{figure}{Dose-response curves and corresponding accessibility histograms of replicate 5 (for the antibody mix condition).}
	\label{sup-fig: rep 5}
\end{minipage}

\vspace{0.5cm}
\noindent
\begin{minipage}{\textwidth}
	\centering
	{\large \bfseries Replicate 6}\vspace{0.5em}

		\includegraphics[width = 0.32\textwidth]{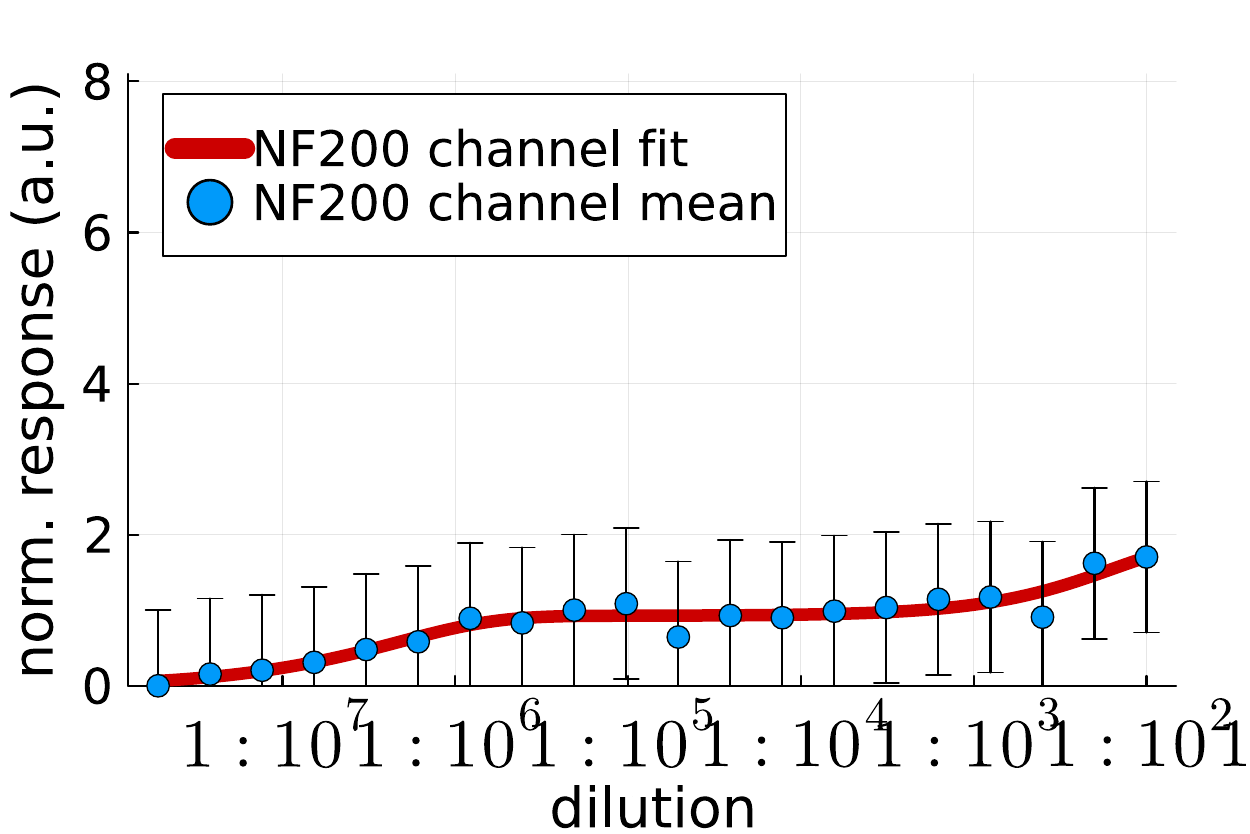}
		\includegraphics[width = 0.32\textwidth]{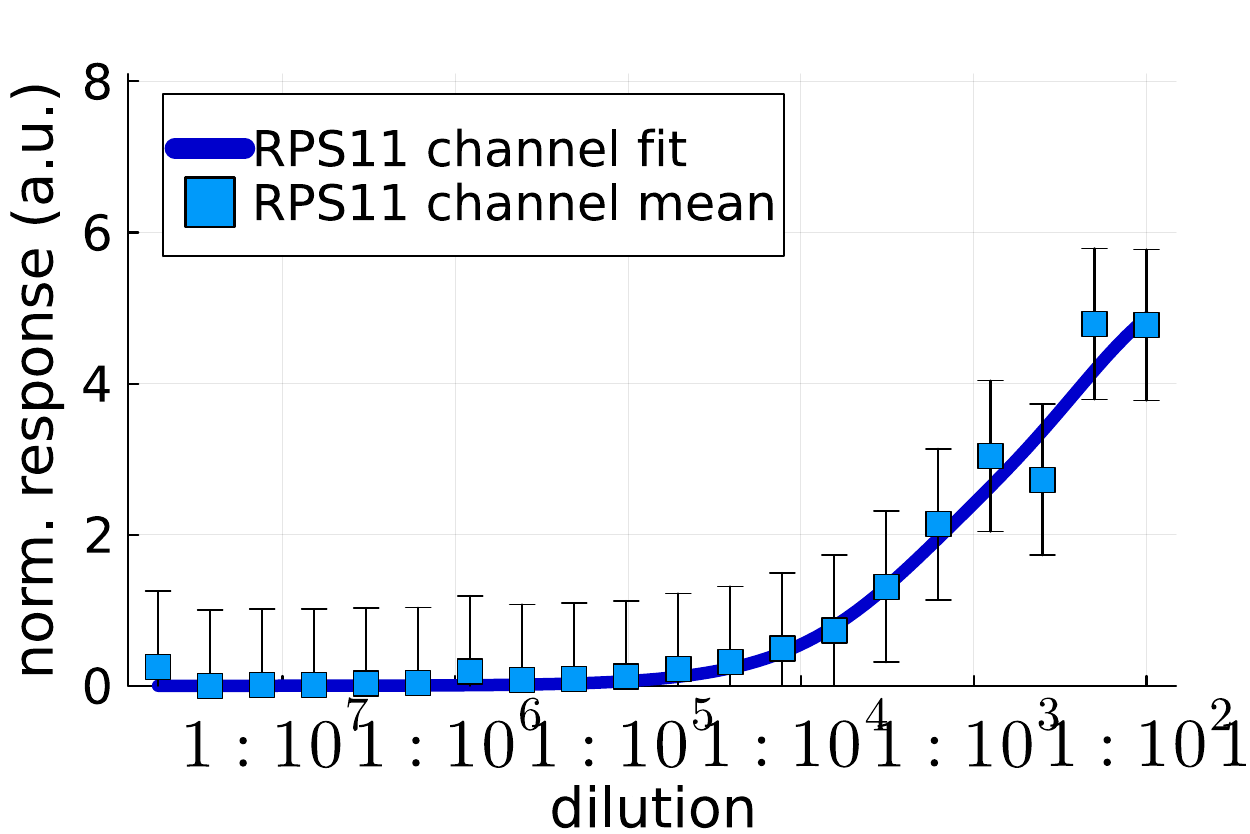}
		\includegraphics[width = 0.32\textwidth]{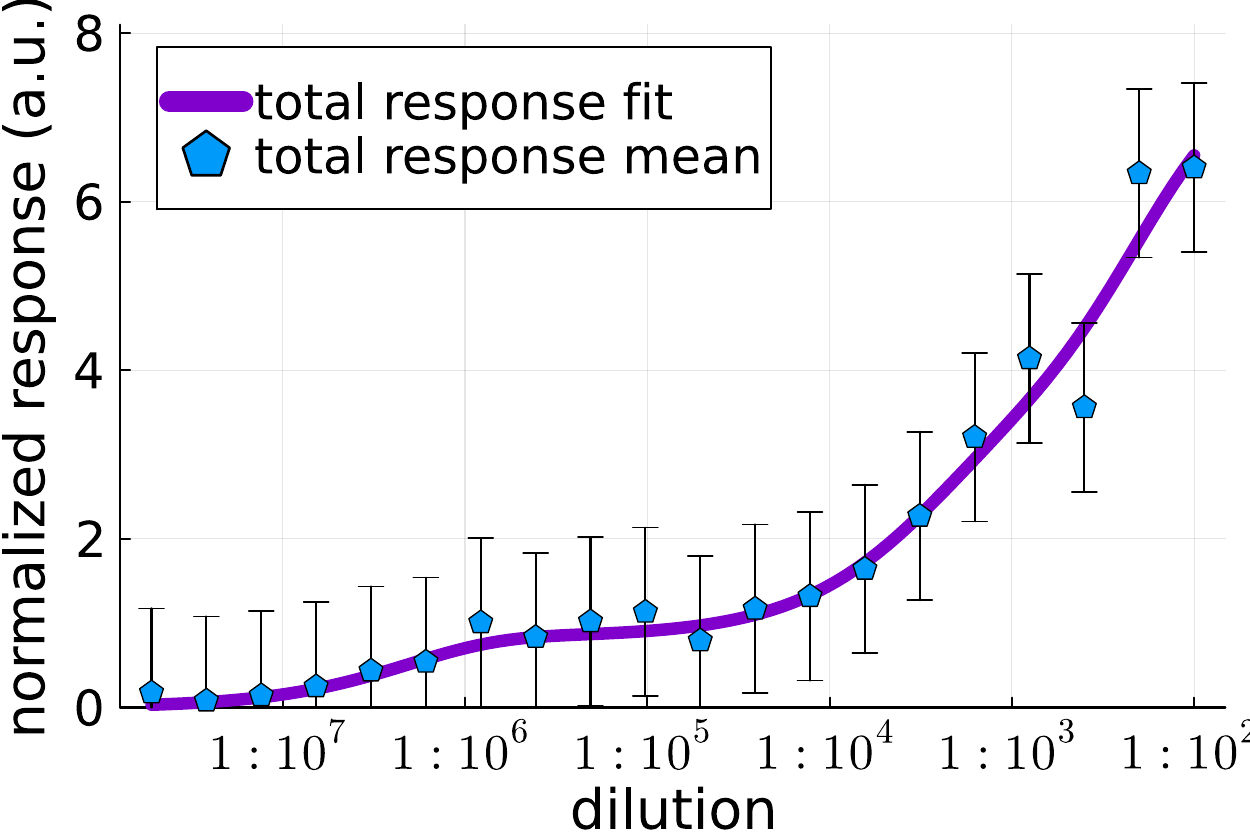}

		\includegraphics[width = 0.32\textwidth]{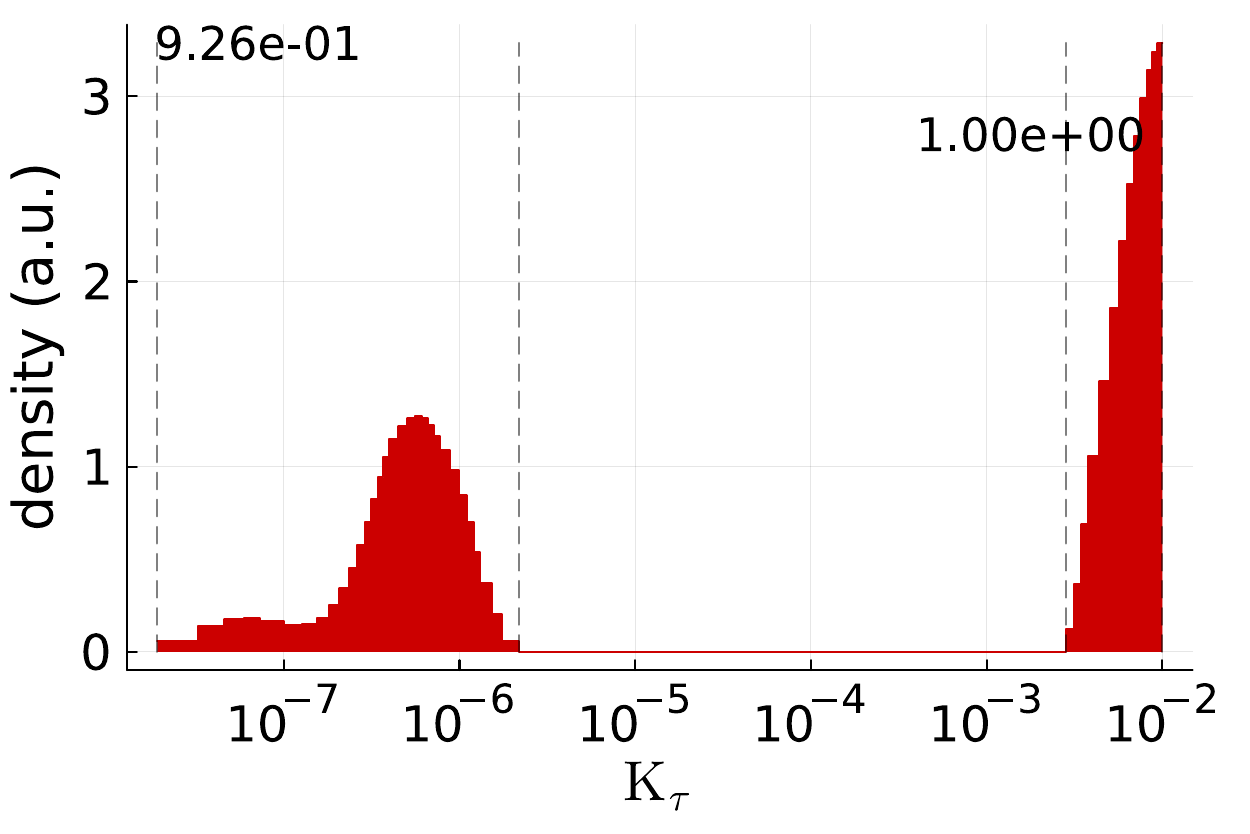}
		\includegraphics[width = 0.32\textwidth]{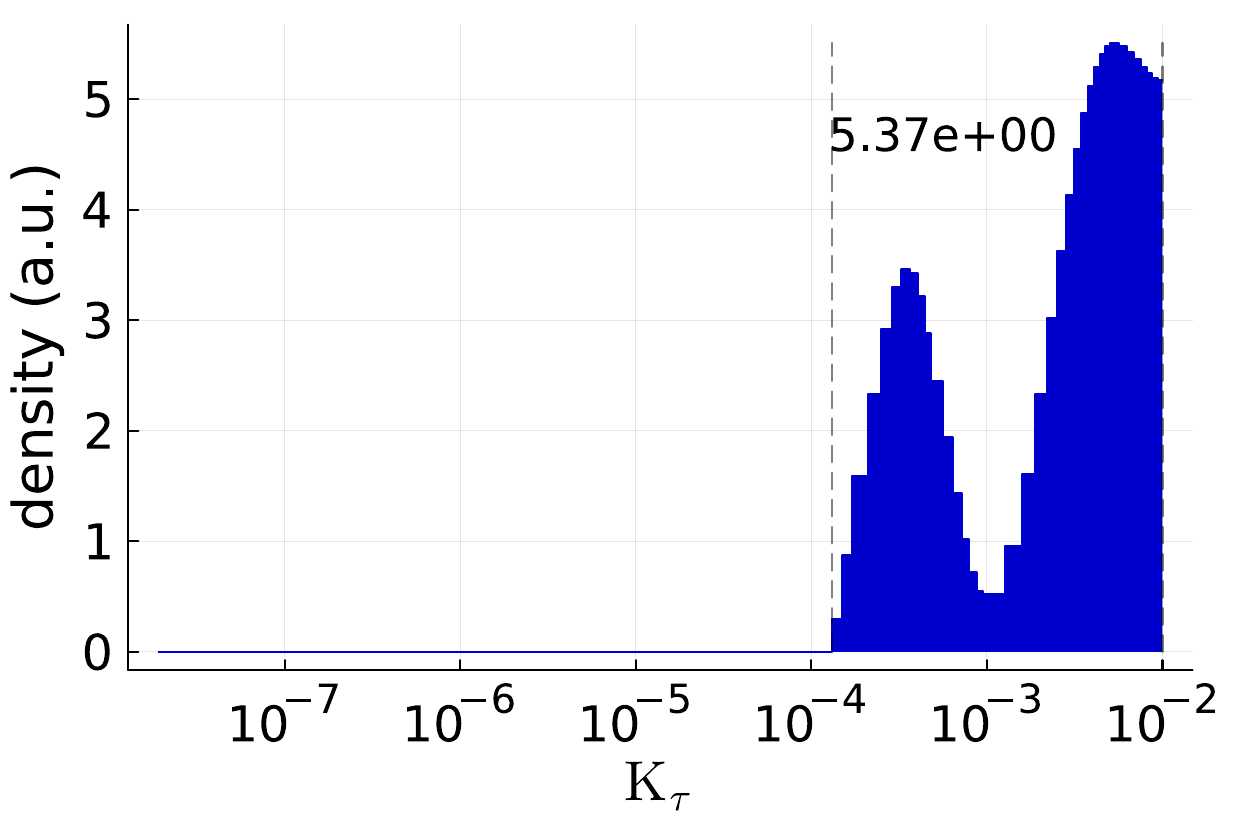}
		\includegraphics[width = 0.32\textwidth]{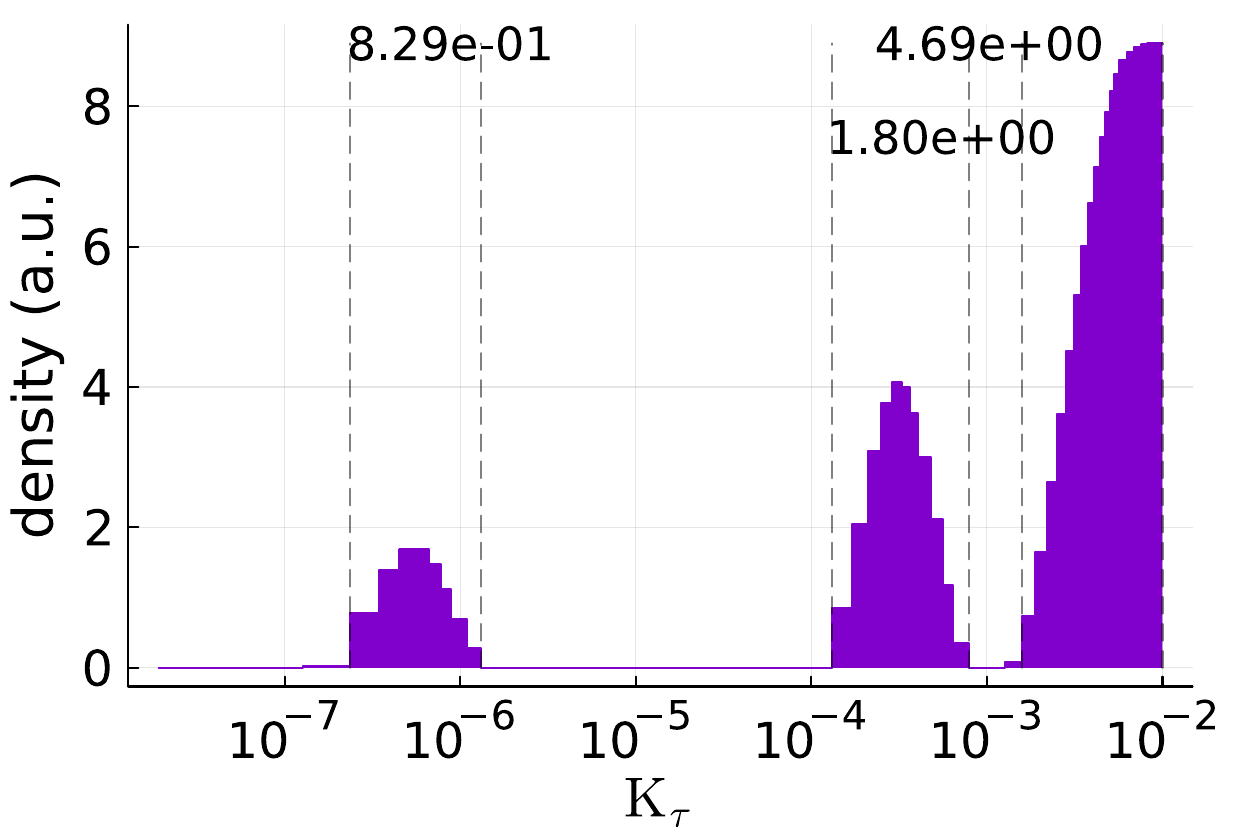}

	\captionof{figure}{Dose-response curves and corresponding accessibility histograms of replicate 6 (for the antibody mix condition).}
	\label{sup-fig: rep 6}
\end{minipage}

\vspace{0.5cm}
\noindent
\begin{minipage}{\textwidth}
	\centering
	{\large \bfseries Replicate 7}\vspace{0.5em}

		\includegraphics[width = 0.32\textwidth]{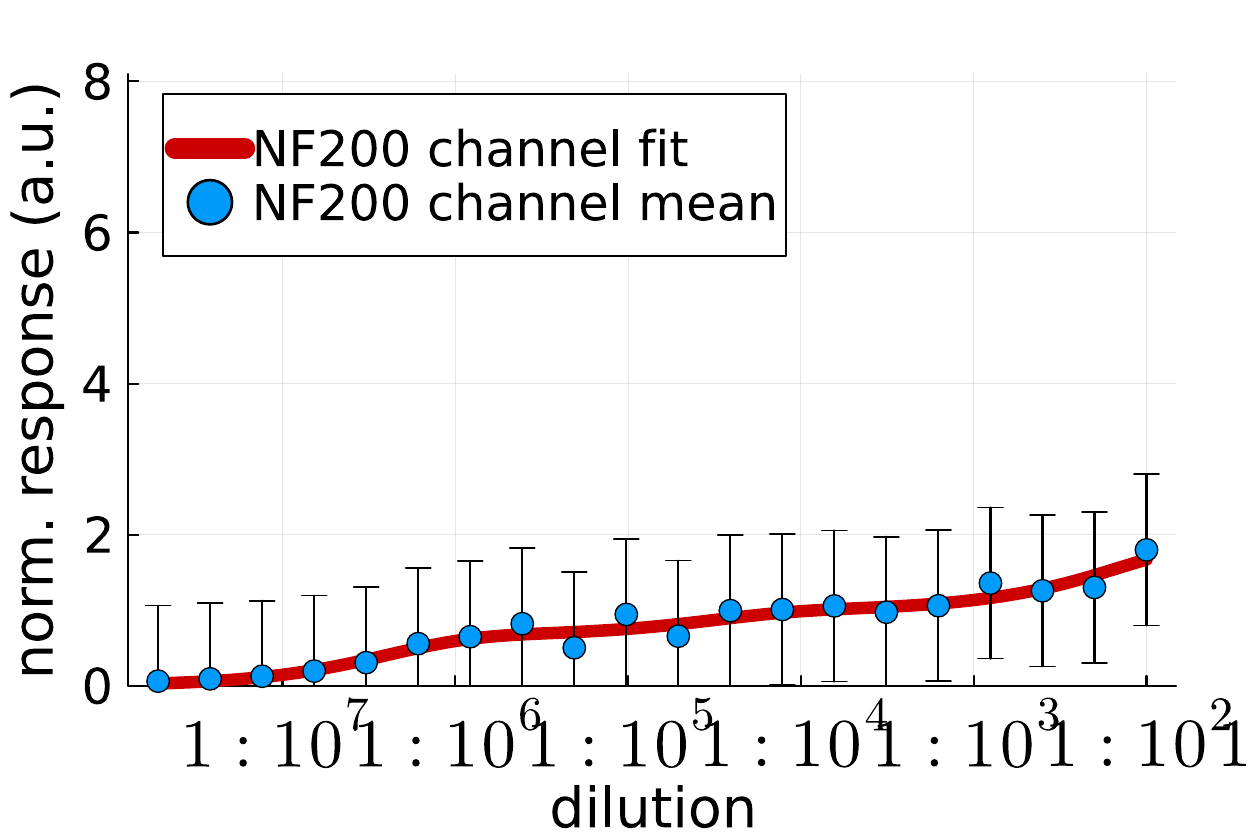}
		\includegraphics[width = 0.32\textwidth]{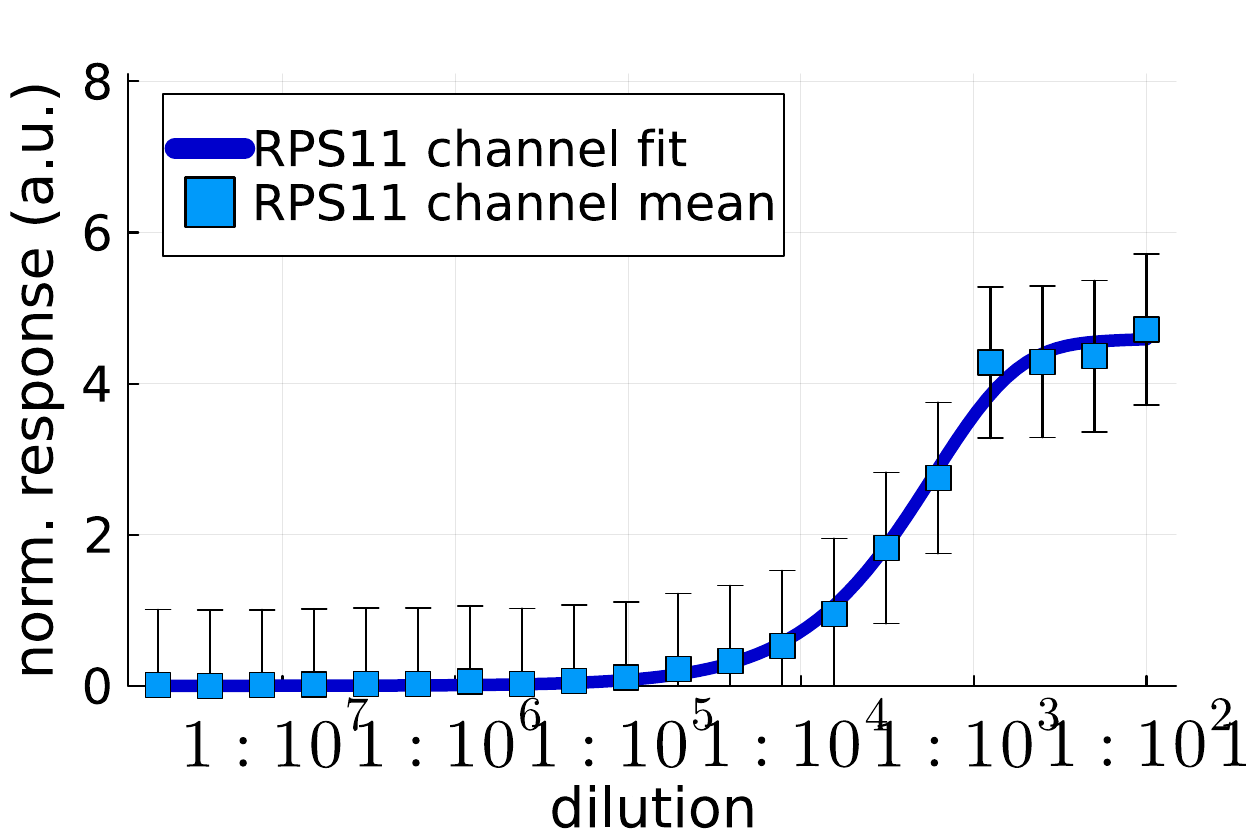}
		\includegraphics[width = 0.32\textwidth]{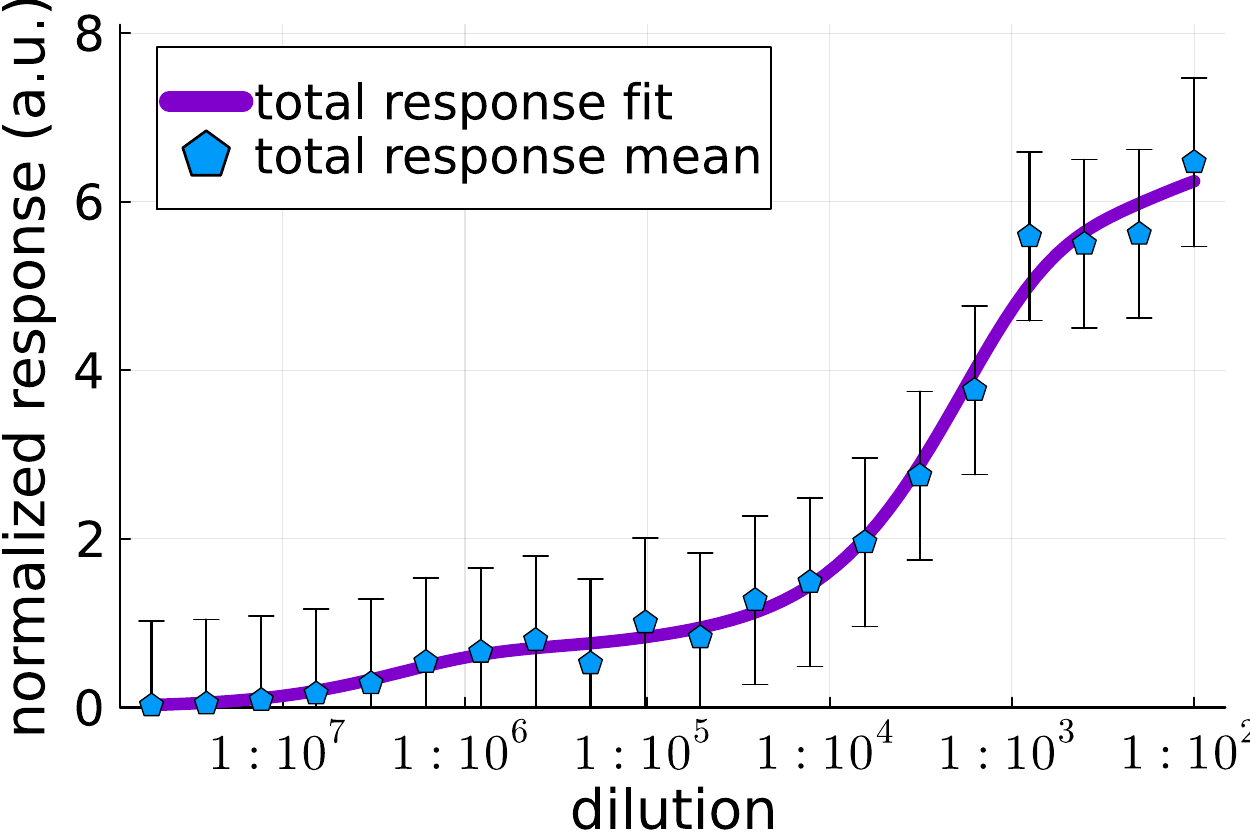}

		\includegraphics[width = 0.32\textwidth]{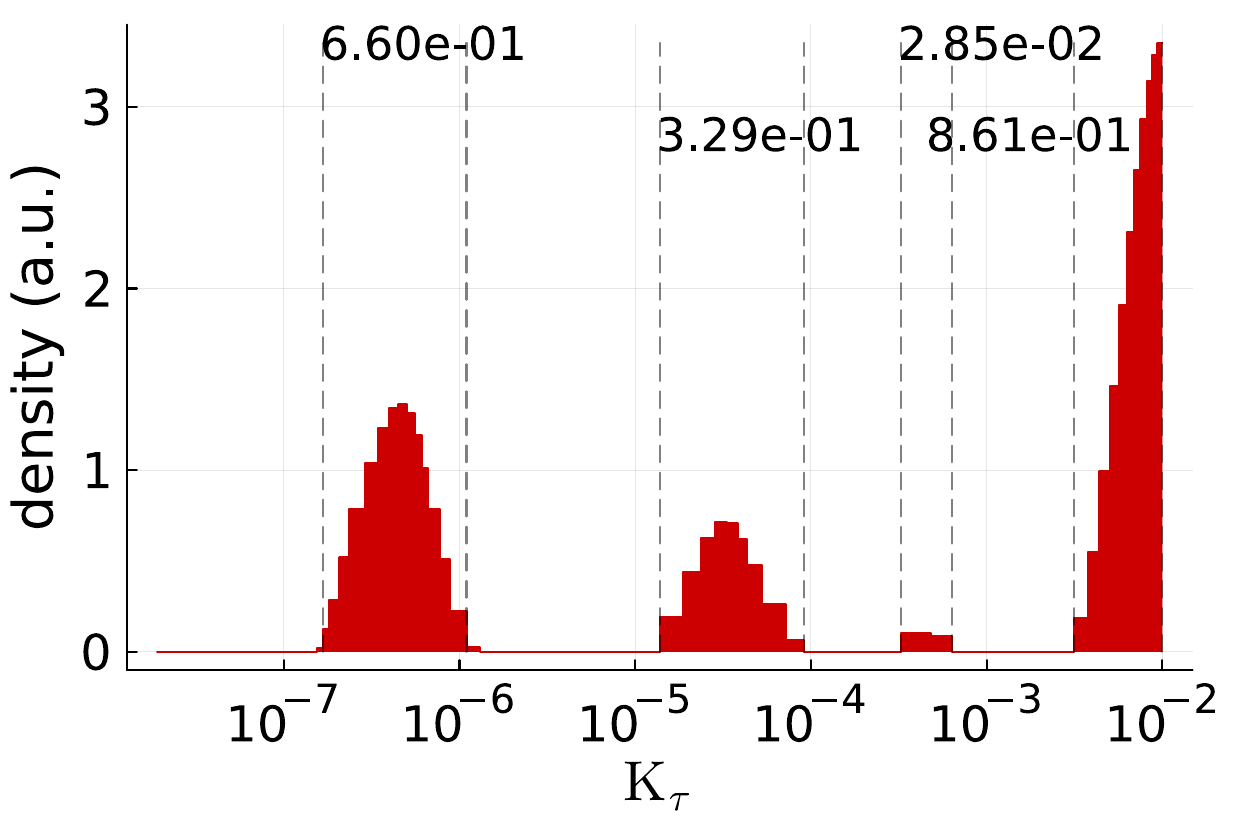}
		\includegraphics[width = 0.32\textwidth]{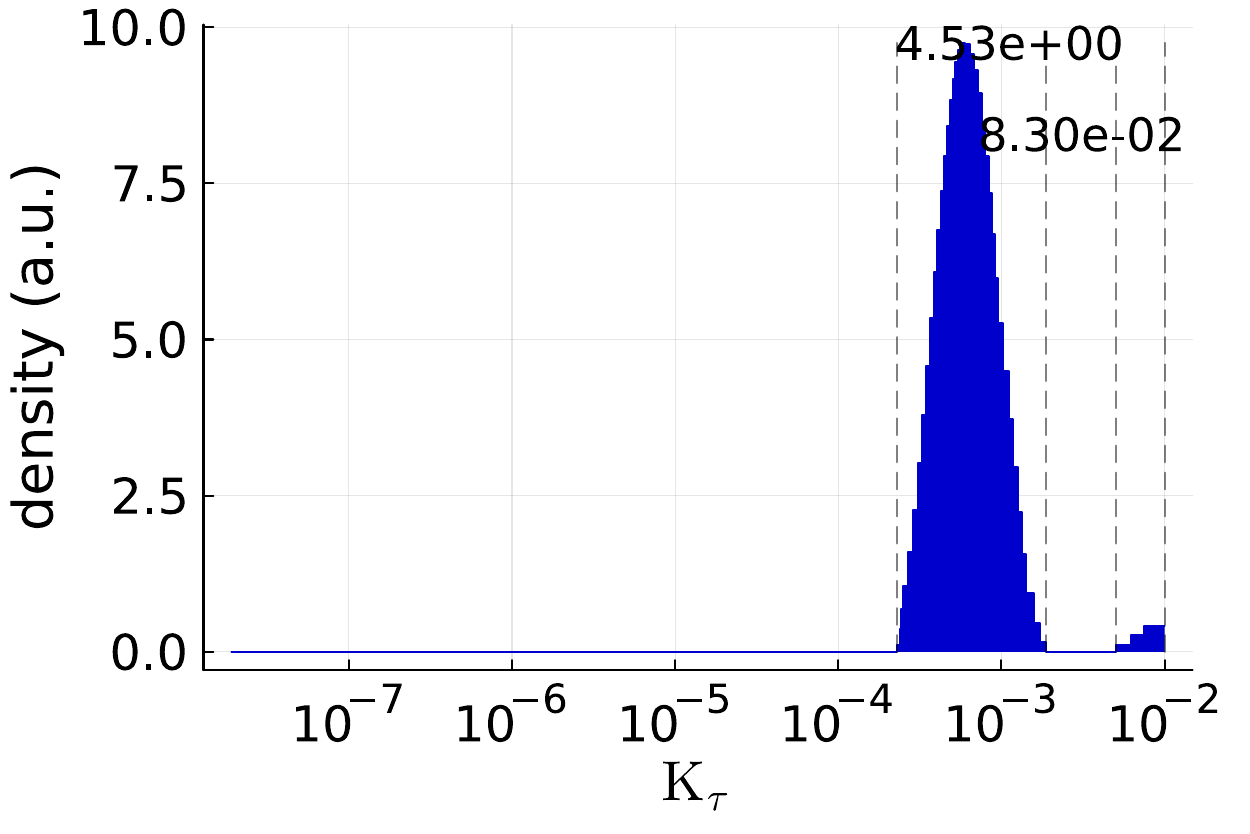}
		\includegraphics[width = 0.32\textwidth]{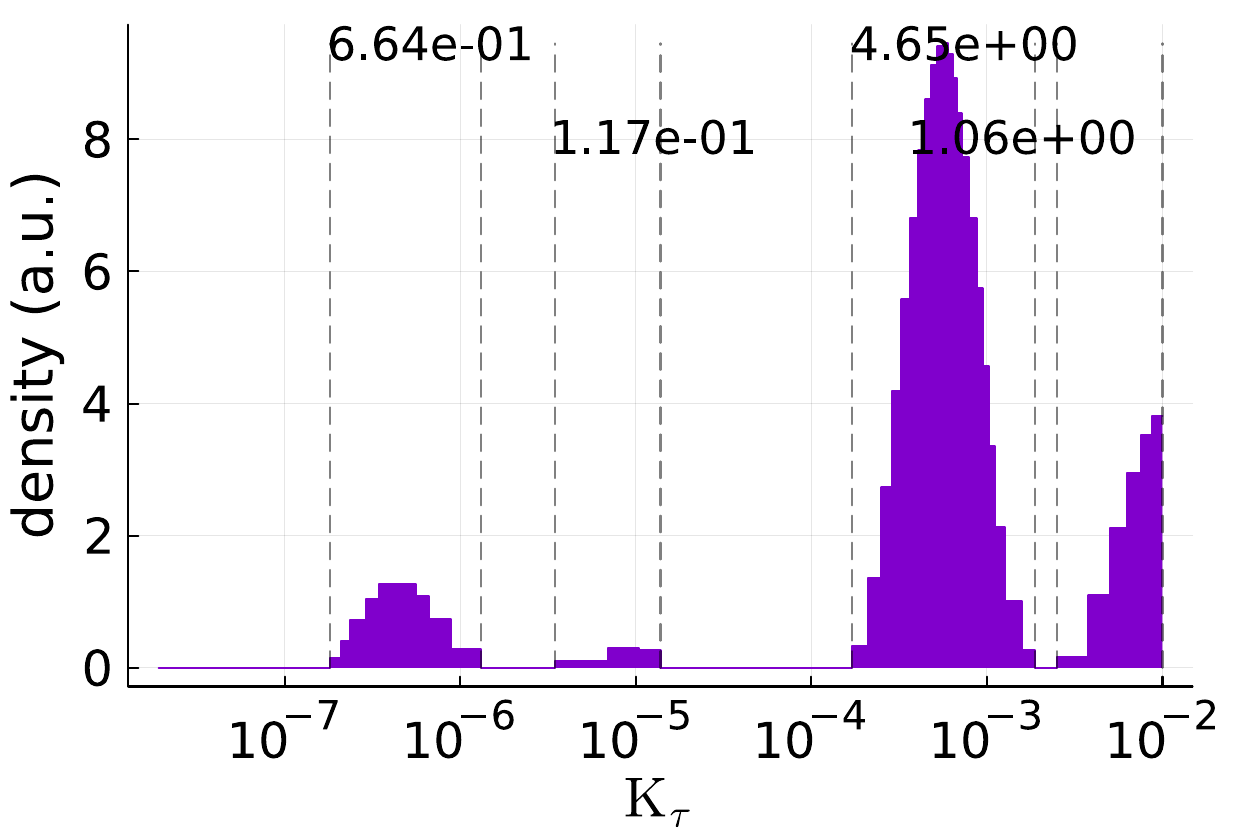}

	\captionof{figure}{Dose-response curves and corresponding accessibility histograms of replicate 7 (for the antibody mix condition).}
	\label{sup-fig: rep 7}
\end{minipage}

\subsection{Uncertainty analysis}

To investigate the results obtained for the weaker regularization (fig. \ref{sup-fig: weaker regularization}) further, we can quantify the uncertainty of the different peaks in the histogram. For this, we may pick a single peak, vary its size and observe the resulting fitting-objective-function values. Since our objective function is a logarithmic posterior distribution (see methods), the values can be converted into posterior probabilities. Thus, it is possible to obtain how much less likely a given change of the peak size is, compared to the best-fitting peak size.

Figures \ref{sup-fig: uncertainty 1}-\ref{sup-fig: uncertainty 4} show the uncertainty estimations for the 4 peaks of the total response histogram of the antibody mix. In all cases, the red line is the best-fitting histogram. The color ribbons (red $\rightarrow$ white $\rightarrow$ blue) show progressively less probable changes of the peak size. The corresponding effect of the changed peak size on the dose-response curve is shown by color-matched ribbons in the dose-response plots. The probability factors of the different colors (figure legends) are relative to the probability of the best-fitting histogram (which thus has the factor 1).

For peaks further left in the histogram (smaller $K_\tau$), changes of the overall peak size are highly unlikely (even 1000 times less probable peak sizes are tightly confined to the best-fitting peak size). Furthermore, only peak changes that have probability factors larger than $\nicefrac{1}{100}$ produce dose-response curves within the noise range of the dose-response data. In addition, observe that the leftmost peak can be removed completely (but not increased), without affecting the dose-response curve, suggesting that this is a noise artifact.

The size of peaks further right in the histogram (larger $K_\tau$) has much more leeway. This is because high-$K_\tau$ peaks affect mostly just the high-dilution-quotient points of the dose-response curve, thus leading only to deviations from a few data points. In other words, since there are no more data points to guide what peak-effects are allowed, there is much greater uncertainty for the high-$K_\tau$ peaks. Nevertheless, as before only peak size changes with probability factors larger than $\nicefrac{1}{100}$ produce dose-response curves within the noise range of the data. And as before, these changes are tightly confided to the best-fitting peak size if considered as relative deviation from the best-fitting peak size. That is, the deviations from the best-fitting peak sizes further left in the histogram are smaller, but the size of the peaks is much smaller as well.

The fact that the general peak size cannot be changed too much without producing dose-response curves that no longer match the data points (within the range of noise) corroborates the findings obtained with weaker regularization (fig. \ref{sup-fig: weaker regularization}).

\vspace{0.5cm}
\noindent
\begin{minipage}{\textwidth}
	\centering
		\includegraphics[width = 0.49\textwidth]{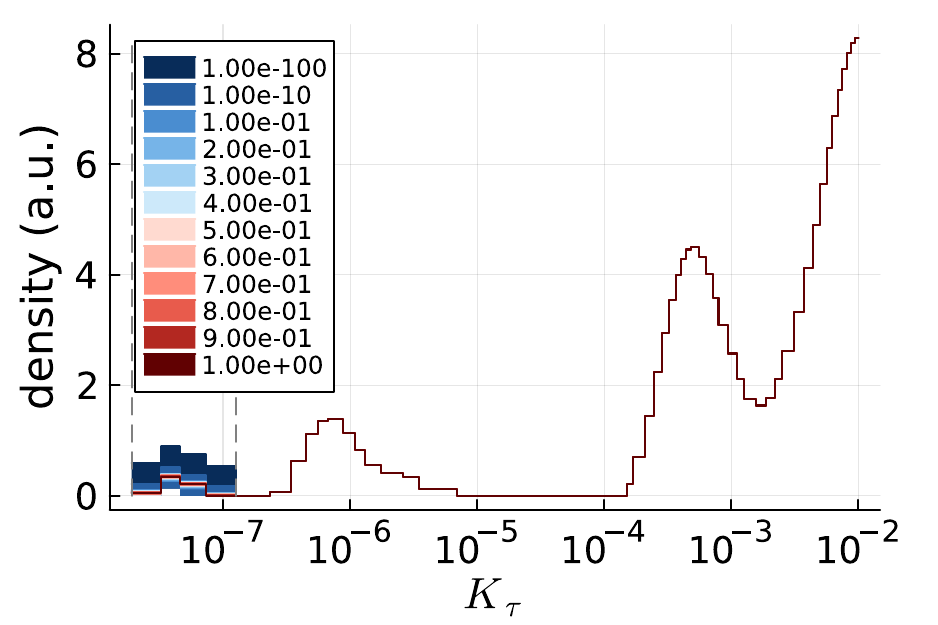}
		\includegraphics[width = 0.49\textwidth]{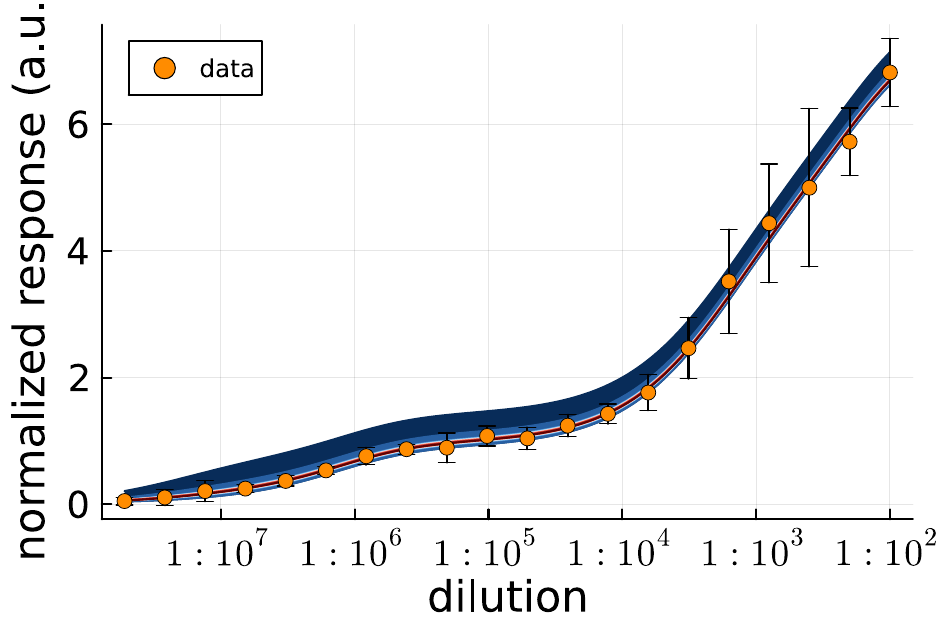}
		\captionof{figure}{Uncertainty analysis (1st peak) for the total response (channel sum) of the antibody mix condition.}
		\label{sup-fig: uncertainty 1}
\end{minipage}

\vspace{0.5cm}
\noindent
\begin{minipage}{\textwidth}
	\centering
		\includegraphics[width = 0.49\textwidth]{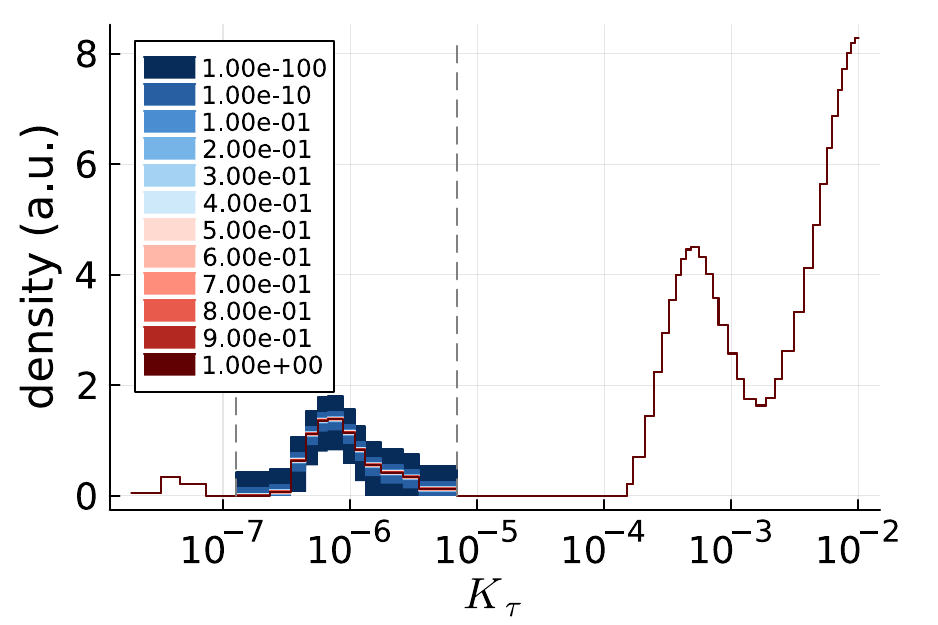}
		\includegraphics[width = 0.49\textwidth]{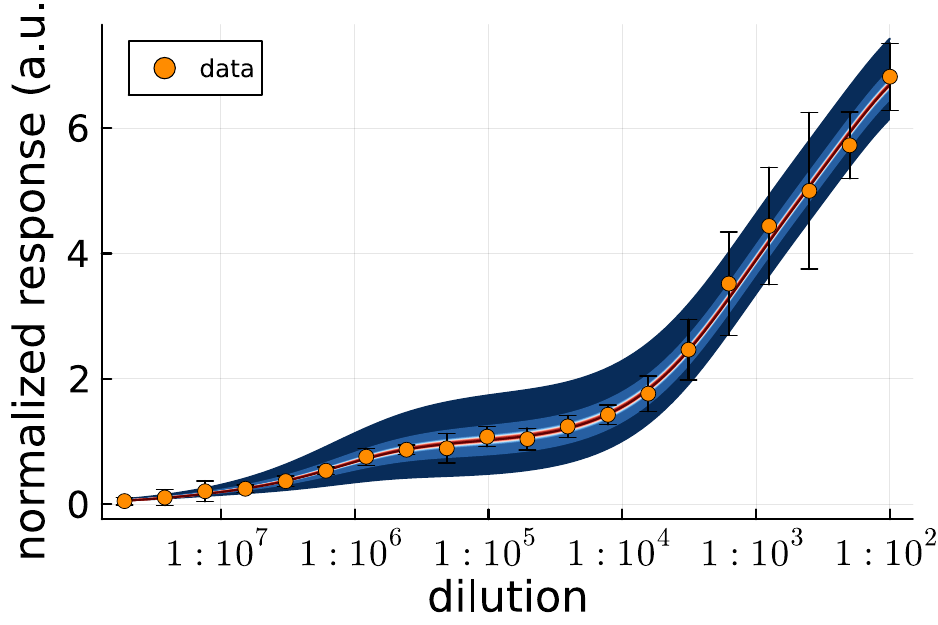}
		\captionof{figure}{Uncertainty analysis (2nd peak) for the total response (channel sum) of the antibody mix condition.}
		\label{sup-fig: uncertainty 2}
\end{minipage}

\vspace{0.5cm}
\noindent
\begin{minipage}{\textwidth}
	\centering
		\includegraphics[width = 0.49\textwidth]{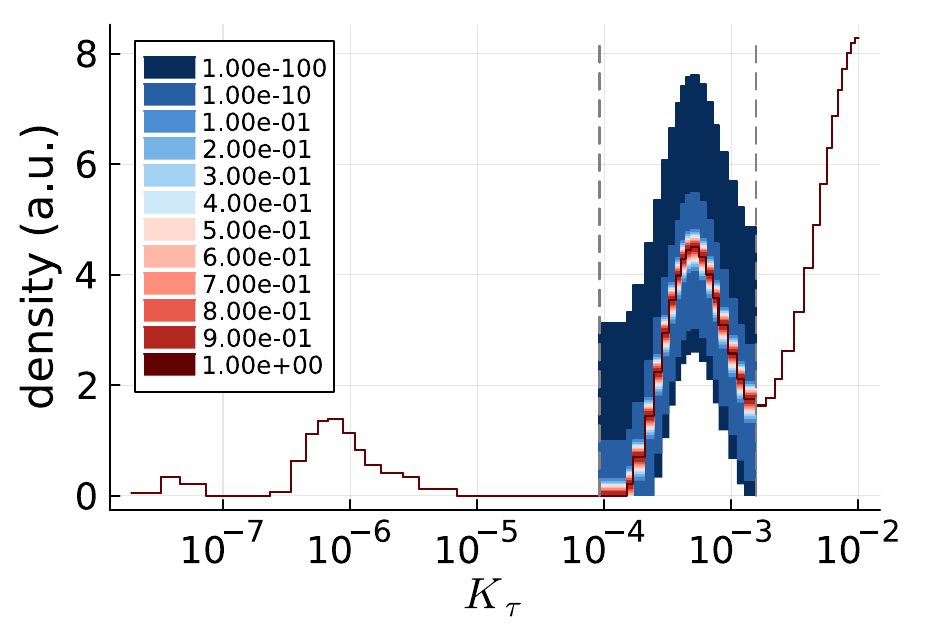}
		\includegraphics[width = 0.49\textwidth]{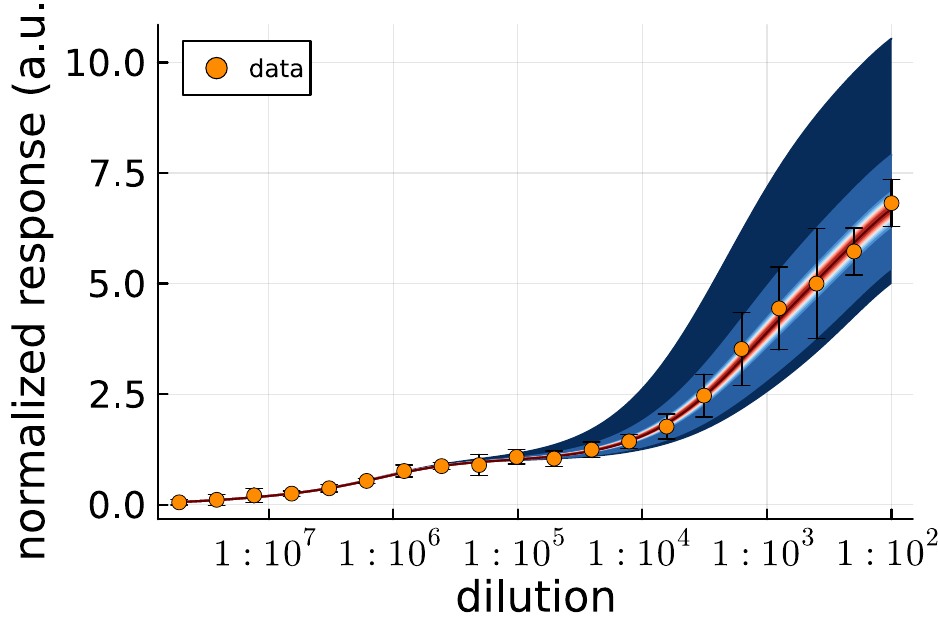}
		\captionof{figure}{Uncertainty analysis (3rd peak) for the total response (channel sum) of the antibody mix condition.}
		\label{sup-fig: uncertainty 3}
\end{minipage}

\vspace{0.5cm}
\noindent
\begin{minipage}{\textwidth}
	\centering
		\includegraphics[width = 0.49\textwidth]{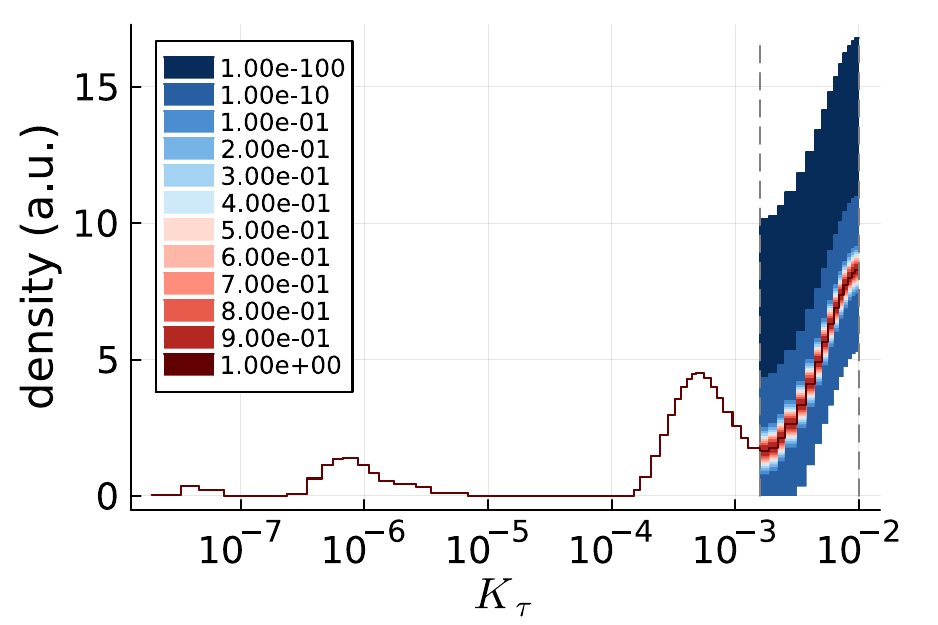}
		\includegraphics[width = 0.49\textwidth]{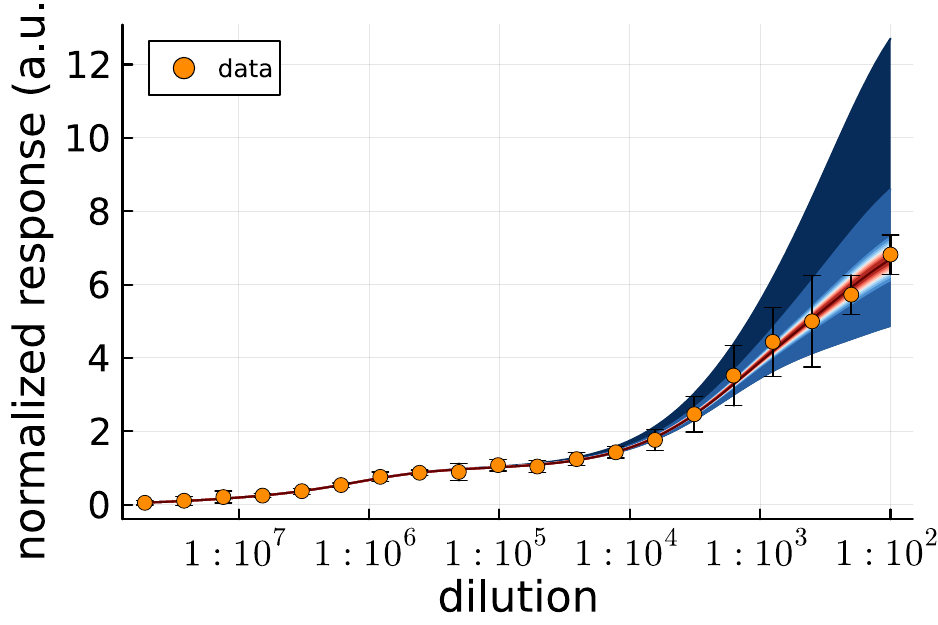}
		\captionof{figure}{Uncertainty analysis (4th peak) for the total response (channel sum) of the antibody mix condition.}
		\label{sup-fig: uncertainty 4}
\end{minipage}

\section{Removed replicates}
\label{sup-sec: removed replicates}

Two replicates were removed from further analysis, one replicate from the anti-NF200-antibody condition and one replicate from the antibody-mix condition. Figure \ref{sup-fig: outlier}a shows the dose-response data of the excluded anti-NF200 replicate (only NF200 channel) and figure \ref{sup-fig: outlier}b shows the excluded antibody-mix replicate (anti-NF200 and anti-RPS11 channel). In both cases it can be observed that the dose-response condition (higher dose means higher response) is violated multiple times beyond the margin of noise for the NF200 channel. Furthermore, the RPS11 channel in the antibody-mix condition contains an obvious outlier at the high-dilution-quotient end of the dose-response curve.

The violation of the dose-response behavior indicates that something must have gone wrong, as all other replicates follow the dose-response behavior within the margins of noise. What exactly went wrong is not known. But since both faulty replicates are from the same plate, it may be speculated that some of the cells have dried out during immunocytochemistry. Thus, we excluded the faulty replicates completely from all analyses.

\vspace{0.5cm}
\noindent
\begin{minipage}{\textwidth}
	\centering

	\begin{minipage}{0.49\textwidth}
		\centering
		\textbf{(a)} anti-NF200 condition

		\includegraphics[width = \textwidth]{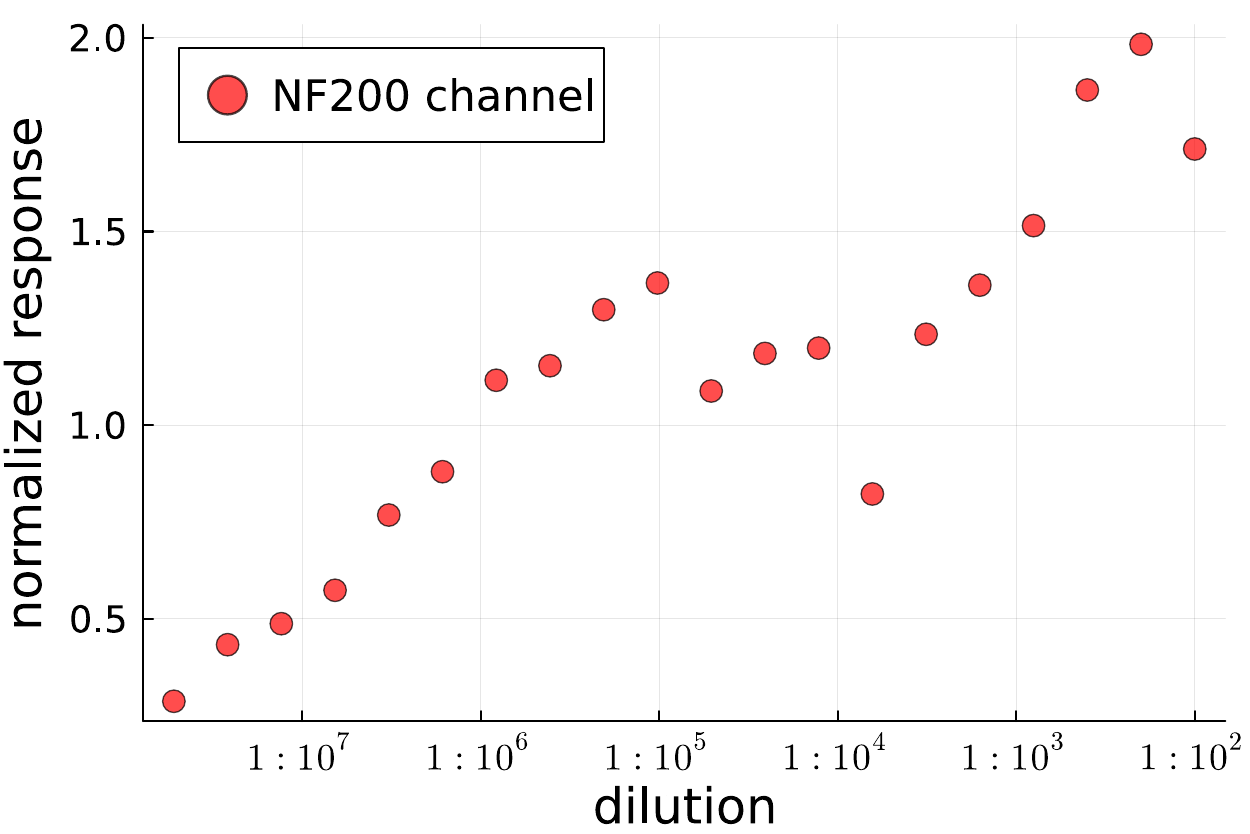}
	\end{minipage}
	\begin{minipage}{0.49\textwidth}
		\centering
		\textbf{(b)} antibody-mix condition

		\includegraphics[width = \textwidth]{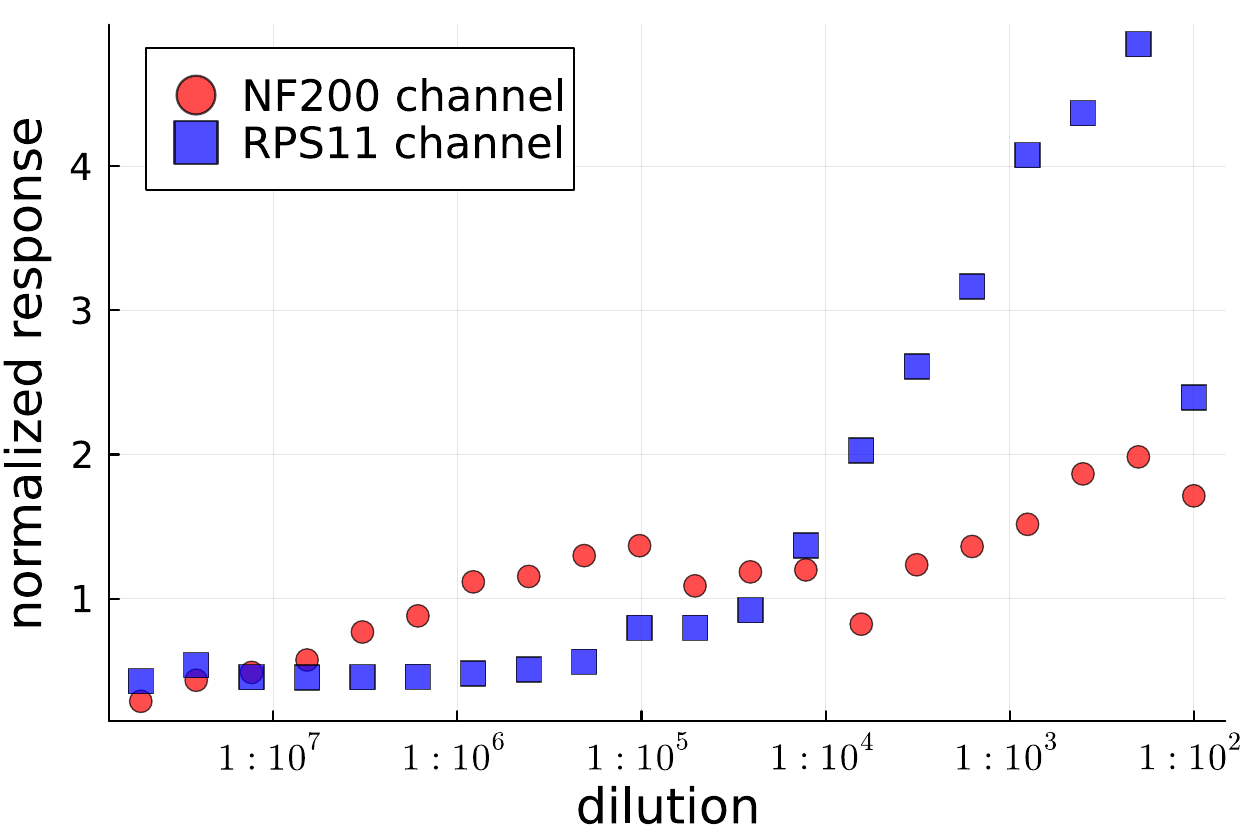}
	\end{minipage}

	\captionof{figure}{Excluded replicates.}
	\label{sup-fig: outlier}
\end{minipage}

\section{Double-staining: Additional figures}
\label{sup-sec: double-staining additional figures}

\subsubsection*{Graphical illustration of the double-staining approach}

For an illustration of the double-staining idea, as well as to provide missing images, figure \ref{sup-fig: graphical illustration} summarizes the image subtraction process graphically. As a short reminder: Two antibody concentrations\slash dilution quotients, corresponding to peaks in the accessibility histogram, are determined. Then, the cells are stained with the first  antibody concentration\slash dilution quotient and a microscopy image is captured (1st staining image). After this, the same cells are stained again, this time with with the second (higher) antibody concentration\slash dilution quotient, leading to the 2nd staining image. The 1st staining image contains only contributions from the low-$K_\tau$ peaks. To account for the double-application of secondary antibodies, the 1st-staining-image brightness is increased (according to the 2nd ab control, Fig. \ref{sup-fig: optimal metrics}a,d), producing the images in figure \ref{sup-fig: graphical illustration}b,e. These brightness-increased 1st staining images are subtracted from the 2nd staining images (Fig. \ref{sup-fig: graphical illustration}a,d). The resulting difference images (Fig. \ref{sup-fig: graphical illustration}c,f) then contain mostly the contributions of the high-$K_\tau$ peaks. 

\begin{figure}[!ht]
	\centering

	{\large\bfseries Plate 1}\vspace{0.5cm}

	\begin{minipage}{0.23\textwidth}
		\centering
		\textbf{(a)} 2nd staining
		\includegraphics[width = \textwidth]{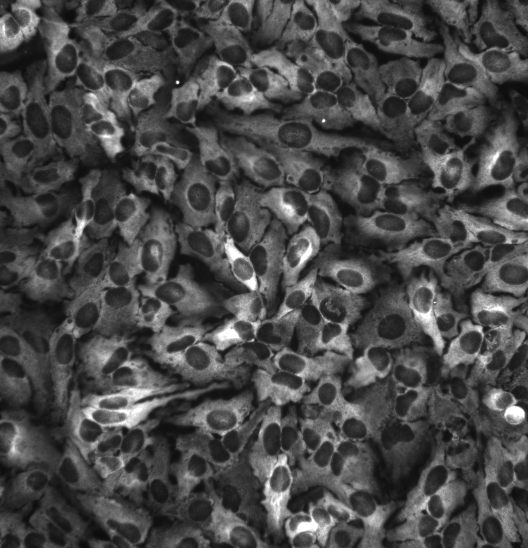}
		\vphantom{(a)}
	\end{minipage} 
	\textbf{\Huge $\ - \ $}
	\begin{minipage}{0.23\textwidth}
		\centering
		\textbf{(b)} 1st staining*
		\includegraphics[width = \textwidth]{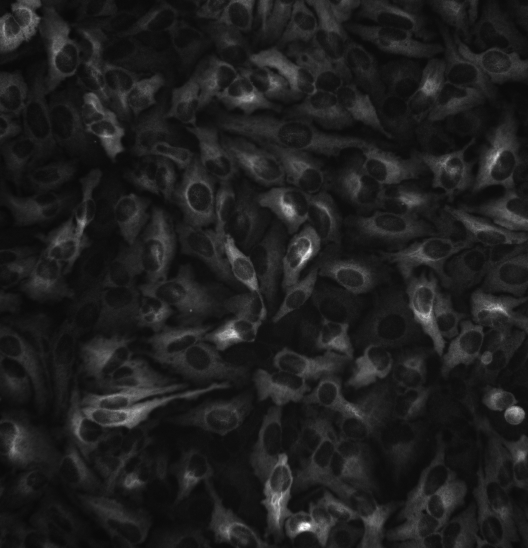}
		\vphantom{(a)}
	\end{minipage} 
	\textbf{\Huge $\ = \ $}
	\begin{minipage}{0.23\textwidth}
		\centering
		\textbf{(c)} Difference
		\includegraphics[width = \textwidth]{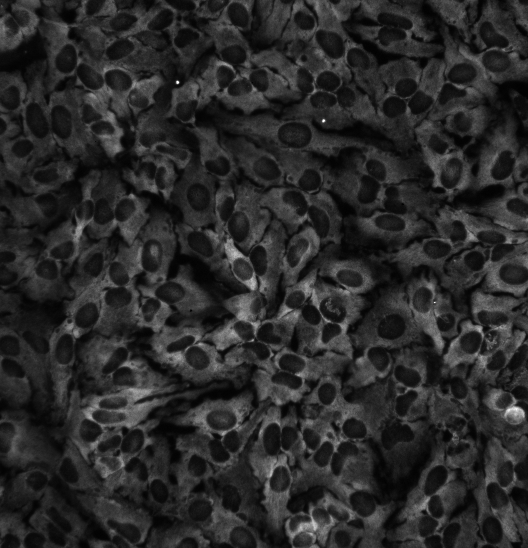}
		\vphantom{(a)}
	\end{minipage} 

	\vspace{0.2cm}

	{\large\bfseries Plate 2}\vspace{0.5cm}

	\begin{minipage}{0.23\textwidth}
		\centering
		\textbf{(d)} 2nd staining
		\includegraphics[width = \textwidth]{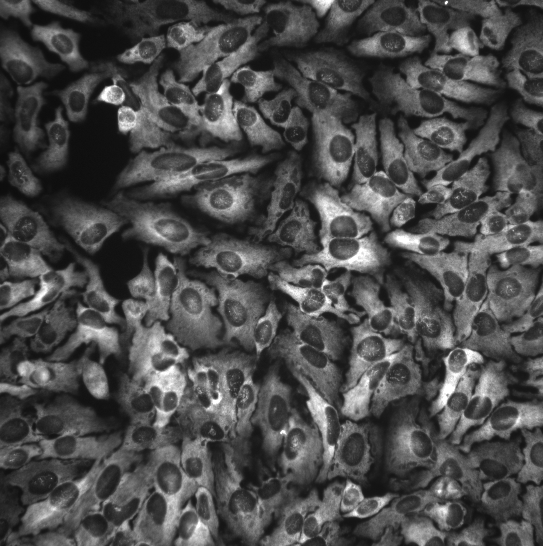}
		\vphantom{(a)}
	\end{minipage} 
	\textbf{\Huge $\ - \ $}
	\begin{minipage}{0.23\textwidth}
		\centering
		\textbf{(e)} 1st staining*
		\includegraphics[width = \textwidth]{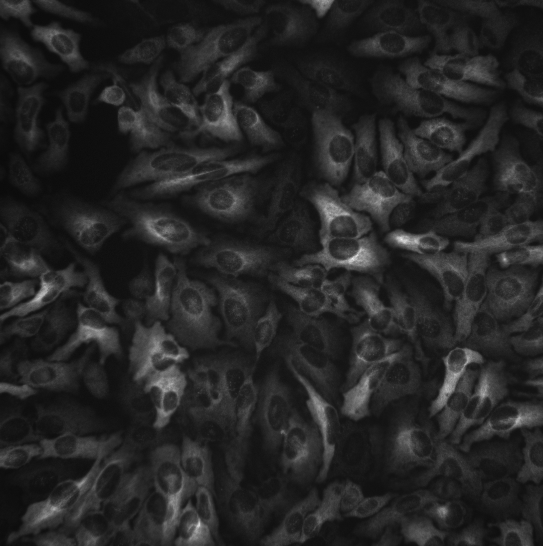}
		\vphantom{(a)}
	\end{minipage} 
	\textbf{\Huge $\ = \ $}
	\begin{minipage}{0.23\textwidth}
		\centering
		\textbf{(f)} Difference
		\includegraphics[width = \textwidth]{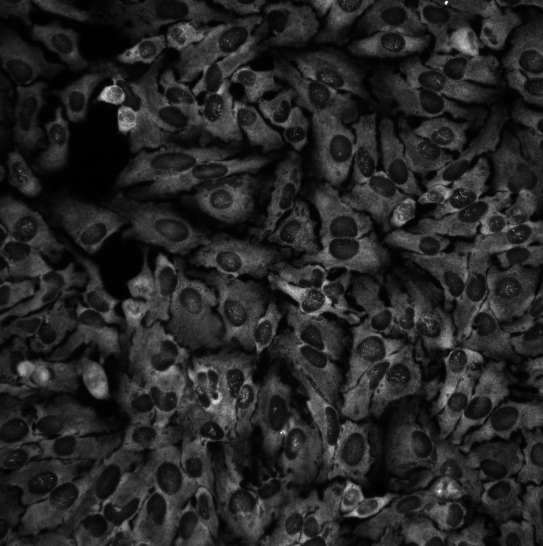}
		\vphantom{(a)}
	\end{minipage} 

	\caption{Illustration of the double-staining approach. \textbf{*)} The 1st staining images already contain the correction for the double-application of secondary antibodies (brightness increase according to ctrl 2nd ab in figures \ref{sup-fig: optimal metrics}a,d respectively).}
	\label{sup-fig: graphical illustration}
\end{figure}

\subsubsection*{Additional view fields}

To illustrate the consistency of the double-staining approach, figure \ref{sup-fig: double-staining view fields} contains additional double-staining composite images (next to the color channel composites). On each plate, 2 wells were used for the double-staining. For each well, distinct view fields were captured.

\subsubsection*{Full-size metrics plots}
In figure \ref{main-fig: double-staining} we used only thumbnails for the metrics of the anti-NF200 and anti-RPS11 channels, because of figure size limitations. Yet, we drew conclusions from these channels, i.e. that the NF200 response doubled, which indicated the deficits of the validation system for the verification of the double-staining approach. Thus, figure \ref{sup-fig: optimal metrics} contains full-size plots for the metrics of each channel.

Having all metrics next to each other allows to see that the majority of the signal increase after the 2nd staining does indeed come from the anti-RPS11 antibody (blue). But recall, as argued in the main part of the paper, that the high-$K_\tau$ peaks (in Fig. \ref{main-fig: optimal dilution}a) actually comprise peaks from both the anti-NF200 antibody (red) and the anti-RPS11 antibody (blue). Hence the aforementioned increase of the NF200 signal.

\begin{figure}[!ht]
	\centering
	{\large\bfseries Plate 1 metrics: optimal concentration for the 1st staining}\vspace{0.5cm}

	\begin{minipage}[c]{0.32\textwidth}
		\centering
		\textbf{(a)} Channel sum

		\includegraphics[width = \textwidth]{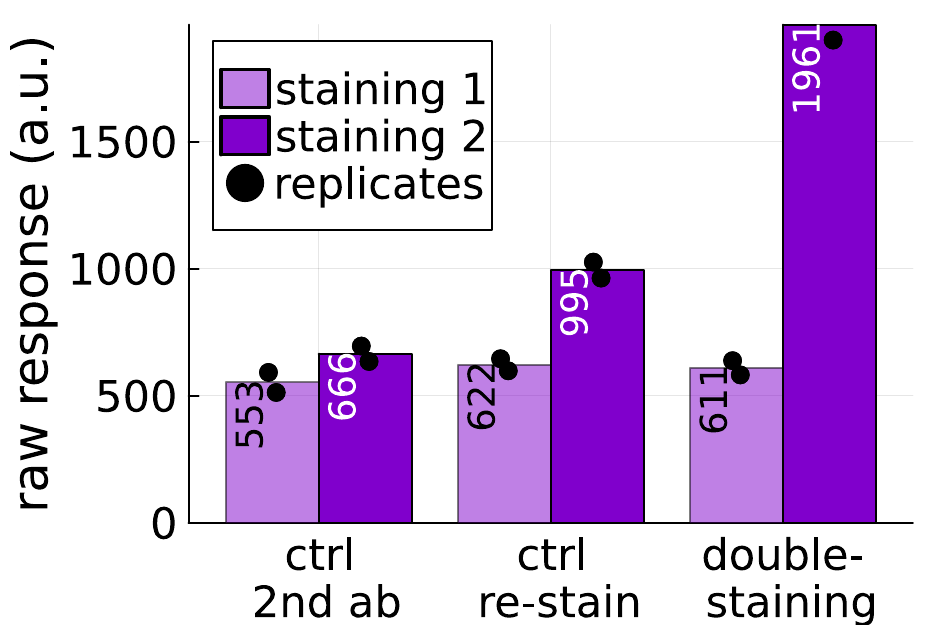}
	\end{minipage}
	\begin{minipage}[c]{0.32\textwidth}
		\centering
		\textbf{(b)} NF200 channel

		\includegraphics[width = \textwidth]{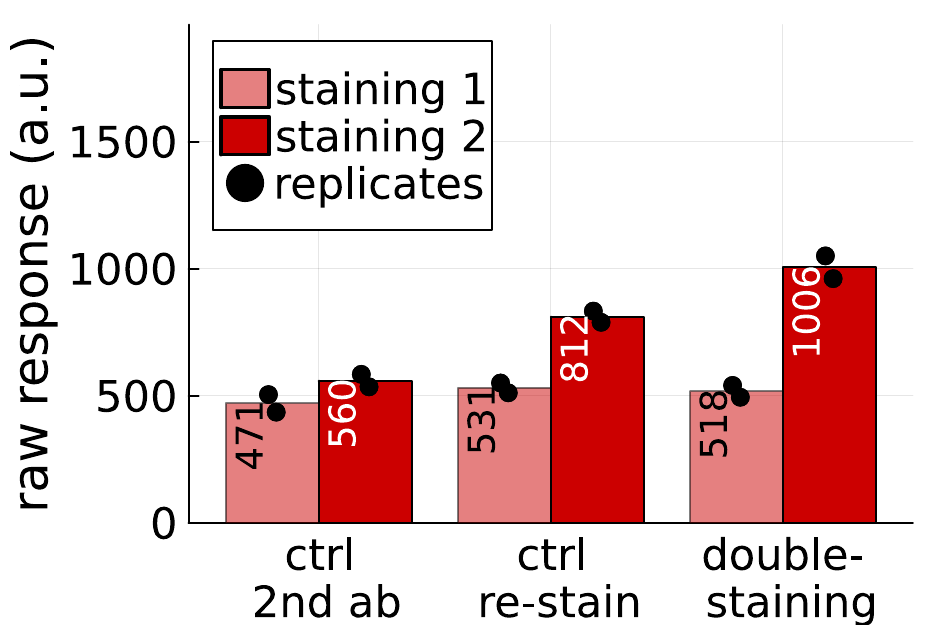}
	\end{minipage}
	\begin{minipage}[c]{0.32\textwidth}
		\centering
		\textbf{(c)} RPS11 channel

		\includegraphics[width = \textwidth]{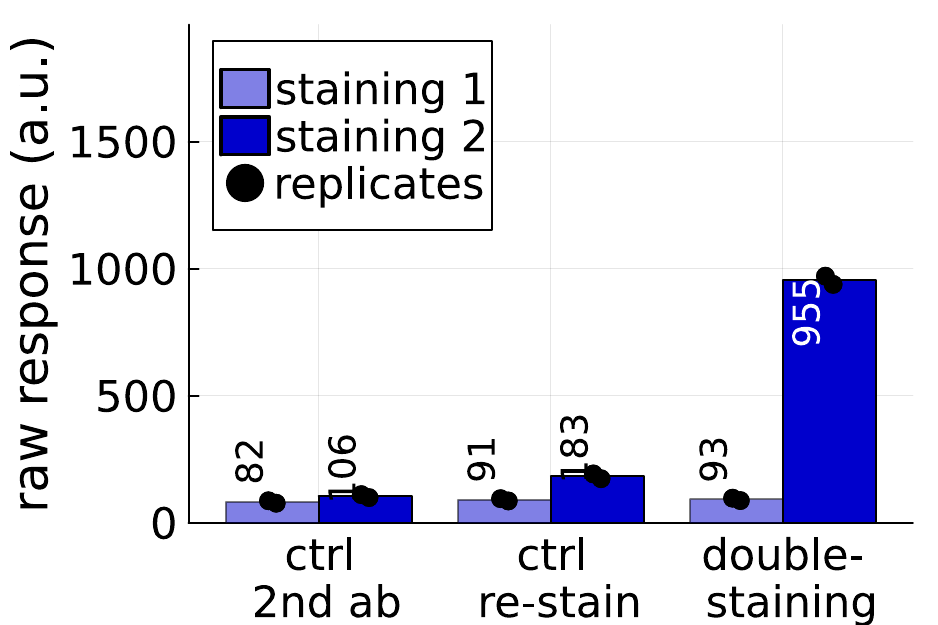}
	\end{minipage}

	\vspace{0.5cm}
	\rule{\textwidth}{2pt}
	\vspace{0.5cm}

	{\large\bfseries Plate 2 metrics: optimal concentration for the 1st staining}\vspace{0.5cm}
	\begin{minipage}[c]{0.32\textwidth}
		\centering
		\textbf{(d)} Channel sum

		\includegraphics[width = \textwidth]{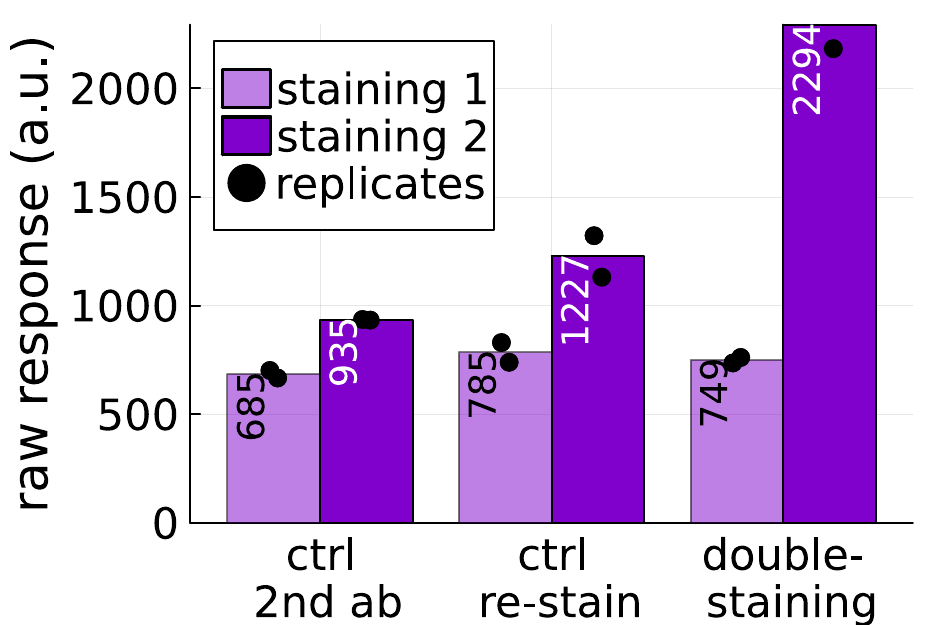}
	\end{minipage}
	\begin{minipage}[c]{0.32\textwidth}
		\centering
		\textbf{(e)} NF200 channel

		\includegraphics[width = \textwidth]{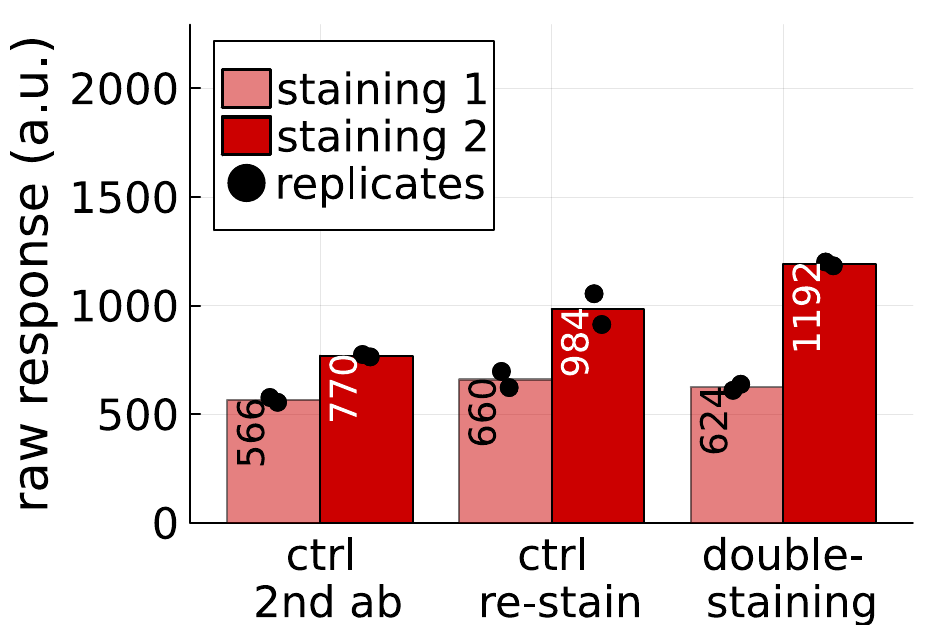}
	\end{minipage}
	\begin{minipage}[c]{0.32\textwidth}
		\centering
		\textbf{(f)} RPS11 channel

		\includegraphics[width = \textwidth]{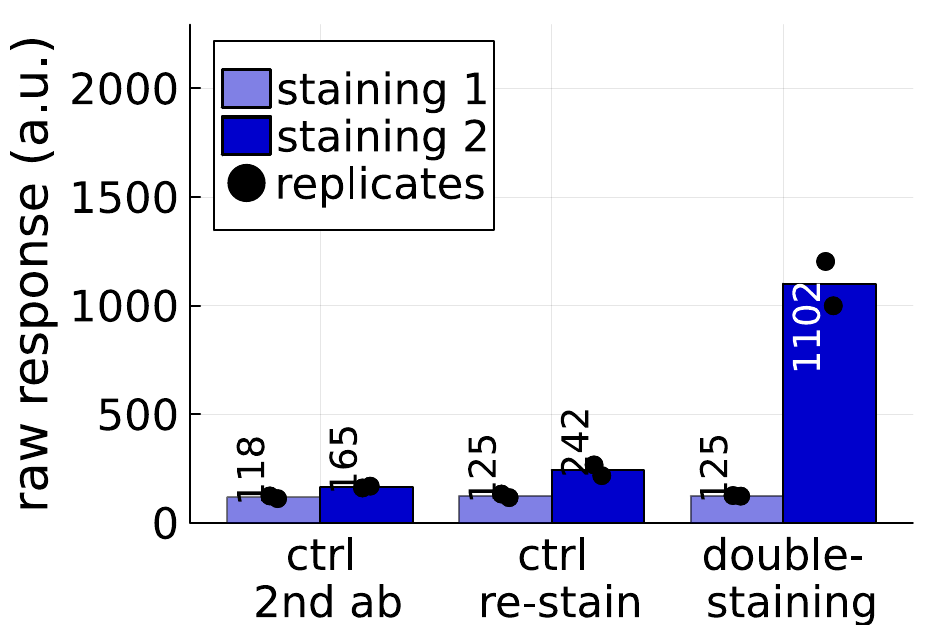}
	\end{minipage}
	\caption{Full-size figures for the thumbnails in figure \ref{main-fig: double-staining}.}
	\label{sup-fig: optimal metrics}
\end{figure}

\subsubsection*{Using improper antibody dilution quotients for the double-staining}

At the end of the double-staining section in the main part of the paper, we argued that using arbitrary antibody concentrations\slash dilution quotients for the double-staining would lead to improper results. For this reason, we repeated the double-staining with improper dilution quotients. Figure \ref{sup-fig: too low} contains the results for too low dilution quotients for the 1st staining. Accordingly, figure \ref{sup-fig: too high} contains the results for too high dilution quotients for the 1st staining. Note that the metrics always show the actual signal increase for the respective color channel, unlike the computational composites that attribute the signals to the wrong color channel because of the improper antibody dilution quotients for the 1st staining.

When too low dilution quotients are used for the 1st staining (here $d_{14} = 1:1638400$ instead of $d_{10} = 1:102400$), the 2nd staining leads to a substantial increase of the signal that belongs to the low-$K_\tau$ peaks. Recall that in the double-staining approach signal components are attributed to the peaks by signal differences between the 1st and the 2nd staining. Since the response of the low-$K_\tau$ peaks was not saturated, the corresponding increase gets wrongly attributed to the high-$K_\tau$ peaks. For the computational composite this means that signal contributions from the low-$K_\tau$ peaks (that should be assigned to the red composite color) get wrongly assigned to the blue composite color (used for the high-$K_\tau$ peaks). This can be seen in figures \ref{sup-fig: too low}d,i where the computational composites contain too much blue signal in comparison to the correct color channel composite images (figs. \ref{sup-fig: too low}e,j).

When too high dilution quotients are used for the 1st staining, the effect is reversed. Contributions of the high-$K_\tau$ peaks are already contained in the 1st staining image and thus get wrongly attributed to the low-$K_\tau$ peaks. This leads to an underestimation of the high-$K_\tau$-peak contributions. In figures \ref{sup-fig: too high}d,i it can be seen that the computational composite images contain slightly too much red signal in comparison to the correct channel composite images (figs. \ref{sup-fig: too high}e,j). Because of the deficits of the validation system, we could not use higher dilution quotients than $d_6 = 1:6400$ for the 2nd staining. Thus, for the illustration of too high 1st staining dilution quotients we could only use $d_{8} = 1:25600$ instead of $d_{10} = 1:102400$ for the 1st staining. Hence, the illustrated effect, too much red signal, is rather small here.

\FloatBarrier

\noindent
\begin{minipage}{\textwidth}
	\renewcommand{\arraystretch}{1.4}

	\begin{tabular}{>{\centering\arraybackslash}m{0.22\textwidth}|>{\centering\arraybackslash}m{0.22\textwidth}||>{\centering\arraybackslash}m{0.22\textwidth}|>{\centering\arraybackslash}m{0.22\textwidth}}

		\multicolumn{4}{c}{\textbf{\Large Plate 1}} \\ \hline

		\multicolumn{2}{c||}{\textbf{\large Well A}} & \multicolumn{2}{c}{\textbf{\large Well B}}\\ \hline

		\textbf{Computational} & \textbf{Color channel} & \textbf{Computational} & \textbf{Color channel} \\ \hline

		\vspace{0.5em}
		\includegraphics[width = 0.21\textwidth]{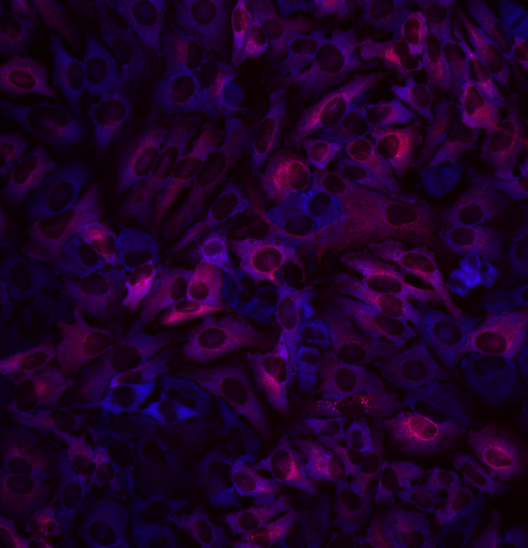} &

		\vspace{0.5em}
		\includegraphics[width = 0.21\textwidth]{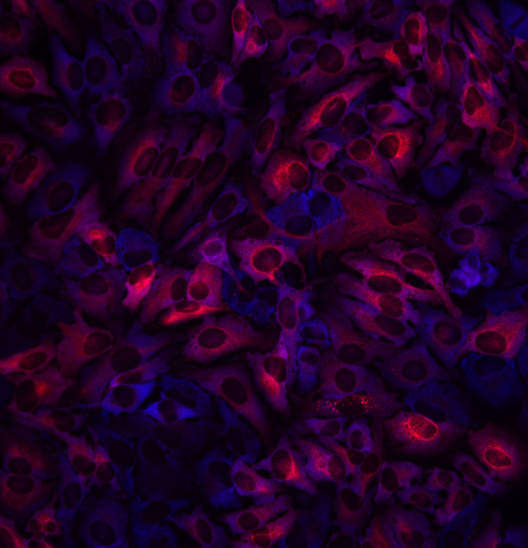}&

		\vspace{0.5em}
		\includegraphics[width = 0.21\textwidth]{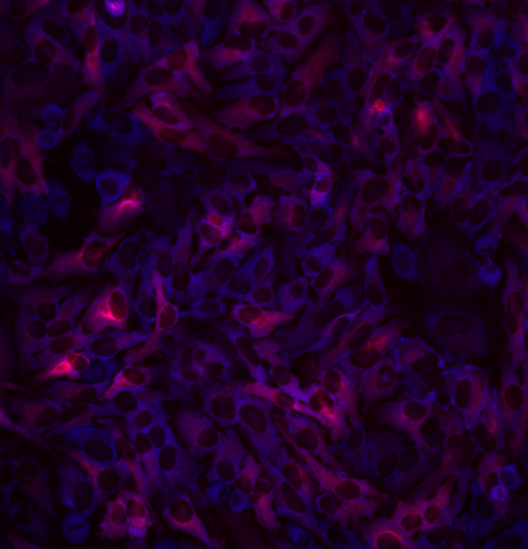} &

		\vspace{0.5em}
		\includegraphics[width = 0.21\textwidth]{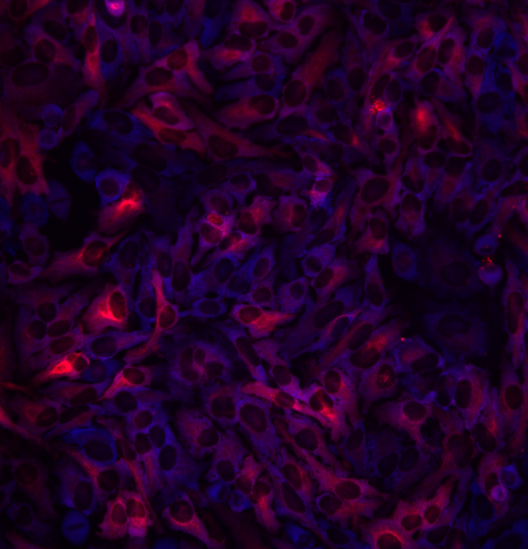}\\ \hline 

		\vspace{0.5em}
		\includegraphics[width = 0.21\textwidth]{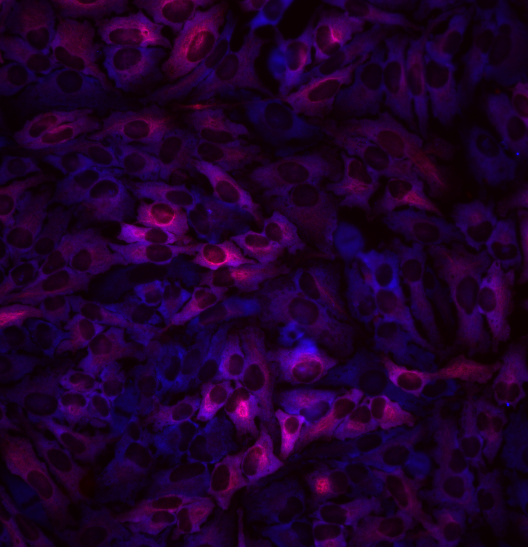} &

		\vspace{0.5em}
		\includegraphics[width = 0.21\textwidth]{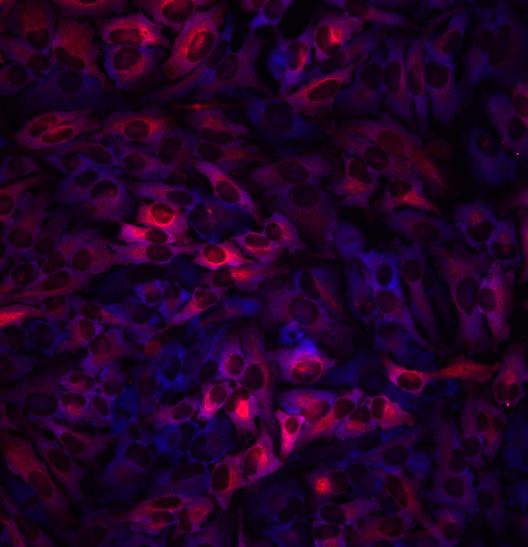}&

		\vspace{0.5em}
		\includegraphics[width = 0.21\textwidth]{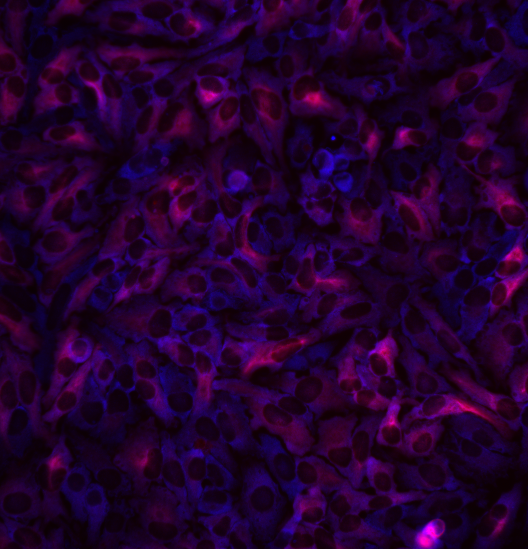} &

		\vspace{0.5em}
		\includegraphics[width = 0.21\textwidth]{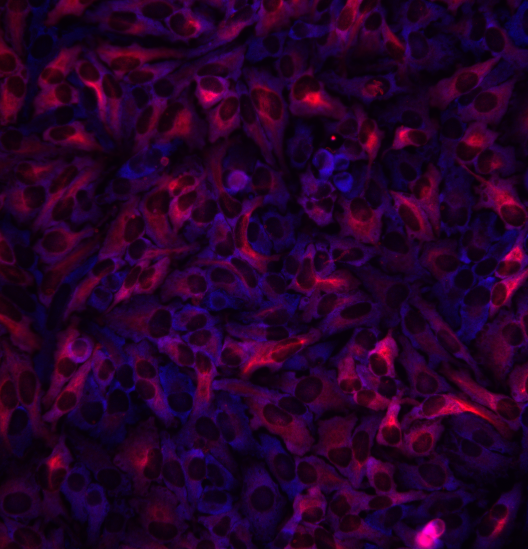}\\ \hline

	\end{tabular}

	\vspace{1em}

	\begin{tabular}{>{\centering\arraybackslash}m{0.22\textwidth}|>{\centering\arraybackslash}m{0.22\textwidth}||>{\centering\arraybackslash}m{0.22\textwidth}|>{\centering\arraybackslash}m{0.22\textwidth}}

		\multicolumn{4}{c}{\textbf{\Large Plate 2}} \\ \hline

		\multicolumn{2}{c||}{\textbf{\large Well A}} & \multicolumn{2}{c}{\textbf{\large Well B}}\\ \hline

		\textbf{Computational} & \textbf{Color channel} & \textbf{Computational} & \textbf{Color channel} \\ \hline

		\vspace{0.5em}
		\includegraphics[width = 0.21\textwidth]{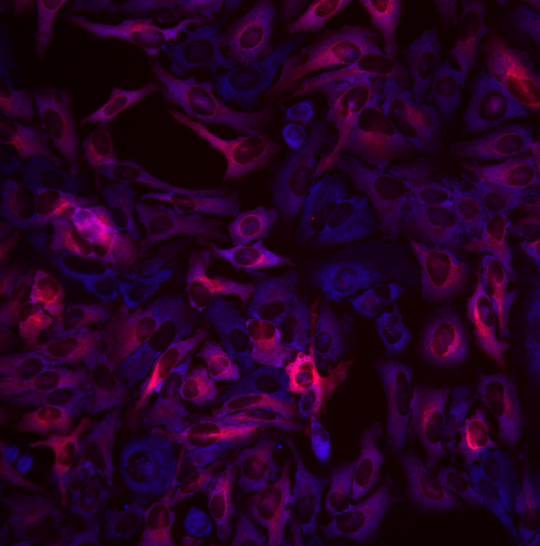} &

		\vspace{0.5em}
		\includegraphics[width = 0.21\textwidth]{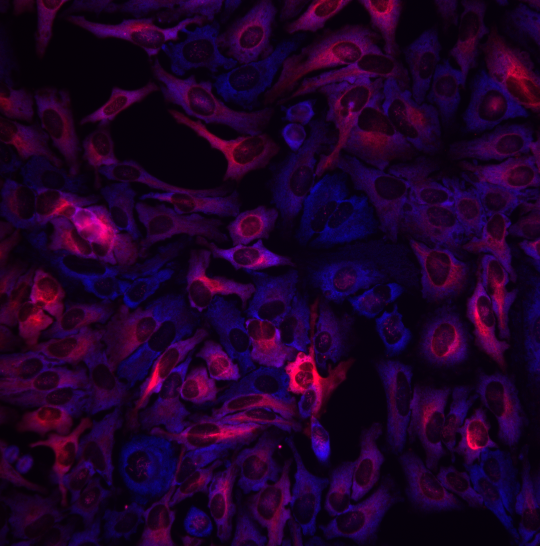}&

		\vspace{0.5em}
		\includegraphics[width = 0.21\textwidth]{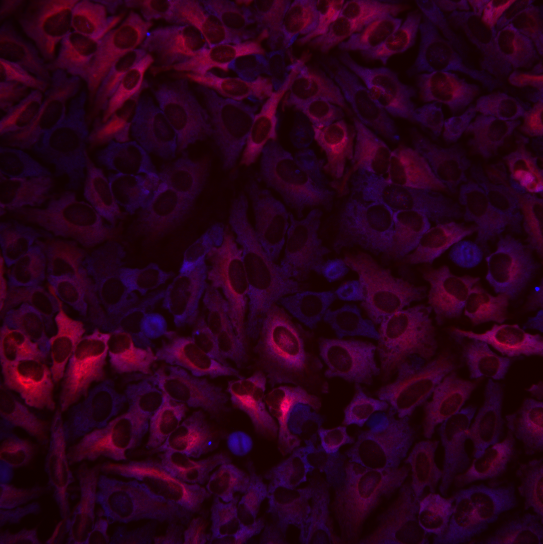} &

		\vspace{0.5em}
		\includegraphics[width = 0.21\textwidth]{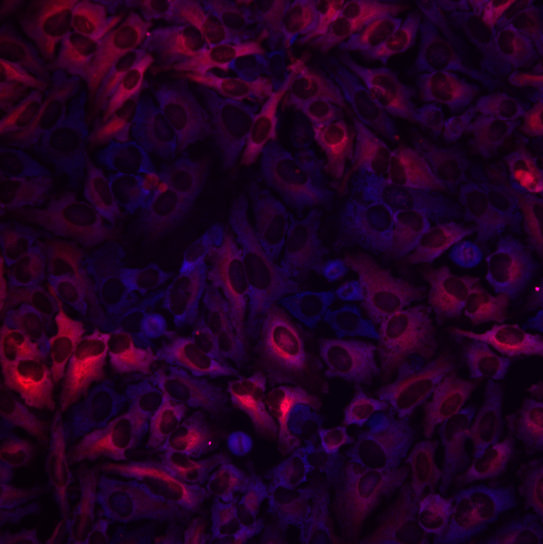}\\ \hline 

		\vspace{0.5em}
		\includegraphics[width = 0.21\textwidth]{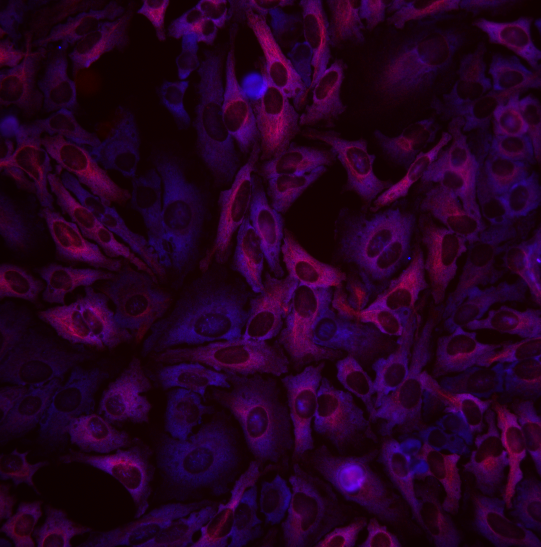} &

		\vspace{0.5em}
		\includegraphics[width = 0.21\textwidth]{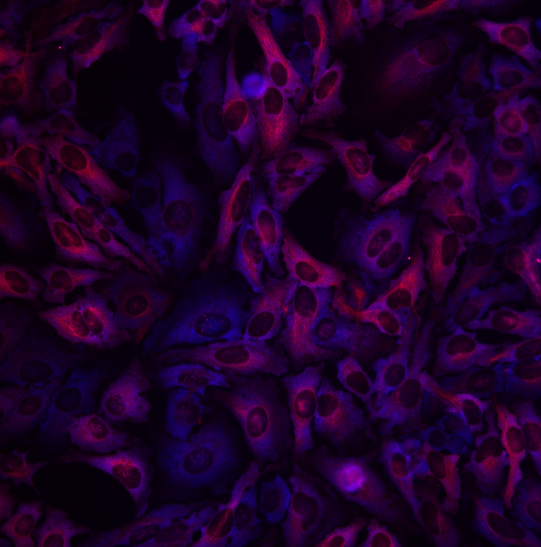}&

		\vspace{0.5em}
		\includegraphics[width = 0.21\textwidth]{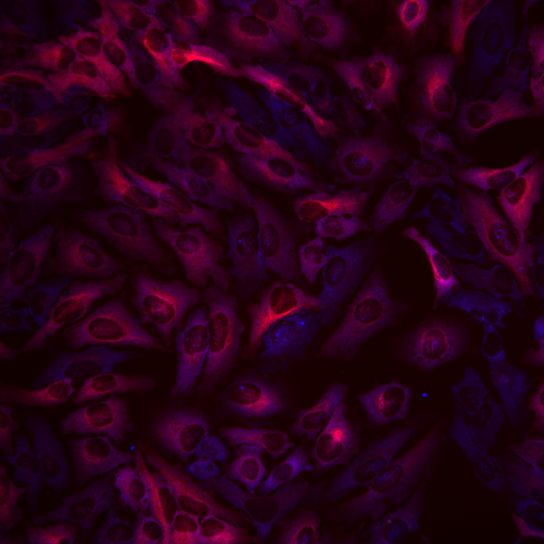} &

		\vspace{0.5em}
		\includegraphics[width = 0.21\textwidth]{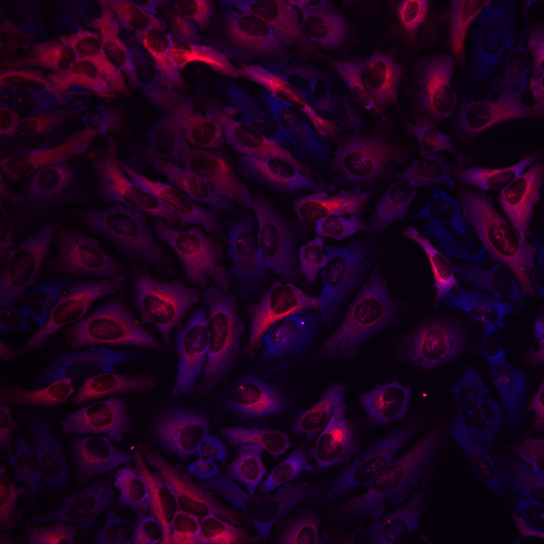}\\ \hline

	\end{tabular}

	\captionof{figure}{Additional double-staining composites and color channel composites from both plates. On each plate, 2 separate wells were used for the double-staining. For each well, multiple distinct view fields were captured.}
	\label{sup-fig: double-staining view fields}
\end{minipage}

\noindent
\begin{minipage}{\textwidth}
	\centering
	{\large\bfseries Plate 1: too low concentration for the 1st staining}\vspace{0.5cm}

	\begin{minipage}[c]{0.32\textwidth}
		\centering
		\textbf{(a)} Channel sum

		\includegraphics[width = \textwidth]{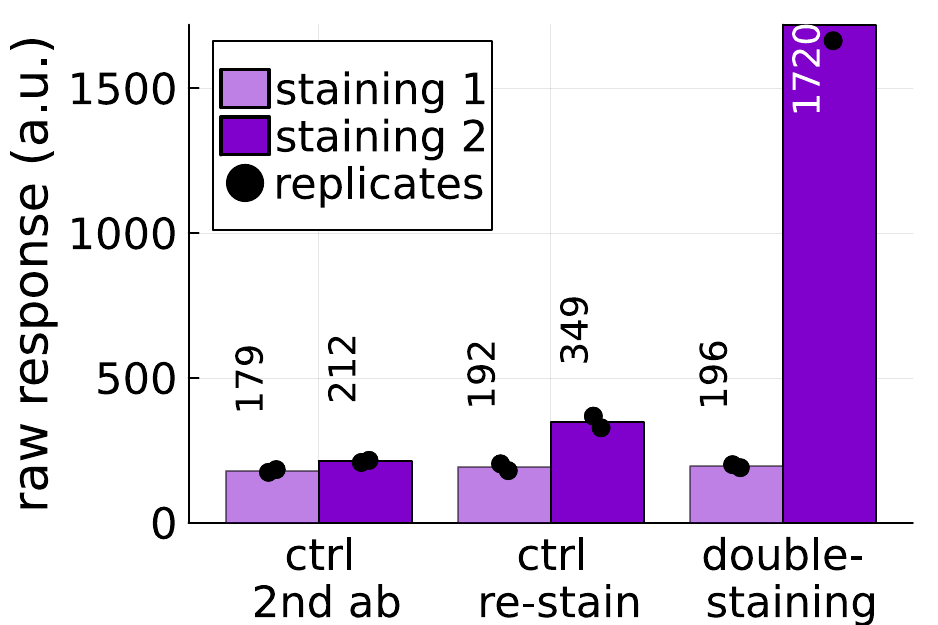}
	\end{minipage}
	\begin{minipage}[c]{0.32\textwidth}
		\centering
		\textbf{(b)} NF200 channel

		\includegraphics[width = \textwidth]{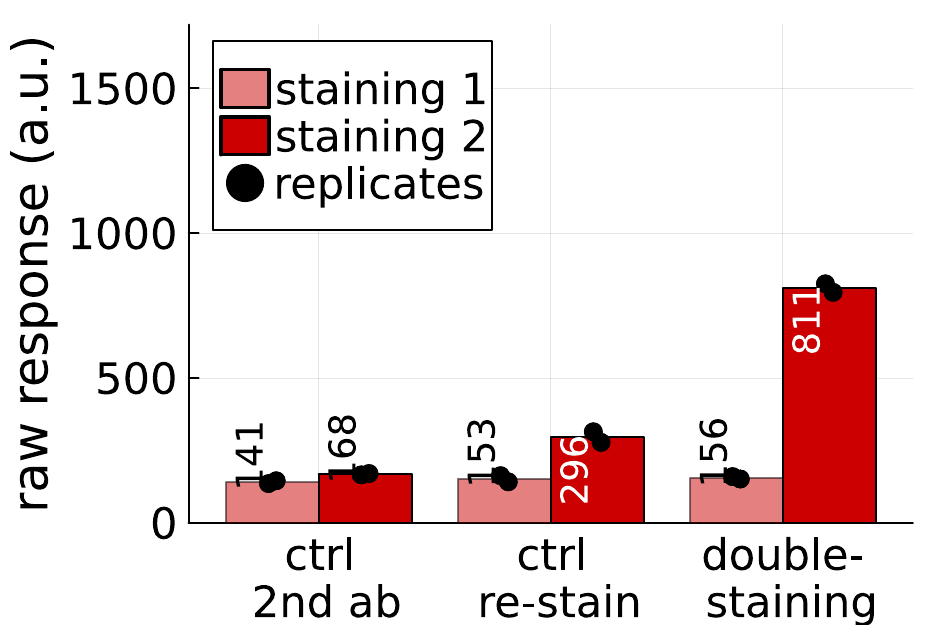}
	\end{minipage}
	\begin{minipage}[c]{0.32\textwidth}
		\centering
		\textbf{(c)} RPS11 channel

		\includegraphics[width = \textwidth]{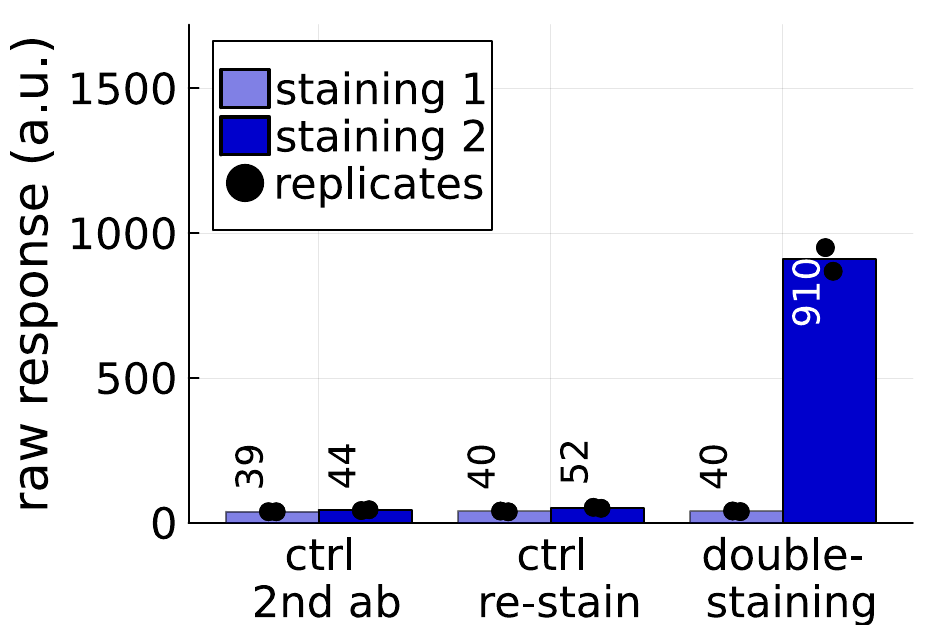}
	\end{minipage}\vspace{0.5cm}

	\begin{minipage}[c]{0.35\textwidth}
		\centering
		\textbf{(d)} Computational composite

		\includegraphics[width = 0.85\textwidth]{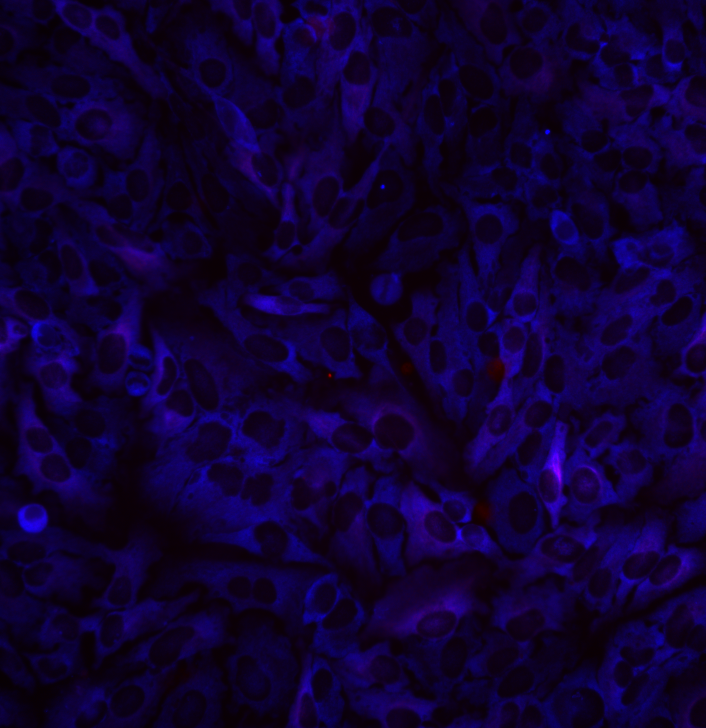}
	\end{minipage}
	\begin{minipage}[c]{0.35\textwidth}
		\centering
		\textbf{(e)} Channel composite

		\includegraphics[width = 0.85\textwidth]{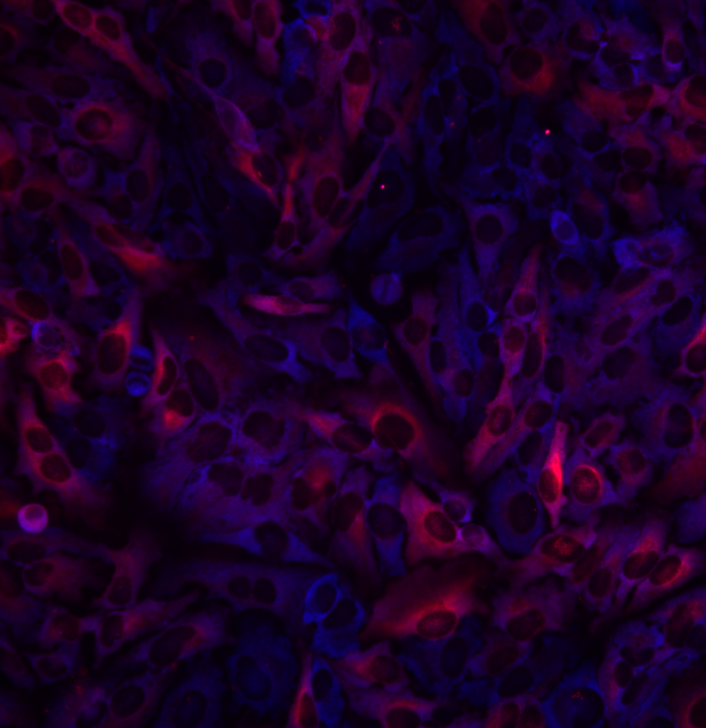}
	\end{minipage}

	\vspace{0.5cm}
	\rule{\textwidth}{2pt}
	\vspace{0.2cm}

	{\large\bfseries Plate 2: too low concentration for the 1st staining}\vspace{0.5cm}

	\begin{minipage}[c]{0.32\textwidth}
		\centering
		\textbf{(f)} Channel sum

		\includegraphics[width = \textwidth]{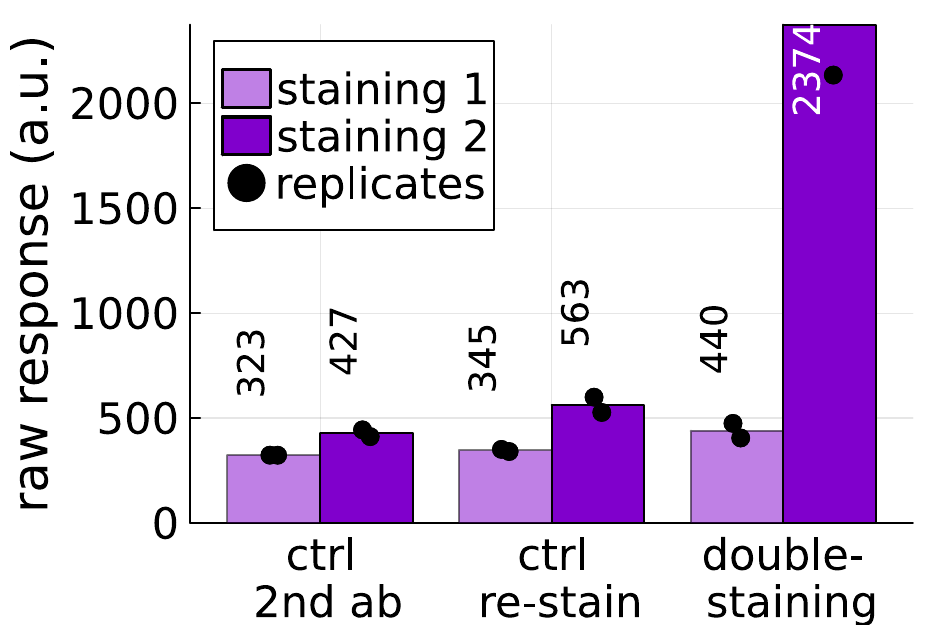}
	\end{minipage}
	\begin{minipage}[c]{0.32\textwidth}
		\centering
		\textbf{(g)} NF200 channel

		\includegraphics[width = \textwidth]{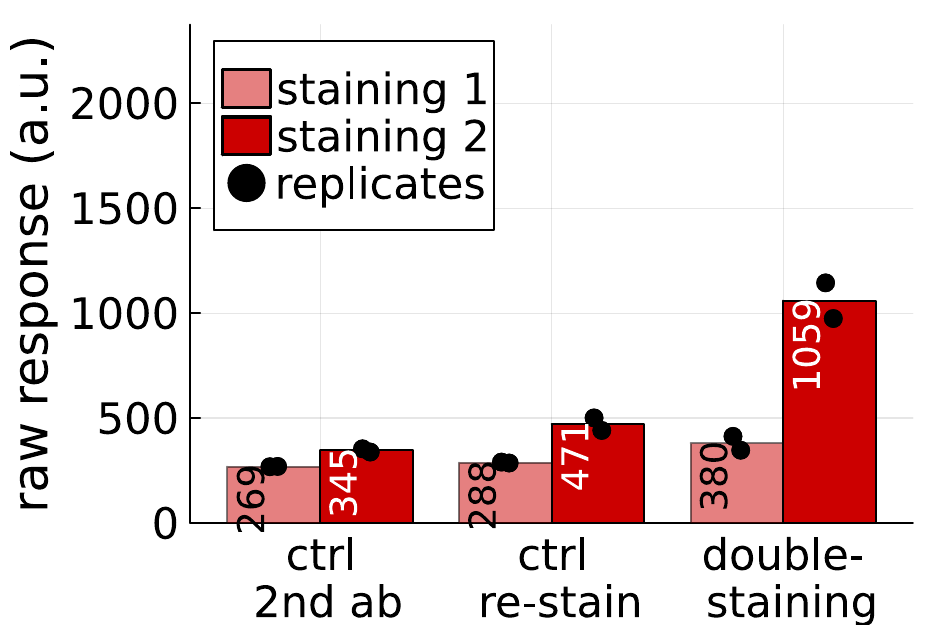}
	\end{minipage}
	\begin{minipage}[c]{0.32\textwidth}
		\centering
		\textbf{(h)} RPS11 channel

		\includegraphics[width = \textwidth]{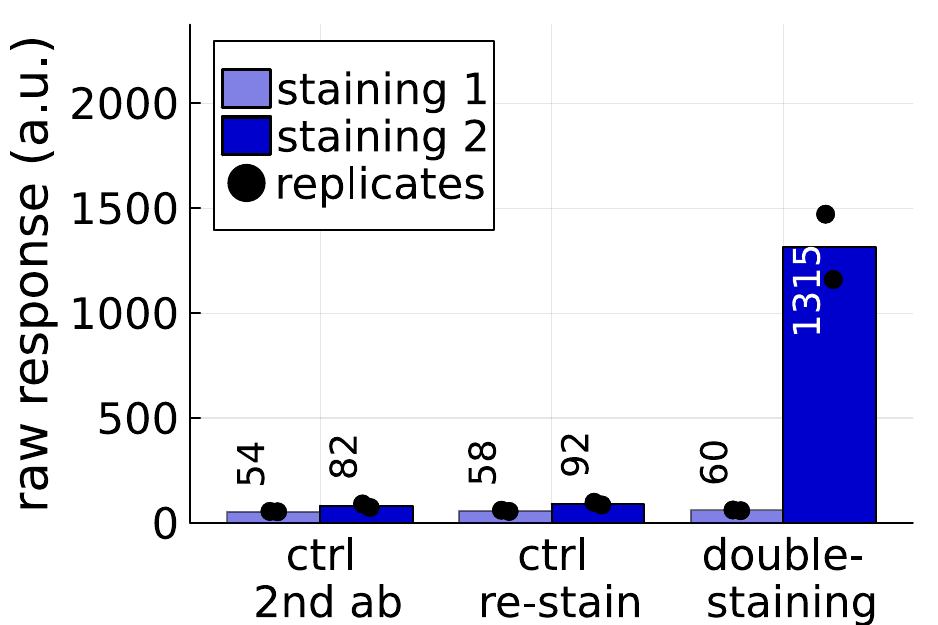}
	\end{minipage}\vspace{0.5cm}

	\begin{minipage}[c]{0.35\textwidth}
		\centering
		\textbf{(i)} Computational composite

		\includegraphics[width = 0.85\textwidth]{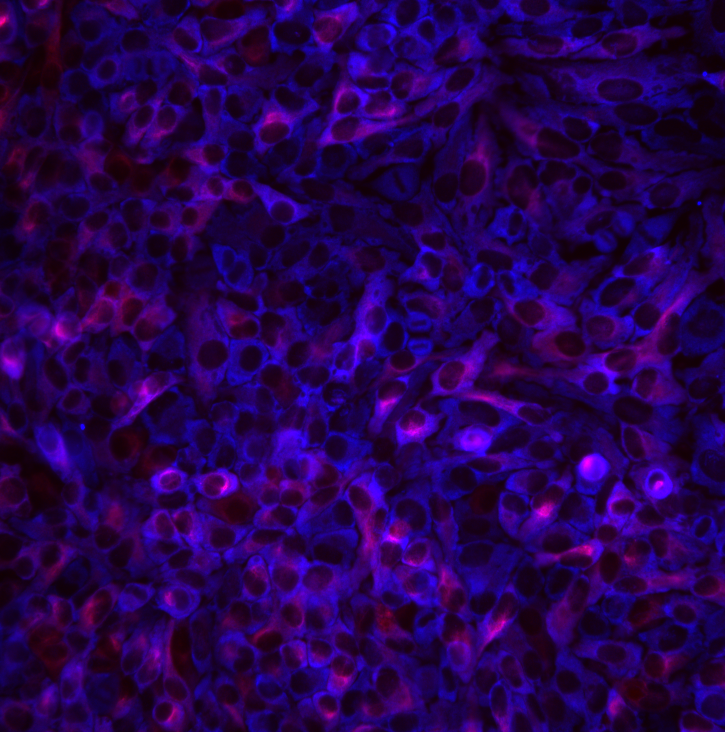}
	\end{minipage}
	\begin{minipage}[c]{0.35\textwidth}
		\centering
		\textbf{(j)} Channel composite

		\includegraphics[width = 0.85\textwidth]{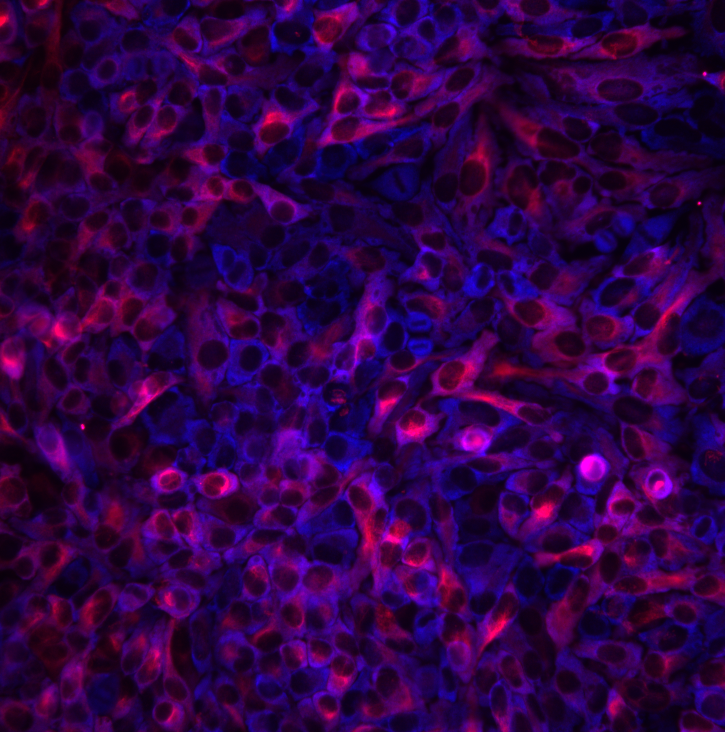}
	\end{minipage}
	\captionof{figure}{Double-staining experiment, using a too low dilution quotient ($d_{14} = 1:1638400$ instead of $d_{10} = 1:102400$) for the 1st staining.}
	\label{sup-fig: too low}
\end{minipage}

\noindent
\begin{minipage}{\textwidth}
	\centering
	{\large\bfseries Plate 1: too high concentration for the 1st staining}\vspace{0.5cm}

	\begin{minipage}[c]{0.32\textwidth}
		\centering
		\textbf{(a)} Channel sum

		\includegraphics[width = \textwidth]{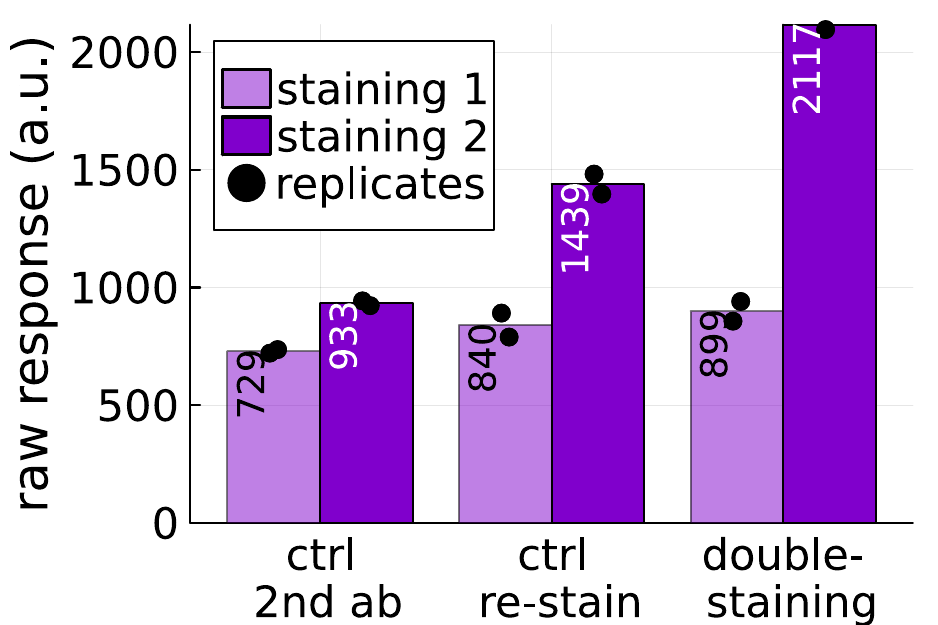}
	\end{minipage}
	\begin{minipage}[c]{0.32\textwidth}
		\centering
		\textbf{(b)} NF200 channel

		\includegraphics[width = \textwidth]{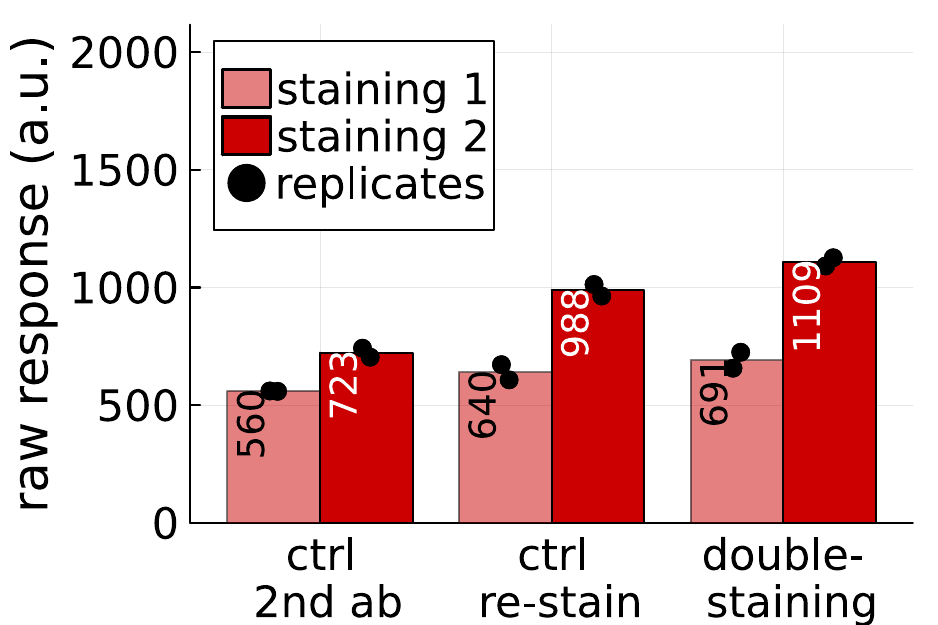}
	\end{minipage}
	\begin{minipage}[c]{0.32\textwidth}
		\centering
		\textbf{(c)} RPS11 channel

		\includegraphics[width = \textwidth]{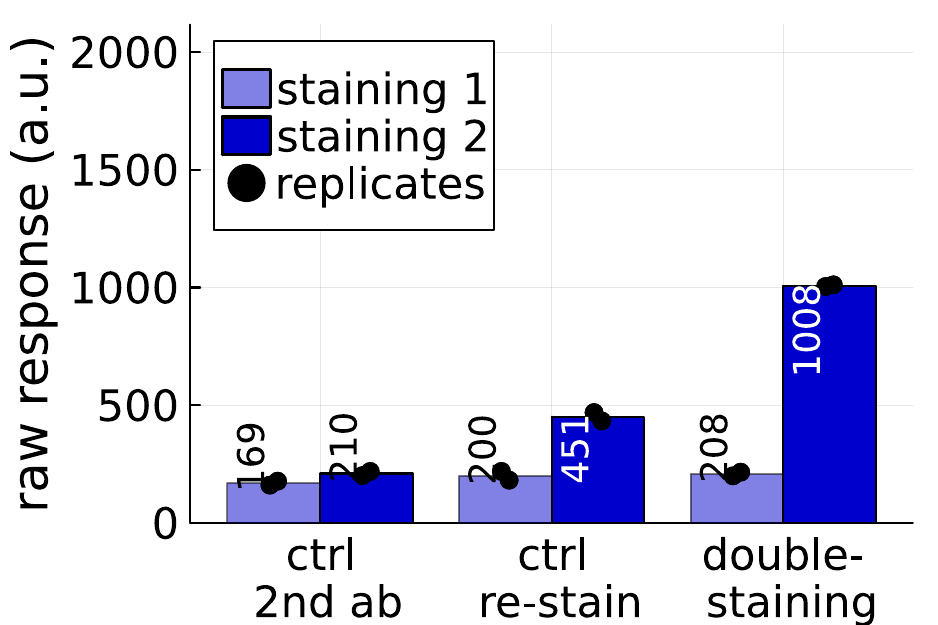}
	\end{minipage}\vspace{0.5cm}

	\begin{minipage}[c]{0.35\textwidth}
		\centering
		\textbf{(d)} Computational composite

		\includegraphics[width = 0.85\textwidth]{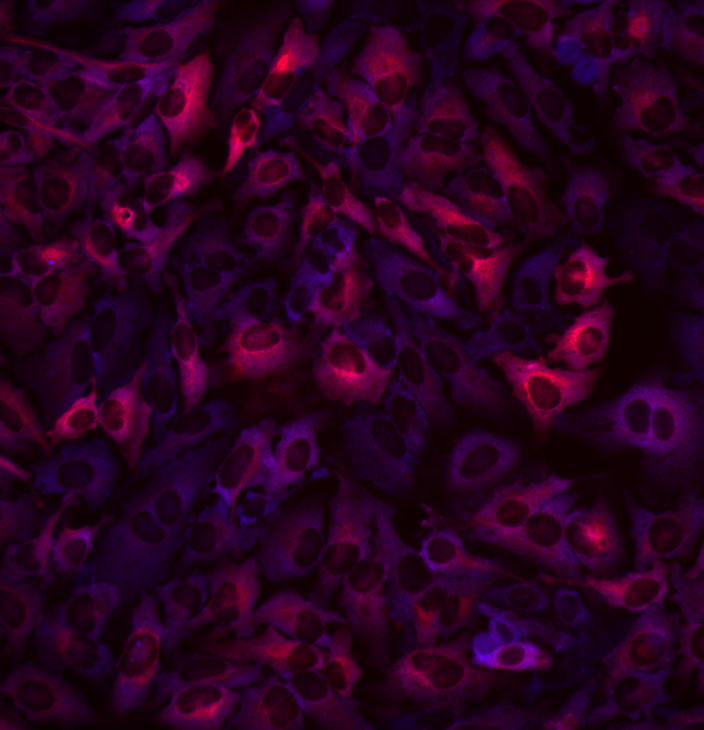}
	\end{minipage}
	\begin{minipage}[c]{0.35\textwidth}
		\centering
		\textbf{(e)} Channel composite

		\includegraphics[width = 0.85\textwidth]{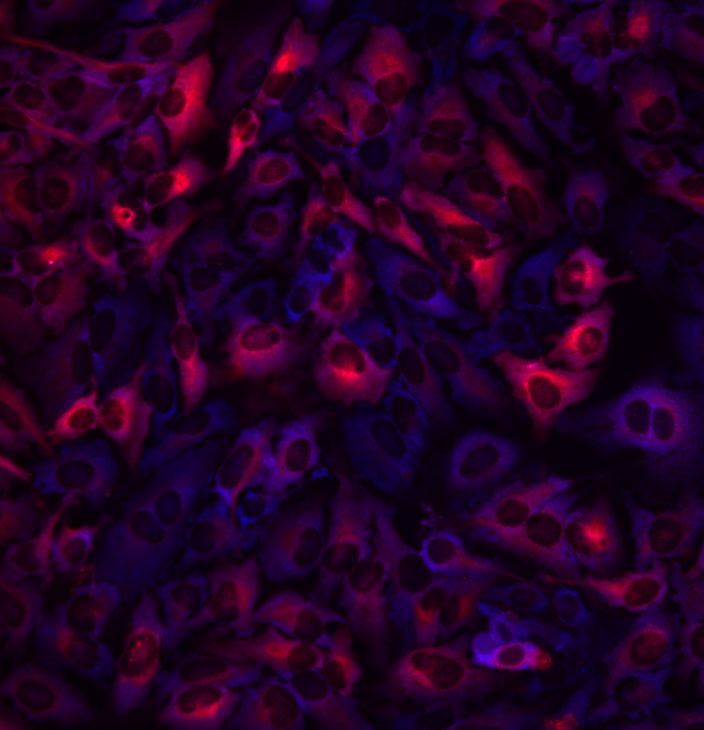}
	\end{minipage}

	\vspace{0.5cm}
	\rule{\textwidth}{2pt}
	\vspace{0.2cm}

	{\large\bfseries Plate 2: too high concentration for the 1st staining}\vspace{0.5cm}

	\begin{minipage}[c]{0.32\textwidth}
		\centering
		\textbf{(f)} Channel sum

		\includegraphics[width = \textwidth]{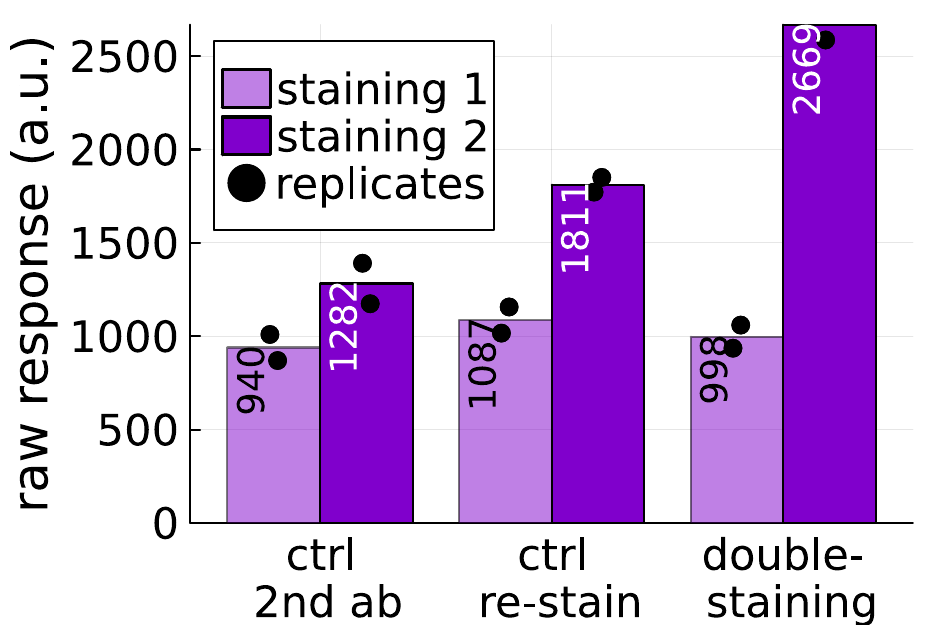}
	\end{minipage}
	\begin{minipage}[c]{0.32\textwidth}
		\centering
		\textbf{(g)} NF200 channel

		\includegraphics[width = \textwidth]{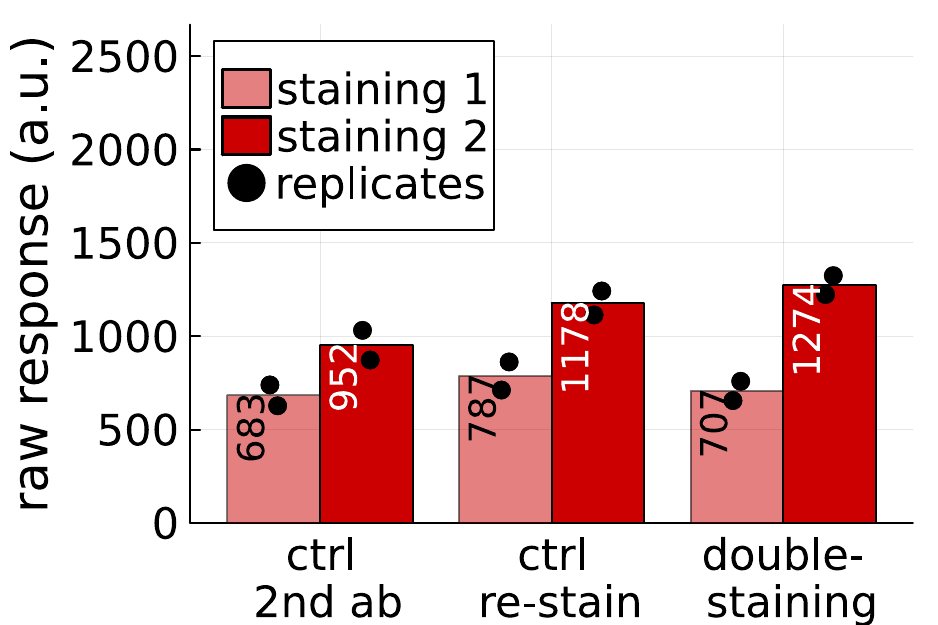}
	\end{minipage}
	\begin{minipage}[c]{0.32\textwidth}
		\centering
		\textbf{(h)} RPS11 channel

		\includegraphics[width = \textwidth]{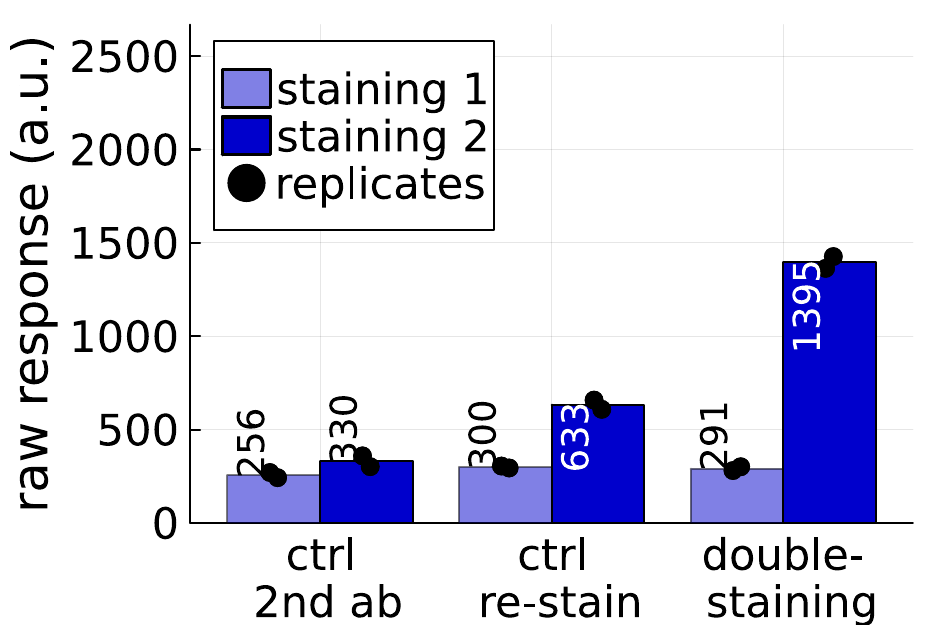}
	\end{minipage}\vspace{0.5cm}

	\begin{minipage}[c]{0.35\textwidth}
		\centering
		\textbf{(i)} Computational composite

		\includegraphics[width = 0.85\textwidth]{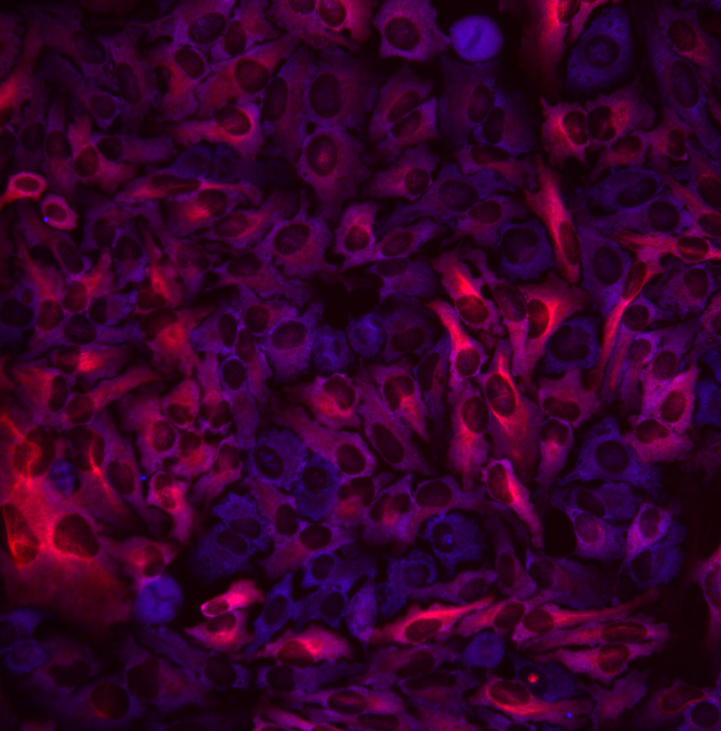}
	\end{minipage}
	\begin{minipage}[c]{0.35\textwidth}
		\centering
		\textbf{(j)} Channel composite

		\includegraphics[width = 0.85\textwidth]{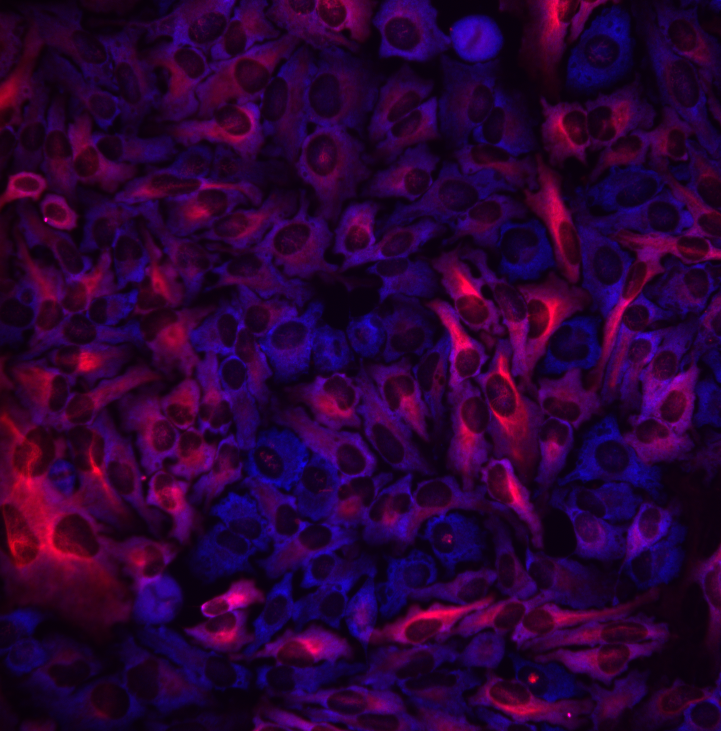}
	\end{minipage}
	\captionof{figure}{Double-staining experiment, using a too high dilution quotient ($d_{8} = 1:25600$ instead of $d_{10} = 1:102400$) for the 1st staining.}
	\label{sup-fig: too high}
\end{minipage}

\printbibliography[title= Supplement references]

\end{refsection}

\end{document}